\renewcommand\part{%
  \if@openright
    \cleardoublepage
  \else
    \clearpage
  \fi
  \thispagestyle{empty}
  \if@twocolumn
    \onecolumn
    \@tempswatrue
  \else
    \@tempswafalse
  \fi
  \null\vfil
  \secdef\@part\@spart}
\def\Lbar#1{\ThisStyle{%
  \setbox0=\hbox{$\SavedStyle#1$}%
  \stackengine{1.2\LMpt}{$\SavedStyle#1$}{\rule{\wd0}{.4\LMpt}}{0}{c}{F}{F}{S}%
}}
\theoremstyle{definition}
\newtheorem{theorem}{Theorem}[chapter]
\newtheorem{corollary}{Corollary}[chapter]
\newtheorem{lemma}[theorem]{Lemma}
\theoremstyle{proposition}
\newtheorem{proposition}{Proposition}[chapter]
\newcommand{\f}[2]{\frac{#1}{#2}}
\newcommand{\enquote}[1]{``#1''}
\providecommand{\url}[1]{\texttt{#1}}
\providecommand{\eprint}[2][]{\url{#2}}
\def\a{\alpha}
\def\d{\delta}
\def\h{\eta}
\def\Th{\Theta}
\newcommand{\half}{\frac{1}{2}}
\renewcommand{\d}{\partial}
\newcommand{\ffrac}[2]{\raisebox{.5pt}%
  {\footnotesize$\displaystyle\frac{#1}{#2}$}\kern1pt}
\newcommand{\ddl}[2]{\ffrac{\d #1}{\d #2}}
\newcommand{\ddll}[2]{\ffrac{\d^L #1}{\d #2}}
\newcommand{\vddr}[2]{\ffrac{\delta^R #1}{\delta #2}}
\newcommand{\vddl}[2]{{\ffrac{\delta #1}{\delta #2}}}
\def\cA{{\cal A}}
\def\cI{{\cal I}}
\def\cL{{\cal L}}
\def\cR{{\cal R}}
\def\cV{{\cal V}}
\def\be{\begin{equation}}
\def\ee{\end{equation}}
\def\bea{\begin{eqnarray}}
\def\eea{\end{eqnarray}}
\def\ba{\begin{array}}
\def\ea{\end{array}}
\def\nn{\nonumber}
\newcommand{\R}{\mathbb{R}}
\def\12{\frac{1}{2}}
\title{String Dualities \\[1ex]     
        and \\[1ex]
       Gaugings of Supergravity }   
\author{Arash Ranjbar}             
\begin{document}


\setcounter{secnumdepth}{3}
\setcounter{tocdepth}{3}

\setlength{\extrarowheight}{5pt}

\maketitle                  


\newpage
\thispagestyle{empty}
The thesis has been discussed privately on June 6, 2018 in front of the jury including the following members:

\begin{itemize}

\item Professor Riccardo Argurio (ULB)

\item Professor Glenn Barnich, President of jury (ULB)

\item Professor Andr\`es Collinucci, Secretary of jury (ULB)

\item Professor Joaquim Gomis (University of Barcelona)

\item Professor Marc Henneaux, Supervisor (ULB)

\item Professor Jorge Zanelli (Centro de Estudios Cient\'ificos (CECs), Chile)

\end{itemize}

The public presentation has been held on July 11, 2018 in front of the aforementioned jury and public audience.

\vspace{3cm}

The research of the author during the academic years 2013-2016 was partially supported by Centro de Estudios Cient\'ificos (CECs). The Centro de Estudios Cient\'ificos (CECs) is funded by the Chilean Government through the Centers of Excellence Base Financing Program of CONICYT-Chile.

\vspace{1cm}

The author's work during the academic years 2014-2018 was partially supported by the ERC Advanced Grant ``High-Spin-Grav" and by FNRS-Belgium (convention FRFC PDR T.1025.14 and convention IIS N 4.4503.15).

\vspace{1cm}

This thesis has been written using the style \textit{\href{https://www.maths.ox.ac.uk/members/it/faqs/latex/thesis-class}{ociamthesis}} which has been modified by the author in order to fit the specific feature required for its presentation.

\mbox{}

\newgeometry{textwidth=15cm,textheight=24cm}
\begin{acknowledgements}

I would like to thank, first and foremost, my supervisor Marc Henneaux who helped me at every stage during my PhD study in the past four years. The discussions with Marc have been always inspiring for me and I like to think it made me a better physicist. At the same level of appreciation, I would like to thank Jorge Zanelli who is in principle my co-supervisor and with whom I have had long and fruitful discussions for uncountable number of times. I am indebted to both for valuable lessons they have thought me and for being there always despite their overwhelming work schedule. 

I would also like to thank the members of my jury
for first accepting to be in my examination and second for all the inspiring and challenging questions that they have asked (whether or not I knew the answer). For sure, it will open my eyes to other points of view and new ideas to address in the future.

I am grateful to Glenn Barnich, Nicolas Boulanger, Gaston Giribet, Marc Henneaux, Bernard Julia, Victor Lekeu and Jorge Zanelli with whom I collaborated in the projects part of which turned into this thesis. I have learned a lot from each and every one of them and I am grateful to them. 

In the past few years and at different stages of my work, I have enjoyed discussions with many of my colleagues which I am thankful to all of them. Most specially to Andrea Campoleoni, Simone Giacomelli, Paolo Gregori, Adolfo Guarino, Patricia Ritter, Patricio Salgado, Javier Tarrio and Cedric Troessart.

I would also like to thank all members of Theoretical and Mathematical Physics department at ULB and Centro de Estudios Cient\'ificos (CECs) in Chile for creating a friendly and nice atmosphere and for all motivating discussions over past four years.

Finishing my PhD study would not have been possible without support that I have received during the course of past few years. I am grateful to my amazing friends 
 for being there always for me and to endure me in my most unpleasant moments.



Last but not least, I would like to thank my family for all the support that I have received in every second of my life and specially during my studies. It is hard to imagine that without their help I could have been here at this point of my life. I cannot think of a way to thank them enough for all love and support that I have been showered with all my life.


\begin{flushright}
Arash Ranjbar
\end{flushright}

\textit{Brussels}

\textit{July 2018}

\end{acknowledgements}   
\restoregeometry

\cleardoublepage
\pagenumbering{roman}

\begin{originality}
The current thesis, apart from the Chapter 2 and the introductory parts of each chapter which have been included for pedagogical reasons, is based on original materials of the following publications:

\begin{enumerate}[I)]

 \item {\footnotesize \cite{Henneaux:2017afd} M.~Henneaux and A.~Ranjbar,
  ``Timelike duality, $M'$-theory and an exotic form of the Englert solution,''
  JHEP {\bf 1708} (2017) 012 
  [arXiv: 1706.06948 [hep-th]].}
 
 \item  {\footnotesize \cite{Henneaux:2017kbx} M. Henneaux, B. Julia, V. Lekeu and A. Ranjbar, ``A note on gaugings in four spacetime dimensions and electric-magnetic duality," Class. Quant. Grav. {\bf 35}
(2018) no.3, 037001 [arXiv: 1709.06014 [hep-th]].}

 \item {\footnotesize \cite{Barnich:2017nty} G.~Barnich, N.~Boulanger, M.~Henneaux, B.~Julia, V.~Lekeu and A.~Ranjbar,
  ``Deformations of vector-scalar models,''
  JHEP {\bf 1802} (2018) 064
  [arXiv: 1712.08126 [hep-th]].}
  
  \item {\footnotesize \cite{Ranjbar2018} A. Ranjbar and J. Zanelli, ``Parallelizable hyperbolic manifolds in three and seven dimensions,'' in preparation.}

\suspend{enumerate}

Besides above publications, the following are the papers published during the author's PhD studies but have not been included in the thesis:

\resume{enumerate}[{[I)]}]

 \item {\footnotesize J.~P.~Babaro, G.~Giribet and A.~Ranjbar,
  ``Conformal field theories from deformations of theories with $W_n$ symmetry,''
  Phys.\ Rev.\ D {\bf 94} (2016) no.8,  086001
  [arXiv:1605.01933 [hep-th]].}
  
 \item {\footnotesize G.~Giribet and A.~Ranjbar,
  ``Screening Stringy Horizons,''
  Eur.\ Phys.\ J.\ C {\bf 75} (2015) no.10,  490
  [arXiv:1504.05044 [hep-th]].}

\end{enumerate}

\end{originality}


\begin{abstract}

This thesis is devoted to various questions connected with duality. It is composed of two parts.

The first part discusses some aspects of timelike T-duality. We explore the possibility of compactification of supergravity theories with various signatures (low energy limit of $M$-theories which are dual under timelike T-dualities) on parallelizable internal seven dimensional (pseudo-)spheres. We show that, beside the standard theory, only one of the dual theories known as $M'$-theory can admit such a solution. The effective four dimensional theory is non-supersymmetric and due to the presence of torsion the symmetry of seven dimensional (pseudo-)sphere breaks down to $Spin(3,4)$.

In the second part, in an attempt to have a systematic discussion of gaugings in supergravity, we show the isomorphism between the space of local deformations of the appropriate zero coupling limit of the embedding tensor Lagrangian and that of the second-order scalar-vector Lagrangian, describing the bosonic sector of supergravity ignoring gravity, in a chosen duality frame determined by embedding tensors. We analyze the BV-BRST deformation of a class of scalar-vector coupled Lagrangians, which contains supergravity Lagrangians as examples, and find a set of constraints that guarantee the consistency of the deformations of the Lagrangians. We show in principle that for a large class of theories considered in this thesis, the only deformations are those of the Yang-Mills type associated with a subgroup of the rigid symmetries.

\end{abstract}          


\cleardoublepage
\setcounter{page}{1}

\begin{romanpages}          
\tableofcontents            
\end{romanpages}            




\cleardoublepage
\pagenumbering{arabic}
\setcounter{page}{1}

\chapter{Introduction}

This thesis is devoted to various questions connected with duality. It is composed of two parts. The first part explores
some aspects of timelike T-duality in the context of comapctifications of the maximal supergravity in eleven-dimensions and its timelike T-duals. 
The second part studies the consistent deformations (``gaugings") of extended supergravities and sheds new light on the role played by the electric-magnetic duality in that context.

The question of why spacetime has Lorentzian signature $(-,+,+,...,+)$ with one and only one time has been repeatedly raised by many authors in the literature (see \cite{Bars:2000qm} and references therein). Duality provides a tool to address this question since the timelike T-dualities considered by Hull \cite{Hull:1998ym,Hull:1998vg} change the spacetime signature. In the case of $M$-theory, one can generate through timelike dualities various ``exotic" duals ($M^*$-theory and $M'$-theory as well as the reversed signature theories), which possess a different number of ``times". Part \ref{part1} of the thesis studies a special class of solutions of the corresponding exotic supergravities. More preciesly, we investigate compactifications of these theories to four dimensions on internal seven-manifolds that are parallelizable.

The condition that allows a Riemannian geometry
to admit an absolute parallelism is a classic question which was answered by Cartan and Schouten \cite{Cartan1,Cartan2}
in the case of Euclidean signature, concluding that the only possibilities for sphere topologies are $S^3$ and $S^7$. They also described curvature-free connections on
Lie groups and as a result exhibited parallelisms on group manifolds. They then generalized it to give a local description of 
parallelism (in terms of curvature-free metric-compatible connections in the presence of torsion) on Riemannian manifolds which
are products of flat manifolds, compact simple group manifolds, and 7-spheres. Moreover, the parallelism of the $3$-sphere, which was before understood
as an isolated phenomenon, was provided as an example.


There is a close relation between parallelizability of spheres and division algebras. It is particularly interesting since it explains the reason behind the specific form \cite{Cartan1,Cartan2} of the solutions that was found by Cartan and Schouten. It also allows one to go one step forward and to extend the result of Cartan and Schouten to pseudo-Riemannian manifolds \cite{wolf1972I,wolf1972II}. This will lead to the result that there exist a parallelism on the pseudo-Riemannian manifolds which are of the form of the product of a flat manifold, irreducible symmetric spaces and the symmetric coset spaces $S^7=SO(8)/SO(7)$, $S^{3,4}=SO(4,4)/SO(3,4)$ and $SO(8,\mathbb{C})/SO(7,\mathbb{C})$. Here ``irreducible" means that the symmetric spaces are not locally product spaces. Equivalently, it means that the infinitesimal holonomy group acts irreducibly on the tangent space and therefore the holonomy group is given by the Berger classification of irreducible holonomy groups \cite{Berger,10.2307/1970273}.  

We are interested in the case of seven dimensions, which stands out also as a result of Adams theorem \cite{Adams} that relates $S^0$, $S^1$, $S^3$ and $S^7$ to normed division algebras. We will briefly discuss the three-dimensional case to set the stage for a more difficult case of seven dimensions. The generalization of Cartan-Schouten theorem
to pseudo-Riemannian manifold will
teach us that in three and seven dimensions (beside $S^3$ and $S^7$ already noted by Cartan and Schouten), respectively AdS$_3$ manifold and $S^{3,4}=SO(4,4)/SO(3,4)$ (or $H^{4,3}=SO(4,4)/SO(4,3)$) are also parallelizable.

Seven dimensions have been shown to play an important role in supergravity or $M$-theory, where one finds an effective theory of supergravity in four dimensions from compactification of the eleven-dimensional supergravity on a seven-manifold. If this seven-manifold is toroidal then the compactification will result in an ungauged theory in four-dimensional flat background \cite{Cremmer:1978ds,Cremmer:1979up}. The Freund-Rubin compactification \cite{Freund:1980xh} on $S^7$ will lead to a four-dimensional $SO(8)$ gauged supergravity in AdS$_4$ spacetime \cite{deWit:1982bul}. The Freund-Rubin compactification is the only flux compactification of supergravity which has the maximum number of Killing spinors or, in other words, it preserves all supercharges. There is another type of flux compactification where one turns on other fluxes and therefore changes the internal geometry, which breaks part of the supersymmetry of the resulting theory. For example, $G_2$ compactification of eleven-dimensional supergravity \cite{Atiyah:2001qf,Duff:2002rw} preserves one Killing spinor, which leads to at most $\mathcal{N}=1$ supergravity in four dimensions. The Englert solution \cite{Englert:1982vs} is a specific solution of $G_2$ compactification where supersymmetry is completely broken. This solution is obtained by compactifying on parallelizable $S^7$.

Since $S^7$ plays an important role in this scenario, it might seem interesting to see if any role is played by $S^{3,4}$ for example in the compactification. At first sight, this looks impossible since we cannot compactify a theory in $10+1$ dimensions on a manifold with more than one time direction. However, as we recalled above it turns out that there is a signature-changing duality in string theory \cite{Hull:1998ym,Hull:1998vg}. This duality is called timelike T-duality as a result of which one finds an orbit of equivalent theories under this duality. The orbit involves three theories formulated in the same dimension but with different signatures. The theories in reversed signatures also belong to this orbit.  

We will show that the compactification on parallelizable $S^{3,4}$ is possible if one works in the picture described by one of these dual theories, known as $M'$-theory \cite{Henneaux:2017afd}. This solution, similarly to the Englert solution, will result in a non-supersymmetric theory in AdS$_4$ background. This can be understood as a spontaneous compactification on $S^{3,4}$ with $G_{2,2}$ invariant fluxes. The compactification on parallelizable seven-manifolds for the other supergravity theories in eleven dimensions (the low energy limit of $M$- and $M^*$-theories) is not possible.

In Part \ref{part2} we focus on electric-magnetic duality of supergravity in four dimensions \cite{deWit:2001pz}. Electric-magnetic duality is particularly interesting since the gauge group of a gauged supergravity in its second-order formulation is a subgroup of electric-magnetic duality symmetry group and moreover the available gaugings depend on the duality frame in which the theory is formulated\cite{deWit:1981sst}. However, the electric-magnetic duality symmetry of a supergravity (or in general a scalar-vector coupled model) in the second-order formulation is not manifest off-shell since part of this symmetry acts non-locally (in space) on fields \cite{Deser:1976iy,Deser:1981fr,Bunster:2011aw,Gaillard:1981rj}. As a result, the gauge group is normally a subgroup of an electric subgroup of electric-magnetic duality which acts locally on fields. On the other hand, the first-order formulation is manifestly duality invariant. The main purpose of Part \ref{part2} is to have a
systematic discussion of the gaugings of a supergravity theory, or in other words to find the consistent deformations of an ungauged supergravity. We
will show that even though the first-order formulation is manifestly duality invariant but due to the rigidity of the first-order formulation \cite{Henneaux:2017kbx} it is not suited for the discussion of the deformations. Therefore, in order to discuss the consistent deformations of supergravity, it is inevitable to work in the second-order formulation.

The embedding tensor formalism introduced by \cite{deWit:2002vt,deWit:2005ub,deWit:2007kvg,Samtleben:2008pe,Trigiante:2016mnt}, is a procedure of gauging a supergravity theory by promoting  abelian vector fields to non-abelian ones, replacing the ordinary derivatives by covariant ones and adding extra interactions in order to restore the symmetry. In this formalism the embedding of the gauge group in the global symmetry group is described through constant parameters given by the so-called embedding tensor; hence the gauged Lagrangian depends on the embedding tensor which must satisfy a set of constraints. The embedding tensor formalism will provide a gauged Lagrangian for the supergravity theories where one can see that the possible deformations are those which require changing the gauge algebra. However, there are in general other types of deformations where even though the gauge symmetry is deformed but the gauge algebra remains unchanged. It is then an interesting question to explore the possibility of gauging the symmetries corresponding to these type of deformations \cite{Barnich:2000zw}. In order to do that we use the BV-BRST deformation formalism \cite{Batalin:1981jr,Batalin:1984jr,Barnich:1993vg,Henneaux:1997bm} which provides precise tools to answer this question in a systematic manner. 

We will first show that \cite{Henneaux:2017kbx} the space of deformations of embedding tensor formalism is isomorphic to the space of deformation of a scalar-vector Lagrangian (describing the bosonic action of supergravity without spin-2 field) in a chosen duality frame determined by the embedding tensor. The infinitesimal local deformations are controlled by the cohomology of the BRST differential at ghost number zero in the space of local functionals. The BRST cohomology at ghost number one is related to the possible anomalies in the theory. In fact, we show that the BRST cohomology at all ghost numbers is equivalent for the embedding tensor Lagrangian and the one of scalar-vector coupled model. Then, performing a thorough analysis of BV-BRST deformation \cite{Barnich:2017nty} of the scalar-vector Lagrangian, we will obtain all possible symmetry deformations controlled by the BRST cohomology at ghost number minus one. It includes the symmetries of both types; those which deform the gauge algebra and the types which do not. The consistent deformations then is implemented by a set of constraints (similarly to the embedding tensor formalism) and one finds that the only symmetries which satisfy these conditions are of the type already considered in the embedding tensor formalism. The other types of symmetries even though are part of the electric symmetry group but they cannot be gauged. We will demonstrate this explicitly in different examples including the one of $\mathcal{N}=4$ supergravity in four dimensions.

The thesis is organized as follows:

In Chapter \ref{ch:parallelizability} we provide definitions regarding the notion of parallelism and the relation between parallelizability and division algebras. In Chapter \ref{ch:Lorentz_flatness} we use the division algebras to show the parallelizability on specific seven-dimensional topologies rediscovering the results of Cartan-Schouten and Wolf. 

In Chapter \ref{ch:exotic}, we start by reviewing the effect of time-like T-duality and the structure of dual $M$-theories. Then, we show all possible solutions of compactification of 
$M$-, $M'$- and $M^*$-theories on seven-dimensional (pseudo-)sphere or (pseudo-)hyperbolic manifolds in both cases of Freund-Rubin compactification, described by Hull \cite{Hull:1998fh}, and the Englert compactification where the internal manifolds are the parallelizable ones of Chapter \ref{ch:Lorentz_flatness}. We will show that the reduction of the $6+5$- dimensional supergravity, low energy limit of $M'$-theory, on the seven- dimensional parallelizable pseudo-sphere $S^{3,4}$, similarly to the reduction of standard supergravity on the parallelizable $7$-sphere, will result in well defined theories in $AdS_4$ background. The difference with the reduction on a `round' pseudo-sphere is that in the latter the dimensionally reduced four- dimensional theories are maximally supersymmetric with $SO(8)$ gauge symmetry while the former solutions are non-supersymmetric with $Spin(3,4)$ gauge symmetry. The presence of fluxes due to the flattening torsions on a parallelizable manifold breaks the supersymmetry.

In Chapter \ref{ch:EM_duality}, we review the electric-magnetic duality and explain the enhancement of the symmetry in the presence of scalars. In Chapter \ref{ch:Isom-Emb-VS-models}, we give a quick review of embedding tensor formalism and then we describe how the scalar-vector Lagrangian describing the maximal supergravity in four dimensions is captured as the undeformed limit of the embedding tensor gauged Lagrangian in a frame parameterized by the embedding tensor coefficients.

In Chapter \ref{ch:BV_def}, after explaining the BV-BRST deformation formalism we go on to prove the isomorphism between the space of local deformations of embedding tensor Lagrangian and the described scalar-vector coupled Lagrangian and therefore complete the analysis of previous chapter. In Section \ref{sec:BV-def-Embed}, we explicitly show the extra fields present in the embedding tensor formalism will disappear since whether they are auxiliary fields (like two-forms) or of pure gauge type. 

In Chapter \ref{ch:BRST-cohom-SV-Lagr}, we present a detailed computation of BRST cohomology in different ghost degrees analyzing the global symmetries, infinitesimal deformations and anomalies described respectively by  $H^{-1}(s\vert d)$, $H^{0}(s\vert d)$ and $H^{1}(s\vert d)$. We provide then various examples to demonstrate the solutions to the obstruction equations for the consistent deformations and to show which symmetries can be gauged. We specifically show this for $\mathcal{N}=4$ supergravity in four dimensions where one can in fact solve the constraints in practice. In principle, the result obtained in this chapter can be used for the gauging of supergravitis with $\mathcal{N}>2$ in four dimensions.

Finally, we will give a summary of all results in Chapter \ref{ch:conclusion}. The future perspective of the works along the line of thoughts of this thesis is also discussed.

Four appendices were included at the end. Appendix \ref{app-notation} contains the list of all notations used in the thesis. Appendix \ref{app:quaternion_octonion} provides a quick review of Cayley-Dickson construction of division algebras and definitions of the algebra of octonions and split octonions relevant to Chapter \ref{ch:parallelizability}. Appendix \ref{App:pseudo} includes the definitions of (pseudo-)sphere and (pseudo-)hyperbolic spaces and a presentation of two infinite families of parallelism (similarly to the ones of Cartan and Schouten) of the pseudo-sphere $S^{3,4}$. We also explained the relation between different groups $G_{2,2}$, $G^*_{2,2}$, $Spin(3,4)$ and $Spin^+(3,4)$ which are relevant in describing the parallelized $S^{3,4}$.
The final appendix, Appendix \ref{app:derivation} contains a proof of a critical relation being used in Chapter \ref{ch:BRST-cohom-SV-Lagr}.


\cleardoublepage

\part{Parallelizable manifolds}\label{part1}

\cleardoublepage





\chapter{Parallelizability}\label{ch:parallelizability}

\section{Absolute Parallelism}

In this chapter we provide some definitions and theorems necessary to understand the mathematics behind the next chapter.

We first start with explaining the meaning of absolute parallelism where we use the definition given by Wolf \cite{wolf1972I}.
An \textbf{absolute parallelism} on a differentiable manifold $\mathcal{M}$, is an isomorphism map $\phi$ between tangent spaces at any two points $x,y\in \mathcal{M}$ which does not depend on additional choices.
More precisely, 
\be 
\phi=\{\phi_{yx}\},\quad \phi_{yx}: T_x\mathcal{M} \rightarrow T_y\mathcal{M} \quad \forall x,y \in \mathcal{M}
\ee
which satisfies the consistency condition
\be
\phi_{zy}\phi_{yx}=\phi_{zx}, \qquad \phi_{xx}=id.\quad \forall x,y,z \in \mathcal{M}
\ee
and the regularity condition that guarantees the smoothness of the isomorphism $\phi$. Given $\phi$, one says that the tangent vector $\xi_x \in T_x \mathcal{M}$ and the tangent vector  $\xi_y \in T_y \mathcal{M}$ are parallel iff $\xi_{y} = \phi_{yx} \xi_x$. 

On a simply connected differentiable manifold $\mathcal{M}$ one can show that there is a one-to-one correspondence between an absolute parallelism on $\mathcal{M}$, smooth trivialization of the frame bundle on $\mathcal{M}$ and the existence of flat connections (zero curvature) on the frame bundle, see \cite{wolf1972I} for the proof.

Cartan and Schouten \cite{Cartan1,Cartan2} described curvature-free connections on Lie groups and as a result exhibited parallelisms on group manifolds. 
They then
generalized it to provide a description of parallelism (in terms of curvature-free metric-compatible connections with torsion) on Riemannian manifolds which
are products of flat manifolds, compact simple group manifolds and 7-spheres.

Later, Adams \cite{Adams} using a metric-free topological description of parallelism (i.e. existence of a global section on the frame bundle) showed that the only parallelizable spheres are $S^0$, $S^1$, $S^3$ and $S^7$.
The $2$-sphere is the simplest example
where the parallelism fails to hold and one can still use the geometrical tools to see this. It is known that one cannot define a coordinate patch that
covers the $2$-sphere thoroughly. In fact, using the sterographic map it
can be seen that one needs at least two different patches in order to cover the whole $2$-sphere. 

In the following, we shall focus on the relation between parallelizable spheres and normed division algebras. In the next section we define the normed division algebras and explain what will be necessary to use later on. We state a number of theorems without including the proofs since it would be a digression from our presentation.

\section{Division Algebras}

Consider an algebra $A$ over the field $F$, that is $A$ is a vector space  equipped with a bilinear map $l: A \times A \rightarrow A$ and a nonzero unit element $e \in A$. 
Consider any two arbitrary elements $a$ and $b$ of the algebra $A$. In addition, we assume that $a$ is a non-zero element of $A$. If linear equations
$ax = b$ and $ya = b$ are uniquely solved for $x, y \in A$, then the algebra $A$ is called a \textbf{division algebra} over the field $F$.

In other words, $A$ is a division algebra if given $a, b \in A$ with $ab = 0$, then either $a = 0$ or $b = 0$.
Equivalently, $A$ is a division algebra if the operations of left and right multiplication by any nonzero element are invertible.

Whenever $F$ is real, then $A$ is a real division algebra. For example we will see later on in this section that 
real quaternion and octonion algebras are real division algebras.

A \textbf{normed} division algebra is a division algebra which is also a Banach algebra, i.e. it is a normed vector space with a multiplicative property $||ab||=||a||\,||b||$.

\subsection*{Associativity and Alternative Algebra}
Given an algebra $A$, it need not be necessarily associative. One can define a trilinear map, called associator,
\be
(x,y,z)= x(yz) - (xy)z \qquad x,y,z\in A,
\ee
which measures the non-associativity of the algebra. For an associative algebra, the associator vanishes. 
If the associator of an algebra is totally antisymmetric, i.e.
$(x,y,z)=-(x,z,y)=-(y,x,z)=-(z,y,x)$, then the algebra is called an  \textbf{alternative} algebra. An alternative algebra is equivalently defined \cite{schafer2017,schafer1966} as
\begin{align}
x^2y&=x(xy),\\
yx^2&=(yx)x,
\end{align}
for any two elements $x,y$ of the algebra.
One very important feature of any alternative algebra is that even though the algebra itself is non-associative for elements $x,y,z\in A$,
any subalgebra of an alternative algebra generated from any two elements $x,y\in A$ is associative \cite{schafer2017,schafer1966}. 
For any alternative algebra, the following identities always hold for any three elements $x,y,z\in A$:
\begin{align}
    (xyx)z&=x(y(xz)),\label{Moufang1}\\
    z(xyx)&=((zx)y)x,\label{Moufang2}\\
    (xy)(zx)&=x(yz)x.\label{Moufang3}
\end{align}
These relations are known as \textbf{Moufang identities} and those algebras that satisfy these identities are known as \textbf{Malcev algebras} \cite{schafer2017,schafer1966}.

\subsection*{Dimension of Normed Division Algebras}

\begin{proposition}
Let $A$ be a non-associative algebra and $x$ a fixed element of $A$. Then the left multiplication
$L_x$ and the right multiplication $R_x$ are linear operators in the vector space $A$, i.e., $L_x,R_x : A\rightarrow A$,
defined by
\be
L_{x}y = xy, \quad R_{x}y =yx,
\ee
for all $y \in A$.
\end{proposition}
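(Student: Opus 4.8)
The plan is to unwind the definition of an algebra and invoke bilinearity of the multiplication map directly; there is essentially no content beyond this. Recall that $A$ being an algebra over $F$ means $A$ is a vector space equipped with a bilinear map $l : A \times A \to A$, written $l(a,b) = ab$. Bilinearity says precisely that for each fixed first argument the assignment in the second argument is $F$-linear, and symmetrically for the first argument, so the statement is almost a tautology once the notion of algebra is spelled out as above.

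First I would fix $x \in A$ and consider the map $L_x : A \to A$, $L_x y = xy$. To check linearity I would verify $L_x(y_1 + y_2) = x(y_1 + y_2) = xy_1 + xy_2 = L_x y_1 + L_x y_2$ for all $y_1, y_2 \in A$, using additivity of $l$ in its second slot, and $L_x(\lambda y) = x(\lambda y) = \lambda (xy) = \lambda L_x y$ for all $\lambda \in F$ and $y \in A$, using homogeneity of $l$ in its second slot. Hence $L_x$ is a linear operator on $A$.

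Then I would repeat the argument verbatim for $R_x : A \to A$, $R_x y = yx$, this time invoking linearity of $l$ in its first slot: $R_x(y_1 + y_2) = (y_1 + y_2)x = y_1 x + y_2 x$ and $R_x(\lambda y) = (\lambda y)x = \lambda (yx)$. Since $A$ is a vector space and both $L_x$ and $R_x$ map $A$ into $A$, they are well-defined endomorphisms of the vector space $A$, which is exactly the claim. Note that associativity is never used, which is why the hypothesis only requires $A$ to be an algebra.

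The only point requiring care is not to invoke anything stronger than bilinearity; there is no genuine obstacle, and the "hard part" is merely recognizing that the proposition is an immediate consequence of the definition of an algebra recalled above. It is recorded here because $L_x$ and $R_x$ are the basic objects that will be manipulated in the subsequent dimension arguments for normed division algebras.
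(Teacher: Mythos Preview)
Your proposal is correct. The paper does not supply a proof of this proposition at all; it states it as an immediate consequence of the bilinearity of the multiplication map and moves directly on to the subsequent lemma, which is precisely your assessment that the statement is essentially tautological once the definition of an algebra is unpacked.
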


\begin{lemma}\label{lemma-iden-left-mult}
For any $x\in A$, we have $L_x L_{\bar{x}} =(x,x) I$ where $I$ is the identity multiplication operator defined by $Ix=x$, $\bar{x}$ is the complex conjugate of $x$ defined as $\bar{x}=2 (x,e)e - x$ and $(.\,,\,.)$ is the scalar product.
\end{lemma}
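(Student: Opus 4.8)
The plan is to derive everything from the single structural hypothesis available, namely the multiplicativity $\|ab\|=\|a\|\,\|b\|$ of the norm, reading it through the scalar product $(\cdot,\cdot)$ of which $\|\cdot\|^2$ is the diagonal (so that $(x,x)=\|x\|^2$, and, since $ee=e$ forces $\|e\|=1$, also $(e,e)=1$). Associativity will never be used. The strategy has two polarization steps: first I would show the operator identity $L_x^{*}L_x=(x,x)\,I$ (with $L_x^{*}$ the adjoint of $L_x$ with respect to $(\cdot,\cdot)$), then I would show $L_x^{*}=L_{\bar x}$, and finally I would trade the order of the two factors.

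For the first step, I would polarize the composition law in one slot: expanding $\|x(y+z)\|^2=\|x\|^2\|y+z\|^2$ and cancelling the ``diagonal'' terms (which agree by multiplicativity itself) leaves the fundamental identity $(xy,xz)=(x,x)(y,z)$ for all $x,y,z\in A$. In terms of the left-multiplication operator of Proposition~1.2 this reads $(L_xy,L_xz)=(x,x)(y,z)$, i.e.
\be
L_x^{*}\,L_x=(x,x)\,I .
\ee
For the second step, I would polarize this same identity in the $x$-slot: replacing $x\mapsto x+w$ and discarding the pure-$x$ and pure-$w$ pieces gives $(xy,wz)+(wy,xz)=2(x,w)(y,z)$, and specialising $w=e$ yields $(xy,z)+(y,xz)=2(x,e)(y,z)$. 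Rearranging, $(xy,z)=\bigl(y,(2(x,e)e-x)z\bigr)=(y,\bar x z)$ for all $y,z$, which is exactly $L_x^{*}=L_{\bar x}$. Substituting into the displayed identity gives $L_{\bar x}L_x=(x,x)\,I$.

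It then remains to pass from $L_{\bar x}L_x$ to $L_xL_{\bar x}$, and for this I would use either of two short arguments. (i) Since $A$ is a division algebra, $L_x$ is invertible for $x\neq0$ and $(x,x)=\|x\|^2\neq0$; then $L_{\bar x}L_x=(x,x)I$ gives $L_x^{-1}=(x,x)^{-1}L_{\bar x}$, whence also $L_xL_{\bar x}=(x,x)I$, the case $x=0$ being trivial. (ii) Alternatively, a one-line computation using $(e,e)=1$ shows conjugation is an involution fixing $e$, so $\overline{\bar x}=x$ and $(\bar x,\bar x)=(x,x)$; applying the identity $L_{\bar u}L_u=(u,u)I$ just proved, with $u=\bar x$, reads $L_xL_{\bar x}=(x,x)I$ directly.

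The step I expect to require the most care is not conceptual but bookkeeping: fixing the normalisation so that $(x,x)=\|x\|^2$ and $(e,e)=1$, tracking the factors of $2$ through the two polarizations, and invoking nondegeneracy of $(\cdot,\cdot)$ when upgrading an equality of matrix elements to an equality of operators. There is no genuine analytic or algebraic obstacle: once multiplicativity is polarized twice and combined with the mere invertibility of left multiplication, the lemma drops out.
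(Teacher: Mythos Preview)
Your proof is correct and complete; the two polarizations of the composition law $\|xy\|=\|x\|\,\|y\|$ are exactly the standard route to $L_x^*=L_{\bar x}$ and $L_{\bar x}L_x=(x,x)I$, and either of your swap arguments (via invertibility of $L_x$, or via $\overline{\bar x}=x$) is fine. The paper does not actually give an argument for this lemma---it simply refers the reader to Okubo's monograph (Section~3.1 and Proposition~1 of Chapter~1)---and what you have written is precisely the polarization proof that one finds there, so you have supplied what the paper only cites.
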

\begin{proof}
For the proof of the lemma see Section 3.1 and Proposition 1 in Chapter one of \cite{okubo1995}.
\end{proof}

\begin{theorem}{(Bott-Milnor, Kervaire \cite{Hurwitz, Bott-Milnor, Kervaire})}\label{Theorem:Horwitz}
The dimension of a normed division algebra is restricted to $1, 2, 4,$ or $8$.
\end{theorem}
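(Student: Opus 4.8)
The plan is to convert the multiplicativity of the norm, via Lemma~\ref{lemma-iden-left-mult}, into a system of anticommuting ``complex structures'' on the underlying vector space, and then to read off the admissible dimensions from the representation theory of real Clifford algebras (the classical Hurwitz--Radon line of argument).

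First I would fix notation. Let $n=\dim_{\R}A$ and let $\langle\,\cdot\,,\,\cdot\,\rangle$ be the scalar product appearing in Lemma~\ref{lemma-iden-left-mult}. For a normed division algebra this is the polarization of the norm (one first checks, as is classical, that the norm obeys the parallelogram law), so it is positive definite; polarizing $\|ab\|=\|a\|\,\|b\|$ in $b$ also gives $\langle ax,ay\rangle=\|a\|^{2}\langle x,y\rangle$, so $L_a$ is orthogonal whenever $\|a\|=1$. Choose an orthonormal basis $e_0,e_1,\dots,e_{n-1}$ with $e_0=e$ the unit. For $i\ge 1$ one has $\bar e_i=2(e_i,e)e-e_i=-e_i$, so Lemma~\ref{lemma-iden-left-mult} applied to $e_i$ gives $-L_i^{2}=(e_i,e_i)I=I$, i.e.\ $L_i^{2}=-I$ with $L_i:=L_{e_i}$; applied to $e_i+e_j$ (whose conjugate is $-(e_i+e_j)$ and whose norm-square is $2$), and using linearity of $x\mapsto L_x$ together with $L_i^2=L_j^2=-I$, it gives
\be
L_iL_j+L_jL_i=-2\,\delta_{ij}\,I,\qquad 1\le i,j\le n-1 .
\ee
(Orthogonality of $L_i$ with $L_i^2=-I$ also makes $L_i$ skew-symmetric, though only the displayed relations are needed below.)

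Thus $\R^{n}$ is a module over the real Clifford algebra $\mathrm{Cl}_{0,\,n-1}$ on $n-1$ anticommuting generators squaring to $-I$. I would then invoke the classification of real Clifford algebras (Atiyah--Bott--Shapiro / Bott periodicity): the irreducible real $\mathrm{Cl}_{0,k}$-module has dimension $d_k$, with $(d_0,\dots,d_8)=(1,2,4,4,8,8,8,8,16)$ and $d_{k+8}=16\,d_k$, and every module has dimension divisible by $d_k$. Hence $d_{\,n-1}\le n$. Reading the table, this holds exactly for $n\in\{1,2,4,8\}$: it fails at $n=3$ (since $d_2=4>3$) and at $n=5,6,7$ (since $d_{n-1}=8>n$), while for $n\ge 9$ the recursion $d_{k+8}=16\,d_k$ forces $d_{n-1}>n$. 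Equivalently this is the bound $n\le\rho(n)$ for the Radon--Hurwitz function $\rho$, which again yields $n\in\{1,2,4,8\}$.

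The one step carrying real mathematical content --- and the main obstacle --- is the Clifford-module dimension count: that $\mathrm{Cl}_{0,k}$ admits no real module below dimension $d_k$ rests on identifying these algebras with matrix algebras over $\R$, $\C$ and $\mathbb{H}$. Everything before it is bookkeeping with $\langle\,\cdot\,,\,\cdot\,\rangle$ and Lemma~\ref{lemma-iden-left-mult}; the only delicate point is justifying that the norm is Euclidean, so that an orthonormal basis exists and polarization is legitimate over $\R$. I would remark that this argument settles only the \emph{normed} case (Hurwitz); dropping the norm hypothesis, the Bott--Milnor and Kervaire theorems genuinely require algebraic topology. A more algebraic alternative would be to show that every normed division algebra arises from $\R$ by Cayley--Dickson doubling and that the fourth double already loses the composition property --- but that detour proves the sharper classification $A\cong\R,\C,\mathbb{H},\mathbb{O}$ and is heavier than the dimension statement asked for here.
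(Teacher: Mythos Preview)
Your proposal is correct and follows essentially the same Hurwitz-type argument as the paper: use Lemma~\ref{lemma-iden-left-mult} on the imaginary subspace to produce Clifford relations $L_iL_j+L_jL_i=-2\delta_{ij}I$, then constrain $n$ by the minimal dimension of a Clifford module. The only cosmetic difference is that the paper passes through complex Clifford algebras with the coarser formula $d=2^{\lfloor (N-1)/2\rfloor}$ and solves $N=m\,d$, whereas you work directly over $\R$ with the Radon--Hurwitz sequence $d_k$; both routes land on $n\in\{1,2,4,8\}$.
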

\begin{proof}
For the proof of theorem see Chapter three of \cite{okubo1995}. For completeness, a sketch of the proof is given in Appendix \ref{app:quaternion_octonion}.
\end{proof}

This theorem is particularly interesting since it relates the dimension of irreducible representations of Clifford algebra on $N-1$ dimensional vector space to $N$ the dimension of a normed division algebra, see Appendix \ref{app:quaternion_octonion}.  

\subsection*{Quaternions and Octonions}

The alternative property of an algebra is important. In fact, the only alternative division
algebras are $\mathbb{R}, \mathbb{C}, \mathbb{H}$ and $\mathbb{O}$ \cite{schafer2017,schafer1966, Zorn1930}. This together with 
Theorem \ref{Theorem:Horwitz} will result in the following theorem:
\begin{theorem}\label{th:normed_div_alg}
$\mathbb{R}, \mathbb{C}, \mathbb{H}$ and $\mathbb{O}$ are the only normed divison algebras.
\end{theorem}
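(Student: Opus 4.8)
The plan is to reduce everything to the single assertion that \emph{a normed division algebra is an alternative algebra}, and then feed this into the two facts already on the table: the classification of finite-dimensional real alternative division algebras as $\mathbb{R},\mathbb{C},\mathbb{H},\mathbb{O}$ \cite{schafer2017,schafer1966,Zorn1930}, together with Theorem \ref{Theorem:Horwitz}, which caps the dimension at $8$ and in particular guarantees that $A$ is finite-dimensional over $\mathbb{R}$.

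First I would record the structural consequences of having a multiplicative norm. From $\|ab\|=\|a\|\,\|b\|$ one checks, by polarization, that $\|\cdot\|^2$ is a nondegenerate quadratic form with associated bilinear form $(\cdot,\cdot)$, and that left and right multiplications by a fixed nonzero element are isometries up to the scalar $\|a\|$; linearizing the composition law in the standard way gives the exchange relations $(ax,ay)=(x,y)\,(a,a)$ and $(xa,ya)=(x,y)\,(a,a)$, and, on setting one of the arguments equal to $e$, the degree-two minimal equation
\[
x^2-2(x,e)\,x+(x,x)\,e=0\qquad\text{for all }x\in A,
\]
equivalently $x\bar x=\bar x x=(x,x)\,e$ with $\bar x=2(x,e)e-x$. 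This is exactly the setting of Lemma \ref{lemma-iden-left-mult}: $L_xL_{\bar x}=(x,x)\,I$, and the mirror computation on the right gives $R_xR_{\bar x}=(x,x)\,I$.

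Next I would extract alternativity. Substituting $\bar x=2(x,e)e-x$ into $L_xL_{\bar x}=(x,x)\,I$ gives $2(x,e)\,xy-x(xy)=(x,x)\,y$ for all $y$, while multiplying the minimal equation on the right by $y$ gives $x^2y=2(x,e)\,xy-(x,x)\,y$; comparing the two yields the left alternative law $x(xy)=x^2y$. Using $R_xR_{\bar x}=(x,x)\,I$ the same way yields the right alternative law $(yx)x=yx^2$. Left and right alternativity together are equivalent to the associator $(x,y,z)$ being totally antisymmetric, so $A$ is alternative in the sense defined above. Since $A$ is then a finite-dimensional (by Theorem \ref{Theorem:Horwitz}, indeed $\dim A\in\{1,2,4,8\}$) alternative division algebra over $\mathbb{R}$, the cited classification \cite{schafer2017,schafer1966,Zorn1930} forces $A\cong\mathbb{R},\mathbb{C},\mathbb{H}$ or $\mathbb{O}$; conversely each of these carries a multiplicative Euclidean norm, completing the identification.

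I expect the only genuine work — and the main obstacle — to be the first step: showing that bare multiplicativity of the norm forces $\|\cdot\|^2$ to be a quadratic form and produces both the degree-two identity and the linearized exchange relations, since a priori the norm of a Banach algebra need not come from an inner product, so the quadratic structure has to be derived rather than assumed, with the usual care over polarization and nondegeneracy. Should one prefer to bypass the external classification of alternative division algebras, an equally viable endgame once alternativity (hence the conjugation $x\mapsto\bar x$ and the identities above) is in hand is the Cayley--Dickson doubling argument: starting from $\mathbb{R}e\subseteq A$, repeatedly adjoin a unit vector $i$ orthogonal to the current $\bar{\phantom{x}}$-closed subalgebra $B$ to get $B\oplus Bi$ isomorphic to the Cayley--Dickson double of $B$, generating the tower $\mathbb{R}\subset\mathbb{C}\subset\mathbb{H}\subset\mathbb{O}$, which must terminate by the dimension bound of Theorem \ref{Theorem:Horwitz}; the failure of associativity past $\mathbb{O}$ is what stops the process and is consistent with alternativity holding throughout \cite{okubo1995}.
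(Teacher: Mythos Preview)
Your proposal is correct and follows essentially the same route the paper sets up: the sentence preceding the theorem already signals that the result is obtained by combining the classification of alternative division algebras \cite{schafer2017,schafer1966,Zorn1930} with Theorem~\ref{Theorem:Horwitz}, and the paper's ``proof'' merely defers the requisite normed-algebra identities to \cite{okubo1995}. You have supplied exactly those identities --- deriving left and right alternativity from Lemma~\ref{lemma-iden-left-mult} and its right-sided analogue via the degree-two minimal equation --- so your argument is a fleshed-out version of what the paper only cites.
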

\begin{proof}
The proof of this theorem requires identities for normed divison algebras which we do not repeat here. See Section 3.1 of \cite{okubo1995} for the complete proof. 
\end{proof}

$\mathbb{R}$ and $\mathbb{C}$ are real and complex vector spaces respectively, $\mathbb{H}$ is 
the quaternion algebra and $\mathbb{O}$ is the octonion algebra. These division algebras have peculiar properties; $\mathbb{R}$ is an ordered,
commutative and associative algebra, $\mathbb{C}$ is a commutative and associative algebra, $\mathbb{H}$ is an associative
but non-commutative algebra and $\mathbb{O}$ is a non-commutative, non-associative
algebra. However, they have one common feature that makes them distinct and that all these algebras are alternative. This is indeed the reason that the
sequence of normed division algebras stops at octonions \cite{Baez:2002}. 

In Appendix \ref{app:quaternion_octonion}, we give a short review of Cayley-Dickson construction of division algebras, in particular we provide the example of the construction of octonions. 

As the final remark, we mention that it has been shown by Adams that for an $n$ dimensional 
normed division algebra, $S^{n-1}$ is parallelizable \cite{Adams}, for an illustrative proof see \cite{Baez:2002}. This is an important statement that we will use repeatedly in the next chapter.



\chapter{Lorentz Flatness}\label{ch:Lorentz_flatness}

In the previous chapter, we pointed out that parallelism is in one-to-one correspondence with the existence of a flat connection. This can be interpreted as path-independence of parallel transport. It means that transporting a vector from the point $x\in \mathcal{M}$ to another point $z\in \mathcal{M}$ always leads to the same vector at $z$ no matter what path is chosen. When there is no torsion then the condition that the parallel transport is path-independent is satisfied if the Riemann curvature -- defined in \eqref{calR} below as the curvature of the Levi-Civita connection --  vanishes. Therefore, in the absence of torsion a manifold is parallelizable if its Riemann curvature vanishes. Flat (simply connected) space is the simplest example of a parallelizable manifold where the curvature vanishes and the Levi-Civita connection is the flat affine connection that provides a path-independent parallel transport.

However, in the presence of torsion, the parallel transport is not defined by the Levi-Civita connection and the vanishing of its curvature is not necessary for parallelism to hold.  Torsion introduces new possibilities.  This can be most easily discussed in local Lorentz frames where the metric takes the Minkowskian form. Parallel transport is captured by the differential equation
\begin{equation} \label{Du=0}
 Du^a \equiv dx^\mu[\partial_\mu u^a + \omega\indices{^a_{b\mu}}(x) u^b]=0 \, ,
\end{equation}
for a tangent vector $u^a(x)$ at $x$. This equation is a consequence of the fact that $u^a$ transforms as a vector under the Lorentz group, i.e. $\delta u^a(x) = dx^\mu \omega\indices{^a_{b\mu}}(x) u^b$.



This can be seen as $u^a(x+dx)$ obtained by parallel transport from $u^a(x)$ where $Du^a$ is the covariant derivative of $u^a$.
The $1$-form $\omega\indices{^a_b} = \omega\indices{^a_{b\mu}} dx^\mu$ is called the Lorentz connection because we consider local orthonormal frames, and the 
condition that the covariant derivative
of the Lorentz metric vanishes identically ($D\eta^{ab} = 0$) implies that $\omega_{ab}=- \omega_{ba}$. If the expression \eqref{Du=0} is to have
the same meaning in any Lorentz frame, $Du^a$ should transform under the Lorentz group in the same representation as $u^a$ (see, e.g., \cite{schutz1980geometrical,gockeler1989differential}).

The integrability condition for \eqref{Du=0} or in other words the path-independence of the parallel transport is governed by
\begin{equation} \label{DDu=0}
 DDu^a \equiv R\indices{^a_b} u^b=0 \, ,
 \end{equation}
where the curvature two-form $R\indices{^a_b}$ is defined as\footnote{From now on wedge products ($\wedge$) of exterior forms will be implicitly understood.}
\begin{equation}\label{R}
R\indices{^a_b} =d\omega\indices{^a_b} +\omega\indices{^a_c}\omega\indices{^c_b}\, .
\end{equation}
If the condition \eqref{DDu=0} is to be fulfilled for an arbitrary $u^a$ everywhere in an open region $\mathcal{U}$, the curvature $R\indices{^a_b}$ must vanish identically in that region.
Therefore a parallelizable manifold requires its Lorentz curvature $R\indices{^a_b}$ to vanish. It is equivalent to 
the existence of a flat Lorentzian connection on the manifold.

\section{Flatness of a (Pseudo-)Riemannian Manifold}\label{sec:state-problem}

For any $p,q \geq 0$, consider the manifold $R^{(p,q)}$ which is equipped with a symmetric non-degenerate bilinear form $b_{(p,q)}$ defined as
\be
b_{p,q} = + \sum_{i=1}^{p} dx_i^2 - \sum_{j=1}^{q} dx_j^2 \, .
\ee
This bilinear form is invariant under the adjoint action of the $SO(p,q)$ group. Now consider an $n$ dimensional ($p+q-1=n$)
manifold $\mathcal{M}$ defined by $SO(p,q)$ invariant equation
\be
\sum_{i=1}^p x_i^2 -\sum_{j=1}^q x_j^2 =\pm 1.
\ee
The bilinear form on $R^{(p,q)}$ induces a generalized Lorentz invariant
metric $\eta_{ab}$ on the tangent space $T_x\mathcal{M}$ at any point $x\in\mathcal{M}$. The induced metric is diagonal $\eta_{ab}= diag (+,...,+;-,...,-)$ with the signature 
\begin{itemize}
\item $(p-1,q)$  and $\mathcal{M} = \f{SO(p,q)}{SO(p-1,q)}$ with a constant positive curvature,
\item $(p,q-1)$  and $\mathcal{M} = \f{SO(p,q)}{SO(p,q-1)}$ with a constant negative curvature.
\end{itemize}
$\mathcal{M}$ is also equipped with a generalized Lorentz connection one-form $\omega^a{}_b$ (with $\omega_{ab}=-\omega_{ba}$) and
a local orthonormal basis $e^a=e^a_\mu dx^\mu$ that provides the metric, $g_{\mu \nu}=\eta_{ab}e^a_\mu e^b_\nu$. The indices $a,b,...$ are tangent space indices and run over $1,2,...,n$, while $\mu,\nu,...$ are coordinate indices. 

The geometry of a curved (pseudo-)Riemannian manifold with metric $g_{\mu \nu}$ is characterized by the Riemann curvature two-form,
\begin{equation}\label{calR}
\mathcal{R}\indices{^\alpha_\beta} =d \Gamma\indices{^\alpha_\beta} + \Gamma\indices{^\alpha_\gamma} \wedge \Gamma\indices{^\gamma_\beta}\, ,
\end{equation}
where $\Gamma\indices{^\alpha_\beta} = \Gamma\indices{^\alpha_{\beta \mu}} dx^\mu$ is the Levi-Civita connection one-form.
This Riemannian curvature is similar to --but not necessarily equivalent to-- the curvature (\ref{R}) defined for the Lorentz
connection $\omega\indices{^a_b}$. The difference stems from the fact that $\Gamma\indices{^\alpha_\beta}$ is defined by the metric only while
$\omega\indices{^a_b}$ may involve a non-trivial torsion.
The Levi-Civita connection is determined by the torsion-free part of the Lorentz connection, denoted by  $\bar{\omega}\indices{^a_b}$, and is defined by the condition
\begin{equation}\label{Torsionless-Cartan}
de^a + \bar{\omega}^a{}_b e^b \equiv 0 \Longrightarrow \bar{\omega}^a{}_b = e^a_\nu [\delta ^\nu_\mu d + \Gamma\indices{^\nu_\mu}] E^\mu_b\, ,
\end{equation}
where $E^\mu_a$ is the inverse vielbein. The two connections $\omega^a{}_b$ and $\Gamma\indices{^\alpha_\beta}$ define different notions of parallelism that give rise to different curvature forms, $R^a{}_b$ and $\mathcal{R}\indices{^\alpha_\beta}$, respectively. 

In Minkowski space, there is a torsion-free  Lorentz connection $\bar{\omega}\indices{^a_b}$. In this case,
the connection is trivial, in the sense that the holonomy group of the
connection is trivial in an obvious manner since the tangent bundle is naturally trivialized on Minkowski space. It is equivalent to say that the Riemann curvature vanishes.
The question we examine here is what (pseudo-)Riemannian manifolds have flat connection. As we have already seen the Adams' theorem \cite{Adams} 
states that for Euclidean signature, the spheres $S^1$, $S^3$ and $S^7$ are parallelizable, that is, they admit 
globally defined orthonormal frames. The connection for the corresponding $SO(n)$ rotation groups have trivial holonomy group and the corresponding
curvatures vanish. We would like to explore the extension of this result for other metric manifolds of arbitrary signature.

One way to address this question is to look for a necessary condition for flatness, which is equivalent to the statement 
\begin{equation} \label{Ru}
R^a{}_b u^b_{(\mu)}=0, \;\;\;\;  \mu = 1, \cdots, n
\end{equation}
where $\{u^b_{(\mu)}\}$ is a family of linearly independent vectors. In a smooth manifold, a weaker but necessary condition is
\begin{equation} \label{Re}
R^a{}_b\wedge e^b=0 ,
\end{equation}
which would be the case if the torsion two-form is covariantly constant,
\begin{equation}\label{DT}
T^a \equiv D e^a = d e^a + \omega^a{}_b e^b  \Longrightarrow DT^a=R^a{}_b e^b=0.
\end{equation}
Hence, in a Lorentz-flat manifold, the torsion may not be zero but it must be covariantly constant, $DT^a=0$. The torsion two-form can be expressed in a basis of local orthonormal frames as
\begin{equation}\label{T2}
T^a = \tau f\indices{^a_{bc}} e^b e^c + \rho e^a\, ,
\end{equation}
where $\tau$ is a scalar zero-form, $f^a{}_{bc}$ is a zero-form antisymmetric in the lower indices, and $\rho$ is a scalar one-form. Then, the necessary condition for Lorentz flatness (\ref{DT}) reads
\be\label{Gen-sol}
(D f\indices{^a_{bc}}) \tau e^b e^c + 2\tau f^a{}_{bc}f^b{}_{df} e^c e^d e^f + (d\tau + \tau \rho ) f\indices{^a_{bc}} e^b e^c  + d \rho\, e^a =0\, ,
\ee
A sufficient condition for solving this equation is to assume for the functions multiplying $e^a$, $e^a e^b$, and $e^a e^b e^c$ to be independent and therefore
\begin{align} \label{Suff-cond-a}
d\rho=0, \\ \label{Suff-cond-b}
D f^a{}_{bc} +(d\log[\tau] + \rho ) f^a{}_{bc} =0, \\ \label{Suff-cond-c}
f^a{}_{b[c}f^b{}_{df]}=0.
\end{align}
By Poincar\'e's lemma, (\ref{Suff-cond-a}) means that in an open region $\rho =d\lambda$ and (\ref{Suff-cond-b}) becomes $D\tilde{f}^a{}_{bc}=0$, where $\tilde{f}^a{}_{bc} \equiv \tau e^{\lambda}f^a{}_{bc}$. In other words, the coefficients $\{f^a{}_{bc}\}$ are, up to a conformal factor, covariantly constant. Moreover, since the connection is flat it can always be trivialized in an open region by a local Lorentz transformation and without loss of generality one can take $f^a{}_{bc}=\Omega(x) \bar{f}^a{}_{bc}$, where $d\bar{f}^a{}_{bc}=0$.  As we shall see in the next section, it is very easy to achieve these conditions in three dimensions, while the situation in seven dimensions is less straightforward but as it is discussed in Section \eqref{sec:Lorentz-flat-7d} non-trivial solutions still exist.

\subsection{Three Dimensional Example}\label{sec:3d-example}

As explained before, conditions \eqref{Suff-cond-a} and \eqref{Suff-cond-b} imply that $f^a{}_{bc}=\Omega(x) \tilde{f}^a{}_{bc}$, where $D\tilde{f}^a{}_{bc}=0$, which is automatically satisfied if $\tilde{f}^a{}_{bc}$ are the structure constants of the Lorentz algebra. This means that $\tilde{f}$ is an invariant tensor of the Lorentz group, in which case (\ref{Suff-cond-c}) is the Jacobi identity. For $n=3$ this means that modulo local rescalings, $f^a{}_{bc} \equiv \epsilon^a{}_{bc}$ where $\epsilon_{abc}$ is the Levi-Civita tensor defined with the convention\footnote{The Lorentz indices of $a,b,c,...$ are raised and lowered with $\eta^{ab}$ and $\eta_{ab}$.} $\epsilon_{012}=+1$. 

Consequently, in three-dimensions the torsion two-form,
\begin{equation}\label{T'}
T^a = \tau \epsilon\indices{^a_{bc}} e^b e^c - d[\log \tau]e^a\, ,
\end{equation}
is covariantly constant by construction. By rescaling the frame as $\tilde{e}^a \equiv \lambda \tau e^a$ with $\lambda$ an arbitrary constant, the torsion can be written (after dropping the tildes) as 
\begin{equation}\label{T= epsilon ee}
T^a= \lambda \epsilon\indices{^a_{bc}} e^b e^c \,.
 \end{equation}
The equation $D e^a = \lambda \epsilon\indices{^a_{bc}} e^b e^c$ is a condition for the metric structure of the manifold. But, what does this imply for the Riemannian tensor?

In order to address this question, let us separate the torsion-free part of the Lorentz connection $\bar{\omega}\indices{^a_b}$ and the \textit{contorsion} $\kappa\indices{^a_b}$, 
\begin{equation}\label{w=w+k}
 \omega\indices{^a_b} =  \bar{\omega}\indices{^a_b}  + \kappa\indices{^a_b} \, ,
\end{equation} 
where $ \bar{\omega}\indices{^a_b}$ is defined by the torsion-free condition, 
\begin{equation} \label{de+we=0}
\bar{D} e^a = de^a +  \bar{\omega}\indices{^a_b} e^b \equiv 0 \, ,
\end{equation}
while the contorsion contains the information about the torsion, $T^a = \kappa\indices{^a_b} e^b$. Equation \eqref{de+we=0} determines $\bar{\omega}$ as a function of the metric structure of the manifold, and corresponds to the projection onto the tangent space basis of the Levi-Civita connection. The Lorentzian curvature splits as
\begin{equation} \label{R=R+Dk+kk}
R\indices{^a_b} = \bar{R}\indices{^a_b} + \bar{D}\kappa\indices{^a_b} + \kappa\indices{^a_c} \kappa\indices{^c_b} \, ,
\end{equation}
where $\bar{R}\indices{^a_b} = d\bar{\omega}\indices{^a_b} + \bar{\omega}\indices{^a_c} \bar{\omega}\indices{^c_b}$ is the projection onto the tangent space of the Riemann curvature tensor.

In three dimensions, \eqref{T= epsilon ee} means that $\kappa\indices{^a_b} = -\lambda \epsilon\indices{^a_{bc}} e^c$, which in turn implies that $\bar{D}\kappa\indices{^a_b}=0$ and
\be
\kappa\indices{^a_c} \kappa\indices{^c_b} = \lambda^2\, det(\eta)\, \delta^a_f\, \eta_{bd}\, e^d e^f,
\ee
where we have used 
\be
\epsilon\indices{^a_{cd}}\,\epsilon\indices{^c_{bf}} = det(\eta)\, \left[\delta^a_f\, \eta_{bd} - \delta^a_b\, \eta_{df}\right].
\ee

Then, \eqref{R=R+Dk+kk} reads
\begin{equation}\label{R=R+l ee}
R\indices{^a_b} = \bar{R}\indices{^a_b} - \lambda^2\, det(\eta)\, \eta_{bd}\, e^a e^d \, .
\end{equation}
Therefore, the Riemann curvature of a three-dimensional Lorentz-flat manifold is
\be
\bar{R}\indices{^a_b} = \lambda^2\, det(\eta)\, \eta_{bd}\, e^a e^d \, .
\ee
This relation implies that for $\lambda \neq 0$ there are two distinct possibilities\footnote{There are two other options that one obtains from the solutions shown here just by a metric transformation which flips the sign of the metric and that too changes the sign of the curvature.} for a three-dimensional manifold admitting a Lorentz-flat connection:
\begin{enumerate}
\item The metric $\eta_{ab}$ is Riemannian, $ds^2=\sum_{i=0}^{2} dx_i^2$. Then $det(\eta)=+1$ and the three-dimensional manifold is a constant positive curvature manifold, i.e. $3$-sphere (of finite radius) $S^3$, in agreement with \cite{Adams}.

\item The metric $\eta_{ab}$ is pseudo-Riemannian, $ds^2=-dx_0^2 + \sum_{i=1}^{2} dx_i^2$\, then $det(\eta)=-1$ and the three-dimensional manifold has constant negative curvature, i.e. AdS$_3$. This is the case considered in \cite{Alvarez:2014uda}.
\end{enumerate}

There is also the degenerate solution $\lambda = 0$: flat Riemannian manifolds $R^3$
and pseudo-Riemannian flat manifolds $R^{2,1}$ and $R^{1,2}$.
 
\subsection{A General Discussion}\label{sec:General-setting}

As we discussed before, an $n$ dimensional manifold $\mathcal{M}$ is called parallelizable (i.e. it admits framing) if one can find $n$ globally-defined linearly independent tangent vector fields on the manifold. This is always possible if there exists a flat connection in the frame bundle of the manifold. One can make sense of it as integrability of $Du^a =0$ where $u^a$ is a tangent vector defined at each point on the manifold.

Given a manifold with a connection in the fiber bundle $\omega^a{}_b$ and a local frame $e^a$ at each point (vielbein)\footnote{In fact, $e^a$ is a 
solder form and defines a canonical isomorphism between the tangent space of $\mathcal{M}$ and the fibre $E$.}, the equations  \eqref{R} and \eqref{DT} can be 
viewed as structure equations for the connection \cite{chern1966, Milnor1958, kobayashi1965}
\be\label{1ststructureeq}
de^a=-\omega^a{}_b e^b+T^a \, , \quad d\omega^a{}_b =-\omega^a{}_c  \omega^c{}_b + R^a{}_b \,.
\ee
Assuming there is a torsion-free connection, $\bar{\omega}^{a}_{~b}$, then $de^a=-\bar{\omega}^a_{~b} e^b$. From here, we can define the torsion $2$-form $T^a$ as
\be\label{Torsion}
T^a=f^a_{~bc}e^b e^c,
\ee
where $f^a_{~bc}e^c$ in fact plays the role of the contorsion $1$-form. Here the coefficient $f^a_{~bc}$ is a zero-form Lorentz tensor whose main function is to map two vectors into a third one. In other words, it defines an antisymmetric bilinear map of the tangent space $T{\mathcal{M}}$ into itself,
\be \label{crossproduct}
T{\mathcal{M}} \otimes T{\mathcal{M}} \rightarrow T{\mathcal{M}}\, .
\ee
In (\ref{T= epsilon ee}) $f^a_{~bc}$ was identified (up to a constant) with
$\epsilon^a_{~bc}$, which defines the vector cross product in three dimensions.
The cross product is defined as a bilinear map which maps two elements of a
vector space $V$ to another element of $V$ ($V \otimes V \rightarrow V$). In three dimensions, $f\indices{^a_{bc}}$ is the structure constants of $\mathfrak{su}(2)$ Lie algebra and since $f\indices{^a_{bc}}\equiv \epsilon\indices{^a_{bc}}$ then $f\indices{^a_{bc}}$ are invariant under $SO(3)$ Lorentz transformations.
Now recalling the definition of the quaternion algebra, see Appendix \eqref{app:quaternion_octonion}, if we only consider the imaginary elements of quaternions, i.e. $i,j,k$ with $i^2=j^2=k^2=-1$, they form an $\mathfrak{su}(2)$ Lie algebra. Then one can define an isomorphism between the three dimensional vector space $V$ and the imaginary quaternion vector space $\textrm{Im}\,\mathbb{H}$. 

A feature which is of great importance is the existence of the cross product in seven dimensions. In the same way that the three dimensional cross product is related to quaternions ($V \cong \textrm{Im}\,\mathbb{H}$), the seven dimensional one is related to 
octonions ($V \cong \textrm{Im}\,\mathbb{O}$). One defines this cross product as
\be\label{cross-product}
v_a \times v_b =f^c_{~ab} v_c,       
\ee
where $v_a \in \textrm{Im}\, \mathbb{O}$ and $f_{abc}$ is a totally antisymmetric tensor, with\footnote{Note that this is only one choice of octonion multiplication table and it is not unique.}
\begin{equation} \label{f}
f_{abc}=+1\,, \mbox{for    } abc=123, 145, 176, 246, 257, 347, 365.
\end{equation}
However, unlike the situation in three dimensions that the structure constant $f\indices{^a_{bc}}$ was invariant under $SO(3)$ transformations, $f\indices{^a_{bc}}$ is not invariant of $SO(7)$ transformations. The seven dimensional cross product defines a Moufang-Lie algebra (Malcev algebra) with $f\indices{^c_{ab}}$ as corresponding structure constants \cite{Nagy:1993SophusLie}

The exceptional Lie group $G_2$ is the automorphism
group of the octonions \cite{humphreys2012introduction}. Its Lie algebra $\mathfrak{g}_2$ is $\mathfrak{der}(\mathbb{O})$, the derivations of the octonions \cite{schafer1966}.
This can be considered as the definition of $G_2$ and its Lie algebra. The group $G_2$ is the isotropy group of $Spin(7)$\footnote{$Spin(7)$ is the double cover of $SO(7)$ and can be built with the use of spinor representations of $\mathfrak{so}(7)$.} which fixes the identity and since it is the automorphism group of octonions, it thus preserves the space orthogonal to the identity. This seven dimensional orthogonal space is the space of imaginary octonions $\textrm{Im}\,\mathbb{O}$ and there is an inclusion $G_2 \xhookrightarrow{} SO(\textrm{Im}\,\mathbb{O})$. Now, consider the adjoint representation of the algebra of imaginary octonions defined as $\textrm{ad}_{\textrm{Im}\,\mathbb{O}}=L_{\textrm{Im}\,\mathbb{O}}-R_{\textrm{Im}\,\mathbb{O}}$  where $L_{\textrm{Im}\,\mathbb{O}}$, $R_{\textrm{Im}\,\mathbb{O}}$ are spaces of linear transformations
of $\mathbb{O}$ given by left and right multiplications by imaginary octonions. At the Lie algebra level, $\mathfrak{so}(\textrm{Im}\,\mathbb{O})$ can then be written as
\be
\mathfrak{so}(\textrm{Im}\,\mathbb{O}) = \mathfrak{der}(\mathbb{O})\oplus \textrm{ad}_{\textrm{Im}\,\mathbb{O}},
\ee
where $\mathfrak{g}_2 \cong \mathfrak{der}(\mathbb{O})$ and $\textrm{ad}_{\textrm{Im}\,\mathbb{O}}$ is the annihilator of the identity element of $\mathbb{O}$. 
Therefore, $G_2$ has a 7-dimensional representation $\textrm{Im}\,\mathbb{O}$.

There is another definition for the group $G_2$. The exceptional group $G_2$ which was introduced as the automorphism group of octonions can also be thought of as a subgroup of $SO(7)$ which preserves a non-degenerate $3$-form \cite{Bryant1987, Agricola2008}
\be
\varphi= dx^{123}+dx^{145}+dx^{176}+dx^{246}+dx^{257}+dx^{347}+dx^{365},
\ee
where $dx^{abc}= dx^a \wedge dx^b \wedge dx^c$. Also, $G_2$ is the group of all real linear transformations of $\textrm{Im}\,\mathbb{O}$ preserving the cross product.

\section{Flat Connections in Seven Dimensions}\label{sec:Lorentz-flat-7d}

\subsection{Euclidean Signature} 

Let's consider a seven-dimensional simply-connected symmetric Riemannian manifold with the metric $g$ and the connection $\omega$. By imposing the flatness of $\omega$, the corresponding curvature \eqref{R} vanishes. Then, it has been shown by Cartan-Schouten \cite{Cartan1,Cartan2} and Wolf \cite{wolf1972I,wolf1972II} that such a manifold is the $7$-sphere given by the symmetric coset space $SO(8)/SO(7)$. The metric $g$ is invariant under $SO(8)$ isometry group of $S^7$. 

The group $SO(8)$ acts transitively on unit octonions $\mathbb{O}$ ($\mathbb{R}^8\simeq \mathbb{O}$) with the isotropy group $SO(7)=\{h \in SO(8) : he_0 = e_0\}$ where $e_0=1$ is the identity element of $\mathbb{O}$.  
We can also write $T^{AB}$ generators of $\mathfrak{so}(8)$ in terms of $J^{ab}$ generators of the subalgebra $\mathfrak{so}(7)$ and $J^{a}$ the remaining seven generators of $\mathfrak{so}(8)$,
\be
 T^{AB} = \{J^{ab}, J^{a8}=J^a\}, \quad A,B=1,...,8\quad \textrm{and}\quad a,b=1,...,7.
\ee

In the following we show that one can define another parallelism with respect to the metric $g$ with the connection $\bar{\omega}$ in the presence of torsion. It turns out that the metric $\tilde{g}$ is $SO(8)$ invariant metric and the torsion, given by the structure constants of imaginary octonions, breaks the isometry group down to $Spin(7)$.  

$Spin(7)$ acts on unit spinors in seven dimensions with $G_2$ a subgroup of which fixing a unit spinor.

We can define the flat $SO(8)$ connection as $\omega\indices{^A_B} = \theta T\indices{^A_B}$, with $\theta$ a closed $1$-form, (or more generally $\omega\indices{^A_B} = \theta\indices{^A_C} T\indices{^C_B}$ for any $1$-form $\theta\indices{^A_C}$). Given the decomposition of $T^{AB}$ to $J^{ab}$ and $J^a$, it is implied that the $Spin(7)$ connection $\bar{\omega}\indices{^a_b}=\alpha\indices{^a_c} J\indices{^c_b}$ (with $\alpha^a_c$ a $1$-form) cannot be a flat connection. However, one can construct a flat connection by adding a piece to $\bar{\omega}$, roughly speaking, in order to compensate for those terms in $\omega$ which are not in $\bar{\omega}$.  

Precisely, it means the flat connection $\omega\indices{^a_b}$ is written as
\be
\omega\indices{^a_b}= \alpha\indices{^a_c} J\indices{^c_b} + \beta^c (J_b)\indices{^a_c},
\ee
where $\alpha^a_c$ and $\beta^c$ are $1$-forms, defined as $\alpha^a_c=\alpha\indices{^a_{c\mu}} dx^\mu$ and $\beta^c=\beta e^c$. We can choose a basis where $(J_b)\indices{^a_{c}}=f\indices{^a_{bc}}$ and we get
\be\label{eq:flat_con_spin7_omega}
\omega\indices{^a_b}= \bar{\omega}\indices{^a_b} + \beta f\indices{^a_{bc}} e^c.
\ee
Now, the connection $\omega$ is a flat connection and the corresponding curvature vanishes.

As we discussed, the coefficients $f_{abc}$ define an invariant tensor under $G_2 \subset SO(7)$, which means that for any $G_2$ group element $g\indices{^a_b}$, the structure coefficients\footnote{We assume Euclidean signature so that $f^a{}_{bc}$ takes the same values as $f_{abc}$, but we still keep track of upper and lower indices to facilitate the transition to other signatures.} $f^a{}_{bc}$ transform as the components of an invariant tensor, i.e. $g:f \rightarrow f$, or
\be
gf:= g\indices{^a_d} f\indices{^d _{eh}} (g^{-1})\indices{^e_b}(g^{-1})\indices{^h_c}=f\indices{^a_{bc}}.
\ee
Therefore one gets
\begin{equation}\label{Inv-f}
g\indices{^a_d} f\indices{^d _{eh}} (g^{-1})\indices{^e_b}(g^{-1})\indices{^h_c}=f\indices{^a_{bc}}.
\end{equation}
Note, however, that $f_{abc}$ does not define an invariant tensor under $SO(7)$ , but only under its subgroup $G_2$. Let $g\in G_2$ be an element near the identity, 
\begin{equation}
g^a{}_b= \delta^a_b + \theta^s (\mathbb{J}_s)^a{}_b, 
\end{equation}
where the generators $\mathbb{J}_s$ belong to the Lie algebra $\mathfrak{g}_2$ and satisfy
\begin{equation}\label{g2}
[\mathbb{J}_r, \mathbb{J}_s] = C^t_{rs}\,  \mathbb{J}_t\; ,
\end{equation}
with $C^t_{rs}$ the structure constants of $\mathfrak{g}_2$. The invariance of the cross product under the action of $G_2$ implies
\begin{equation}\label{Jf-Jf-Jf=0}
(\mathbb{J}_s)^a{}_d f^d{}_{bc} - (\mathbb{J}_s)^d{}_b f^a{}_{dc} - (\mathbb{J}_s)^d{}_c f^a{}_{bd} = 0.
\end{equation} 
An important property of $G_2$ which was taken as a definition of $G_2$ in Section \eqref{sec:General-setting} is stated in the following theorem:

\begin{theorem}{(Bryant \cite{Bryant1987})}\label{th:Bryant}
The subgroup of $GL(\textrm{Im}\, \mathbb{O})$ that leaves invariant the 3-form 
\begin{equation}
\varphi = \frac{1}{2} f_{abc} e^a e^b e^c ,
\end{equation}
where $e^a$ is a local basis of vector 1-forms in seven dimensions, is a compact, simply connected Lie group of dimension 14, isomorphic to $G_2 = \textrm{Aut}~ \mathbb{O}$. Moreover, $G_2$ is also the holonomy group of squashed 7-sphere.
\end{theorem}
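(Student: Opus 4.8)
The plan is to identify the group of the theorem with the stabiliser $G=\{g\in GL(\textrm{Im}\,\mathbb{O}):g^{*}\varphi=\varphi\}$ of the three-form $\varphi=\frac{1}{2}f_{abc}e^{a}e^{b}e^{c}$ of \eqref{f}, and then to establish, in turn, that $\textrm{Aut}\,\mathbb{O}\subseteq G$, that $G\subseteq SO(7)$, that $\dim G=14$ with $\mathrm{Lie}(G)=\mathfrak{der}(\mathbb{O})=\mathfrak{g}_2$, that $G$ is connected, and finally that $G$ is simply connected. The inclusion $\textrm{Aut}\,\mathbb{O}\subseteq G$ is in effect \eqref{Inv-f}: from \eqref{cross-product} and $yz=-\langle y,z\rangle+y\times z$ one gets $\varphi(x,y,z)=\langle x,yz\rangle$ for imaginary octonions $x,y,z$, and an automorphism fixes $1$, preserves the norm form, and is multiplicative, so $\varphi(gx,gy,gz)=\langle gx,g(yz)\rangle=\langle x,yz\rangle$. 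For $G\subseteq SO(7)$ I would invoke the classical fact (Bryant, Hitchin) that a generic three-form in seven dimensions determines, $GL$-equivariantly, a nondegenerate symmetric form $g_{\varphi}$ together with an orientation --- a suitable normalisation of the $7$-form $(\iota_{v}\varphi)\wedge(\iota_{w}\varphi)\wedge\varphi$ --- so that $g_{g^{*}\varphi}=g^{*}g_{\varphi}$; for the $\varphi$ of \eqref{f} this $g_{\varphi}$ is a positive multiple of the Euclidean metric, consistently with the signature chosen above, hence $G$ preserves a metric and an orientation and $G\subseteq SO(7)$. In particular $G$ is closed in a compact group, hence itself a compact Lie group.

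Next I would compute $\mathfrak{g}:=\mathrm{Lie}(G)$. By the previous step $\mathfrak{g}\subseteq\mathfrak{so}(\textrm{Im}\,\mathbb{O})$, and it is precisely the space of antisymmetric $A$ annihilating $\varphi$, i.e.\ the solutions of \eqref{Jf-Jf-Jf=0}. Using the decomposition $\mathfrak{so}(\textrm{Im}\,\mathbb{O})=\mathfrak{der}(\mathbb{O})\oplus\textrm{ad}_{\textrm{Im}\,\mathbb{O}}$ recalled above, write $A=D+\textrm{ad}_v$; a derivation annihilates $\varphi$ (the computation of the first paragraph with $g$ replaced by $D$ and multiplicativity by the Leibniz rule), so $A\cdot\varphi=0$ collapses to $\textrm{ad}_v\cdot\varphi=0$. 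That this forces $v=0$ I would read off from the observation that $v\mapsto\textrm{ad}_v\cdot\varphi$ is a map of $\textrm{Aut}\,\mathbb{O}$-modules out of the irreducible module $\textrm{Im}\,\mathbb{O}$ --- because $\textrm{ad}_{gv}=g\,\textrm{ad}_v\,g^{-1}$ for an automorphism $g$ --- hence injective unless identically zero by Schur's lemma, together with a single explicit evaluation using \eqref{f}. Thus $\mathfrak{g}=\mathfrak{der}(\mathbb{O})$ and $\dim G=\dim\mathfrak{so}(7)-\dim\textrm{Im}\,\mathbb{O}=21-7=14$; equivalently $\mathfrak{g}=\mathfrak{g}_2$.

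For connectedness, $G$ is a compact Lie group of dimension $14$ containing the connected group $\textrm{Aut}\,\mathbb{O}$ (connectedness of $\textrm{Aut}\,\mathbb{O}$ is classical), so $G^{0}=\textrm{Aut}\,\mathbb{O}$. Any $g\in G$ acts on $\mathfrak{g}_2$ by conjugation, and since $\textrm{Out}(\mathfrak{g}_2)$ is trivial there is $h\in G^{0}$ with $gh^{-1}$ centralising $G^{0}$; but $\textrm{Im}\,\mathbb{O}$ is an absolutely irreducible real $\textrm{Aut}\,\mathbb{O}$-module, so its commutant is $\R\,\mathrm{Id}$, and $\lambda\,\mathrm{Id}$ fixes a degree-three form only if $\lambda^{3}=1$, i.e.\ $\lambda=1$, whence $g=h\in G^{0}$ and $G=\textrm{Aut}\,\mathbb{O}$ is connected. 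For simple connectedness I would use the description recalled in this chapter that $G_2$ is the isotropy group in $Spin(7)$ of a unit spinor: the real spin module of $\mathfrak{so}(7)$ being eight-dimensional, the unit spinors form $S^{7}$ and $Spin(7)/G_2\cong S^{7}$, so $G_2\to Spin(7)\to S^{7}$ is a fibration whose homotopy exact sequence contains $\pi_{2}(S^{7})\to\pi_{1}(G_2)\to\pi_{1}(Spin(7))$, i.e.\ $0\to\pi_{1}(G_2)\to 0$, giving $\pi_{1}(G_2)=0$ (and the adjacent segment re-proves connectedness).

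Finally, the claim that $G_2$ is the holonomy group of the squashed $7$-sphere I would treat separately and more geometrically: realise the squashed $S^{7}$ as the total space of the quaternionic Hopf fibration $S^{3}\hookrightarrow S^{7}\to S^{4}$ equipped with the metric obtained from the round one by rescaling the $S^{3}$ fibres (the non-round Einstein member of that one-parameter family), note that it carries a nearly-parallel $G_2$-structure $\varphi$ (so $d\varphi$ is a constant multiple of the Hodge dual of $\varphi$), and prove that the holonomy of its characteristic connection --- the metric connection with totally antisymmetric torsion compatible with $\varphi$ --- is all of $G_2$, by writing that connection explicitly, computing its curvature, and applying Ambrose--Singer to see that the curvature operators already span $\mathfrak{g}_2$; for the round $S^{7}$ this same connection is the parallelising one, with trivial holonomy, which is precisely why the squashing is needed. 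I expect the real obstacles to lie in this last, genuinely differential-geometric point and, on the algebraic side, in the equivariant reconstruction of the metric from $\varphi$ that gives $G\subseteq SO(7)$; both are standard but not routine, and I would cite Bryant (and Hitchin) for them rather than reproduce them, the remaining steps being the $G_2$-decomposition of $\mathfrak{so}(7)$ already in hand plus elementary Lie-theoretic bookkeeping.
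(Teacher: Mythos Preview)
The paper does not prove this theorem: it is stated with attribution to Bryant \cite{Bryant1987} and used as input, with no proof environment following it. So there is no ``paper's own proof'' to compare against; the thesis simply quotes the result and moves on to use it in describing $S^{7}=Spin(7)/G_{2}$.

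That said, your sketch is sound and follows the standard route one finds in Bryant's and Hitchin's work. A few remarks. Your key nontrivial step, that the stabiliser of a generic $3$-form in seven variables preserves a metric and orientation (hence lies in $SO(7)$), is exactly the point you identify as needing a citation; it is not derivable from the material assembled in the thesis, so flagging it as input is appropriate. The Lie-algebra computation via $\mathfrak{so}(\textrm{Im}\,\mathbb{O})=\mathfrak{der}(\mathbb{O})\oplus\textrm{ad}_{\textrm{Im}\,\mathbb{O}}$ and Schur is clean and matches the decomposition the thesis itself records just above the theorem. Your fibration argument $G_{2}\to Spin(7)\to S^{7}$ for simple connectedness is the standard one and is consistent with the thesis's later use of $S^{7}\simeq Spin(7)/G_{2}$.

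On the last clause, be careful with the word ``holonomy'': the thesis's usage is loose, and what is actually established in the surrounding text is that the parallelising (flat) connection on $S^{7}$ has torsion invariant under $G_{2}$ and that the coset description is $Spin(7)/G_{2}$. The Levi-Civita holonomy of the Einstein squashed metric is a separate (true) statement, and your outline via the nearly-parallel $G_{2}$ structure and Ambrose--Singer is the right shape, but it goes well beyond anything the thesis supplies; treating it as a cited fact, as you propose, is the honest choice.
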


The $S^7$ manifold with the new structure, the metric $g$ which is invariant under $SO(8)$ and the torsion given by $f\indices{^a_{bc}}$ which is invariant under $G_2$, is defined by the coset $Spin(7)/G_2$ and is a squashed\footnote{The $SO(8)$ invariant metric $g$ induces a $Spin(7)$ invariant metric $\bar{g}$ on squashed $S^7$. Since the metric $g$ is $SO(8)$ invariant, the described squashed sphere is known in physics literature as the (round) seven-sphere with torsion.} $S^7$. 


We can then compute the curvature corresponding to the connection $\bar{\omega}$. From \eqref{eq:flat_con_spin7_omega}, we obtain
\be\label{eq:curv_flat_spin7_con}
R\indices{^a_b}= \bar{R}\indices{^a_b} + \beta \bar{D}f\indices{^a_{bc}} e^c + \beta^2  f\indices{^a_{cd}}f\indices{^c_{bg}} e^d e^g.
\ee
Since $\omega$ is the flat connection, then $R\indices{^a_b}=0$. The next job is to compute $\bar{D}f\indices{^a_{bc}}$.

We can consider different notions of parallel transport
associated to $G_2$, $Spin(7)$, and torsion-free $Spin(7)$ connections\footnote{Note that $Spin(7)$ is 
the universal cover of $SO(7)$ and hence simply connected and acts transitively on $S^7$. In the following, we will use $SO(7)$ instead of $Spin(7)$
wherever we refer to the group representation or $\mathfrak{so}(7)$ for the Lie algebra of $Spin(7)$.}, denoted by $\hat{\omega}$, $\omega$, and $\bar{\omega}$ respectively.
Each of these connections has a corresponding covariant derivative, which we denote by $\hat{D}$, $D$, and $\bar{D}$. Since $G_2\subset SO(7)$,
the connection $\hat{\omega}$ can also be viewed as a piece of the connection of $Spin(7)$.
These connections are enjoying independent properties,
\begin{itemize}
    \item
    $Spin(7)$ connection $\omega$ is torsionful, i.e. $\omega=\bar{\omega}+k$.
    \item $Spin(7)$ connection $\bar{\omega}$ is a torsion-free connection, i.e. $\bar{D} e^a  =0$.
    \item $G_2$ is the automorphism group of octonions, hence $G_2$ connection $\hat{\omega}$ satisfies $\hat{D} f\indices{^a_{bc}}=0$.
\end{itemize}


The $G_2$ connection $\hat{\omega}(x)$ can be expressed as
\begin{equation}
\hat{\omega}^a{}_b = \hat{\omega}^s(x) \left(\mathbb{J}_s\right)^a{}_b, \qquad  s=1, \cdots, 14
\end{equation}
where $\{\mathbb{J}_s\}$ are the 14 generators of $G_2$ in a $7\times7$ antisymmetric representation of the Lie algebra $\mathfrak{g}_2$ (for concrete examples of this $7 \times 7$ representation, see \cite{Gunaydin:1973rs}).
We can define the torsion $Spin(7)$ connection $\bar{\omega}$ by splitting to $G_2$ connection and another $1$-form which then has to be a multiple of $f\indices{^a_{bc}} e^c$,
\begin{align}\label{omega-omegahat}
\bar{\omega}\indices{^a_b} 
=\hat{\omega}\indices{^a_b} + \mu f\indices{^a_{bc}} e^c,
\end{align}
with $\mu$ a constant parameter. This is a consequence of \eqref{eq:flat_con_spin7_omega} together with splitting torsionful $Spin(7)$ connection $\omega$ to a $G_2$ connection and an extra piece proportional to $f\indices{^a_{bc}} e^c$, i.e. $\omega\indices{^a_b} = \hat{\omega}\indices{^a_b} + (\mu+\beta) f\indices{^a_{bc}} e^c$.

This can be made more precise in the following way. Consider the generators of $\mathfrak{so}(7)$ algebra $J^{ab}$ and the split $J^{ab} = \{\mathbb{J}_s, \mathbb{X}_c\}$ where $\mathbb{J}_s$ are generators of $\mathfrak{g}_2$ subalgebra and $\mathbb{X}_c$ are the remaining seven generators of $\mathfrak{so}(7)$. As we mentioned before $\mathbb{X}_c$ are the generators of the adjoint representation of the algebra of imaginary octonions. Then $Spin(7)$ connection is defined as
\begin{equation}
\bar{\omega}^a{}_b = \bar{\omega}^s(x) \left(\mathbb{J}_s\right)^a{}_b + \bar{\omega}^c(x) \left(\mathbb{X}_c\right)^a{}_b,
\end{equation}
with 1-form functions $\bar{\omega}^s(x)$ and $\bar{\omega}^c(x)$. Writing $\bar{\omega}^s(x)=\hat{\omega}^s(x)+ \bar{\theta}^s(x)$ for $\bar{\theta}^s(x)$ a $1$-form function, we have
\begin{equation}
\bar{\omega}^a{}_b = \hat{\omega}^s(x) \left(\mathbb{J}_s\right)^a{}_b + \bar{\theta}^s(x) \left(\mathbb{J}_s\right)^a{}_b + \bar{\omega}^c(x) \left(\mathbb{X}_c\right)^a{}_b,
\end{equation}
which simplifies to \eqref{omega-omegahat} where we write $\mu f\indices{^a_{bc}} e^c = \bar{\theta}^s(x) \left(\mathbb{J}_s\right)^a{}_b + \bar{\omega}^c(x) \left(\mathbb{X}_c\right)^a{}_b$. By evaluating the left hand side of this relation (i.e., $\mu f\indices{^a_{bc}} e^c$), one can determine $\bar{\theta}^s(x)$ and $\bar{\omega}^c(x)$ in terms of $\mu$ and $e^c(x)$.

Now from \eqref{omega-omegahat}, we can simply find
\be
\bar{D}f\indices{^a_{bc}} e^c = \hat{D}f\indices{^a_{bc}} e^c - \mu\, (f\indices{^a_{cd}} f\indices{^c_{bg}}+f\indices{^a_{cg}}f\indices{^c_{db}} +  f\indices{^a_{cb}} f\indices{^c_{gd}}) e^g e^d.
\ee
Plugging this back in \eqref{eq:curv_flat_spin7_con}, we get the $Spin(7)$ curvature
\be
\bar{R}\indices{^a_b}= -\beta^2 f\indices{^a_{cd}} f\indices{^c_{bg}} e^d e^g +\beta\mu\, (f\indices{^a_{cd}} f\indices{^c_{bg}} + f\indices{^a_{cg}} f\indices{^c_{db}}  + f\indices{^a_{cb}} f\indices{^c_{gd}}) e^g e^d.
\ee
As we discussed in previous section, the covariant constancy of torsion, $DT^a = 0$, is a requirement for flatness. It can be written as
\be
DT^a = \beta (3\mu + \beta) f\indices{^a_{dc}} f\indices{^d_{bg}} e^b e^g e^c,
\ee
which imposes a relation between $\beta$ and $\mu$
\be\label{mu-beta-rel}
\mu = -\f{\beta}{3}.
\ee
Direct computation using the explicit form of $f\indices{^a_{bc}}$ given in \eqref{f} for the Euclidean signature metric shows that for $\mu = -\beta/3 $ the right hand side is proportional to $e^a e_b$, and finally we obtain
\begin{equation} \label{Rbar-R'}
\bar{R}\indices{^a_b} = \beta^2 e^a e_b .
\end{equation}
This is the curvature of $Spin(7)$ connection $\bar{\omega}$ which has the form of the curvature of a symmetric space.   

From \eqref{Rbar-R'}, we find that
\be
\bar{R}\indices{^a_{bcd}} = \beta^2 (\delta^a_c \eta_{bd} -\delta^a_d \eta_{bc}),
\ee
which by comparing to the curvature of a seven dimensional symmetric manifold, one can determine $\beta$ in terms of $\bar{R}$ the scalar curvature of the manifold or $\gamma$ the radius of
$7$-sphere $\beta^2=\f{\bar{R}}{42}=\f{1}{\gamma^2}$. Therefore, there are two flat connections written as \cite{Ranjbar2018}
\be
\omega\indices{^a_b} = \bar{\omega}\indices{^a_b} \pm \f{1}{\gamma} f\indices{^a_{bc}} e^c.
\ee
The occurrence of the prefactor $\f{1}{\gamma}$ is not a surprise. In fact, in the theory of interacting quantum fields in the presence of torsion, new currents must be considered due to symmetries corresponding to the torsion which results in the appearance of new interacting terms in the action. Specifically, there is an interaction $\bar{\psi}\Gamma_{mnp}\psi$ which couples to the current $J^{mnp}$. This interaction term is proportional to the torsion $S_{mnp}$ with the coupling $\f{1}{\gamma}$ \cite{Biran:1982eg, Casher:1984ym}. In Chapter \ref{ch:exotic}, we will see how the torsion $S_{mnp}$ is related to $f\indices{^a_{bc}}$.

As the final remark, we emphasize that the coset space $Spin(7)/G_2$ which describes the parallelized $S^7$ is a non-symmetric space \cite{Englert:1982vs,Biran:1982eg,DAuria:1982chj}. It is shown explicitly in \cite{Gunaydin:1995as}.  

\subsection{Pseudo-Riemannian Signature}
The discussion of pseudo-Riemannian manifolds is the same as previous section. Let's consider a seven-dimensional simply-connected symmetric pseudo-Riemannian manifold with the metric $g$ and the connection $\omega$. By imposing the flatness of $\omega$, the corresponding pseudo-Riemannian curvature vanishes. Then, as shown by Wolf \cite{wolf1972I,wolf1972II}, such a manifold is the pseudo-sphere $S^{3,4}$ given by the  coset space $SO(4,4)/SO(3,4)$ or the pseudo-hyperbolic space $H^{4,3}$ given by the coset space $SO(4,4)/SO(4,3)$. The metric $g$ is invariant under $SO(4,4)$ isometry group of $S^{3,4}$ or $S^{4,3}$. In fact, Wolf showed that these are the only parallelizable pseudo-sphere or pseudo-hyperbolic spaces, see the proof of theorem 8.13 of \cite{wolf1972II}.

The group $SO(4,4)$ acts on split-octonions $\mathbb{O}_s$ with the isotropy group $SO(3,4)$ or $SO(4,3)$ depending on the signature of the induced metric on the space of imaginary split-octonion which is orthogonal to the identity element of $\mathbb{O}_s$. More precisely, we should consider $SO^+(4,4)$, $SO^+(4,3)$, and $SO^+(3,4)$ which are the connected components of the corresponding $SO(p,q)$ group. In fact, $SO^+(4,4)$ is the group which acts transitively on $S^{3,4}$ or $S^{4,3}$.

Similarly to previous section, we show that one can define another parallelism with respect to the metric $g$ with the connection $\bar{\omega}$ in the presence of torsion. In this case, the metric $g$ is $SO(4,4)$ invariant and the torsion, given by the structure constants of imaginary split-octonions, breaks the isometry group down to $Spin(3,4)$.

The complex Lie algebra $\mathfrak{g}_2$ has, besides the compact form, a split real form $\mathfrak{g}_{2,2}$ to which corresponds the Lie group split $G_2$, referred to as $G_{2,2}$ in the following. This group is the group of automorphism of split octonions $\mathbb{O}_s$, see \cite{Agricola2008, Baez:2014}. The split octonion vector space defines a product with a bilinear form of the signature $(4,4)$. This induces a metric of the signature $(3,4)$ (or $(4,3)$) on the tangent space which is spanned by the basis of imaginary split octonions $\textrm{Im}\, \mathbb{O}_s$, see appendix \ref{app:split_oct}. This defines a cross product on $R^{3,4}$ (or $R^{4,3}$) with the coefficients of cross product $f_{abc}$ being components of a totally antisymmetric tensor and taking the values $\pm 1, 0$ with $f_{abc}=+1$ for\footnote{The ordering written here is once again only a choice, consistent with our chosen convention for the octonionic multiplication table.} $abc=123,145,167,246,275,347,356$. Therefore, there is a $3$-form,
\be\label{3_form_split}
\varkappa= dx^{123}-dx^{154}+dx^{167}-dx^{264}+dx^{275}-dx^{374}+dx^{356},
\ee
which is invariant under $G_{2,2}\subset SO(3,4)$.

There is a definition for the group $G_{2,2}$ along the same line as Theorem \ref{th:Bryant}:
\begin{theorem}\label{th:Split_G2}
The subgroup of $GL(\textrm{Im}\, \mathbb{O}_s)$ that leaves invariant the 3-form 
\begin{equation}
\varkappa = \frac{1}{2} f_{abc} e^a e^b e^c ,
\end{equation}
where $e^a$ is a local basis of vector 1-forms in 7 dimensions, is a non compact Lie group of dimension 14, isomorphic to $G_{2,2} = \textrm{Aut}~ \mathbb{O}_s$.
\end{theorem}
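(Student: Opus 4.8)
The plan is to adapt Bryant's proof of Theorem~\ref{th:Bryant} \cite{Bryant1987} to the split signature, exploiting the fact that $\varkappa$ and the Euclidean $\varphi$ lie in the \emph{same} open $GL(7,\mathbb{C})$-orbit of $\Lambda^3(\mathbb{C}^7)^*$: every $GL(7)$-equivariant algebraic construction built from $\varphi$ therefore transfers verbatim to $\varkappa$, only the reality and signature data changing. The first step is to check that $\varkappa$ is a \emph{stable} $3$-form, i.e.\ that its $GL(\textrm{Im}\,\mathbb{O}_s)$-orbit is open in $\Lambda^3(\textrm{Im}\,\mathbb{O}_s)^*$. Since $\dim\Lambda^3(\mathbb{R}^7)^*=35$ and $\dim GL(7,\mathbb{R})=49$, an open orbit forces the stabilizer to be $14$-dimensional; and stability of $\varkappa$ follows from that of $\varphi$ (Bryant), as the two differ by a complex linear change of frame and stability is detected by the nonvanishing of a $GL$-invariant polynomial on $\Lambda^3$.

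Next I would reconstruct the metric from $\varkappa$ by the classical construction: the $GL(7,\mathbb{R})$-equivariant quadratic map
\[
\Lambda^3 V^* \longrightarrow S^2 V^* \otimes \Lambda^7 V^*,\qquad \alpha\longmapsto\big((u,v)\mapsto(u\lrcorner\alpha)\wedge(v\lrcorner\alpha)\wedge\alpha\big),\qquad V=\textrm{Im}\,\mathbb{O}_s ,
\]
evaluated on $\varkappa$ and trivialized by the induced $7$-form, yields a nondegenerate symmetric bilinear form $g_\varkappa$ on $\textrm{Im}\,\mathbb{O}_s$. Using the explicit coefficients of \eqref{3_form_split} one checks that $g_\varkappa$ has signature $(3,4)$ (equivalently $(4,3)$), and that it coincides, up to a positive constant, with the restriction to $\textrm{Im}\,\mathbb{O}_s$ of the split-octonion norm. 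By equivariance, $\mathrm{Stab}(\varkappa)$ preserves $g_\varkappa$ and the associated volume, so $\mathrm{Stab}(\varkappa)\subseteq O(3,4)$; it is moreover a proper subgroup, since $-\mathrm{Id}\in O(3,4)$ sends $\varkappa\mapsto-\varkappa$. (Alternatively one may quote the classification of real $3$-forms in dimension $7$: there are exactly two open $GL(7,\mathbb{R})$-orbits, with stabilizers the compact $G_2$ and the split $G_{2,2}$, distinguished precisely by the signature of $g_\varkappa$; the form $\varkappa$ lies in the indefinite one.)

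With $g_\varkappa$ available, define the cross product by $g_\varkappa(u\times v,w)=\varkappa(u,v,w)$ and the multiplication on $\mathbb{R}\oplus\textrm{Im}\,\mathbb{O}_s$ by $(a+u)(b+v)=(ab-g_\varkappa(u,v))+(av+bu+u\times v)$; by construction this reproduces $\mathbb{O}_s$ — which is exactly the statement (cf.\ \eqref{cross-product} and Section~\ref{sec:General-setting}) that the $f_{abc}$ of \eqref{3_form_split} are the structure constants of $\textrm{Im}\,\mathbb{O}_s$. Hence any $g\in\mathrm{Stab}(\varkappa)$, preserving both $g_\varkappa$ and $\varkappa$, preserves $\times$ and therefore the full product, so $\mathrm{Stab}(\varkappa)\subseteq\textrm{Aut}\,\mathbb{O}_s=G_{2,2}$; conversely any automorphism of $\mathbb{O}_s$ fixes the unit, hence preserves $\textrm{Im}\,\mathbb{O}_s$, the norm form (algebraically determined by $N(x)=x\bar x$) and the cross product, hence $\varkappa$, so $G_{2,2}\subseteq\mathrm{Stab}(\varkappa)$. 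This gives $\mathrm{Stab}(\varkappa)=G_{2,2}$. The dimension is then $14$: exactly as in the Euclidean decomposition $\mathfrak{so}(\textrm{Im}\,\mathbb{O})=\mathfrak{der}(\mathbb{O})\oplus\textrm{ad}_{\textrm{Im}\,\mathbb{O}}$ recalled above, one has $\mathfrak{so}(\textrm{Im}\,\mathbb{O}_s)=\mathfrak{so}(3,4)=\mathfrak{der}(\mathbb{O}_s)\oplus\textrm{ad}_{\textrm{Im}\,\mathbb{O}_s}$ with $\dim\mathfrak{so}(3,4)=21$ and $\dim\textrm{ad}_{\textrm{Im}\,\mathbb{O}_s}=7$, so $\dim\mathfrak{g}_{2,2}=14$ (consistent with $\mathfrak{g}_{2,2}$ being a real form of the complex simple Lie algebra $\mathfrak{g}_2$, cf.\ \cite{Agricola2008,Baez:2014}). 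Non-compactness is immediate from the previous step: $G_{2,2}$ leaves invariant the \emph{indefinite} form $g_\varkappa$ of signature $(3,4)$, so it embeds in the non-compact $SO^+(3,4)$ and contains hyperbolic one-parameter subgroups acting on a $g_\varkappa$-null plane — equivalently, as the split real form of $G_2$, its Killing form is indefinite.

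I expect the metric-reconstruction step to be the real obstacle: it is the one place where the split signature genuinely enters and where Bryant's Euclidean statement cannot be quoted directly — one must verify that the reconstructed $g_\varkappa$ has signature exactly $(3,4)$ and agrees with the split-octonion norm. Everything else is either a formal transfer of Bryant's argument along the complexification or the linear-algebra bookkeeping already set up in Section~\ref{sec:General-setting}.
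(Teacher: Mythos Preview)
The paper does not actually give a proof of this theorem. It states it as a known characterization of $G_{2,2}$, parallel to Theorem~\ref{th:Bryant} (which is simply cited to Bryant \cite{Bryant1987}), and then immediately moves on to use it. So there is no ``paper's own proof'' to compare against.

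Your proposal is therefore more than the paper provides, and the argument you outline is the correct one: it is precisely the split-signature transcription of Bryant's proof, going through stability of the $3$-form, the Hitchin-type reconstruction of the metric $g_\varkappa$ from $\varkappa$, the rebuilding of the cross product and hence of the split-octonion multiplication, and the identification of the stabilizer with $\mathrm{Aut}\,\mathbb{O}_s$. The dimension count via $\mathfrak{so}(3,4)=\mathfrak{der}(\mathbb{O}_s)\oplus\mathrm{ad}_{\mathrm{Im}\,\mathbb{O}_s}$ mirrors exactly the Euclidean decomposition the paper recalls in Section~\ref{sec:General-setting}, and the non-compactness argument is sound. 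One minor caveat: the paper is careful later (Appendix~\ref{app:sp-and-g2}) to distinguish $G_{2,2}$ from its centerless quotient $G^*_{2,2}=G_{2,2}/\mathbb{Z}_2$, the latter being what actually sits inside $SO^+(3,4)$; your argument identifies the stabilizer in $GL(7,\mathbb{R})$ and should in principle track which of these two groups one lands on, but at the level of the statement as written (which names $G_{2,2}$) this is not an issue.
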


The $S^{3,4}$ manifold with the new structure, the metric $g$ which is invariant under $Spin(3,4)$ and the torsion given by $f\indices{^a_{bc}}$ which is invariant under $G_{2,2}$, is written as the coset $Spin(3,4)/G_{2,2}$ and can be called a squashed $S^{3,4}$. We can write the flat connection, in a chosen basis, similar to the previous section as
\be\label{eq:flat_con_spin34_omega}
\omega\indices{^a_b}= \bar{\omega}\indices{^a_b} + \tilde{\beta} f\indices{^a_{bc}} e^c,
\ee
with the corresponding curvature given by
\be\label{eq:curv_flat_spin34_con}
R\indices{^a_b}= \bar{R}\indices{^a_b} + \tilde{\beta} \bar{D}f\indices{^a_{bc}} e^c + {\tilde{\beta}}^2  f\indices{^a_{cd}}f\indices{^c_{bg}} e^d e^g.
\ee
Similar to the Euclidean discussion of previous section, we can take into account 
different notions of parallel transport
associated to $G_{2,2}$ and $Spin(3,4)$\footnote{Note that $Spin(3,4)$ is 
the universal cover of $SO(3,4)$ and it is not connected (it has two components). Its connected component $Spin^+(3,4)$ acts transitively on $S^{3,4}$. In the following, we will use $SO(3,4)$ instead of $Spin^+(3,4)$ whenever we refer to the group representation or $\mathfrak{so}(3,4)$ for the Lie algebra of $Spin^+(3,4)$, but keeping in mind we are always referring to connected component of each group.}, and torsion-free $Spin(3,4)$ denoted by $\hat{\omega}$ and $\omega$, and $\bar{\omega}$ respectively.
Each of these connections has a corresponding covariant derivative, which we denote by $\hat{D}$, $D$, and $\bar{D}$. Since $G_{2,2}\subset SO(3,4)$,
the connection $\hat{\omega}$ can also be viewed\footnote{We should emphasize an important point that one has to in fact consider the group $G^*_{2,2}$ which is a centerless group to be obtained from $G_2$ and is the stabilizer of $Spin^+(3,4)$. See Appendix \ref{app:sp-and-g2} for details.} as a piece of the connection of $Spin(3,4)$.
These connections are enjoying independent properties,
\begin{itemize}
    \item $spin(3,40$ connection $\omega$ is torsionful, i.e. $\omega=\bar{\omega}+k$.
    \item $Spin(3,4)$ connection $\bar{\omega}$ is a torsion-free connection, $\bar{D} e^a  =0$.
    \item $G_{2,2}$ is the automorphism group of split-octonions, $\hat{D} f\indices{^a_{bc}}=0$.
\end{itemize}

The structure constants of split octonions are invariant with respect to the $G_{2,2}$ connection $\hat{\omega}$,
\be\label{eq:f-inv-split-O}
\hat{D} f\indices{^a_{bc}} = \hat{\omega}\indices{^a_k} f\indices{^k_{bc}}-\hat{\omega}\indices{^k_b} f\indices{^a_{kc}}-\hat{\omega}\indices{^k_c} f\indices{^a_{bk}}= 0.
\ee
The $G_{2,2}$ connection $\hat{\omega}(x)$ can be expressed as
\begin{equation}
\hat{\omega}^a{}_b = \hat{\omega}^s(x) (\hat{\mathbb{J}}_s)\indices{^a_b}, \qquad  s=1, \cdots, 14
\end{equation}
where $\{\hat{\mathbb{J}}_s\}$ are the 14 generators of $G_{2,2}$ in a $7\times7$ antisymmetric representation of the Lie algebra $\mathfrak{g}_{2,2}$.

Then, one can write a similar equation as \eqref{omega-omegahat}, $\bar{\omega}\indices{^a_b}=\hat{\omega}\indices{^a_b}+\tilde{\mu} f\indices{^a_{bc}}e^c$, in order to find a relation between $\bar{D}f\indices{^a_{bc}}$ and $\hat{D}f\indices{^a_{bc}}$ which leads us to get
the curvature of $Spin(3,4)$ connection
\be
\bar{R}\indices{^a_b}= -\tilde{\beta}^2 f\indices{^a_{cd}} f\indices{^c_{bg}} e^d e^g + \tilde{\beta}\tilde{\mu}\, (f\indices{^a_{cd}} f\indices{^c_{bg}} + f\indices{^a_{cg}} f\indices{^c_{db}}  + f\indices{^a_{cb}} f\indices{^c_{gd}}) e^g e^d.
\ee
The relation \eqref{mu-beta-rel} was obtained with no reference to the actual value of $f\indices{^a_{bc}}$ and as a result it holds in the case of split-octonion where it imposes the condition $\tilde{\mu}=-\tilde{\beta}/3$.

Direct computation using the explicit form of $f\indices{^a_{bc}}$ given in \eqref{3_form_split} for the pseudo-Riemannian signature metric and $\tilde{\mu}=-\tilde{\beta}/3$, and in a chosen basis, shows that if the tangent space is spanned by the imaginary split-octonion basis, the induced metric on the tangent space is a pseudo-Riemannian metric of signature $(3,4)$ and the curvature of the manifold is
\be
\bar{R}^{ab}= \tilde{\beta}^2  det(\xi) e^a e^b.
\ee
Here $det(\xi)$ is the determinant of the metric $\xi_{ab}$ on the quaternionic subspace of $\textrm{Im}\,\mathbb{O}_s$ spanned by the basis $e^a$, $e^b$ and $e^c$ where $c$ is the index for which $f_{abc}$ is non-zero \cite{Ranjbar2018}. This result holds for the case where the metric has the signature $(4,3)$. It is straightforward to see since the only change under metric reversal is $f_{abc} \rightarrow f_{abc}$ which does not affect the result as it appears quadratically in the expression for the curvature.


We can summarize the result in seven dimensions as follows:
\begin{itemize}
\item If $T\mathcal{M}\cong \textrm{Im}\,\mathbb{O}$, then the induced metric $\eta_{ab}$ on the tangent space is Euclidean. The manifold is a seven dimensional constant positive curvature symmetric manifold, that is $S^7=\f{SO(8)}{SO(7)}$ or constant positive curvature non-symmetric coset space $S^7=Spin(7)/G_2$ in the presence of torsion.
\item If $T\mathcal{M}\cong \textrm{Im}\,\mathbb{O}_s$, then the induced metric $\eta_{ab}$ on the tangent space is a diagonal metric
\begin{enumerate}
\item with the signature $(4,3)$, $ds^2=-\sum_{i=1}^{3} dx_i^2 + \sum_{i=4}^{7} dx_i^2$, and consequently $det(\xi)=-1$. The manifold $\mathcal{M}$ is a constant negative curvature symmetric manifold $S^{4,3}$ written as the coset $S^{4,3}=\f{SO(4,4)}{SO(4,3)}$ if there is no torsion or a constant negative curvature non-symmetric coset space $S^{4,3}=\f{Spin(4,3)}{G_{2,2}}$ in the presence of torsion.
\item with the signature $(3,4)$, $ds^2=-\sum_{i=1}^{4} dx_i^2 + \sum_{i=5}^{7} dx_i^2$, and consequently $det(\xi)=+1$. The manifold $\mathcal{M}$ is a constant positive curvature symmetric manifold $S^{3,4}$ written as the coset $S^{3,4}=\f{SO(4,4)}{SO(3,4)}$ if there is no torsion or a constant positive curvature non-symmetric coset space as $S^{3,4}=\f{Spin(3,4)}{G_{2,2}}$ in the presence of torsion.
\end{enumerate}
\end{itemize}

This is consistent with the discussion of parallelizability for both $S^7$ and $S^{3,4}$ in seminal papers by Cartan-Schouten \cite{Cartan1,Cartan2} and Wolf \cite{wolf1972I, wolf1972II}.
\chapter{$S^{3,4}$ Reduction and Exotic Solutions}\label{ch:exotic}

\section{Duality and Unorthodox Spacetime Signatures}\label{sec:duality}

In general, dualities are nontrivial isomorphisms relating two models that are seemingly completely unrelated. Dualities have been a cornerstone in different branches of 
mathematics. As examples of dualities in mathematics we can mention the isomorphism between a finite dimensional vector space and its dual vector space, the relation which maps points to lines and lines to points in Desarguesian geometry and many other examples in different areas of mathematics. A very well known example of duality in physics is the electric-magnetic duality where the Maxwell equations are invariant under the transformation $(F^{\mu\nu}, \star F^{\mu\nu})\rightarrow (\star F^{\mu\nu}, -F^{\mu\nu})$, where $\star$ is the Hodge dual operator.
Since the early 1980's, with the discovery of Mirror symmetry in string theory, it has been understood that dualities can relate theories 
with different gauge groups, different spacetime dimensions,
different amounts of supersymmetry, and even relate theories of gravity to gauge theories. Moreover, there are dualities suggesting that even the spacetime signature
is a relative concept that depends on the description \cite{Hull:1998ym, Hull:1998vg}.

In any duality prescription between two dual models, each model has a moduli space of physical quantities and duality is the mean to relate these moduli spaces in a one-to-one correspondence. In a sense, 
observable quantities depend on a ``frame of reference" and duality is a map between these frames. As an example in string theory S-duality is a relation between a theory in strong coupling
regime and another theory in weak coupling regime where the coupling constant bears the relative concept of the ``frame of reference". The other example is the T-duality where a string theory with strings wrapped around a circle of radius $R$ is proved to be equivalent to another string theory with strings wrapped around a circle of radius $1/R$. 
In both S-duality and T-duality the dual theory can be in a different scale of coupling or in different spacetime dimensions but always the signature of spacetime is Lorentzian.
In \cite{Hull:1998ym, Hull:1998vg}, Hull extended this to include the relativeness of the notion of signature. T-duality maps type IIB superstring theory to type IIA while Heterotic theory is mapped to itself. $M$-theory in $10+1$ dimensions compactified on a 2-torus $T^{2,0}$ in the limit where the torus shrinks to zero size, gives a type IIB theory in $9+1$ dimensions, while compactifying the $10+1$ dimensional $M$-theory on a 3-torus $T^{3,0}$ in the zero size limit gives again $M$-theory. In all these cases, the compactification was done on a spacelike circle. In order to change the signature of the dual theory one needs to compactify the theory on a timelike circle. This can in principle produce ghosts or tachyons due to presence of closed timelike curves and brings about instabilities. Therefore a dual theory obtained by compactifying on a timelike circle can be pathological as a theory itself, however there are few reasons to consider them. First, it is true that while doing the dimensional reduction, truncating the dependence on the internal dimensions leaves a dimensionally reduced theory
in which some of the fields have kinetic terms of the wrong sign, but in the full compactified theory such ghosts can be gauged away \cite{Hull:1998br}. This is similar to the example of dimensional reduction of (super) Yang-Mills theory from $D+d$ dimensions to $D$ dimensions by reducing on a Lorentzian torus $T^{d-1,1}$ which gives Yang-Mills plus $d$ scalars in Euclidean space, acted on by an $SO(d-1, 1)$ R-symmetry. The scalar field coming from $A_0$ has a `wrong' sign for the kinetic term but in the full compactified theory $A_0$ can be gauged away leaving the theory with no ghost.

Secondly, one notices that even though the dual theories themselves can be pathological but the theories obtained through timelike duality can be written on flat space independently without resorting to the timelike duality and avoiding possible pathologies with which will come. Moreover, since all these theories are related through the chain of dualities to an $M$-theory which is known to be free of the mentioned pathologies, the dual theories are as `good' as (or as `bad' as) the $10+1$ dimensional standard $M$-theory.    

Given the possibility of timelike reduction, the $M$-theory in $10+1$ dimensions will be linked to a theory in $9+2$ dimensions, which will be 
referred to as $M^*$-theory, and a theory in $6+5$ dimensions, $M'$-theory. Timelike T-duality maps Heterotic string theory to itself but type IIA and type IIB theories are mapped to type IIB$^*$ and type IIA$^*$ theories respectively. The strong coupling limit of the IIA$^*$ theory is the $M^*$-theory in $9+2$ dimensions. Compactifying $M$-theory on $T^{2,1}$ on the zero size limit of 3-torus gives $M^*$-theory in $9+2$ dimensions. Then compactifying $M^*$-theory on a Euclidean 3-torus $T^{3,0}$ in the limit that torus shrinks to zero size gives the $M'$-theory in $6+5$ dimensions. One can understand the appearance of extra three dimensions by looking at the strong coupling limit of type IIA and IIA$^*$  theories in the following way: in the case of $M$-theory, the type IIA theory is T-dual to $M$-thoery. Type IIA theory has D$0$-branes, which in the compactification limit on $3$-torus, are wrapped around the compact $3$-torus. At strong coupling limit, i.e. $g_s\rightarrow \infty$ where $g_s$ is the string coupling, three spacelike $2$-cycles of $T^{3,0}$ become light and signals opening of the three extra spacelike dimensions. In the case of $M^*$-theory, one observes that the type IIA$^*$ theory and $M^*$-theory are T-dual. The  IIA$^*$ theory has E$1$-branes wrapped around one spacelike and two timelike $2$-cycles of compact $T^{2,1}$ geometry which becomes light at strong coupling limit and can be considered as the whole tower of Kaluza-Klein modes from the decompactification of extra dimensions. It hints towards opening one spacelike and two timelike dimensions.
The complete list of string theories and their strong coupling limit and the net of dualities which relate them can be seen in \cite{Hull:1998ym}. It has been reviewed more recently in \cite{Dijkgraaf:2016lym}, however the notation is different than what we use here.

The change of signature induced by duality transformations can be given a group theoretical interpretation in terms of the infinite-dimensional Kac-Moody algebras $E_{11}$ or $E_{10}$ that have been conjectured to be ``hidden symmetries" of 11-dimensional supergravity or an appropriate extension \cite{Julia:1980gr,Julia:1982gx,Julia:1981wc,Nicolai:1991kx,Julia:1997cy,West:2001as,Damour:2000hv,Damour:2002cu}. Weyl reflections in the exceptional root may change the spacetime signature. 
In the orbit of the standard signature $(10,1, +)$ of $M$-theory appear also the signatures $(9,2,-)$ $M^*$-theory, $(6,5,+)$ $M'$-theory, and the theories with reversed signatures $(1,10, -)$, $(2,9, +)$ and $(5,6, -)$ \cite{Englert:2003py,Keurentjes:2004bv,Keurentjes:2004xx,Englert:2004ph}.  In $(s,t, \pm)$, $s$ is the number of space directions, $t$ the number of time directions, and  $\pm$ refers to the sign of the kinetic term of the $3$-form of eleven-dimensional supergravity where $-$ means the kinetic term has a `wrong' sign. 

For all these reasons, it is of interest to explore solutions with exotic spacetime signatures in the context of $M$-theory. This is the the path we follow in this chapter. The analogues of  Freund-Rubin solutions \cite{Freund:1980xh} have been already studied in \cite{Hull:1998fh}. In this chapter, we derive the analogue of Englert solution \cite{Englert:1982vs}.

As we discussed in Chapter \ref{ch:Lorentz_flatness}, the seven-sphere $S^7$ is one of the four spheres that are parallelizable. If one considers pseudo-Riemannian signatures, in seven dimensions the pseudo-sphere $S^{3,4}$,  which is the space of constant curvature $SO(4,4)/SO(3,4)$ or space of unit split octonions\footnote{The negative curvature spaces obtained by multiplying the metric by an overall minus sign, which are the pseudo-hyperbolic spaces $H^{0,7}$ and $H^{4,3}$  (see appendix \ref{app-notation} for conventions) also enjoy this property.}, turns out to be parallelizable. The $7$-sphere plays an important role in eleven-dimensional supergravity \cite{Cremmer:1978km}, either through Freund-Rubin compactification where it leads to the $SO(8)$ gauged supergravity  \cite{deWit:1982bul} or through Englert solution which breaks the $SO(8)$ symmetry down to $Spin(7)$ \cite{Englert:1983jc} due to the presence of internal fluxes. We shall see that there is an analogue of the Englert solution in $M'$-theory, which takes the form $AdS_4 \times S^{3,4}$.  Similar to the original proposal by Englert where the solution characterized by internal fluxes provided by a torsion that parallelizes $S^7$, this exotic form of the Englert solution is characterized by internal fluxes provided by a torsion that flatten a connection on the tangent bundle of $S^{3,4}$.  These internal fluxes break the $SO(4,4)$ symmetry of the pseudo-sphere $S^{3,4} \equiv SO(4,4)/SO(3,4)$ down to\footnote{For more information on the split octonions and the relevant associated groups see Appendix \ref{app:quaternion_octonion} and see Appendix \ref{App:pseudo} for information on the pseudo-sphere $S^{3,4}$.} $Spin^+(3,4)$.

\section{Action Principle and General Ansatz}

\subsection*{Action}

Our starting point is the bosonic action of eleven-dimensional supergravity with mixed signature \cite{Cremmer:1978km,Hull:1998ym, Hull:1998vg}, written as
\be
S=\f{1}{2\kappa^2} \int dx^{s+t} \mathcal{L},
\ee
where the Lagrangian $\mathcal{L}$ is
\begin{align}
\mathcal{L}=\sqrt{|g^{s+t}|} (R&-\f{\eta}{24}F_{MNPQ}F^{MNPQ}\nonumber\\
&-\f{2\sqrt{2}}{12^4}\f{1}{\sqrt{|g^{s+t}|}}\varepsilon^{M_1...M_{11}}F_{M_1...M_4}F_{M_5...M_8}A_{M_9...M_{11}}).\label{11d_Lagrangian}
\end{align}
Here, $s+t$-dimensions ($s+t=11$) refers to $s$ spacelike directions and $t$ timelike directions and $F_{MNPQ}=4\partial_{[M}A_{NPQ]}$. The sign of $\eta$ is determined as\footnote{From the point of view of fermionic content of the theory, whenever $\eta=+1$ there exists a purely real representation of Clifford algebra and as a result spinors are Majorana. However, if $\eta=-1$ then there exists a purely imaginary representation of Clifford algebra and the spinors are pseudo-Majorana \cite{Kugo:1982bn}.}
\begin{align}
\eta &= +1\quad \textrm{if}\quad (s-t)\, \textrm{mod}\,8=1\quad\textrm{i.e. in dimensions}\quad 10+1, 6+5, 2+9,\\
\eta &= -1\quad \textrm{if}\quad (s-t)\, \textrm{mod}\,8=7\quad\textrm{i.e. in dimensions}\quad 1+10, 5+6, 9+2.
\end{align}
Therefore, in dimensions $10+1$, $6+5$ and $2+9$ the kinetic term of the $3$-from has the correct sign ($\eta=+1$) while in 
dimensions $9+2$, $1+10$ and $5+6$ the kinetic term of $3$-form comes with the `wrong' sign ($\eta=-1$). As mentioned in Section \ref{sec:duality}, $M$-theory corresponds to $\eta = 1$, $(s,t)= (10,1)$, whereas $M^*$-theory corresponds to $\eta = -1$, $(s,t) = (9,2)$ and $M'$-theory corresponds to $\eta = 1$, $(s,t) = (6,5)$.

The equations of motions can be obtained by varying the Lagrangian \eqref{11d_Lagrangian} with respect to the metric $g_{MN}$ and the $3$-from $A_{MNP}$,
\begin{align}
R_{MN}-\f{1}{2}g_{MN}R &= -\f{\eta}{48}(g_{MN}F^2-8 F_{MPQR}F\indices{_N^{PQR}}),\label{EOM:var-wt-metric}\\
F\indices{^{MNPQ}_{;M}} &= \f{1}{\eta} \f{18\sqrt{2}}{12^4}\f{1}{\sqrt{|g^{11}|}} \varepsilon^{M_1...M_8 NPQ}F_{M_1...M_4}F_{M_5...M_8}\label{EOM:var-wt-3form}.
\end{align}

The equations of motion are invariant under the simultaneous transformations
\be
g_{MN} \rightarrow - g_{MN},\qquad
A_{MNP} \rightarrow  A_{MNP},
\ee
accompanied with the sign change $\eta \rightarrow -\eta$. We recall that with our conventions,  $\epsilon^{01\cdots 10}$ changes sign 
since in the transformation $g_{MN} \rightarrow - g_{MN}$,  the parity of the number of time directions changes. Because of this invariance, 
we concentrate on $M$-theory, $M^*$-theory and $M'$-theory, knowing that the solutions for the reversed 
signature theories corresponding to the other theories in the same duality orbit as ordinary supergravity can be readily obtained using the invariance.

\subsection*{Product of Constant Curvature Spaces}

We assume that the eleven-dimensional spacetime manifold (referred to as the ``\textbf{ambient space}" in the sequel) splits as the product of a four-dimensional manifold (``\textbf{background spacetime}") and a seven-dimensional manifold (``\textbf{internal space}")\footnote{We stress that the ``internal space" is thus in our terminology always a seven-dimensional manifold, independently of its curvature or signature.}.  The metric itself is a direct sum and splits into two parts, the metric on the internal space, and the one on the background spacetime. Capital Latin indices $M,N,...$ run over $0,...,10$. Small Latin indices $m,n,...$ are internal space indices and run over $4,...,10$.  Greek indices $\mu,\nu,...$ are background spacetime indices running over $0,1,2,3$. Notice that here $|g^{11}|$ is the absolute value of the determinant of the metric of the total space. We use the absolute value everywhere when needed for the determinant of the ambient, internal and background metrics since, a priori, there is no restriction on the signature of the aforementioned metrics. 

We also assume that spacetime and the internal space are spaces of constant curvature, i.e., either a pseudo-sphere or a pseudo-hyperbolic space, with curvatures respectively given by 
\begin{align}
 R_{mnrs} &= \f{a}{6} \left(g_{mr} g_{ns} - g_{ms} g_{nr} \right) \label{CC1},\\ 
R_{\mu \nu \rho \sigma} &= \f{b}{3} \left(g_{\mu \rho} g_{\nu \sigma} - g_{\mu \sigma} g_{\nu \rho} \right) \label{CC2},
\end{align}
where $a$ and $b$ are constants, which are positive for pseudo-spheres and negative for pseudo-hyperbolic spaces\footnote{The conventions and notations used in this chapter are collected in Appendix \ref{app-notation}.}.  This implies in particular
\begin{align}
R_{mn} &=a\, g_{mn},\label{Ricci-internal-space}\\
R_{\mu\nu} &=b \, g_{\mu\nu},\label{Ricci-spacetime}\\
R_{m\mu} &=0.\label{Ricci-mixed}
\end{align}

Assuming that some of the internal or background fluxes are turned on, we can find different type of solutions for the bosonic system of supergravity given the above ansatz for the internal and the spacetime geometry.

\section{Freund-Rubin Type Solutions}

We first derive the analogues of Freund-Rubin solutions.  These were already considered in \cite{Hull:1998fh}.

In addition to conditions \eqref{Ricci-internal-space}-\eqref{Ricci-mixed} on the metric, Freund-Rubin ansatz assumes that the only $4$-form flux is in spacetime and reads explicitly as 
\begin{align}
F^{\mu\nu\rho\sigma} &=\f{1}{\sqrt{|g^4|}}\f{f}{\sqrt{4!}} \varepsilon^{\mu\nu\rho\sigma},\label{flux-background}\\
F^{mPQR} &=0,
\end{align}
where $f$ is a constant. There is no internal flux.  

Now, from \eqref{EOM:var-wt-metric} and  the ansatz,  one finds that
\begin{align}
R_{mn} &= (-\f{1}{3}) \f{\eta}{24}f^2 (-1)^{T'}  g_{mn},\\
R_{\mu\nu} &= (+\f{2}{3}) \f{\eta}{24}f^2 (-1)^{T'}  g_{\mu\nu},
\end{align}
where $T'$ is the number of timelike directions in the four-dimensional background spacetime metric.  We shall also introduce $T$ as the number of timelike directions in the internal space, so that $t$  (the number of timelike directions of the eleven-dimensional ambient space) is equal to $t = T + T'$. Comparing with \eqref{Ricci-internal-space} and \eqref{Ricci-spacetime}, we find
\be\label{FR-solution}
a=-\f{1}{3}\f{\eta}{24}f^2 (-1)^{T'}, \qquad b=-2a,
\ee
as in \cite{Hull:1998fh}.
For example, the standard Freund-Rubin solution is obtained by setting $\eta=+1$, $T'=1$.

The Freund-Rubin solutions constitute a one-parameter family of solutions. One can take $f$ as the free parameter characterizing the solutions.  The curvature of the background spacetime and of the internal space are then determined by $f$ and are of opposite signs. Note that the dependence of both curvatures on $f$ is always through $f^2$ in \eqref{FR-solution} and therefore they are invariant under $f \rightarrow -f$. 

The sign of $a$ is easily seen to be always equal to  $(-1)^T$. For $M$-theory and $M'$-theory, $\eta = 1$ and $t=T+T'$ is an odd number and therefore one has $(-1)^{T'} =-(-1)^T$. This is also true for the reversed $M^*$-theory with signature $(2,9)$ and $\eta = 1$. For $M^*$-theory, one has $\eta = -1$ but $t$ is now an even number resulting in $(-1)^{T'} =(-1)^T$.  The same is true for the reversed signature $M$- and $M'$-theories. Therefore, the sign of $a$ is always equal to  $(-1)^T$. In all cases, the curvatures in the four-dimensional spacetime and in the internal manifold have opposite signs.

All Freund-Rubin solutions that can be obtained in this way are listed in Table \ref{tab:FRsolutions}. All these solution are maximally supersymmetric and their corresponding brane construction can be found in \cite{Hull:1998fh}.

\begin{table}[h!]
\begin{center}
\begin{tabular}{ c|c|c|c| }
\cline{2-4} 
       & $t=T+T'$ & $T$ & solution \\
\cline{2-4}
       \multirow{2}{2cm}{$M$-Theory ($\eta=+1$)} &  \multirow{2}{*}{1} & 0  & $S^7\times AdS_4$  \\ &  & 1 & $AdS_7 \times S^4$\\
\cline{2-4}
\cline{2-4}
       \multirow{3}{2.2cm}{$M^*$-Theory ($\eta=-1$)} & \multirow{3}{*}{2} & 0  & $S^7\times H^{2,2} = S^7 \times \f{SO(2,3)}{SO(2,2)}$ \\ &  & 1 & $AdS_7 \times dS_4$ \\ &  & 2 & $S^{5,2}\times H^4 = \f{SO(6,2)}{SO(5,2)}\times \f{SO(4,1)}{SO(4)}$\\
\cline{2-4}
       \multirow{5}{2.2cm}{$M'$-Theory ($\eta=+1$)} & \multirow{5}{*}{5} & 1 & $AdS_7\times S^{0,4} = AdS_7 \times \f{SO(1,4)}{SO(0,4)}$ \\ &  & 2 & $S^{5,2}\times H^{1,3} = \f{SO(6,2)}{SO(5,2)}\times \f{SO(1,4)}{SO(1,3)}$\\ &  & 3 & $H^{4,3}\times S^{2,2} = \f{SO(4,4)}{SO(4,3)}\times \f{SO(3,2)}{SO(2,2)}$ \\ &  & 4 & $S^{3,4}\times AdS_4 = \f{SO(4,4)}{SO(3,4)}\times AdS_4$ \\ &  & 5 & $H^{2,5}\times S^4 = \f{SO(2,6)}{SO(2,5)}\times S^4$\\ 
\cline{2-4}
\end{tabular}
\end{center}
\caption{Freund-Rubin type solutions of different timelike T-dual $M$-theories. Here, $t=T+T'$ is the number of timelike direction of the corresponding $M$-theory, $T$ is the number of time direction of the internal space, and the background geometry is always considered to be four dimensional. The notation for different pseudo-spheres $S^{p,q}$ and pseudo-hyperbolic spaces $H^{p,q}$ is explained in the appendix \ref{App:pseudo}.}\label{tab:FRsolutions}
\end{table}

\section{Englert Type Solutions}

The Englert type solutions \cite{Englert:1982vs} can be considered as spontaneous breakings of the Freund-Rubin solutions through non-vanishing expectation values for some of the  internal components of the $4$-form.  In other words there is now an internal flux which breaks completely the supersymmetry. 

The Englert construction corresponds to the compactification on a parallelized internal manifold. This occurs when the internal  space of constant curvature is either the sphere $S^{7,0}$, corresponding to the original Englert solution \cite{Englert:1982vs}, or the pseudo-sphere $S^{3,4}$ (as well as the cases trivially obtained from these ones by an overall change of sign of the metric, corresponding to the pseudo-hyperbolic spaces $H^{0,7}$ and $H^{4,3}$).  All the other cases ($S^{6,1}$, $S^{5,2}$,  $H^{6,1}$ etc) are not parallelizable.  

By matching the signatures, one easily sees that there are a priori four cases where the parallelizable seven-(pseudo-)spheres might appear for the theories in the time-like duality orbit of $M$-theory.  These are
\begin{itemize}
\item Case 1: $M$-theory with spacetime manifold  $M^{3,1} \times S^{7,0}$,
\item Case 2: $M^*$-theory with spacetime manifold $M^{2,2} \times S^{7,0}$,
\item Case 3: $M'$-theory with spacetime manifold $M^{3,1} \times S^{3,4}$,
\item Case 4: reversed $M'$-theory with spacetime manifold $M^{2,2} \times S^{3,4}$,
\end{itemize}
as well as the reversed signature solutions.  Here, $M^{p,q}$ is a priori either $S^{p,q}$ or $H^{p,q}$, i.e., we leave the sign of the curvature of four-dimensional spacetime undetermined. It will be determined by solving the equations of motion.  We shall show that only cases 1 and 3 are actually compatible with the equations of motion and furthermore, that $M^{3,1}$ is then the anti-de Sitter space $H^{3,1}$.

It may be worth mentioning that in all above cases, the product $\eta (-1)^{T'}$, where $T'$ is the number of time directions in the four-dimensional background spacetime, is equal to $-1$,
\be 
\eta (-1)^{T'} = -1.
\ee
Indeed, one has $\eta = 1$ for cases 1 and 3 with $T'$ odd, and $\eta = -1$ for cases 2 and 4 but $T'$ is now even. 

Furthermore, we point out that in all cases the curvature $a$ of the internal space ($S^{7,0}$ or $S^{3,4}$) is positive, $a >0$, 
and $T$ the number of time directions in the internal space is even. For the reversed signature solution, the curvature $a$ would be negative and $T$ odd.

The Englert ansatz \cite{Englert:1982vs} consists in imposing the conditions
\begin{align}
 R_{mnrs} &= \f{a}{6} \left(g_{mr} g_{ns} - g_{ms} g_{nr} \right) \label{Eng-CC1},\\ 
R_{\mu \nu \rho \sigma} &= \f{b}{3} \left(g_{\mu \rho} g_{\nu \sigma} - g_{\mu \sigma} g_{\nu \rho} \right) \label{Eng-CC2},\\
F^{\mu\nu\rho\sigma} &=\f{1}{\sqrt{|g^4|}}\f{f}{\sqrt{4!}} \varepsilon^{\mu\nu\rho\sigma}\label{Eng-flux-background1},\\
F^{m\mu QR} &=0. \label{flux-mixed}
\end{align}
which are the conditions \eqref{CC1}, \eqref{CC2} and \eqref{flux-background} with an extra condition \eqref{flux-mixed} on the fluxes.
The condition that the internal flux should vanish is, however, relaxed, i.e., one allows $F^{mnpq}  \neq 0$. 
In fact, one postulates instead 
\be
F_{mnpq}=\lambda S_{[npq,m]} = \lambda S_{npq;m}. \label{flux-internal}
\ee
where $S_{npq}$ is one of the (metric compatible) torsions that parallelizes the internal space, i.e., such that when added
to the Levi-Civita torsion-free connection, the resulting connection is 
flat (zero curvature)\footnote{The relation between the torsion coefficients $f\indices{^a_{bc}}$ in the equation \eqref{Torsion} and $S_{mnp}$ in \eqref{flux-internal} is given by $f\indices{^a_{bc}}= e^a_m S\indices{^m_{np}}E^n_b E^p_c$.}.   In (\ref{flux-internal}),   $\lambda$ is a constant parameter.

As Cartan and Schouten showed \cite{Cartan1, Cartan2} (see also \cite{Englert:1982vs}), the torsions satisfy the following identities:
\begin{align}
S\indices{^{tr}_m} S_{trn} &=a\, g_{mn}, \label{Torsion101}\\
S\indices{_{tm}^r} S\indices{_{rn}^s} S\indices{_{sp}^t} &=\f{1}{2}a\, S_{mnp}, \label{Torsion102}\\
S_{npq;m} = S_{t[np}S\indices{^t_{q]m}} &= S_{[npq,m]}.  \label{Torsion103}
\end{align}

In order to prove these identities, recall the equation \eqref{R=R+Dk+kk} with $\kappa\indices{^a_b}=- e^a_{m} S\indices{^m_n} E^n_b$ where $S\indices{^m_n}=S\indices{^m_{np}}dx^p$. Now, given that for a flat connection $R\indices{^a_b}=0$, if one rewrites \eqref{R=R+Dk+kk} in the coordinates then the Riemann curvature is written as
\be\label{Riemann_with_S}
\bar{R}_{mnpq}= S_{rmn}S\indices{^r_{pq}} - S_{mnp;q},
\ee
where $S_{mnp;q}= \bar{\nabla}_{q} S_{mnp}$. Using the symmetries of curvature tensor and Bianchi identities, with a bit of work, we can show \eqref{Torsion103}. To prove the equations \eqref{Torsion101} and \eqref{Torsion102}, we can substitute the left hand side of \eqref{Riemann_with_S} with \eqref{CC1} and multiply both sides with $g^{mp}$ and $S\indices{^{mp}_r}$ to get \eqref{Torsion101} and \eqref{Torsion102} respectively. It is noted that the torsion tensor is totally antisymmetric, i.e., $S_{mnp}=S_{[mnp]}$.    

The parallelisms are respectively related to the octonion and split octonion algebras.  Because these algebras are non-associative, there are in each case two infinite classes, denoted $+$ and $-$, which are related to left and right multiplications.  We refer to \cite{Rooman:1984} for useful information concerning  the $S^{7,0}$ sphere, easily extendable to $S^{3,4}$. The corresponding torsions can be expressed in terms of the structure constants of the octonion or split-octonion algebras. The invariance group (at any given point) of these $3$-forms structure constants is isomorphic to $G_2$ or $G^*_{2,2}$ in the Riemannian 
or pseudo-Riemannian case respectively (see Appendix \ref{App:pseudo}). Once again, the fact that there are only two possible internal manifolds $S^{7,0}$ or $S^{3,4}$, and that $S^{6,1}$, $S^{5,2}$ etc are excluded, is connected with the fact that there are only two real forms of the octonion algebra, and two real forms of the Lie algebra $\mathfrak{g}_2$.

Furthermore, one finds from (\ref{Torsion101})-(\ref{Torsion103}) that the
torsion $3$-forms are eigenfunctions of the (pseudo-)Laplace operator in seven dimensions,  
\be
S\indices{_{npq;m}^{;m}}=-\f{2}{3} a \, S_{npq}. \label{Laplace}
\ee
One can also prove that the torsions are dual or anti-dual to their curvature \cite{Englert:1982vs},
\begin{align}
S^{mnp} &= \pm \sqrt{\frac{6}{a}}\frac{1}{4!}\frac{1}{\sqrt{|g^7|}}\varepsilon^{mnpqrst}S_{[rst,q]},\\
S_{mnp} &= \pm \sqrt{\frac{6}{a}}\frac{1}{4!}\sqrt{|g^7|}\varepsilon_{mnpqrst}S^{[rst,q]}.
\end{align}
where the $\pm$ sign depends on the $\pm$-class to which the torsion belongs.

With the above assumptions, the equation \eqref{EOM:var-wt-3form} for $F_{MNPQ}$ reduces to
\be\label{EOM-for-F-Englert}
F\indices{_{mnpq}^{;m}}= \f{(-1)^{T'}}{\eta} \f{f}{2(12)^{3/2}}\sqrt{|g^{7}|}\,\varepsilon_{rstunpq}F^{rstu},
\ee
 and is therefore solved in view of (\ref{flux-internal}) and (\ref{Laplace}) if we impose
\be\label{alpha-for-Englert-sol}
a = -\f{3}{4}\f{\eta}{24}f^2 (-1)^{T'} = \f{1}{32}f^2.
\ee
Given $a$, the two choices $f=\pm |f|= \pm \sqrt{32 a}$  correspond to the $+$ or $-$ parallelism, respectively.

Now, the equation \eqref{EOM:var-wt-metric} is equivalent to 
\be
R_{MN} = -\f{\eta}{48}\left(\f23 g_{MN}F^2-8 F_{MPQR}F\indices{_N^{PQR}}\right),\label{EOM:var-wt-metric2}
\ee
Using the form of the fields obtained so far, the $(mn)$-components of \eqref{EOM:var-wt-metric2} read
\be
a = \frac49 a + \frac{1}{24}\frac{10}{9} \eta \lambda^2 a^2,
\ee
yielding
\be
\lambda^2 = \eta \frac{12 }{a}.
\ee
Since $a>0$, this equation has no (real) solution when $\eta = -1$, i.e., for cases 2 and 4 above.   When $\eta = 1$, this equation determines $\lambda$ in terms of $a$,
\be
\lambda = \pm \sqrt{\frac{12}{a}}.
\ee
The $(\mu \nu)$-components of \eqref{EOM:var-wt-metric2} determine then the radius of curvature $b$ of the four-dimensional background spacetime,
\be
b = - \frac53 a,
\ee
which is thus negative.  The four-dimensional background spacetime is therefore anti-de Sitter space $AdS_4 \equiv H^{3,1}$.

\subsection*{Summary}
We thus see that there are only two cases among the cases listed above that can actually be realized (and the cases obtained by reversing the signature). These are \cite{Henneaux:2017afd}
\begin{enumerate}
\item The Englert solution $ H^{3,1} \times S^{7,0}$ in $M$-theory.
\item The ``exotic" Englert solution  $H^{3,1} \times S^{3,4}$  in $M'$-theory.
\end{enumerate}
For all these solutions $\lambda^2=\f{12}{a}$ and $b=-\f{5}{3} a$, in consistency with \cite{Englert:1982vs}.

The parallelizable seven-sphere or seven-pseudo-sphere with internal torsion (fluxes) turned on is therefore a possible solution only in $M$-theory and $M'$-theory. $M^*$-theory does not accommodate such solutions.  Furthermore, the four-dimensional background spacetime is in both cases anti-de Sitter, with a single time direction.  There is no solution with two time directions in the four-dimensional background spacetime. 

\subsection{Breaking of Symmetry}

All Freund-Rubin solutions listed table \eqref{tab:FRsolutions} are maximally supersymmetric. However, the Englert solution of $M$-theory not only breaks supersymmetry but also breaks the $SO(8)$-symmetry of the seven-sphere down to $Spin(7)$ \cite{Englert:1983jc}. This is what we explained in Section \ref{sec:Lorentz-flat-7d} that the due to presence of 
torsion not all $28$ Killing vectors of $SO(8)$ isometry group of $S^7$ are preserved. As we discussed,
in the presence of torsion the $SO(4,4)$ isometry group of $S^{3,4}$ breaks down to $Spin(3,4)$.

A similar situation occurs in $M'$-theory. The internal fluxes of the ``exotic" Englert solution break the $SO(4,4)$ symmetry of the pseudo-sphere $S^{3,4}$.  This is because the $4$-form $F_{mnpq}$ at any given point is not invariant under the full isotropy subgroup $SO(3,4)$.  In fact, the $4$-form field strength $F_{mnpq}$ is only invariant under the subgroup $G^*_{2,2}$ of $SO(3,4)$. The reason is that the torsion $S_{mnp}$ is determined by the structure constants of the algebra of split octonions which is invariant under $G^*_{2,2}$. The group $G^*_{2,2}$ is the non compact group with the trivial center corresponding to the split real form of the complex Lie algebra $\mathfrak{g}_2$ and is the automorphism group of split octonions.  Now, the pseudo-sphere is also equal to $Spin^+(3,4)/G^*_{2,2}$, where the group $Spin^+(3,4)$ is the connected component of $Spin(3,4)$ (see \cite{Kath:1996gm} and Appendix \ref{App:pseudo}).  This implies that the symmetry group of the $M'$-theory ``exotic" Englert solution is $Spin(3,4)$.  This solution, similarly to the Englert solution of $M$-theory, breaks the supersymmetry.

\newpage
\thispagestyle{empty}
\mbox{}

\part{BRST Cohomology}\label{part2}




\chapter{Electric-Magnetic Duality}\label{ch:EM_duality}

\section{Electric-magnetic duality in Maxwell theory}

It has long been known that the vacuum Maxwell equations
\begin{align}
\nabla . \textbf{E} &= 0,\qquad \nabla . \textbf{B}=0,\\
\nabla \times \textbf{E} &= -\partial_t \textbf{B},\qquad \nabla \times \textbf{B} = \partial_t \textbf{E},
\end{align}
are invariant under the $SO(2)$ transformations
\be
 \begin{pmatrix}
 \textbf{E} \\ \textbf{B}
 \end{pmatrix}
\rightarrow 
 \begin{pmatrix}
 \cos{\alpha} & -\sin{\alpha}\\ \sin{\alpha} & \cos{\alpha}
 \end{pmatrix}
 \begin{pmatrix}
 \textbf{E} \\ \textbf{B}
 \end{pmatrix}
\ee
in the two dimensional plane $(\textbf{E},\textbf{B})$ of electric and magnetic fields. From the covariant form of vacuum Maxwell equations
\be
\partial_{\mu} F^{\mu\nu}=0,\qquad \partial_{\mu} \star\!F^{\mu\nu}=0,
\ee 
one can see that the $SO(2)$ transformations rotate the Maxwell field strength $F^{\mu\nu}$ and its dual $\star F^{\mu\nu}$ to each other,
\begin{align}\label{U1-duality-trans}
    F^{\mu\nu} &= \cos{\alpha}\, F^{\mu\nu} -\sin{\alpha}\, \star\!F^{\mu\nu},\\
    \star F^{\mu\nu} &= \sin{\alpha}\, F^{\mu\nu} +\cos{\alpha}\, \star\!F^{\mu\nu},
\end{align}
hence they are called electric-magnetic duality transformations. Even though it is not the concern of our thesis but it is worth to mention that this duality can also be generalized to realize as the symmetry of Maxwell equations in the presence of sources. In that case, one requires to turn on not only an electric source $j^\mu$ but also a magnetic source $k^\mu$,
\be
\partial_{\mu} F^{\mu\nu}=j^\nu,\qquad \partial_{\mu} \star\!F^{\mu\nu}=k^\nu,
\ee
such that these sources transform under duality symmetry as
\begin{align}
    j^{\mu} &= \cos{\alpha}\, j^{\mu} -\sin{\alpha}\, k^{\mu},\\
    k^{\mu} &= \sin{\alpha}\, j^{\mu} +\cos{\alpha}\, k^{\mu}.\label{eq:EM-dual_trans_current}
\end{align}
However, one has to be careful since $\partial_{\mu} \star\!F^{\mu\nu}\neq 0$, one cannot solve the homogeneous wave equation in order to find the electromagnetic field $A_{\mu}$ everywhere. In other words, only if the magnetic source is spatially localized, then we are able to find $A_{\mu}$ where $k^\mu$ vanishes. In the rest of this thesis we focus on the free theory without sources.

Even though we found the electric magnetic duality as the symmetry of the equation of motion, it is important to notice that the electric-magnetic duality transformations are in fact an off-shell symmetry which leaves the action invariant. The action for a free Maxwell vector field in four dimensions
\be
S=-\f{1}{4}\int d^4x F^{\mu\nu}F_{\mu\nu},
\ee
is invariant under the duality transformations, although not manifestly \cite{Deser:1976iy,Deser:1981fr,Bunster:2011aw}. This means that in order to realize the invariance of this action under (infinitesimal) duality transformations, the electromagnetic field $A_k$ needs to transform (infinitesimally) non-locally in space (but local in time) such that transformations of $E^k$ and $B^k$ on-shell coincide with the transformations \eqref{U1-duality-trans} written infinitesimally.

However, if one considers the first-order Maxwell action, which puts electric and magnetic potentials on the same footing and that is called the ``first-order action", one loses the manifest Lorentz invariance but instead the duality symmetry of the action can be realized manifestly. In order to do so, one goes to the Hamiltonian formulation by introducing the conjugate momenta $\pi^i$ (conjugate to $A_i$) as independent variables in the Hamiltonian action. Then one solves Gauss's law, $\partial_i \pi^i=0$, for new variables $\pi^i$ introducing a second vector potential $Z_i$, $\pi^i=-\epsilon^{ijk}\partial_j Z_k$, and at the same time getting rid of $A_0$ component of the electromagnetic field which appears as Lagrange multiplier in the Hamiltonian action. Putting back the new fields in the Hamiltonian action one obtains
\be
S[A^M_i] = \f{1}{2}\int dx^0\,d^3x\, (\epsilon_{MN}\, \textbf{B}^M . \dot{\textbf{A}}\!^N - \delta_{MN}\, \textbf{B}^M . \textbf{B}^N), \qquad M,N=1,2
\ee
where $\textbf{A}^M= (\textbf{A},\textbf{Z})$,  $\textbf{B}^M = \nabla \times \textbf{A}^M$ and $\epsilon_{MN}$ and $\delta_{MN}$ are the Levi-Civita tensor and the Kronecker delta respectively. Now, it is obvious that this action is invariant under the duality transformations $\textbf{A}\rightarrow R_2 \textbf{A}$, with $R_2$ the $SO(2)$ transformations, since both $\epsilon_{MN}$ and $\delta_{MN}$ are $SO(2)$ invariant tensors, i.e. $R_2^T \epsilon R_2 =\epsilon$ and $R_2^T I R_2 = I$.

As we saw the first order action is manifestly duality invariant at the expense of losing manifest Lorentz invariance, while in the manifestly Lorentz invariant second order action the duality symmetry is not locally manifest in the sense that part of duality symmetry acts non-locally (in space) on the fields. Later, we will see that for example in supergravity there is a subgroup of the duality symmetry group that acts locally on fields in the second order action. 

Now, for $n_v$ free abelian Maxwell vector fields $\textbf{A}\!^I$, $I=1,...,n_v$, the duality invariant first order action takes the form
\be\label{1st-order-action-free}
S[A^M_i] = \f{1}{2}\int dx^0\,d^3x\, (\sigma_{MN}\, \textbf{B}^M . \dot{\textbf{A}}\!^N - \delta_{MN}\, \textbf{B}^M . \textbf{B}^N), \qquad M,N=1,2,...,2n_v
\ee
where now $\sigma_{MN}$ is the antisymmetric canonical symplectic form 
and $\textbf{A}^M=(\textbf{A}^I,\textbf{Z}^I)$. Therefore, in order to keep both the kinetic term and the Hamiltonian invariant under duality transformations, the duality group needs to be a subgroup of $GL(2n_v,\mathbb{R})$ of linear transformations of potentials $A^M$ which preserve both the symplectic form and the scalar product. It turns out that this group is $U(n_v)$, the maximal compact subgroup of the group of symplectic transformations $Sp(2n_v,\mathbb{R})$ in $2n_v$ dimensions \cite{Ferrara:1976iq,Gaillard:1981rj,deWit:2001pz,Bunster:2010wv}. 


\section{Non-compact duality group}\label{sec:noncompact_duality}
The duality group of Maxwell theory of $n_v$ free abelian vector field is the unitary group $U(n)$ which is compact. The non-compact duality symmetry group is possible only in the presence of non-minimally coupled scalar fields where the non-linear transformations of scalars play an important role. Let's consider a general case of $n_v$ abelian electromagnetic fields coupled to $n_s$ scalar fields,
\begin{align}
\cL &= \cL_S + \cL_V,\label{eq:lag-V+S}\\
\cL_S &= -\frac{1}{2} g_{ij}(\phi) \d_\mu \phi^i \d^\mu \phi^j,\label{eq:lag-S}\\
\cL_V &= - \frac{1}{4} \mathcal{I}_{IJ}(\phi) F^I_{\mu\nu} F^{J\mu\nu} + \frac{1}{8} \mathcal{R}_{IJ}(\phi)\, \varepsilon^{\mu\nu\rho\sigma} F^I_{\mu\nu} F^J_{\rho\sigma},\label{eq:lag}
\end{align}
with $ F^I_{\mu\nu} = \partial_\mu A^I_\nu -  \partial_\nu A^I_\mu $ and the matrices $\cI$ and $\cR$ give the non-minimal couplings between the scalars and the abelian vectors. 

In order to realize a group $G$ as the symmetry of this action, scalar fields have to transform in a specific way both to compensate for the transformations of vector fields under the symmetry group and to keep the scalar Lagrangian invariant under $G$. This requires that $g_{ij}$ and matrices $\cI$ and $\cR$ transform in a specific form. This can be achieved if for example scalar fields parameterize the coset $G/H$ where $H$ is the maximal compact subgroup of $G$. This is the case in the ungauged supergravity theories. In fact,
neglecting gravity, the Lagrangian \eqref{eq:lag-V+S} is the generic bosonic sector of ungauged supergravity. We discuss the symmetries of this system in detail later in Sections \ref{sec:embedding} and \ref{sec:second order}.

We want to write the first-order Lagrangian obtained from \eqref{eq:lag-V+S}-\eqref{eq:lag} following the discussion of previous section and to find the duality symmetry group of this system. In what comes next in the thesis we only consider theories in four spacetime dimensions unless otherwise explicitly stated.  

We start by writing the canonical momenta conjugate to the $A^I$ which are given by
\begin{equation}
\pi^i_I = \frac{\d \mathcal{L}}{\d \dot{A}^I_i} = \mathcal{I}_{IJ}\, F\indices{^J_0^i} -\frac{1}{2} \mathcal{R}_{IJ}\,\varepsilon^{ijk}F^J_{jk}, 
\end{equation}
along with the constraint $\pi^0_I=0$. This relation can be inverted to get
\begin{equation}
\dot{A}^{Ii} = (\mathcal{I}^{-1})^{IJ} \,\pi^i_J + \partial^i A^I_0 + \frac{1}{2} (\mathcal{I}^{-1} \mathcal{R})\indices{^I_J} \,\varepsilon^{ijk} F^J_{jk} ,
\end{equation}
from which we can compute the first-order Hamiltonian action
\begin{equation}
S_H =  \int\!d^4x\, \left( \pi^i_I \dot{A}^I_i - \mathcal{H} -  A^I_0 \, \mathcal{G}_I \right),
\end{equation}
where
\begin{align}
\mathcal{H} &= \frac{1}{2} (\mathcal{I}^{-1})^{IJ} \pi^i_I \pi_{J i} + \frac{1}{4} (\mathcal{I} + \mathcal{R}\mathcal{I}^{-1}\mathcal{R})_{IJ} F^I_{ij} F^{Jij} + \frac{1}{2} (\mathcal{I}^{-1} \mathcal{R})\indices{^I_J} \,\varepsilon^{ijk} \pi_{Ii} F^J_{jk} \\
\mathcal{G}_I &= - \partial_i \pi^i_I .
\end{align}
The time components $A^I_0$ appear in the action as Lagrange multipliers for the constraints
\begin{equation}
\partial_i \pi^i_I = 0 .
\end{equation}
These constraints can be solved by introducing new (dual) potentials $Z_{I i}$ through the equation
\begin{equation}
\pi^i_I = -  \varepsilon^{ijk} \partial_j Z_{I k},
\end{equation}
which determines $Z_{I}$ up to a gauge transformation $Z_{I i} \rightarrow Z_{I i} + \partial_i \tilde{\epsilon}_I$. Note that the introduction of these potentials is non-local but permitted in (flat) contractible space. 
Putting this back in the action gives
\begin{equation} \label{eq:symlag}
S = \frac12 \int \!d^4x \left(  \Omega_{MN} \mathcal{B}^{Mi} \dot{\mathcal{A}}^N_i -   \mathcal{M}_{MN}(\phi) \mathcal{B}^M_i \mathcal{B}^{Ni} \right),
\end{equation}
where the doubled potentials are packed into a vector
\begin{equation}
(\mathcal{A}^M) = \begin{pmatrix} A^I \\ Z_I \end{pmatrix}, \quad M= 1, \dots, 2n_v,
\end{equation}
and their curls  $\mathcal{B}^{Mi}$ are
\begin{equation}
\mathcal{B}^{Mi} =  \varepsilon^{ijk} \partial_j \mathcal{A}^M_k .
\end{equation}
The matrices $\Omega$ and $\mathcal{M}(\phi)$ are the $2n_v \times 2n_v$ matrices
\begin{equation} \label{eq:OMdef}
\Omega = \begin{pmatrix}
0 & I \\ -I & 0
\end{pmatrix}, \qquad
\mathcal{M} = \begin{pmatrix}
\mathcal{I} + \mathcal{R}\mathcal{I}^{-1}\mathcal{R} & - \mathcal{R} \mathcal{I}^{-1} \\
- \mathcal{I}^{-1} \mathcal{R} & \mathcal{I}^{-1}
\end{pmatrix},
\end{equation}
each block being $n_v\times n_v$.  The dual vector potentials $Z_I$ are canonically conjugate to the magnetic fields $B^I$.  

The kinetic term for the potentials in the first-order action (\ref{eq:symlag}) can be rewritten 
\be
\frac12\int dt \, d^3x \, d^3y \, \sigma_{M N}^{ij}(\vec{x}, \vec{y}) \,  \mathcal{A}^{M}_i(\vec{x}) \, \dot{\mathcal{A}}^{N}_j(\vec{y}), \label{eq:presymp} \ee
where $\sigma_{M N}^{ij}(\vec{x}, \vec{y})$ defines a (pre-)symplectic structure in the infinite-dimensional space of the $\mathcal{A}^M_i(\vec{x})$ and is antisymmetric for the simultaneous exchange of $(M,i,\vec{x})$ with $(N,j,\vec{y})$,
\be
\sigma_{M N}^{ij}(\vec{x}, \vec{y}) = - \sigma_{N M}^{ji}(\vec{y}, \vec{x}).
\ee
Explicitly, $\sigma_{M N}^{ij}(\vec{x}, \vec{y})$ is given by the product 
\be
\sigma_{M N}^{ij}(\vec{x}, \vec{y})=  -\Omega_{MN} \varepsilon^{ijk} \partial_k \delta(\vec{x} - \vec{y})
\ee
The internal part $\Omega_{MN}$ is antisymmetric under the exchange of $M$ with $N$, while the spatial part $\varepsilon^{ijk} \partial_k \delta(\vec{x} -\vec{y})$ is symmetric under the simultaneous exchange of $(i,\vec{x})$ with $(j,\vec{y})$. A necessary
condition for a linear transformation in the internal space of the potentials to be a symmetry of the action is that it should leave
 the kinetic term invariant, which in turn implies that $\Omega_{MN}$ must remain invariant under these transformations, i.e., they must belong to the symplectic group $Sp(2n_v, \mathbb{R})$.

At this point, it is worth mentioning that the situation will be completely different if one considers a system of $2$-forms interacting with scalars in six spacetime dimensions. 

In that case, for $2$-forms $\mathcal{A}^M_{ij}(\vec{x})=  -\mathcal{A}^M_{ji}(\vec{x})$ in six spacetime dimensions, the expression analogous to (\ref{eq:presymp}) is
 \be
\frac12\int dt \, d^5x \, d^5y \, \sigma_{M N}^{ij \, pq}(\vec{x}, \vec{y}) \,  \mathcal{A}^{M}_{ij}(\vec{x}) \, \dot{\mathcal{A}}^{N}_{pq}(\vec{y}), \label{eq:presymp2} \ee
where the (pre-)symplectic form  $\sigma_{M N}^{ij \, pq}(\vec{x}, \vec{y})$ is given by the product of a matrix in internal space and a matrix involving the spatial indices (including ($\vec{x}$)),
\be
\sigma_{M N}^{ij \, pq}(\vec{x}, \vec{y})=  -\eta_{MN} \varepsilon^{ijpqk} \partial_k \delta(\vec{x}- \vec{y}).
\ee
As a (pre-)symplectic form,  $\sigma_{M N}^{ij \, pq}(\vec{x}, \vec{y})$ must be antisymmetric for the simultaneous exchange of $(M,ij,\vec{x})$ with $(N,pq,\vec{y})$.  But now, the spatial part $\varepsilon^{ijpqk} \partial_k \delta(\vec{x} - \vec{y})$ is {\it antisymmetric},
$$ \varepsilon^{ijpqk} \partial_k \delta(\vec{x}- \vec{y}) = - \varepsilon^{pqijk} \partial_k \delta(\vec{y}- \vec{x}) $$ 
and therefore the internal part $\eta_{MN}$ must be {\it symmetric}.  And indeed, this is what one finds \cite{Henneaux:1988gg,Bekaert:1998yp}.  The signature of $\eta_{MN}$ is $(n_v,n_v)$ and its diagonalization amounts to splitting the $2$-forms into chiral (self-dual) and anti-chiral (anti-self-dual) parts. A necessary condition for a linear transformation in the internal space of the potentials  
to be a symmetry of the action is again that it should leave the kinetic term invariant, but this implies now that $\eta_{MN}$ must be invariant under these transformations, i.e., they must belong to the orthogonal group $O(n_v,n_v)$.

In general, for a $p$-form in $D= 2p + 2$ spacetime dimensions, the electric and magnetic fields are exterior forms of the same rank $p+1$ and one may consider electric-magnetic duality transformations that mix them while leaving the theory invariant. The above analysis proceeds in exactly the same way in higher (even) dimensions.  The (pre-)symplectic form $\sigma_{M N}^{i_1 i_2 \cdots i_p \, j_1 j_2 \cdots j_p}(\vec{x}, \vec{y})$ for the $p$-form potentials $\mathcal{A}^M_{i_1i_2 \cdots i_p}(\vec{x})$ is equal to the product $-\rho_{MN} \varepsilon^{i_1 i_2 \cdots i_p  j_1 j_2 \cdots j_p k} \partial_k \delta(\vec{x}- \vec{y}) $.  The spatial part $ \varepsilon^{i_1 i_2 \cdots i_p  j_1 j_2 \cdots j_p k} \partial_k \delta(\vec{x}- \vec{y}) $ is symmetric in spacetime dimensions $D$ equal to $4m$ ($p$ odd) and antisymmetric in spacetime dimensions $D$ equal to $4m + 2$ ($p$ even).  Accordingly the matrix $\rho_{MN}$ in electric-magnetic internal space must be antisymmetric for $D=4m$ ($p= 2m - 1$) and symmetric for $D=4m+2$ ($p=2m$), implying that the relevant electric-magnetic group is a subgroup of $Sp(2n_v, \mathbb{R})$  or of $O(n_v,n_v)$ in $D=4m$ and $D=4m+2$ dimensions respectively. These results were established in \cite{Deser:1997mz,Deser:1997se,Julia:1997cy,Julia:1980gr} which were further developed in \cite{Cremmer:1997ct,Bremer:1997qb,Deser:1998vc,Julia:2005wg}. As it was pointed out in \cite{Julia:1997cy}, they do not depend on the spacetime signature.

As we have just discussed in $4m$ spacetime dimensions scalars parameterize a coset $Sp(2n_v,\mathbb{R})/U(n_v)$. If there is a smaller set of scalars in the theory, then the scalars parametrize a coset $G/H$ where now $G \subset Sp(2n_v,\mathbb{R})$ is the duality symmetry group of the action and $H$ is its maximal compact subgroup. In the case of maximal $\mathcal{N}=8$ supergrvaity in four dimensions, the duality group is $E_{7,7}$ which is a subgroup of $Sp(56,\mathbb{R})$ and the $70$ scalars are parameterizing the coset $E_{7,7}/SU(8)$.

\section{A toy model inspired by $\mathcal{N}=4$ supergravity in four dimensions}\label{sec:EM-dual-N4_sug}
Here we consider as an example a model that can be easily extended to capture the gravity multiplet of the bosonic sector of $\mathcal{N}=4$ supergravity in four dimensions. Consider the Lagrangian \eqref{eq:lag-V+S} for a system constitutes of one vector field and two scalars $(\phi,\chi)$ (in fact one scalar $\phi$ and one pseudo-scalar $\chi$), i.e. $n_v=1$ and $n_s=2$, with the scalar and vector Lagrangians
\begin{align}
\mathcal{L}_S &=-\f{1}{2}\partial_{\mu}\phi\, \partial^{\mu}\phi -\f{1}{2} e^{2\phi}\partial_{\mu}\chi\, \partial^{\mu} \chi,\\
\mathcal{L}_V &=- \frac{1}{4} e^{-\phi} F_{\mu\nu} F^{\mu\nu} + \frac{1}{8} \chi\, \varepsilon^{\mu\nu\rho\sigma} F_{\mu\nu} F_{\rho\sigma}.
\end{align}
The full Lagrangian is invariant under
\begin{enumerate}[I)]
    \item constant shift of $\chi$ ($\chi \rightarrow \chi + c$),
    
    \item simultaneous transformations 
    \begin{align}
          A_\mu &\rightarrow \lambda A_\mu,\\
          \phi &\rightarrow \phi + 2 \log \lambda,\\
          \chi &\rightarrow \lambda^{-2} \chi.
    \end{align}
\end{enumerate}

The non-linear scalar transformations are those of the coset $SL(2,\mathbb{R})/SO(2)$ generated by the upper-triangular matrices, $B^+$, in the matrix representation of $SL(2,\mathbb{R})$,
\be\label{eq:mat_rep_sl2}
 X_+ =\begin{pmatrix} 0 & 1 \\ 0 & 0 \end{pmatrix},\qquad X_0 =\begin{pmatrix} 1 & 0 \\ 0 & -1 \end{pmatrix}.
\ee
The scalar Lagrangian is also invariant under another transformation
\be
\delta_- \phi =2 \chi \epsilon, \qquad \delta_- \chi = (e^{-2\phi}-\chi^2) \epsilon,
\ee
which acts locally on scalars and enlarges the group of transformations to full $SL(2,\mathbb{R})$. However, this transformation acts non-locally on vector fields in the second-order formalism. In order to see that $SL(2,\mathbb{R})$ is in fact the group of electric-magnetic duality of this model, one can consider the first-order form of the action. The first-order action is written as \eqref{eq:symlag} with $M,N=1,2$ and
\begin{align}
   \Omega_{MN}=\sigma_{MN}=\begin{pmatrix} 0 & 1 \\ -1 & 0 \end{pmatrix}, \qquad  
   \mathcal{M}_{MN}=\begin{pmatrix} \chi^2 e^\phi+ e^{-\phi} & -\chi e^\phi \\ -\chi e^\phi & e^\phi \end{pmatrix}.
\end{align}
Now, all $SL(2,\mathbb{R})$ transformations act linearly on vector fields as

\be 
\delta A^M = \varepsilon^\alpha (X_\alpha)\indices{^M_N} A^N
\ee
and $\mathcal{M}$ transforms as
\be
\delta \mathcal{M}= - (\varepsilon X)^T \mathcal{M} - \mathcal{M} (\varepsilon X),
\ee
where $X_{\alpha}$ are generators of $SL(2,\mathbb{R})$, \eqref{eq:mat_rep_sl2} together with
\be
 X_- =\begin{pmatrix} 0 & 0 \\ 1 & 0 \end{pmatrix}.
\ee
Therefore, in the first-order formulation the group $SL(2,\mathbb{R})$ is manifest as the group of electric-magnetic duality of the system. 

The $SO(6)$ (or $SU(4)$) invariant-formulation \cite{Cremmer:1979up,Cremmer:1977zt,Cremmer:1977tt,Chamseddine:1980cp} of $\mathcal{N}=4$ supergravity in four dimensions is obtained by considering $n_v=6$ vector fields in the gravity multiplet of theory, once again enlarging the duality group to $SO(6)\times SL(2,\mathbb{R})$. We will come back to this point later when we discuss examples of gauging of scalar-vector models in Section \ref{sec:N=4}. 

\chapter{Equivalent description in terms of the Embedding tensor formalism}\label{ch:Isom-Emb-VS-models}

\section{Gauged Supergravity}\label{sec:gauged_sugra}
As we discussed before in Section \ref{sec:duality} the $M$-theory, or the eleven dimensional supergravity as its low energy limit, is unique (up to T-duality invariant theories) with supersymmetry fixing its field content to that of the gravity multiplet \cite{Cremmer:1978km}. The toroidal compactification of the $M$-theories or superstring theories results in the ungauged supergravity theories in lower dimensions in flat backgrounds. The ungauged supergravities, however, do not appear to fit phenomenology at least at the classical perturbative level. Also, since the moduli space of vacua of scalars are not determined dynamically, the appearance of scalars in the interaction terms in the Lagrangian destroys the predictive power of the theory. For example, the maximal supergravity in four dimensions contains $70$ scalars part of which can be interpreted as the moduli of the internal manifold $T^7$ and the other part comes from the $3$-form field of the eleven dimensional supergravity during the compactification \cite{Cremmer:1977zt,Cremmer:1978ds}. There is no scalar potential in the classical theory which could fix the values of these scalars on the vacuum. One attempt to overcome this issue is to gauge the ungauged theories by $28$ vector fields of gravity supermultiplet through minimally coupling of the vector fields and other fields. In this way, one can obtain an $SO(8)$ gauged supergravity in four dimensions \cite{deWit:1982bul}. This in turn introduces a scalar potential in the Lagrangian which fixes the problem of moduli degeneracy already at the classical level. 

Moreover, if the $M$-theory is the correct UV-completion of the effective quantum field theories in four dimensions then the standard model of fundamental interactions needs to be realized as a solution of the supergravity theory. In order to describe the standard model of particle physics one needs ultimately to gauge the ungauged supergravities obtaining a low energy effective action describing the known interactions between the gauge fields of standard model. This is an ongoing program which despite promising progress has not yet been successful. It should be however mentioned that the most interesting supergravity theories are in Anti-de Sitter background. As of today, a supergravity theory with a non-supersymmetric de Sitter (quasi Minkowski) stable vacuum, a residual gauge symmetry $SU(3)\times SU(2) \times U(1)$, and with the matter content of the standard model has not been discovered.

One can build different gauged supergravities by gauging different subgroups of global symmetry of the ungauged theories by introducing the minimal coupling with the vector fields. In many cases, the non-perturbative strong coupling limit of these gauged supergravities is described by superstring/$M$-theories. The compactification of the superstring/$M$-theories on some non-toroidal internal manifolds leads to finding the gauged supergravities in lower dimensions, however, it is not always possible to uplift an effective lower dimensional gauged supergravity theory to its UV complete superstring/$M$-theory.

In order to obtain gauged supergravities from compactification, one has to turn on internal form fluxes, namely  non-vanishing vacuum expectation values  
of field strengths. These fluxes will appear in the  Einstein  equation through
their  energy-momentum  tensors and  may  affect  in  a  non-trivial way the geometry of the background, including the internal manifold.
They change the structure and topology of the internal manifolds.

In  the  presence  of  fluxes  one  is  then  led  to  consider  spontaneous  compactifications of superstring/$M$-theory on more general backgrounds which have the form of a
warped product $M_D \times M_{int}$ of a non-compact $D$-dimensional spacetime and an internal manifold which in general is no longer Ricci-flat, at least in the sense of Ricci flatness of Levi-Civita connection, see Chapters \ref{ch:Lorentz_flatness} and \ref{ch:exotic}. The problem of finding backgrounds of this kind which preserve a minimal amount of supersymmetries requires considering more general geometries for
internal manifold and hence leads to the so called $G$-structure manifolds. This class of
internal manifolds comprises geometries characterized by background quantities, called
geometric fluxes, which define topology-deformation of traditional Ricci-flat
manifolds like tori or Calabi-Yau spaces. 

In Chapter \ref{ch:Lorentz_flatness} we discussed the $G$-structure manifolds where we focused on manifolds with $G_2$-structure. In the $M$-theory compactification, the global existence of Killing spinors restricts the structure group allowing the internal manifold with $G_2$-structure. In Chapter \ref{ch:exotic} we explained two ways of compactification in the presence of internal fluxes; Freund-Rubin compactification which preserves all supersymmetries and Englert spontaneous compactification which prompts a non-supersymmetric low energy effective theory. Also, we have seen that in the case of Englert(-like) compactification, the torsion on the -parallelizable- internal manifold $S^7$ ($S^{3,4}$) plays the role of geometric fluxes.  

The first examples of gauged extended supergravities \cite{Freedman:1976aw,Freedman:1978ra,Fradkin:1976xz,Zachos:1978iw,Zachos:1979uh} exhibited some of the
characteristic features of these theories such as fermion masses, spontaneous supersymmetry breaking and a scalar potential, or a cosmological constant when scalars were not
present. 

The gauged $\mathcal{N}=8$ maximal supergravity with $SO(8)$ gauge group is obtained from Kaluza-Klein reduction of eleven dimensional supergravity on round $S^7$ where the SO(8) local symmetry is related to the  isometry  group  of $S^7$ \cite{deWit:1983vq,deWit:1986oxb,deWit:2013ija}. This was later generalized to maximal gauged supergravities with non-compact gauge group in \cite{Hull:1984yy,Hull:1984vg,Hull:1984qz}.

As we pointed out before, at the classical level supergravities in lower dimensions are consistently defined independently of their string or $M$-theory origin and are characterized by the amount of supersymmetry which they preserve, their field content and the local internal symmetry group gauged by the vector fields (for the gauged supergravities). They are also invariant under $U(1)^{n_v}$ abelian gauge symmetry. At the classical level, gauged supergravities are obtained from ungauged ones, with the
same amount of supersymmetry, field content and $U(1)^{n_v}$ abelian gauge symmetry, through the well defined
gauging procedure. It involves promoting a suitable subgroup of the global symmetry group
to local symmetry of the theory and modifying the action by the introduction of
fermion mass terms and of a scalar potential in order to restore the same number of
supersymmetries as the original ungauged theory.  This uniquely defines the gauged supergravity, provided a set of consistency conditions to be satisfied by the gauge group.
The gauging procedure is the only known way for introducing fermion mass terms or a scalar potential in an extended
supergravity without explicitly breaking supersymmetry.  In the $M$-theory compactification, these additional terms in the action as well as the gauge group
are determined by the background fluxes and therefore by the structure of internal manifold. The background fluxes typically induce, in the lower dimensional effective theory,
minimal couplings as well as mass terms  and  a  scalar  potential.

It is important to emphasize again that not all gaugings can be obtained from compactification of supergravity theories. There are known examples of gauged supergravities in lower dimensions
for which it is not possible to find an uplift to the $M$-theory \cite{Shelton:2005cf,Dabholkar:2002sy,Kachru:2002sk}. Even in those cases where the gauged supergravities can be obtained from the dimensional reduction, it is not always an easy task to find a consistent reduction ansatz. Therefore, it is tempting to seek a formulation that can provide all possible gaugings in a systematic way regardless of whether it can or cannot be obtained through dimensional reduction. 

The BV-BRST formulation \cite{Batalin:1981jr,Batalin:1983wj,Batalin:1984jr,Batalin:1984ss,Batalin:1985qj}, which is described in Chapter \ref{ch:BV_def}, will provide a powerful tool in order to deal with the question of finding all possible consistent gaugings of supergravity theories. This is the path we pursue in this thesis. 

\section{Electric-Magnetic duality and Gauging}
 
As we have seen in Chapter \ref{ch:EM_duality} in the second-order Lagrangian formulation of supergravity, only the electric vector fields appear in the Lagrangian. The minimal coupling then, involving  only  the  electric  vector  fields, manifestly breaks the electric-magnetic duality symmetry of the original ungauged theory.  

The electric-magnetic duality is therefore particularly relevant in the context of gaugings of supergravities in their second-order formulation since the available gaugings depend on the 
Darboux frame in which the theory is formulated. The Darboux frame is also known as the ``duality frame" or the ``symplectic frame". The use of symplectic frame is referring to the fact that the real symplectic group preserves the symplectic tensor, written in Darboux form, on the $2n_v$ electric and magnetic field directions. We should distinguish the action of the full symplectic group acting on the frame from the ``duality" symmetry subgroup $G$ that acts also on the scalar fields and may be strictly smaller.  

To be specific, the second-order Lagrangian, describing the maximal supergravity toroidally reduced to four spacetime dimensions \cite{Cremmer:1979up}, is invariant under the non-compact rigid symmetry $E_{7,7}$ (a subgroup of the symplectic group $Sp(56,\mathbb{R})$) which contains standard electric-magnetic duality rotations among the electric and magnetic fields. The invariance of the equations of motion under duality symmetry group was already demonstrated in \cite{Cremmer:1979up}. However, as we explained in Section \ref{sec:noncompact_duality}, the invariance of the action is more easily verified in the first-order formalism.  This was explicitly achieved in \cite{Hillmann:2009zf}, following the crucial pioneering work of \cite{Deser:1976iy} further developed in  \cite{Henneaux:1988gg,Deser:1997mz}.

Now, while the full $E_{7,7}$ group is always present as an off-shell symmetry group of the Lagrangian, only a subgroup of which acts locally on all fields in the second-order formulation.  The other transformations act non-locally (in space).  The explicit non-local form of the duality transformations for standard electromagnetism may be found in \cite{Deser:1981fr,Bunster:2011aw}. 

The subgroup of  symmetries that acts locally on the fields of the second-order formalism depends on the  symplectic frame and is called the ``electric symmetry subgroup" in that symplectic frame. A subgroup of this electric symmetry subgroup is the seed for the local deformations of the abelian gauge symmetry and of the action of the ungauged theory usually considered in the literature\footnote{We should point out an abuse of the word ``ungauged" in the jargon of gauged supergravity. The original ungauged theory is really already a gauge theory, since it has abelian vector fields, each with its own $U(1)$ gauge invariance.  The terminology ``gauging" might for that reason not be very precise, since one is actually not gauging an initial theory without gauge invariance, but deforming the abelian gauge transformations of an original theory that is already gauged. In the process, one may charge not only the vector fields but also the matter fields through minimal coupling. For that reason, it might be more appropriate to use the terminology ``charging" or ``deforming" though we adopt the terminology ``gauging" commonly used in the literature.}.
For instance, the original gaugings of maximal supergravity were performed in \cite{deWit:1981sst} starting from the Lagrangian formulation of \cite{Cremmer:1979up} with a specific choice of 
28 ``electric" vector potentials out of the 56 potentials that one may introduce to render dualities manifest. The electric subgroup is $SL(8, \mathbb{R})$ and the subgroup into which the abelian gauge symmetry is deformed is $SO(8)$. Changing the duality frame, as in \cite{Andrianopoli:2002mf,Hull:2002cv}, opens new inequivalent possibilities, the most recent examples of which were proposed in \cite{DallAgata:2011aa,DallAgata:2012mfj,DallAgata:2012plb}. A systematic investigation of the space of inequivalent deformations of the Yang-Mills type has been undertaken in \cite{DallAgata:2014tph,Inverso:2015viq}.

Due to the dependence of the available deformations on the duality frame, it is of interest to keep the duality frame unspecified in the gaugings in order to be able to contemplate simultaneously all possibilities.  There are at least three different ways to do so.  One of them is simply to try to deform directly the manifestly duality invariant action of \cite{Hillmann:2009zf}.  In that first-order formulation, all duality transformations are local and no choice of duality frame is needed. However, as we show in Section \ref{sec:NO_YM_def}, generalizing the argument of \cite{Bunster:2010wv} by introduction of scalar fields one finds that 
all standard non-abelian Yang-Mills deformations are obstructed for this Lagrangian, so that one needs to turn to a second-order Lagrangian in order to have access to non-abelian 
deformations including those of the Yang-Mills type. As the change of variables from the second-order formulation to the first-order one is non-local (in space), 
the concept of local deformations changes. 

The transition from a Hamiltonian formulation to the usual second-order formulation involves a choice of symplectic frame. Explicitly, one needs to determine what are the $q$'s and $p$'s  where a point on a symplectic manifold is given by $(q,p)$. We always refer to those potentials which are kept, during the transition from first-order to second-order formulation, as $q$'s and to those which are eliminated as $p$'s. Since the electric and magnetic potentials of the first-order formulation play the respective roles of coordinates and momenta, this amounts to choosing which potentials are electric and which ones are magnetic. It is the reason why this step breaks manifest duality.  One can nevertheless keep track of the symplectic transformation that relates the chosen symplectic frame to a fixed reference symplectic frame.  The symplectic matrix coefficients appear then as constant parameters that define the second-order Lagrangian.   So, one is really dealing with a family of Lagrangians depending on constant parameters.   In order to treat all symplectic frames on the same footing, one should study the gaugings in a uniform manner independently of the values of these constants -- although the number and features of the solutions will depend on their actual values.   There are moreover redundancies, because many choices of the symplectic matrix coefficients, i.e.  of symplectic frame, lead to locally equivalent second-order formulations. We will discuss this approach in Section \ref{sec:transition}. 

A third way to keep track of the symplectic frame is through the ``embedding tensor" formalism \cite{deWit:2002vt,deWit:2005ub,deWit:2007kvg,Samtleben:2008pe,Trigiante:2016mnt},  where both electric and magnetic vector potentials are introduced from the beginning without choosing a priori which one is electric and which one is magnetic. Furthermore, extra $2$-forms are added in order to prevent doubling of the number of physical degrees of freedom. In addition to these fields, the Lagrangian contains also constants which form the components of the embedding tensor $\Theta$ and indicate how the subgroup to be gauged is embedded in the duality group $G$. These constants are not to be varied in the action so that, up to some equivalences, one has a family of Lagrangians depending on external constants and any choice destroys the manifest symmetry between electricity and magnetism. In Section \ref{sec:Emb_Tensor} we review the important features of this formalism and the gauging procedure using the embedding tensors.

The last two formulations are classically equivalent since their corresponding second-order Lagrangians yield equivalent equations of motion \cite{deWit:2002vt,deWit:2005ub,deWit:2007kvg,Samtleben:2008pe,Trigiante:2016mnt}.  One may, however, wonder whether the spaces of available local deformations are isomorphic. In Section \ref{sec:BV-def-Embed} we prove that this is indeed the case and the space of local deformations of both pictures are isomorphic.

\section{Embedding Tensor}\label{sec:Emb_Tensor}

In the embedding tensor formulation of the gauging construction \cite{deWit:2002vt,deWit:2005ub,deWit:2007kvg,Samtleben:2008pe,Trigiante:2016mnt}, all the deformations of 
the original ungauged model are expressed in terms of a single embedding tensor $\Theta$ which is covariant with respect to the global symmetries of
the theory and defines the embedding of the gauge algebra into the  global symmetry algebra. The choice of gauge group inside the global symmetry group is constrained
by consistency conditions which depend on the original ungauged model. They hint to the point that the dimension of gauge group cannot exceed the number $n_v$ of electric 
vector fields. The consistency conditions appear in the form of a set of linear and quadratic $G$-covariant (purely group-theoretical) constraints on the components of embedding tensor $\Theta$.

We shall review a general formulation of the gauging procedure in four dimensions which was developed in \cite{deWit:2005ub,deWit:2007kvg} and does not depend on the symplectic matrices $E$. The matrices $E$ relate the chosen symplectic frame to a fixed reference symplectic frame, so that the kinetic terms are not written in terms of the electric vector fields.

We start by introducing a symplectic-invariant gauge connection of the form
\begin{equation}
 \Omega_g = \Omega_{g\mu} dx^{\mu},\qquad \Omega_{g\mu}\equiv g A^M_\mu\,X_M=g (A^I_\mu\,X_I+\tilde{A}_{I\mu}\,\tilde{X}^I) = g A^M_\mu\,\Theta_M{}^\alpha\,t_\alpha\,,\label{newcon}
\end{equation}
where $A^M= (A^I, \tilde{A}_I)$ are the symplectic vector fields and $X_M = (X_I, \tilde{X}^I)$ are the generators of the gauge group $G_g$. Here the index $I=1,...,n_v$ and one should notice that even though there are $2n_v$ generators $X_M$ but there are at most $n_v$ independent ones which generate the gauge group. The embedding tensors describe the embedding of the gauge group in the global symmetry group $G$ as
\be
X_M = \Theta\indices{_M^\alpha} t_{\alpha},
\ee
with $t_\alpha$ generators of the algebra of global symmetry transformation, therefore $\alpha=1,...,dim(G)$. One can, for the sake of simplicity, reabsorb the gauge coupling constant $g$ into $\Theta$. 

There is a redundancy in the choice of the dynamical fields between $A^I_\mu$ and $\tilde{A}_{I\mu}$; half of them play the role of auxiliary fields and at most $n_v$ linear combinations $A^{\check{I}}_\mu$ of the $2n_v$ vectors $A^I_\mu$ and $\tilde{A}_{I\mu}$ effectively enter the gauge connection
 \begin{equation}
 A^M_\mu\,X_M=A^{\check{I}}_\mu\,X_{\check{I}}\,,\label{ATHcomb}
 \end{equation}
where the index $\check{I}=1,...,n_v$ is referring to the $n_v$ physical vector fields\footnote{This could be easily seen bearing in mind that there is a symplectic matrix $E$ which takes the frame to the electric one where $\tilde{X}^{\check{I}}$ vanishes.}. 

The consistency of gauging requires that the embedding tensor satisfies a set of linear and quadratic algebraic $G$-covariant constraints \cite{Trigiante:2016mnt}. The constraints are written as
   \begin{align}
   X_{(MNP)}&=0\,,\label{linear2}\\
   \Omega^{MN}\Theta_M{}^\alpha\Theta_N{}^\beta &=0\,,\label{quadratic1}\\
   [X_M,\,X_N] &=-X_{MN}{}^P\,X_P\,,\label{quadratic2}
   \end{align}
where $X\indices{_{MN}^P}=\Theta\indices{_M^\alpha}(t_{\alpha})\indices{_N^P}$ and $X_{MNP}=X\indices{_{MN}^Q} \Omega_{QP}$.

The linear constraint (\ref{linear2}) is also known as the representation constraint 
\cite{Trigiante:2016mnt}. 
The first quadratic constraint \eqref{quadratic1} guarantees that there exists a symplectic matrix $E$  which rotates the embedding tensor $\Theta_M{}^\alpha$ to an electric frame in which the magnetic components $\Theta^{\hat{I}\,\alpha}$ vanish. 

The second quadratic constraint \eqref{quadratic2} is the condition that the gauge algebra close within the global symmetry algebra $\mathfrak{g}$ and implies that $\Theta$ is a singlet with respect to $G_g$. To see this one can rewrite \eqref{quadratic2} as
\be
\Theta\indices{_M^\alpha}\Theta\indices{_N^\beta} f_{\alpha\beta}^\gamma + \Theta\indices{_M^\alpha} (t_{\alpha})\indices{_N^P} \Theta\indices{_P^\gamma}=0,
\ee
where $f_{\alpha\beta}^\gamma$ are generators of the adjoint representation of the global symmetry group $G$. 

The locality constraint (\ref{quadratic1}) is not in general independent of other constraints. As it was proved in \cite{Trigiante:2016mnt}, it will pose as an independent constraint only in theories featuring scalar isometries with no duality action, namely those theories in which the symplectic duality representation $\mathcal{R}_v$ of the isometry algebra $\mathfrak{g}$ is not faithful. This is the case of the quaternionic isometries in $\mathcal{N}=2$ theories. In general, in theories in which all scalar fields sit in the same supermultiplets as the vector fields, as it is the case of supergravities with $\mathcal{N}>2$ or $\mathcal{N}=2$ with no hypermultiplets, the locality condition (\ref{quadratic1}) is not independent but follows from the other constraints.


Following the known prescription of gauging a theory through minimal coupling, the covariant derivatives are then defined in terms of (\ref{newcon}) and replaces ordinary derivatives everywhere in the action; therefore the gauge fields $A^M$ are transformed under the gauge symmetry as
\begin{equation}
\delta A^M_\mu=D_\mu\zeta^M\equiv \partial_\mu\zeta^M+\,A^N_\mu X_{NP}{}^M\,\zeta^P\,,\label{deltaA}
\end{equation}
where $X_{MP}{}^R\equiv {\mathcal{R}}_{v*}[X_{M}]_{P}{}^R$. Here, $\mathcal{R}_{v*}$ is referring to the symplectic matrix representation of the generators $X_M$ acting on covariant vectors. 
The symplectic covariant non-Abelian field strengths $\mathcal{F}^M$ is then defined as
 \begin{equation}
\mathcal{F}^M\equiv dA^M+\frac{1}{2}\,X_{NP}{}^M\,A^N\wedge A^P\,.\label{FMdef}
 \end{equation}

Although in the present formulation both vectors $A^I$ and $\tilde{A}_I$ appear in the kinetic term but in a symplectic frame which is not the electric one, $A^I$ fields are not well defined since the Bianchi identity (Gauss constraint) is not satisfied for their corresponding curvature. This is due to the non-vanishing components $\Theta^{I\alpha}$ of the embedding tensor in such a frame; these are nothing but magnetic charges. It can be indeed seen from
\begin{equation}
D{\mathcal{F}}^M\equiv d{\mathcal{F}}^M+\,X_{NP}{}^M\,A^N\wedge {\mathcal{F}}^P=\,X_{(PQ)}{}^M\,A^P\wedge\left( dA^Q+\frac{1}{3}\,X_{RS}{}^QA^R\wedge A^S\right),\label{Bianchifail}
\end{equation}
where in particular one finds $D \mathcal{F}^I\neq 0$ since $X_{(PQ)}{}^I=-\frac{1}{2}\,\Theta^{I\alpha}\,(t_{\alpha})\indices{_P^N}\Omega_{NQ}$ vanishes only in the electric frame  where $\Theta^{I\alpha}= 0$. This can be also interpreted as the failure to satisfy Jacobi identity for the structure constants $X_{MP}{}^R$ of gauge algebra in
the ${\mathcal{R}}_{v*}$-representation \cite{Trigiante:2016mnt}
\begin{equation}
X_{[MP]}{}^R X_{[NR]}{}^Q+X_{[PN]}{}^R X_{[MR]}{}^Q+X_{[NM]}{}^R X_{[PR]}{}^Q=-(X_{NM}{}^R\,X_{(PR)}{}^Q)_{[MNP]}\,.\label{nojacobi}
\end{equation}

This problem can be resolved by adding extra $2$-form gauge fields $B_{\alpha\mu\nu}$ and a set of gauge symmetries to preserve the number of degrees of freedom. Then by combining the non-abelian vector field strengths $\mathcal{F}^M_{\mu\nu}$ with a set of massless antisymmetric tensor fields $B_{\alpha\,\mu\nu}$ one is able to construct gauge-covariant field strength $\mathcal{H}^M_{\mu\nu}$ as
\begin{align}
\mathcal{H}^M_{\mu\nu}\equiv \mathcal{F}^M_{\mu\nu}+ \mathcal{Y}^{M\,\alpha}\,B_{\alpha\,\mu\nu}\;:\;\;
\begin{cases}
\mathcal{H}^I=\mathcal{F}^I+\frac{1}{2}\,\Theta^{I\alpha}\,B_\alpha\,,\cr
\tilde{\mathcal{H}}_I=\tilde{\mathcal{F}}_I-\frac{1}{2}\,\Theta_{I}{}^{\alpha}\,B_\alpha\,.
\end{cases}
\label{HZB}
\end{align}
where
\begin{equation}
\mathcal{Y}^{M\,\alpha}\equiv \frac{1}{2}\,\Omega^{MN}\,\Theta_N{}^\alpha\,.\label{defZ}
\end{equation}

In order for the field strength $\mathcal{H}^M_{\mu\nu}$ to be gauge covariant, the $2$-form field $B_{\alpha\mu\nu}$ should transform under the gauge symmetry  
such that it cancels the non-covariant terms in the variation of $\mathcal{F}^M_{\mu\nu}$. We have thus introduced, together with the new fields $\tilde{A}_{I\mu}$ and $B_{\alpha\,\mu\nu}$, extra gauge symmetries parametrized by $\tilde{\zeta}_I$ and $\Xi_{\alpha\,\mu}$. This ensures the correct number of propagating degrees of freedom. The fields are transformed under these symmetries as
\begin{align}
    \delta A^M &= D \zeta^M - \mathcal{Y}^{M\alpha}\, \Xi_{\alpha},\\
    \delta B_{\alpha} &= D\, \Xi_{\alpha} + t_{\alpha NP} A^N \wedge \mathcal{Y}^{M\alpha}\, \Xi_{\alpha} - t_{\alpha NP} (2 \zeta^N \mathcal{H}^P - A^N \wedge D \zeta^N),
\end{align}
where the covariant derivatives are defined as
\begin{align}
    D \zeta^M &= d \zeta^M + X\indices{_{NP}^M} A^N \wedge \zeta^P,\\
    D\, \Xi_{\alpha} &= d\, \Xi_{\alpha} + \Theta\indices{_M^\beta} f\indices{_{\beta\alpha}^\gamma} A^M \wedge \Xi_{\gamma}.
\end{align}
Finally the vector-tensor part of the gauged bosonic Lagrangian invariant under the above symmetries is written as \cite{deWit:2002vt,deWit:2005ub},
\begin{align}
\mathcal{L}_{VT} = -\frac{1}{4} \, {\cal I}_{IJ}\,\mathcal{H}_{\mu\nu}^{I}
\mathcal{H}^{J\mu\nu} +\frac{1}{8} {\cal
R}_{IJ}\;\varepsilon^{\mu\nu\rho\sigma}
\mathcal{H}_{\mu\nu}^{I}
\mathcal{H}_{\rho\sigma}^{J}+\mathcal{L}_{top,\,B}+\mathcal{L}_{GCS}\,,\label{boslag2}
\end{align}
where 
\begin{align}
\mathcal{L}_{top,\,B} &= -\frac{1}{8}g\, \varepsilon^{\mu\nu\rho\sigma}\,
\Theta^{I\alpha}\,B_{\alpha\mu\nu} \, \Big(
2\,\partial_{\rho} \tilde{A}_{I\sigma} + X_{MN\,I}
\,A_\rho^M A_\sigma^N
-\frac{1}{4}g\,\Theta_{I}{}^{\beta}B_{\beta\rho\sigma}
\Big)\,,\label{topB}\\
\mathcal{L}_{GCS} &= -\frac{1}{3}g\,
\varepsilon^{\mu\nu\rho\sigma}X_{MN\,I}\, A_{\mu}^{M}
A_{\nu}^{N} \Big(\partial_{\rho} A_{\sigma}^{I}
+\frac{1}{4}g  X_{PQ}{}^{I}
A_{\rho}^{P}A_{\sigma}^{Q}\Big)
\nonumber\\[.9ex]
&{} -\frac{1}{6}g\,
\varepsilon^{\mu\nu\rho\sigma}X_{MN}{}^{I}\, A_{\mu}^{M}
A_{\nu}^{N} \Big(\partial_{\rho} \tilde{A}_{I\sigma}
+\frac{1}{4}g\, X_{PQI}
A_{\rho}^{P}A_{\sigma}^{Q}\Big)\,,\label{GCS}
\end{align}
where we restored the coupling $g$ and wrote them down explicitly.

The $\mathcal{L}_{top,\,B}$ is the topological term which is added in order to guarantee the invariance under the Peccei-Quinn transformations.
Notice that if the magnetic charges $\Theta^{I\alpha}$ vanish, which will happen in the electric frame, $B_\alpha$ disappears from the action since \eqref{topB} vanishes as well as the $B$-dependent Stueckelberg term in $\mathcal{H}^I$.

One can add the Einstein-Hilbert and the scalar Lagrangians to this action to obtain the complete bosonic action of gauged supergravity in four dimensions and $\mathcal{N}>2$. One can do the same analysis in the presence of fermions since the only constraint required by supersymmetry is the linear constraint \eqref{linear2}. In fact, the consistency of gauging in the bosonic Lagrangian already imposes this condition and one just needs to covariantize the fermionic sector, add the mass shift term and introduce the scalar potential in order to restore the supersymmetry of the original ungauged theory. 

After this short review of embedding tensor formalism, it might be good to look back at the second way of gauging, mentioned in Section \ref{sec:gauged_sugra}, where one uses the action \eqref{eq:lag-V+S}-\eqref{eq:lag} with $n_v$ vector fields. Then, one can formulate the theory in any arbitrary frame by means of a symplectic transformation which relates the arbitrary frame to the reference frame. But first, we want to show that the first-order formulation is rigid even in the presence of scalars.

\section{Yang-Mills deformations of first-order action}\label{sec:NO_YM_def}
One may think that it is more tractable to consider the gauging procedure in the first-order formulation since the duality symmetry is manifest in this formulation.
However, it was shown in \cite{Bunster:2010wv} that the first-order action (\ref{eq:LV1}) without scalar fields,
\be
S[A_i^M] = \frac12\int dt\, d^3x \left(\Omega_{MN} B^{Mi} \dot{A}^N_i -  \delta_{MN} B^M_i B^{Ni}\right) \label{eq:LV1bis}
\ee
does not admit non-abelian deformations of the Yang-Mills type.

For the future reference, it is good to make the abelian gauge invariance of the action \eqref{eq:LV1bis} manifest by introducing the temporal component $A^M_0$ of the potentials and rewrite the action (\ref{eq:LV1bis}) in terms of the abelian curvatures $F_{\mu \nu}^M = \partial_\mu  A^M_\nu - \partial_\nu  A^M_\mu$.  One can replace $\dot{A}^N_i$ by $F_{0i}^M$ in (\ref{eq:LV1bis}) because the magnetic fields are identically transverse and one finds
\be
S[A_{\mu}^M] = \frac14\int dt\, d^3x \left(\Omega_{MN} \epsilon^{ijk} F^M_{ij} F^N_{0k} -  \delta_{MN} F^M_{ij} F^{Nij}\right). \label{eq:LV2bis}
\ee

Then, due to the result of \cite{Bunster:2010wv}, there is no subgroup of the duality group $U(n_v)$ that can be deformed into a non-abelian one.  A much stronger result was actually derived earlier in \cite{Bekaert:2001wa} through the BRST formalism, namely, that the action (\ref{eq:LV1bis}) admits no local deformation that deforms the gauge algebra at all.

In fact, this obstruction as described in \cite{Bunster:2010wv} does not depend on the scalar sector and obstructs Yang-Mills deformations even when scalar fields are present. The obstruction prevails due to the incompatibility of an adjoint action, required by the Yang-Mills construction, and the symplectic condition as required by the invariance of the scalar-independent kinetic term. It was explained in \cite{Bunster:2010wv}, that the obstruction persists to remain even in the presence of scalar fields. This is because even in the presence of scalars, the kinetic term of vector fields is independent of scalars. Following \cite{Henneaux:2017kbx}, we give a detailed proof of the rigidity of Yang-Mills deformation in the presence of scalar fields. 

Let's consider the Yang-Mills deformation 
\be
\delta A^M_i = \partial_i \zeta^M + g C\indices{^M_{NP}} A^N_i \zeta^P 
\ee
of the original abelian gauge symmetry of \eqref{eq:LV2bis},
\be
\delta A^M_i = \partial_i \zeta^M
\ee
with gauge parameters $\zeta^M$.   Here, the $C\indices{^M_{NP}}$ are the structure constants of the gauge group $G_g$ of dimension $2n_v$ into which the original abelian gauge group is deformed, and $g$ is the deformation parameter. Since the $C\indices{^M_{NP}}$ are the structure constants of the gauge group, they satisfy $C\indices{^M_{NP}}= - C\indices{^M_{PN}}$.

Under the Yang-Mills deformation, the abelian curvatures are replaced by the non-abelian ones,
\be
\mathcal{F}_{\mu \nu}^M = F_{\mu \nu}^M + g C\indices{^M_{NP}} A^N_\mu A^P_\nu,
\ee
which transform in the adjoint representation as
\be
\delta \mathcal{F}_{\mu \nu}^M = g C\indices{^M_{NP}} \mathcal{F}^N_{\mu \nu} \zeta^P, \label{eq:adjoint}
\ee
and the ordinary derivatives $\partial_\mu \phi^i$ of the scalar fields are replaced by the covariant derivatives $D_\mu \phi^i$.  These contain linearly the undifferentiated vector potentials $A^N_\mu$.  

Therefore the deformed action is written as
\begin{align}
S[\mathcal{F},\phi] &= \frac{1}{2} \int d^4 x \, g_{ij}(\phi) D_\mu \phi^i D^\mu \phi^j \nonumber\\
&+ \frac14 \int d^4 x \left(\Omega_{MN} \epsilon^{ijk}\mathcal{F}^M_{ij} \mathcal{F}^N_{0k} - \mathcal{M}_{MN}(\phi) \mathcal{F}^M_{ij} \mathcal{F}^{Nij} \right).
\end{align}
The important point is that the kinetic term for the vector potentials is the same as in the absence of scalar fields.  Invariance of the kinetic term under the Yang-Mills gauge transformations yields therefore the same conditions as when there is no scalar $\phi^i$.  The kinetic term must be invariant by itself since the scalar Lagrangian is invariant and there can be no compensation with the energy density $\mathcal{M}_{MN}(\phi) \mathcal{F}^M_{ij} \mathcal{F}^{Nij}$ since it does not contain any time derivative.

As in the case with no scalars \cite{Bunster:2010wv}, the compatibility conditions are that the adjoint representation (\ref{eq:adjoint}) of the gauge group $G_g$ should preserve the symplectic form $\Omega_{MN}$ in internal electric-magnetic space, i.e. writing $C\indices{^M_{NP}}\equiv (C_N)\indices{^M_P}$ then the matrices $C_N$ must be symplectic hence to satisfy $(C_N)^T \,\Omega \,C_N = \Omega$ or explicitly
\be
C_{NPM} = C_{MPN} \label{eq:InvSymp}
\ee
with $C_{MNP} \equiv \Omega_{MQ}C^Q_{\; \; NP}$. The symmetry of $C_{MNP}$ under the exchange of its first and last indices, and its antisymmetry in its last two indices, force it to vanish,
\be
C_{MNP}= - C_{MPN} = - C_{NPM} = C_{NMP} = C_{PMN} = - C_{PNM} = -C_{MNP}.
\ee
Hence, the structure constants $C^Q_{\; \; NP}$ must also vanish and there can be no non-abelian deformation of the gauge symmetries.  The Yang-Mills construction in which the potentials become non-abelian connections  is unavailable in the manifestly duality invariant first-order formulation \cite{Bunster:2010wv}. However, one should notice that the other types of deformations, charging the matter fields or adding functions of the gauge-invariant abelian curvatures, are not excluded by the argument since they preserve the abelian nature of the gauge group. 

In order to surpass the obstruction to the non-abelian deformations, one must avoid the condition (\ref{eq:InvSymp}) expressing the invariance of the symplectic form.  This is naturally attainable once one goes to the second-order formalism by choosing a Lagrangian submanifold, on which the pull-back of $\Omega_{MN}$ is by definition zero.  This is explained in the next section.

\section{Transition to the second-order formalism}\label{sec:transition}

While the first-order and second-order formalisms are equivalent in terms of symmetries in the sense that any symmetry in one formalism is also a symmetry of the other, they are not equivalent regarding the concept of locality, more specifically locality in space. A local function in one formulation may not be local in the other and consequently the spaces of consistent local deformations are not the same in both formalisms. This subtle difference may open a window of possibilities for the non-abelian deformations in second-order formalism which were not accessible in the first-order formalism. Below, we will shed light on the origin of this practically important difference.

\subsection{Choice of symplectic frame and locality restriction}\label{sec:symplecticchoice}

Let's consider the first-order Lagrangian discussed in Section \ref{sec:noncompact_duality} which is written as
\begin{align}
\mathcal{L} &= \mathcal{L}_S + \mathcal{L}_V,\\
\mathcal{L}_S &= \frac{1}{2} g_{ij}(\phi) \partial_\mu \phi^i \partial^\mu \phi^j,\\
\mathcal{L}_V &= \frac{1}{2} \Omega_{MN} B^{Mi} \dot{A}^N_i - \frac{1}{2} \mathcal{M}_{MN}(\phi) B^M_i B^{Ni}, \label{eq:LV1}
\end{align}
with $\Omega$ and $\mathcal{M}$ given by \eqref{eq:OMdef}.

Of special interest are the symmetries of  $\mathcal{L}$ for which the vector fields transform linearly,
\be 
\delta A^M_i(\vec{x}) = \epsilon^\alpha \left(t_\alpha\right)^M_{\; \; N} A^N_i(\vec{x}) \label{eq:trans1}
\ee
accompanied by a transformation of the scalars that may be nonlinear,
\be
\delta \phi^i = \epsilon^{\alpha} R_{\alpha}^{\; \; i}(\phi). \label{eq:trans2}
\ee
Here, the $\epsilon^\alpha$'s are infinitesimal parameters.  One might consider more general (nonlinear canonical) transformations of the vector fields, and the scalars might have their own independent symmetries, but the concern of the thesis is duality symmetries which are governed by the transformations of type \eqref{eq:trans1} and \eqref{eq:trans2}.
 
For the first-order action to be invariant, the transformation (\ref{eq:trans1}) should not only be symplectic, but also such that the accompanying transformation (\ref{eq:trans2}) of the scalar field is given by a Killing vector of the scalar metric $g_{ij}(\phi)$ (so as to leave the scalar Lagrangian $\cL_S$ invariant) which implies that
\be
\epsilon^{\alpha} \left(\frac{\partial  \mathcal{M}_{MN}}{\partial \phi^i} R_{\alpha}^{\; \; i} + \mathcal{M}_{MP} \left(t_\alpha\right)^P_{\; \; N}.
+ \mathcal{M}_{PN} \left(t_\alpha\right)^P_{\; \; M} \right)= 0 \label{eq:trans3}
\ee
It also guarantees to leave the energy density of the electromagnetic fields invariant.  In finite form, the symmetries are 
$A \rightarrow \bar{A}$, $\phi \rightarrow \bar{\phi}$ where the symplectic transformation
\begin{equation}
\bar{A} = S A, \label{eq:Trans1}
\end{equation}
 and the isometry $\phi \rightarrow \bar{\phi}$ of the scalar manifold are such that
\begin{equation}
\mathcal{M}(\bar{\phi}) = S^{-T} \mathcal{M}(\phi) S^{-1} .\label{eq:Trans3}
\end{equation}

The algebra of transformations of the type (\ref{eq:trans1}) and (\ref{eq:trans2}) that leave the action invariant is called the duality algebra $\mathfrak{g}$, while the corresponding group is the duality group $G$. 
As it was indicated in Section \ref{sec:noncompact_duality}, the duality algebra is a subalgebra of $\mathfrak{sp}(2n_v, \mathbb{R})$.  This symplectic condition holds irrespective of the scalar sector and its couplings to the vectors, since it comes from the invariance of the vector kinetic term which assumes always the same form.  Which transformations among those of $\mathfrak{sp}(2n_v, \mathbb{R})$ are actually symmetries depend, however, on the number of scalar fields, their internal manifold and their couplings to the vectors \cite{Ferrara:1976iq,Gaillard:1981rj,deWit:2001pz,Cremmer:1977zt,Cremmer:1977tt,Cremmer:1977tc}.

In order to go from the first-order formalism to the standard second-order formalism, one has to tell what are the ``$q$"'s (to be kept) and the ``$p$"'s (to be eliminated in favour of the velocities through the inverse Legendre transformation).  
We have actually made the choice to use Darboux coordinates in \eqref{eq:OMdef} and 
we call the corresponding frame a Darboux frame or symplectic frame.

Since we consider here only linear canonical transformations, a choice of canonical coordinates amounts to a choice of an element of the symplectic group $Sp(2n_v, \mathbb{R})$ relating that symplectic frame (i.e., symplectic basis) to  a reference symplectic frame.

Once a choice of symplectic frame has been made, one goes from (\ref{eq:LV1}) to the second-order formalism by following the  steps reverse to those that led to the first-order formalism:
\begin{itemize}
\item One keeps the first half of the vector potentials $A^M\vert_{M=1,...,n_v} = A^I$, the ``electric ones" in the new frame ($I = 1, \cdots, n_v$).
\item One replaces the second half of the vector potentials $A^M\vert_{M=n_v+1,...,2n_v}= Z_I$ (the ``magnetic ones") by the momenta
\be
\pi^i_I = - \varepsilon^{ijk} \partial_j Z_{I k}
\ee
subject to the constraints 
\begin{equation}
\partial_i \pi^i_I = 0 
\end{equation}
which one enforces by introducing the Lagrange multipliers $A_0^I$.  This is the non-local step (in space).
\item One finally eliminates the momenta $\pi^i_I$, which can be viewed as auxiliary fields, through their equations of motion which are nothing else but the inverse Legendre transformation expressing $\dot{A}^I_i$ in terms of $\pi^i_I$.
\end{itemize}

One can encode the choice of symplectic frame through the symplectic transformation that relates  the chosen symplectic frame to a  reference symplectic frame,
\begin{equation} \label{eq:Aprime}
A^M = E\indices{^M_N} A^{'N},
\end{equation}
where $E$ is a symplectic matrix in the fundamental representation of $ Sp(2n_v, \mathbb{R})$,  while $A$ and $A^{'}$ are respectively the potentials in the reference and new duality frames. The symplectic property $E^T \Omega E = \Omega$ ensures that the kinetic term in \eqref{eq:LV1} remains invariant. The first-order action therefore takes the same form, but with matrices $\mathcal{M}$ and $\mathcal{M}^{'}$  related by 
\begin{equation}\label{eq:Mprime}
\mathcal{M}^{'} = E^T \mathcal{M} E .
\end{equation}

Using the fact that $E$ and $E^T$ are symplectic, a straightforward but not very illuminating computation shows that $\mathcal{M}^{'}$ determines matrices $\mathcal{I}^{'}$, $\mathcal{R}^{'}$ uniquely such that equation \eqref{eq:OMdef} with primes holds,  i.e, there exist unique $\mathcal{I}^{'}$ and $\mathcal{R}^{'}$ such that
\begin{equation} \label{eq:OMdefBis}
\mathcal{M}^{'} = \begin{pmatrix}
\mathcal{I}^{'} + \mathcal{R}^{'}\mathcal{I}^{'-1}\mathcal{R}^{'} & - \mathcal{R}^{'} \mathcal{I}^{'-1} \\
- \mathcal{I}^{'-1} \mathcal{R}^{'} & \mathcal{I}^{'-1}
\end{pmatrix}.
\end{equation}
The matrices $\mathcal{I}^{'}$ and $\mathcal{R}^{'}$ depend on the scalar fields, but also on the symplectic matrix $E$.
Performing the steps described above in reverse order,  one then gets the second-order Lagrangian  
\begin{equation} \label{eq:lprime}
\cL^{'}_V = - \frac{1}{4} \mathcal{I}^{'}_{IJ}(\phi) F^{'I}_{\mu\nu} F^{'J\mu\nu} + \frac{1}{8} \mathcal{R}_{IJ}^{'}(\phi)\, \varepsilon^{\mu\nu\rho\sigma} F^{'I}_{\mu\nu} F^{'J}_{\rho\sigma} .
\end{equation}

The new Lagrangian \eqref{eq:lprime} depends on the parameters of the symplectic transformation $E$ used in equation \eqref{eq:Mprime}.  So we really have a family of Lagrangians labelled by an $Sp(2n_v, \mathbb{R})$ element $E$.  By construction, these Lagrangians differ from one another by a change of variables that is in general non-local in space. 

There are of course redundancies in this description. The presence of these redundancies will be important in the study of deformations. There are in general two sets of redundancies:

\begin{enumerate}

\item The stability subgroup $T$ of a Lagrangian subspace consists of the block lower triangular symplectic transformations.

\item The duality symmetries.
\end{enumerate}

First, the different symplectic transformations can lead to the same final $n_v$ dimensional Lagrangian subspace of $q$'s upon elimination of the momenta. The choice of $q$'s is equivalent to a choice of Lagrangian linear subspace, or linear polarization, because $q$'s form  a complete set of commuting variables (in the Poisson bracket). The same is true for a choice of $p$'s. The stability subgroup $T$ of a Lagrangian subspace consists of the block lower triangular symplectic transformations yielding 
\begin{itemize}
    \item The canonical \textit{point} transformation where the second-order Lagrangians differ by a redefinition of the $q$'s,
    \item The canonical \textit{phase} transformation where the second-order Lagrangians modified by a total derivative.
\end{itemize}

Second, duality symmetries correspond also to redundancies in the transition to the second-order formalism since they do not modify the first-order Lagrangian (when the scalars are transformed appropriately) and hence clearly lead to the same final second-order Lagrangian.

One can characterize the redundancies as follows.  The symplectic transformations \eqref{eq:Aprime} with matrices $E \in Sp(2 n_v, \mathbb{R})$ and $g E t$ are equivalent. Here, $t$ belongs to the stability subgroup $T$ of the final Lagrangian subspace and $g$ belongs to the duality group $G$. Indeed, the first-order Lagrangian is left invariant under the transformation  $A \rightarrow g^{-1} A$ provided the scalars are transformed appropriately, and two sets of new symplectic coordinates $A^{'}$ and $ t A^{'}$ yield equivalent second-order Lagrangians after elimination of the momenta. One could also consider what happens when the coordinates $A^{'}$ defined by \eqref{eq:Aprime} are taken as the reference frame. Then, the changes of frame $A^{'} = U A^{''}$ and $A^{'} = g^{'} U t^{'} A^{''}$ are equivalent, where $g^{'} = E^{-1} g E$ is the matrix associated with the duality symmetry $g$ in the $A^{'}$--frame and $t^{'}$ is an element of the stability subgroup of the Lagrangian subspace in the $A^{''}$--frame. This gives an equivalent description of the redundancies.

The relevant space is thus the quotient $G \backslash Sp(2n_v, \mathbb{R})/ T$.  Here we considered the invariance of action that means the invariances up to a total derivative which does not matter classically. However, if one just considers the invariance of the Lagrangian instead of the action then the stability subgroup reduces to $T = GL(n_v,\mathbb{R})$.

\subsection{Electric group}\label{sec:elec-group}

The duality transformations
$A \rightarrow \bar{A} = S A$, $\phi \rightarrow \bar{\phi}$, where the symplectic matrix $A$  and the isometry $\phi \rightarrow \bar{\phi}$ of the scalar manifold are such that the condition (\ref{eq:Trans3}) holds, 
generically mix the electric and magnetic potentials.  Accordingly, when expressed in terms of the variables of the second-order formalism they will, in general, take a non-local form since the magnetic potentials become non-local functions of the electric potentials\footnote{Given a duality symmetry of the action $S[A^M_k, \phi^i]$, one gets the corresponding duality symmetry of the first-order action $S[A^I_k, \pi_I^k, A_0^I,\phi^i]$ by (i) expressing in the variation $\delta A^I_k$ the magnetic potentials $Z_{Ik}$ (if they occur) in terms of $\pi_I^k$, which is a non-local expression and determined up to a gauge transformation that can be absorbed in a gauge transformation of the electric variables;  (ii) computing the variation $\delta \pi^k_I$ from  $\pi^i_I = - \varepsilon^{ijk} \partial_j Z_{I k}$; and (iii) determining the variation of the  Lagrange multipliers $A_0^I$ so that the terms proportional to the constraints $\partial_i \pi^i_I$ cancel in $\delta S[A^I_k, \pi_I^k, A_0^I,\phi^i]$.  One gets the symmetry of the second-order action by expressing the auxiliary fields $\pi^i_I$ that are eliminated in terms of the retained variables.} and their time derivatives in the second-order formalism \cite{Deser:1976iy,Deser:1981fr,Bunster:2011aw}.

Thus, an electric symmetry transformation is characterized by the property that the matrix $S_e$ is lower-triangular, i.e., given that it must be symplectic
\begin{equation} \label{eq:Ge0}
 S_e = \begin{pmatrix} M & 0 \\ BM & M^{-T} \end{pmatrix} , \; M \in GL(n_v), \; B^T = B.
\end{equation}
An electric symmetry is therefore a transformation $\bar{A}^I_\mu = M\indices{^I_J} A^J_\mu$ of the electric potentials for which there is a symmetric matrix $B$ and an isometry $\phi \rightarrow \bar{\phi}$ of the scalar manifold such that
\begin{equation} \label{eq:Ge}
\mathcal{M}(\bar{\phi}) = {S_e}^{-T} \mathcal{M}(\phi) {S_e}^{-1}
\end{equation}
with $S_e$ given by (\ref{eq:Ge0}).  The terminology ``electric group" for the group of transformations characterized by (\ref{eq:Ge0}) is standard.  The transformations with $B \not=0$ involve transformations of Peccei-Quinn type describing the axion shift symmetry \cite{Peccei:1977hh,Peccei:1977ur}. 

The electric group $G_e$ in a given frame depends of course on the chosen duality frame. Indeed, going to another duality frame with the symplectic matrix $E$ as in \eqref{eq:Aprime} will replace the matrices $S_e$ by their conjugates $S_e^{'} = E^{-1} S_e E$, and these might not be lower-triangular.
The condition that $S_e^{'}$ has the lower-triangular form \eqref{eq:Ge0} therefore depends on the choice of $E$.
 
We have repeatedly emphasized that a local expression in the $Z_I$'s need not be local when expressed in terms of the $\pi_I$'s and thus of the Lagrangian variables.  This is because the magnetic potentials are not local functions of the conjugate momenta.  But the $\pi_I$'s are local functions of the $Z_I$'s given by $\pi^i_I = - \varepsilon^{ijk} \partial_j Z_{Ik}$ and so it would seem that any local function of the $\pi_I$'s, and thus of the Lagrangian variables, should also be a local function of the $Z_I$'s.  
This is true as long as the relation $\pi^i_I = - \varepsilon^{ijk} \partial_j Z_{I k}$ is not modified.  In deformations of the second-order Lagrangians one allows, however, deformations that could modify this relation. For instance, after Yang-Mills deformations, the abelian constraints $\partial_i\pi^i_I = 0$ get replaced by non-abelian ones and their local solution gets replaced by a non-local expression involving both kinds of potentials.

The interest of the electric group $G_e$ is that its transformations are local in the second-order formalism.  Therefore, among the duality transformations, it is the electric ones that are candidates for (local) gaugings.

\section{Deformation leading to the embedding tensor formalism}\label{sec:embedding}

In Section \ref{sec:gauged_sugra} we explained that another way to cover arbitrary choices of symplectic frames is given by the embedding tensor formalism of \cite{deWit:2005ub,deWit:2007kvg,Samtleben:2008pe,Trigiante:2016mnt}. 
As we have seen in Section \ref{sec:Emb_Tensor} in this formalism in addition to the $n_v$ electric vector fields $A^I_\mu$, the Lagrangian also contains $n_v$ magnetic vector fields $\tilde{A}_{I\mu}$ and $\dim(G)$ two-forms $B_{\alpha \mu \nu}$ with $\alpha = 1, \dots, \dim(G)$, where $G$ is the full duality group in which one can also include pure scalar symmetries. The Lagrangian \eqref{boslag2} also depends on the embedding tensor $\Theta\indices{_M^\alpha}=(\Theta\indices{_I^\alpha},\Theta\indices{^{I\alpha}})$, which are constants in spacetime and not to be varied in the action. They contain information about the embedding of the gauge group in the duality group through the $\mathfrak{g}$-valued connection 
\begin{equation}
\label{Theta}
\mathcal{T}_\mu^\alpha = A_\mu^M \Theta_M^\alpha
\end{equation}
where $\mathfrak{g}$ is the Lie algebra of $G$ and we wrote $A^M_{\mu} = (A^I_\mu, \tilde{A}_{I\mu})$. $\tilde{A}_{Ik}$ are the magnetic potentials called $Z_{Ik}$ in Section \ref{sec:transition} but for more harmonious notation used below, we use $\tilde{A}_{IK}$ in the following. The embedding tensor also contains information about the choice of symplectic frame; an important feature upon which our analysis in the following based. The consistency of gaugings in this formalism controls by three sets of constraints \eqref{linear2}-\eqref{quadratic2}. The locality constraint \eqref{quadratic1} guarantees the existence of a symplectic matrix which takes a chosen symplectic frame to the electric one and hence implies that the gauged group must be a subgroup of an electric subgroup $G_e$ of $G$, i.e. $dim(G_g) = dim(\Theta\indices{_{\check{I}}^{\alpha}}) \leq n_v$. 
 Here $\check{I}=1,...,n_v$ is again referring to the $n_v$ physical vector fields.

The gauged Lagrangian \eqref{boslag2}-\eqref{GCS}, after restoring the coupling constant $g$, can be reorganized as 
\begin{align}
\mathcal{L}^{\Theta, \text{int.}}_{VT} (A,\tilde{A},B) = &- \frac{1}{4} \mathcal{I}_{IJ}(\phi) \mathcal{H}^I_{\mu\nu} \mathcal{H}^{J\mu\nu} + \frac{1}{8} \mathcal{R}_{IJ}(\phi)\, \varepsilon^{\mu\nu\rho\sigma} \mathcal{H}^I_{\mu\nu} \mathcal{H}^J_{\rho\sigma} \nonumber \\
&- \frac{g}{8} \varepsilon^{\mu\nu\rho\sigma} \Theta^{I\alpha}B_{\alpha \mu\nu} \left( \tilde{\mathcal{F}}_{I \rho\sigma} - \frac{g}{4} \Theta\indices{_I^\beta} B_{\beta \rho\sigma}\right)\nonumber \\
&+ g \mathcal{L}^\text{extra} (A,\tilde{A}, gB ) .
\label{eq:gauged-Lag-Emb_tens}
\end{align}
where
\begin{equation}
\mathcal{H}^I_{\mu\nu} = \mathcal{F}^I_{\mu\nu} + \frac{g}{2} \Theta^{I \alpha} B_{\alpha \mu\nu} 
\end{equation}
and $\mathcal{F}^I_{\mu\nu}$, $\tilde{\mathcal{F}}_{I\mu\nu}$ are the Yang-Mills curvatures, differing from the abelian ones by $O(g)$-terms.
The term $\mathcal{L}^\text{extra}$ in \eqref{eq:gauged-Lag-Emb_tens} are terms necessary to secure gauge invariance and will vanish in the limit explained below; their explicit 
form can be extracted from \eqref{boslag2}-\eqref{GCS}. The complete form of supersymmetric action in the presence of matter couplings can be found in \cite{deWit:2005ub,Trigiante:2016mnt}, but we just focus on the bosonic action with no matter coupling. Note that the full duality group is in general not faithfully represented.

The  gauged Lagrangian \eqref{eq:gauged-Lag-Emb_tens} is an interacting Lagrangian.
To view it as a deformation of an ungauged Lagrangian with the same field content, so as to phrase the gauging problem as a deformation problem, we observe that the straightforward free limit $g\rightarrow 0$ of \eqref{eq:gauged-Lag-Emb_tens} is just the Lagrangian \eqref{eq:lag}, without any additional fields and without the embedding tensor. However, one notices that all terms in \eqref{eq:gauged-Lag-Emb_tens} depend on the two-forms $B_\alpha$ only through the combination $\hat{B}_\alpha = g B_\alpha$. The $\hat{B}_\alpha$'s are necessary to shift away some field strengths even for the undeformed limit $g=0$. If we redefine the two-forms appearing in the Lagrangian \eqref{eq:gauged-Lag-Emb_tens} as $\hat{B}_\alpha = g B_\alpha$ and then take the limit $g \rightarrow 0$, one gets
\begin{align} \label{lagtwoforms}
\mathcal{L}^\Theta = &- \frac{1}{4} \mathcal{I}_{IJ}(\phi) H^I_{\mu\nu} H^{J\mu\nu} + \frac{1}{8} \mathcal{R}_{IJ}(\phi)\, \varepsilon^{\mu\nu\rho\sigma} H^I_{\mu\nu} H^J_{\rho\sigma}\nonumber\\
 &- \frac{1}{8} \varepsilon^{\mu\nu\rho\sigma} \Theta^{I\alpha}\hat{B}_{\alpha \mu\nu} \left( \tilde{F}_{I \rho\sigma} - \frac{1}{4} \Theta\indices{_I^\beta} \hat{B}_{\beta \rho\sigma}\right),
\end{align}
with
\begin{equation}
H^I_{\mu\nu} = F^I_{\mu\nu} + \frac{1}{2} \Theta^{I \alpha} \hat{B}_{\alpha \mu\nu} 
\end{equation}
and $F^I_{\mu\nu}$, $\tilde{F}_{I \mu\nu}$ the abelian curvatures.  This Lagrangian contains the extra fields and has 
the ability to cover generic symplectic frames through  the embedding tensor components $\Theta^{I\alpha}$ and $\Theta\indices{_I^\alpha}$. These tensors satisfy the so-called ``locality" quadratic constraint \eqref{quadratic1},
\begin{equation}
\Theta^{I [\alpha} \Theta\indices{_I^{\beta]}} = 0 . \label{eq:ThetaCons}
\end{equation}
In addition to (\ref{eq:ThetaCons}), the embedding tensor fulfills other constraints \eqref{linear2} and \eqref{quadratic2}, however, we stress that the locality constraint is the only condition relevant to our analysis. The other constraints have to appear as the solution of obstruction equations in the process of deformation. We will come back to this in Chapter \ref{ch:BRST-cohom-SV-Lagr} while we discuss BV-BRST deformation of such a model.

The Lagrangian \eqref{lagtwoforms} is invariant under the $2n_v$ gauge transformations
\begin{align}
\delta A^I_\mu &= \partial_\mu \lambda^I - \frac{1}{2} \Theta^{I \alpha} \hat{\Xi}_{\alpha \mu}, \\
\delta \tilde{A}_{I\mu} &= \partial_\mu \tilde{\lambda}_I + \frac{1}{2} \Theta\indices{_I^\alpha} \hat{\Xi}_{\alpha \mu},
\end{align}
and the $\dim G$ generalized gauge transformations
\begin{align}
\delta \hat{B}_{\alpha\mu\nu} &= 2 \partial_{[\mu}\hat{\Xi}_{\alpha \nu]},
\end{align}
where we redefined the gauge parameter as $\hat{\Xi}_\alpha = g \Xi_\alpha$ with respect to the gauge parameter of the corresponding symmetry of \eqref{boslag2}. The parameters $\lambda^I$, $\tilde{\lambda}_I$ and $\hat{\Xi}_{\alpha \nu}$ are arbitrary functions.  The $\lambda$'s  and $\tilde{\lambda}$'s correspond to standard $U(1)$ gauge symmetries of the associated one-form potentials.
The $\hat{\Xi}$'s define ordinary abelian two-form gauge symmetries and also appear as shift transformations of the one-form potentials. This set of gauge symmetries is reducible since adding a gradient to $\hat{\Xi}$ and shifting simultaneously the $\lambda^I$ and $\tilde{\lambda}_I$, i.e.
\begin{equation}
\hat{\Xi}_{\alpha \mu} \rightarrow \hat{\Xi}_{\alpha \mu} + \partial_\mu \xi_{\alpha}, \qquad \lambda^I \rightarrow \lambda^I + \frac{1}{2} \Theta^{I\alpha} \xi_{\alpha}, \qquad \tilde{\lambda}_I \rightarrow \tilde{\lambda}_I - \frac{1}{2} \Theta\indices{_I^\alpha} \xi_\alpha,
\end{equation}
leads to no modification of the gauge transformations.

By construction, the non-abelian Lagrangian of \eqref{boslag2} can be viewed as a consistent local deformation of (\ref{lagtwoforms}),  since one can charge the fields in (\ref{lagtwoforms}) in a smooth way -- by adding $O(g)$ and $O(g^2)$ terms -- to get the non-abelian Lagrangian.  The abelian Lagrangian (\ref{lagtwoforms}) is thus a sensible starting point for the deformation procedure.  The space of consistent local deformations of (\ref{lagtwoforms}) will necessarily include the non-abelian Lagrangian \eqref{boslag2}, given in \cite{deWit:2005ub}. If the embedding tensor components $\Theta^{I\alpha}$ and $\Theta\indices{_I^\alpha}$ do not fulfill the extra constraints \eqref{linear2} and \eqref{quadratic2} of the embedding formalism, one will simply find that there is no Yang-Mills type deformation, as in \cite{deWit:2005ub}, among the consistent deformations of (\ref{lagtwoforms}). 

\textit{A natural question then is whether the space of local deformations of (\ref{lagtwoforms}) is isomorphic to the space of local deformations of the conventional Lagrangian \eqref{eq:lprime} in an appropriate symplectic frame.  We will show that it is indeed the case.}  

To that end, we will now perform a sequence of field redefinitions and show that the Lagrangian (\ref{lagtwoforms}) with a given definite (arbitrary) choice of embedding tensor differs from the Lagrangian \eqref{eq:lprime} in a related symplectic frame by the presence of algebraic auxiliary fields that can be eliminated without changing the local BRST cohomology.  The other extra fields are pure gauge with gauge transformations that are pure shifts so that they do not appear in the Lagrangian and do not contribute to the local BRST cohomology either. We shall explain this later in Section \ref{sec:BV-def-Embed} but for now we recall that the local BRST cohomology at ghost number zero is what controls the space of non-trivial deformations \cite{Barnich:1993vg}. 

The steps performed below for the limit $g=0$  follow closely the steps given in \cite{deWit:2005ub} to prove that the extra fields appearing in the embedding tensor formalism do not add new degrees of freedom with respect to conventional gaugings.  As shown by the examples of the first-order and second-order formulations discussed in the previous sections, however, equivalent actions might admit inequivalent spaces of local deformations.  Another example is given in \cite{Brandt:1996au,Brandt:1997uq,Brandt:1997ny}. Therefore it seems important to relate the local BRST cohomologies; particularly we monitor how locality goes through each field redefinition. Furthermore, we do \textit{not} fix the gauge at any stage in order to make it clear that the analysis is gauge-independent.  This is also important since gauge fixing might have an impact on local cohomology.

The magnetic components $\Theta^{I\alpha}$ of the embedding tensor form a rectangular $n_v \times (\dim G)$ matrix. Let $r \leq n_v$ be its rank.
Then, there exists an $n_v \times n_v$ invertible matrix $Y$ and a $(\dim G) \times (\dim G)$ invertible matrix $Z$ such that
\begin{equation} \label{eltheta}
\Theta^{' I \alpha} = Y\indices{^I_J} \Theta^{J \beta} Z\indices{_\beta^\alpha}
\end{equation}
is of the form
\begin{equation} \label{formthetaprime}
(\Theta^{' I \alpha}) = \left(\begin{array}{c|c}
\theta & 0  \\ \hline
0 & 0
\end{array}\right),
\end{equation}
where $\theta$ is an $r\times r$ invertible matrix\footnote{The matrices $Y$ and $Z$ can be constructed as a product of the matrices that implement the familiar elementary operations on the rows and columns of $\Theta^{I \alpha}$.}. 

We also define the new electric components by
\begin{equation} \label{magtheta}
\Theta\indices{^'_I^\alpha} = (Y^{-1})\indices{^J_I} \Theta\indices{_J^\beta} Z\indices{_\beta^\alpha},
\end{equation}
so that the new components still satisfy the locality constraint (\ref{eq:ThetaCons})
\begin{equation} \label{primedconstraint}
\Theta^{' I [\alpha} \Theta\indices{^'_I^{\beta]}} = 0 .
\end{equation}
Similarly to what has been done in \cite{deWit:2005ub}, one splits the indices as $I = (\hat{I}, \hat{U})$ and $\alpha = (i, m)$, where $\hat{I}, i = 1, \dots, r$, $\hat{U} = r+1, \dots, n_v$ and $m = r+1, \dots, \dim G$. Moreover, we define
\begin{equation}
\tilde{\theta}\indices{_{\hat{I}}^i} = \Theta\indices{^'_{\hat{I}}^i} .
\end{equation}
With this split, equations \eqref{formthetaprime} and \eqref{primedconstraint} become
\begin{align} \label{thetasplit}
\Theta^{'\hat{I}i} &= \theta^{\hat{I}i} \quad \text{(invertible)} \nn \\
\Theta^{'\hat{I}m} &= \Theta^{'\hat{U}i} = \Theta^{'\hat{U}m} = \Theta\indices{^'_{\hat{I}}^m} = 0 \nn \\
\theta^{\hat{I}i} \tilde{\theta}\indices{_{\hat{I}}^j} &= \theta^{\hat{I}j} \tilde{\theta}\indices{_{\hat{I}}^i} .
\end{align}

Now, we make the field redefinitions
\begin{equation} \label{eq:change1}
\hat{B}_\alpha = Z\indices{_\alpha^\beta} B^{'}_\beta, \qquad A^I = (Y^{-1})\indices{^I_J} A^{'J}, \qquad \tilde{A}_I = Y\indices{^J_I} \tilde{A}^{'}_J.
\end{equation}
In the new primed-variables, the Lagrangian takes the same form \eqref{lagtwoforms} but with  primed quantities everywhere,
\begin{align} \label{lagtwoforms_primed}
\mathcal{L}^\Theta = &- \frac{1}{4} \mathcal{I}^{'}_{IJ}(\phi) H^{'I}_{\mu\nu} H^{'J\mu\nu} + \frac{1}{8} \mathcal{R}^{'}_{IJ}(\phi)\, \varepsilon^{\mu\nu\rho\sigma} H^{'I}_{\mu\nu} H^{'J}_{\rho\sigma}\nonumber\\
 &- \frac{1}{8} \varepsilon^{\mu\nu\rho\sigma} \Theta\indices{^{'I\alpha}} B^{'}_{\alpha \mu\nu} \left( \tilde{F}^{'}_{I \rho\sigma} - \frac{1}{4} \Theta\indices{^'_I^\beta} B^{'}_{\beta \rho\sigma}\right),
\end{align}
with
\begin{equation}
H^{'I}_{\mu\nu} = F^{'I}_{\mu\nu} + \frac{1}{2} \Theta\indices{^{'I \alpha}} B^{'}_{\alpha \mu\nu}. 
\end{equation}

The new matrices $\mathcal{I}^{'}$ and $\mathcal{R}^{'}$ are given by
\begin{equation}
\mathcal{I}^{'} = Y^{-T} \mathcal{I} Y^{-1}, \qquad
\mathcal{R}^{'} = Y^{-T} \mathcal{R} Y^{-1}.
\end{equation}
These field  redefinitions are local, so any local function of the old set of variables is also a local function of the new set of variables. 

Using equation \eqref{thetasplit}, it can be seen that the magnetic vector fields $\tilde{A}^{'}_{\hat{U}}$ and the two-forms $B^{'}_m$ do not appear in the Lagrangian. This means that their gauge symmetries are in fact pure shift symmetries. In particular, a complete description of the gauge symmetries of the two-forms $B^{'}_m$ is actually given by $\delta B^{'}_m = \epsilon^{'}_m$ rather than $\delta B^{'}_m = d\, \Xi^{'}_m$, implying that the above set of gauge transformations is not complete. 

Let us also redefine the gauge parameters as $\Xi^{'}_\alpha = (Z^{-1})\indices{_\alpha^\beta}\, \hat{\Xi}_\beta$, $\lambda^{'I} = Y\indices{^I_J} \lambda^J$ and $\tilde{\lambda}^{'}_I = (Y^{-1})\indices{^J_I} \tilde{\lambda}_J$. The gauge variations of $A^{'\hat{U}}$, $A^{'\hat{I}}$, $\tilde{A}^{'}_{\hat{I}}$ and $B^{'}_i$ are then
\begin{align}
\delta A^{'\hat{U}}_\mu &= \partial_\mu \lambda^{'\hat{U}},\\
\delta A^{'\hat{I}}_\mu &= \partial_\mu \lambda^{'\hat{I}} - \frac{1}{2} \theta^{\hat{I}i} \Xi^{'}_{i\mu},\label{eq:gauge_sym_AI}\\
\delta \tilde{A}^{'}_{\hat{I}\mu} &= \partial_\mu \tilde{\lambda}^{'}_{\hat{I}} + \frac{1}{2} \tilde{\theta}\indices{_{\hat{I}}^i} \Xi^{'}_{i\mu},\label{eq:gauge_sym_A_tilde}\\
\delta B^{'}_{i\mu\nu} &= 2 \partial_{[\mu}\Xi^{'}_{i \nu]}\label{eq:gauge_sym_B}.
\end{align}
 They suggest the further changes of variables
\begin{align} \label{eq:change2}
\bar{A}^i_{\mu} &= \theta^{\hat{I}i} \tilde{A}^{'}_{\hat{I}\mu} + \tilde{\theta}\indices{_{\hat{I}}^i} A^{'\hat{I}}_{\mu} \\
\Delta_{i\mu\nu} &= B^{'}_{i\mu\nu} + 2 (\theta^{-1})_{i\hat{I}} F^{'\hat{I}}_{\mu\nu} ,
\end{align}
which we complete in the $A$-sector by taking other independent linear combinations of the $A^{'\hat{I}}$, $\tilde{A}^{'}_{\hat{I}}$, that can be taken to be for instance  $A^{'\hat{I}}_\mu$.   The change of variables is again such that any local function of the old set of variables is also a local function of the new set of variables.  Using the constraint $\theta^{\hat{I}i}\tilde{\theta}\indices{_{\hat{I}}^j} = \theta^{\hat{I}j}\tilde{\theta}\indices{_{\hat{I}}^i}$, one finds that the variation of the new variables simplifies to
\begin{equation}
\delta \bar{A}^i_\mu = \partial_\mu \eta^i, \quad \delta \Delta_{i\mu\nu} = 0, \quad \delta A^{'\hat{I}}_\mu = \epsilon^{\hat{I}}_\mu,
\end{equation}
with the gauge parameters $\eta^i$ and $\epsilon^{\hat{I}}_{\mu}$ defined as
\begin{align}
    \eta^i &= \theta^{\hat{I}i} \tilde{\lambda}^{'}_{\hat{I}} + \tilde{\theta}\indices{_{\hat{I}}^i} \lambda^{'\hat{I}},\\
    \epsilon^{\hat{I}}_\mu &= \partial_\mu \lambda^{'\hat{I}} - \frac{1}{2} \theta^{\hat{I}i} \Xi^{'}_{i\mu}.
\end{align}
We have used a different symbol $\Delta_{i\mu\nu}$ to emphasize that it does not transform anymore as a generalized gauge potential. Because $\theta^{\hat{I}i}$ is invertible, the gauge parameters  $\eta^i$ and $ \epsilon^{\hat{I}}_\mu$ provide an equivalent description of the gauge symmetries as \eqref{eq:gauge_sym_AI}-\eqref{eq:gauge_sym_B}. However, they provide an irreducible set of gauge symmetries contrary to description given by $\lambda^{'\hat{I}}$, $\tilde{\lambda}^{'}_{\hat{I}}$ and $\Xi^{'}_{i\mu}$.

Written in those variables, the Lagrangian only depends on $n_v$ vector fields $A^{'\hat{U}}$ and $\bar{A}^i$ and on $r$ two-forms $\Delta_i$. The variables $A^{'\hat{I}}_\mu$ drop out in agreement with the shift symmetry $\delta A^{'\hat{I}}_\mu = \epsilon^{\hat{I}}_\mu$.

 The Lagrangian is explicitly
\begin{align} 
\mathcal{L} = &- \frac{1}{4} \mathcal{I}^{'}_{IJ}(\phi) \bar{H}^I_{\mu\nu} \bar{H}^{J\mu\nu} + \frac{1}{8} \mathcal{R}^{'}_{IJ}(\phi)\, \varepsilon^{\mu\nu\rho\sigma} \bar{H}^I_{\mu\nu} \bar{H}^J_{\rho\sigma} \nonumber\\
&- \frac{1}{8} \varepsilon^{\mu\nu\rho\sigma} \Delta_{i \mu\nu} \left( \bar{F}^i_{\rho\sigma} - \frac{1}{4} \tilde{\theta}\indices{_{\hat{I}}^i} \theta^{\hat{I}j} \Delta_{j \rho\sigma}\right),\label{eq:lbarB}
\end{align}
where the $\bar{H}^I$ are
\begin{equation}
\bar{H}^{\hat{I}} = \frac{1}{2} \theta^{\hat{I}i} \Delta_i, \quad \bar{H}^{\hat{U}} = F^{'\hat{U}} .
\end{equation}
The gauge variations of the fields are
\begin{equation}
\delta A^{'\hat{U}}_\mu = \partial_\mu \lambda^{'\hat{U}}, \quad \delta \bar{A}^i_\mu = \partial_\mu \eta^i, \quad \delta \Delta_{i\mu\nu} = 0.
\end{equation}
The two-forms $\Delta_i$ are auxiliary fields and can be eliminated from the Lagrangian \eqref{eq:lbarB}, yielding a Lagrangian of the form \eqref{eq:lprime} in a definite symplectic frame. This is because the relevant quadratic form is invertible, see section 5.1 of \cite{deWit:2005ub} for details.

One can get this Lagrangian more directly as follows.  After the auxiliary fields are eliminated, the variables that remain are the scalar fields and the $n_v$ vector fields $A^{'\hat{U}}$ and $\bar{A}^i$.  We thus see that the embedding tensor determines a symplectic frame. The matrix $E \in Sp(2n_v, \mathbb{R})$ defining the symplectic frame can be viewed in this approach as the function $E(\Theta)$ of the embedding tensor obtained through the above successive steps that lead to the final Lagrangian where only half of the potentials remain. Of course, there are ambiguities in the derivation of $E$ from the embedding tensor, since choices were involved at various stages in the construction.  One gets the matrix $E$ up to a transformation of the stability subgroup of the Lagrangian subspace of the $n_v$ electric potentials  $A^{'\hat{U}}$ and $\bar{A}^i$.

More explicitly, one gets the matrix $E(\Theta)$ from the above construction as follows: the change of variables \eqref{eq:change1} can be written as
\begin{equation} \label{E1}
\begin{pmatrix}
A^I \\ \tilde{A}_I
\end{pmatrix} = \begin{pmatrix}
(Y^{-1})\indices{^I_J} & 0 \\ 0 & Y\indices{^J_I}
\end{pmatrix} \begin{pmatrix}
A^{'J} \\ \tilde{A}^{'}_J
\end{pmatrix}
\end{equation}
and the full change \eqref{eq:change2}, including the trivial redefinitions of the other vector fields, is
\begin{equation} \label{E2}
\begin{pmatrix}
A^{'\hat{I}} \\ A^{'\hat{U}} \\\tilde{A}^{'}_{\hat{I}} \\ \tilde{A}^{'}_{\hat{U}}
\end{pmatrix} =
\begin{pmatrix}
0 & 0 & \delta^{\hat{I}}_{\hat{J}} & 0 \\
0 & \delta^{\hat{U}}_{\hat{V}} & 0 & 0 \\
\delta^{\hat{J}}_{\hat{I}} & 0 & - (\theta^{-1})_{i \hat{I}}\, \tilde{\theta}\indices{_{\hat{J}}^i} & 0 \\
0 & 0 & 0 & \delta^{\hat{V}}_{\hat{U}}
\end{pmatrix}
\begin{pmatrix}
\bar{A}_{\hat{J}} \\ \bar{A}^{\hat{V}} \\ \bar{A}^{\hat{J}} \\ \bar{A}_{\hat{V}}
\end{pmatrix},
\end{equation}
where we defined
\begin{equation}
\bar{A}_{\hat{I}} = (\theta^{-1})_{i \hat{I}} \bar{A}^i
\end{equation}
with respect to \eqref{eq:change2} in order to have the same kind of indices. The matrix $E(\Theta)$ is therefore simply given by
\begin{equation}
E(\Theta) = E_1 E_2,
\end{equation}
where $E_1$ and $E_2$ are the matrices appearing in equations \eqref{E1} and \eqref{E2} respectively. Using the property
\begin{equation}
(\theta^{-1})_{i[\hat{I}} \tilde{\theta}\indices{_{\hat{J}]}^i} = 0,
\end{equation}
which follows from the last of \eqref{thetasplit} upon contractions with $\theta^{-1}$, one can show that $E(\Theta)$ defined in this way is indeed a symplectic matrix. Once $E(\Theta)$ is known, the final Lagrangian in this symplectic frame, i.e. the Lagrangian that follows from the elimination of $\Delta_i$ in \eqref{eq:lbarB}, is then simply the Lagrangian \eqref{eq:lprime} with $\mathcal{I}^{'}$ and $\mathcal{R}^{'}$ determined from \eqref{eq:OMdefBis} where $\mathcal{M}^{'}$ is given by 
\begin{equation}\label{eq:MprimeTheta}
\mathcal{M}^{'} = E(\Theta)^T \mathcal{M} E(\Theta) .
\end{equation}
In general, the matrix $E$ will have an upper triangular part, and hence, will not belong to the stability subgroup of the original electric frame.

Finally, we make a remark that, as we have seen, it is the rank of the magnetic part of the embedding tensor that controls how many original magnetic fields become ``electric".  In particular, if the rank is zero, all original electric fields remain electric, while if the rank is maximum, all original magnetic fields become electric.

We delay the detailed discussion of deformations in the language of BRST cohomology to the end of Chapter \ref{ch:BV_def} after we introduce BRST cohomology and the tools necessary to develop the discussion.

\chapter{BV-BRST Formalism and Deformation}\label{ch:BV_def}

\section{A quick look at BV-BRST formalism}

The nature as we know, including the phenomena we deal with in our every day life and those which happens in microscopic scales, is described by the gauge theory of fundamental interactions
between particles. It is common for the gauge theories to be formulated in a local and manifestly covariant manner as it may facilitate otherwise computational difficulties. However, the price to be paid is to introduce the extra (non-propagating) degrees of freedom to render the locality and the covariance of the theories manifest. The theories then, in the presence of non-dynamical
local degrees of freedom, show invariance under the symmetries which account for the freedom of field transformations while leaving the theories unchanged; these symmetries are
known as gauge symmetries. 

The presence of gauge symmetries specially for the non-abelian theories causes issues once one wants to quantize the classical gauge theories. Typically, 
a gauge-fixing procedure is used to remove the non-dynamical degrees of freedom. Ghost fields are used to compensate for the effects of the gauge degrees of freedom so as 
the final theory yet remains unitary. In the theory of electrodynamics in the linear gauges, ghosts decouple while in non-abelian gauge theories, convenient gauges generically involve
interacting ghosts. The Faddeev-Popov quantization procedure \cite{Faddeev:1967fc,DeWitt:1967yk,DeWitt:1967ub,DeWitt:1967uc}, particularly well-defined in the path integral formulation of field theories, was proposed in order to address this issue. In this procedure, the presence of ghost fields is essential to produce the correct measure for the functional integral by introducing quadratic terms in the ghost fields in the Lagrangian.
The gauge-fixed action then is invariant under a nilpotent global symmetry, the so-called BRST symmetry \cite{Becchi:1974xu,Becchi:1974md,Becchi:1975nq,Tyutin:1975qk}, involving transformations of both fields and ghosts.

There are though systems that cannot be treated by the Faddeev-Popov quantization procedure of which are the theories involving higher order ghost vertices and the gauge theories with reducible gauge group. An important example is supergravity where the algebra does not necessarily close off-shell, i.e. the commutator of two off-shell gauge symmetries will be written by terms proportional to the equations of motion. Moreover, in the presence of higher order $p$-forms, as in the supergravity, the gauge algebra is usually reducible.    

Since the presence of ghost fields are useful in the discussion of quantization and renormalization of gauge theories (most importantly of the non-abelian ones), it is of interest to have a formalism that not only includes the ghost fields from the beginning but also encompasses the remnant BRST symmetry of the theory. The Batalin-Vilkovisky (BV) formulation, originally proposed by Zinn-Justin in the context of renormalization of Yang-Mills theories \cite{ZinnJustin:1974mc,ZinnJustin:2002ru} and developed by Batalin and Vilkovisky to more generalized models \cite{Batalin:1981jr,Batalin:1983wj,Batalin:1984jr,Batalin:1984ss,Batalin:1985qj}, does indeed provide the appropriate means. It involves anti-fields which source the BRST symmetries and a canonical symplectic structure, given by an odd non-degenerate symplectic form on the space of fields and anti-fields known as the anti-bracket map, which controls the dynamics of the theory.

The BV formalism has many advantages; it enjoys the manifest gauge invariance, allows for a perturbative expansion
of the quantum theory and makes possible to study the quantum corrections to the symmetry structure of the theory. Furthermore, the formalism can treat the theories with an open algebra or reducible gauge symmetries by introducing the extra fields and antifields, the ghosts of ghosts and their corresponding anti-fields and so on. 

In the rest of this chapter, we introduce in Section \ref{sec:BRST-diff-cohom} the BRST cohomology and in Section \ref{sec:BV-def-formalism} show its importance in the BV deformation
procedure. Afterward, in Section \ref{sec:BV-def-Embed} we go back to the question asked in Chapter \ref{ch:Isom-Emb-VS-models} that if the spaces of local deformations of the undeformed Lagrangian \eqref{lagtwoforms} of embedding tensor formalism and the one of scalar-vector model \eqref{eq:lprime} are equivalent. We use the BV-BRST deformation techniques of Section \ref{sec:BV-def-formalism} to have a thorough analysis and to answer to this question.

\section{BRST cohomology}\label{sec:BRST-diff-cohom}

The BRST differential $s$ is a nilpotent differential defined on the space of all fields (including ghost fields) $\Phi^A=\left(A^{I}_{p\mu}, \psi^i, C^I,...\right)$ and their
anti-fields $\Phi^\ast_A=\left(A^\ast_{pI\mu},\psi^{\ast i}, C^\ast_I,...\right)$ where $A^{I}_{p\mu}$ and $\psi^i$ are $p$-forms and the matter fields respectively and ellipses stand for the ghosts of ghosts and their corresponding anti-fields. Therefore, one has 
\be
s^2 =0.
\ee
Then, the BRST cocycles are those objects $K$ that are BRST-closed, i.e. elements of the kernel of $s$,
\be
s K =0,
\ee
and the BRST coboundaries are those BRST-closed objects that are BRST-exact, i.e. elements of the image of $s$,
\be
s K=0 ,\qquad K= s B.
\ee
The BRST cohomolgy is formally defined as
\be
H(s)= \f{\textrm{Ker}(s)}{\textrm{Im}(s)}.
\ee
An element in $H(s)$ is an equivalence class of BRST-cocycles, where two BRST-cocycles are identified if they differ by a BRST-coboundary.

One can consider the BRST cohomology in the space of local functions where the cochain $K$ are local functions of the fields $\Phi^A$ and a finite number of their derivatives (and depend on spacetime coordinates) or in the space of local functionals where the elements of $K$ are the integral of forms built from the local functions.

In the standard field theoretic setting, one insists on spacetime
locality which implies that the cohomology is computed in the space of
local functionals in the fields and antifields. In turn, this can be
shown to be equivalent to the cohomology of $s$ in the space of local
functions up to total derivatives or, in form notation, to the
cohomology of $s$ in top form degree $n$, up to the horizontal
 exterior derivative of an $(n-1)$-form. The horizontal exterior derivative is not the de Rham
differential but is instead given by $d=dx^\mu\d_\mu$, where
$\d_\mu=\ddl{}{x^\mu}+\d_\mu z^\Sigma\ddl{}{z^\Sigma}+\dots$ is the
total derivative. Here $z^\Sigma$ is referring to fields and antifields collectively, $z^\Sigma=(\Phi^A,\Phi^*_A)$.

Therefore the cocycle and coboundary conditions are written as
\begin{align}
    s a + d m =0 \qquad &~\textrm{cocycle condition},\label{eq:cocycle_cond}\\
    a= s b + d n \qquad &~\textrm{coboundary condition}, 
\end{align}
where $a,b$ are $n$-form local functions and $m,n$ are some $(n-1)$-form local functions. In the following we use $H(s)$ for the BRST-cohomology in the sapce of local functions and $H(s\vert d)$ for the BRST-cohomology in the space of local functionals. One could consider a grading of these cohomologies with the ghost numbers, then one defines $H^{g,n}(s \vert d)$, isomorphic to $H^g(s)$, as the cohomology of local functions of form degree $n$ and ghost number $g$ modulo total derivatives
\begin{equation}
  \label{eq:9a}
  s a^{g,n} +d a^{g+1,n-1}=0,\quad a^{g,n}\sim a^{g,n}+s b^{g-1,n}+d
  b^{g,n-1}.
\end{equation}

We already mentioned the importance of BV-BRST formalism in the quantization of gauge systems at the quantum and classical level. In the language of BRST cohomology, important problems at the quantum level like the gauge anomalies and the counterterms in order to render the gauge theories renormalizable are equivalent to finding $H^{1,n}(s\vert d)$ and $H^{0,n}(s\vert d)$, see \cite{Henneaux:1992ig}.

At the classical level, the BRST cohomology is a useful tool to explain the generalized conservation laws and to build all consistent interactions that one can add to a free theory while preserving the number of gauge symmetries of starting theory. The former is captured by the characteristic cohomology $H^{n+g}_{\rm char}(d)$ and the latter is described by $H^{0,n}(s_0\vert d)$ and $H^{1,n}(s_0\vert d)$ of the undeformed theory \cite{Barnich:2000zw}. We will explain this in more detail in the next section.

\section{Batalin-Vilkovisky antifield formalism}\label{sec:BV-def-formalism}

In order to systematically construct consistent interactions in gauge
theories, it is useful to reformulate the problem in the context of
algebraic deformation theory
\cite{Gerstenhaber:1963zz,nijenhuis1966,Julia:1986gv,Julia:1986ha}. The
appropriate framework is provided by the Batalin-Vilkovisky antifield
formalism
\cite{Batalin:1981jr,Batalin:1984jr,Barnich:1993vg,Henneaux:1997bm}.

The structure of an irreducible gauge system, i.e., the Lagrangian
$\cL_0$ with field content $\varphi^{a}$, generating set of gauge
symmetries\footnote{We use the condensed De Witt notation.}
$\delta_\epsilon \varphi^{a}=R^{a}{}_\alpha [\varphi^{b}]
\,(\epsilon^\alpha)$ and their algebra, is captured by the
Batalin-Vilkovisky (BV) master action $S$ (see
e.g.~\cite{Henneaux:1992ig,Gomis:1994he} for reviews). The master
action is a ghost number $0$ functional
\begin{equation}
  \label{eq:masteraction}
  S=\int \!d^n\!x\, \cL=\int \!d^n\!x\, \left[ \cL_0 +\varphi^*_{a}
    R^{a}{}_\alpha \,  
    (C^\alpha) + \frac{1}{2}C^*_\alpha
    f\indices{^\alpha_{\beta\gamma}}( C^\beta, C^\gamma) + \dots  \right], 
\end{equation}
that satisfies what is called the master equation
\begin{equation}
\label{eq:masterequation}
\half (S,S)=0.
\end{equation}
In this equation, the BV antibracket is the odd graded Lie bracket
defined by
\begin{equation}
  \label{eq:antibracket}
  (X,Y)= \int\!d^n\!x\, \left[ \frac{\delta^R X}{\delta \Phi^A(x)}
\frac{\delta^L Y}{\delta \Phi^*_A(x)}-\frac{\delta^R X}{\delta 
\Phi^*_A(x)}\frac{\delta^L Y}{\delta \Phi^A(x)} \right]
\end{equation}
on the extended space $\Phi^A=(\varphi^{a},C^\alpha,\dots)$ of
original fields and ghosts (and ghosts for ghosts in the case of
reducible gauge theories) and their antifields $\Phi^*_A$. The ghost number and the Grassmann number of each fields and their anti-fields are summarized in Table \eqref{tab:gh+afl}.
\begin{table}[h!]
\begin{center}
\begin{tabular}{|c|c|c|c|c|}
\hline 
        $A$  &  puregh($A$)  &  antifld($A$)  & gh($A$) & $\epsilon(A)$ \\
\hline 
        $\varphi^a$  &  0  &  0 & 0 & 0  \\ 
\hline  
           $C^I$  &  1  &  0 & 1 & 1  \\ 
\hline
        $\varphi^\ast_a$  &  0  &  1 & -1 & 1  \\ 
\hline
        $C^\ast_a$  &  0  &  2 & -2 & 0  \\ 
\hline
\end{tabular}
\end{center}
\caption{The list of pure ghost number (puregh), anti-field (anti-ghost) number (antifld), ghost number (gh) and Grassmann number ($\epsilon$) of each of the fields and anti-fields. The ghost number is defined as $\textrm{gh}(A)=\textrm{puregh}(A)-\textrm{antifld}(A)$, and we have in general ${\rm gh}(\Phi^*_A)=-{\rm gh}(\Phi^A)-1$. }\label{tab:gh+afl}
\end{table}
The Lagrangian, gauge
variations and structure functions of the gauge algebra are contained
in the first, second and third term of the master action
\eqref{eq:masteraction} respectively.

For the deformation problem, one assumes the existence of an
undeformed theory described by $S^{(0)}$ satisfying the master
equation $\half (S^{(0)},S^{(0)})=0$ and one analyzes the conditions
coming from the requirement that, in a suitable expansion, the
deformed theory
\begin{equation}
  \label{eq:deformedmasteraction}
  S=S^{(0)}+ S^{(1)}+ S^{(2)}+\dots,
\end{equation}
satisfies the master equation \eqref{eq:masterequation}. This results in a chain of equations at different order in deformation parameters
\begin{align}
    &(S^{(0)},S^{(0)})=0,\label{eq:def_con_N0}\\
    &(S^{(0)},S^{(1)})=0,\label{eq:def_con_N1}\\
    &(S^{(0)},S^{(2)})+\half (S^{(1)},S^{(1)})=0,\label{eq:def_con_N2}\\
    ~\qquad &\vdots \nonumber
\end{align}
This can be easily proved using the properties of anti-bracket map, see for example section 4 of \cite{Gomis:1994he} for a review of all properties of anti-bracket map. The deformed
Lagrangian, gauge symmetries and structure functions can then be read
off from the deformed master action \eqref{eq:deformedmasteraction}.

The first condition \eqref{eq:def_con_N1} on the infinitesimal deformation $S^{(1)}$ is
\begin{equation}
  \label{eq:40a}
  (S^{(0)},S^{(1)})=0.
\end{equation}
This equation admits solutions $S^{(1)}=(S^{(0)},\Xi)$, for all $\Xi$ of
ghost number $-1$.  Such deformations can be shown to be trivial in
the sense that they can be absorbed by (anticanonical) field-antifield
redefinitions. Moreover, trivial deformations in that sense are always
of the form $S^{(1)}=(S^{(0)},\Xi)$ for some local $\Xi$. It thus follows
that equivalence classes of deformations up to trivial ones are
classified by $H^0(s)$, the ghost number zero cohomology of the
antifield dependent BRST differential $s=(S^{(0)},\cdot)$ of the
undeformed theory,
\begin{equation}
  \label{eq:42}
  [S^{(1)}]\in H^0(s). 
\end{equation}
For our problem of determining the most general deformation, we start
by computing $H^{0}(s)$ and couple its elements with independent
parameters to the starting point action to obtain
$S^{(0)}+S^{(1)}$. The parameters thus play the role of generalized
coupling constants. In a second step, we determine the constraints on
these coupling constants coming from the existence of a completion
such that \eqref{eq:masterequation} holds. The expansion is then in
terms of homogeneity in these generalized coupling constants and not,
as often done, in homogeneity of fields (in which case $S^{(0)}$
corresponds to an action quadratic in the fields). In particular, this
approach treats the different types of symmetries involved in the
determination of $H^{0}(s)$ on the same footing.

The BRST differential is defined on the undifferentiated
fields and antifields by $s\Phi^A=-\vddr{\cL}{\Phi^*_A}$,
$s\Phi^*_A=\vddr{\cL}{\Phi^A}$. It is extended to the
derivatives through $[s,\d_\mu]=0$ resulting in $\{s,d\}=0$. This
reformulation allows one to use systematic homological techniques
(descent equations) for the computation of these classes (see
e.g.~\cite{DuboisViolette:1985jb}).

At second order, the condition \eqref{eq:def_con_N2} on the infinitesimal deformation
$S^{(1)}$ is  
\begin{equation}
  \label{eq:43}
  \half (S^{(1)},S^{(1)})+(S^{(0)},S^{(2)})=0.
\end{equation}
The antibracket gives rise to a well defined map in cohomology\footnote{In the following, whenever we refer to the cohomology of local functional of a top form-degree, we drop the index $n$ and write it as $H^g(s\vert d)$.},
\begin{equation}
  \label{eq:31a}
  (\cdot,\cdot): H^{g_1}(s|d)\otimes H^{g_2}(s|d)\longrightarrow H^{g_1+g_2+1}(s|d).
\end{equation}
For cocycles $C_i$ with $[C_i]\in H^{g_i}(s|d)$, it is explicitly given
by
\begin{equation}
  \label{eq:44}
  ([C_1],[C_2])=[(C_1,C_2)]\in H^{g_1+g_2+1}(s|d). 
\end{equation}
Condition \eqref{eq:43} constrains the infinitesimal
deformation $S^{(1)}$ to satisfy
\begin{equation}
  \label{eq:45}
  \half([S^{(1)}],[S^{(1)}])=[0]\in H^1(s|d).
\end{equation}
If this is the case, $S^{(2)}$ in \eqref{eq:43} is defined up to a
cocycle in ghost number $0$. Higher order brackets and constraints
can be analyzed in a similar way, see
e.g.~\cite{RETAKH1993217,FUCHS2001215}.

Besides the group $H^0(s|d)$ that describes infinitesimal deformations,
and $H^1(s|d)$ that controls the obstructions to extending these to
finite deformations, one can furthermore show \cite{Barnich:1994db}
that $H^{g}(s|d)\simeq H^{n+g}_{\rm char}(d)$ for $g\leq -1\,$. The
 characteristic cohomology groups are defined by forms
$\omega$ in the original fields $\varphi^a$ such that
\begin{equation}
  \label{eq:10}
  d\omega^{n+g}\approx 0,\quad 
  \omega^{n+g}\sim\omega^{n+g}+d \eta^{n+g-1}+t^{n+g},
\end{equation}
with $t^{n+g}\approx 0$ and where $\approx 0$ denote terms that vanish
on all solutions to the Euler-Lagrange equations of motion. In
particular, these groups can be shown to vanish for $g\leq -3$ in
irreducible gauge theories \cite{Barnich:1994db,Barnich:2000zw}. The
group $H^{-2}(s|d)$ describes equivalence classes of ``global"
reducibility parameters, i.e., particular local functions $f^\alpha$
such that $R\indices{^a_\alpha}(f^\alpha)\approx 0$ where
$f^\alpha\sim f^\alpha+t^\alpha$ with $t^\alpha\approx 0$.
This terminology reflects the fact that this cohomology may be non 
trivial even for (locally) irreducible gauge systems, in other words in 
the absence of  $p$-form gauge fields with higher $p$.  This will become clear momentarily 
and is crucial throughout our discussion.
  These classes
correspond to global symmetries of the master action rather than of
the original action alone \cite{Brandt:1996uv,Brandt:1997cz}. The
associated characteristic cohomology $H^{n-2}_{\rm char}(d)$ captures
non-trivial (flux) conservation laws. More generally in the case of
free abelian $p$-form gauge symmetry it was shown in
\cite{Julia:1980gn} that one can generalize the first Noether theorem
($p=0$) and deduce by a similar formula a class of
$H^{n-p-1}_{\rm char}(d)$ generalizing the electric flux which corresponds 
to the case $p=1$, i.e., to ordinary gauge invariance.
The groups $H^{-1-p}(s|d)$ appear for $p$-form gauge theories
and vanish for $p\ge 2$ in the irreducible case \cite{Henneaux:1996ws}.
The group
$H^{-1}(s|d)$ describes and generates the inequivalent global
symmetries, with $H^{n-1}_{\rm char}(d)$ encoding the associated
inequivalent Noether currents. 
We mention these groups here since
they play an important role in the determination of $H^0(s|d)$ as it
will be seen in Section \ref{sec:gauging}.

When $g_1=-1=g_2$, $(\cdot,\cdot) : H^{-1}\otimes H^{-1}\to H^{-1}$; in
this case the antibracket map encodes the Lie algebra structure of the
inequivalent global symmetries \cite{Barnich:1996mr}. More generally,
it follows from $(\cdot,\cdot) : H^{-1}\otimes H^{g}\to H^{g}$ that,
for any ghost number $g$, the BRST cohomology classes form a
representation of the Lie algebra of inequivalent global symmetries.

From here on, for notational simplicity, we will drop the square brackets when
computing the antibracket map, but keep in mind that it involves
classes and not their representatives.

\subsection{Depth of an element}\label{sec:antibr-maps-desc}

Consider a BRST-cocycle $\omega^{g,k}$ at ghost number $g$ and form degree $k$. Using \eqref{eq:cocycle_cond} or \eqref{eq:9a}, any cocycle $\omega^{g,k}$ of the local BRST cohomology is associated with a $(s,d)$-descent
\begin{equation}
  \label{eq:B1}
  s\omega^{g,k}_l+d\omega^{g+1,k-1}_l=0,\quad
  s\omega^{g+1,k-1}_l+d\omega^{g+2,k-2}_l=0,\, \dots \,,\,
  s\omega^{g+l,k-l}_l=0, 
\end{equation}
that stops at some BRST cocycle $\omega^{g+l,k-l}_l$. The length $l$ of the shortest non trivial
descent is called the ``depth'' of $[\omega^{g,k}]\in H^{g,k}(s|d)$.  The last element $\omega^{g+l,k-l}_l$ is then non trivial in 
$H^{g+l,k-l}(s)$.  The usefulness of the depth in analyzing the BRST cohomology is particularly transparent in  \cite{DuboisViolette:1985jb,DuboisViolette:1985hc,DuboisViolette:1985cj}.

Local BRST cohomology classes $[\omega^{g,k}]\in H^{g,k}(s|d)$ are
thus characterized, besides \textbf{ghost number} $g$ and \textbf{form degree} $k$, by the
\textbf{depth} $l$. Here, we explain how the antibracket map
behaves with respect to the depth of its elements.

Both the covariantizable and non-covariantizable currents \footnote{By ``covariantizable", we mean that one can choose the ambiguities in the Noether currents so as to take them gauge invariant; otherwise the current is ``non-covariantizable".} as elements of
$H^{-1,n}(s|d)$ are distinguished by the property that for which the
depth is $1$. The associated infinitesimal deformations
as elements of $H^{0,n}(s|d)$ are distinguished by the property that the
depth is deeper than one. Therefore, the following will be
relevant when studying the obstruction to infinitesimal deformations.

\begin{proposition}
The depth of an image of the antibracket map is less than or equal to the depth of
  its most shallow argument.
\end{proposition}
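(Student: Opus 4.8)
The plan is to work directly with the descent equations \eqref{eq:B1} that define the depth, and to show that the antibracket of two cocycles admits a descent whose length is controlled by the shorter of the two input descents. Let $[\omega_1]\in H^{g_1,n}(s|d)$ have depth $l_1$ with descent $\omega_1=\omega_1^{(0)}, \omega_1^{(1)}, \dots, \omega_1^{(l_1)}$ satisfying $s\omega_1^{(i)}+d\omega_1^{(i+1)}=0$ and $s\omega_1^{(l_1)}=0$, and similarly let $[\omega_2]$ have depth $l_2$ with descent $\omega_2^{(0)},\dots,\omega_2^{(l_2)}$. Without loss of generality assume $l_1\le l_2$, so the ``most shallow argument'' is $\omega_1$. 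First I would recall that the antibracket $(\cdot,\cdot)$ together with $d$ satisfies a graded Leibniz-type identity: applying $s$ to $(\omega_1^{(i)},\omega_2^{(j)})$ and using $s^2=0$, $\{s,d\}=0$ produces total-derivative terms involving $(\omega_1^{(i+1)},\omega_2^{(j)})$ and $(\omega_1^{(i)},\omega_2^{(j+1)})$. Concretely, from $s\omega_1^{(i)}=-d\omega_1^{(i+1)}$ one gets $s(\omega_1^{(i)},\omega_2^{(j)}) = -(d\omega_1^{(i+1)},\omega_2^{(j)}) \pm (\omega_1^{(i)}, d\omega_2^{(j+1)})$ up to sign bookkeeping, and then one pushes the $d$ outside the antibracket at the cost of an exact term (the antibracket of a $d$-exact quantity with a cocycle is $d$ of an antibracket, modulo $s$-exact terms, by the derivation property of $d$ over $(\cdot,\cdot)$).

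The key step is then to build the descent for $(\omega_1,\omega_2)$ out of a suitable diagonal combination, e.g.\ setting $\Psi^{(k)} = \sum_{i+j=k} \pm (\omega_1^{(i)}, \omega_2^{(j)})$ (with signs fixed by the Grassmann and ghost gradings) and checking that $s\Psi^{(k)} + d\Psi^{(k+1)} = 0$ by telescoping the Leibniz identities above. Since $\omega_1^{(i)}$ is only defined for $i\le l_1$ and vanishes from the descent beyond that, the sum $\Psi^{(k)}$ for $k > l_1 + $ (something) will only involve the \emph{bottom} cocycle of the shorter tower paired against elements of the longer one; and crucially once $i=l_1$ we have $s\omega_1^{(l_1)}=0$, so that branch of the telescoping terminates. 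A careful count shows that the resulting descent for $(\omega_1,\omega_2)$ closes at length at most $l_1$: the point is that the obstruction to continuing the descent at each stage comes from an $s$-variation, and the $\omega_1$-tower has run out of ``room'' to produce further nontrivial $d$-exact compensators after $l_1$ steps. Hence the depth of $[(\omega_1,\omega_2)]$ is $\le l_1 = \min(l_1,l_2)$, which is the claim.

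The main obstacle I anticipate is the sign and grading bookkeeping: the antibracket is odd, $d$ is odd, $s$ is odd, and the $\omega^{(i)}$ sit at shifting ghost and form degrees, so getting the Leibniz/telescoping identities with consistent signs requires care, and one must make sure that the combination $\Psi^{(k)}$ is the \emph{correct} alternating sum so that cross-terms cancel. A secondary subtlety is the distinction between equalities holding strictly versus only modulo $d$-exact or $s$-exact terms: the map $(\cdot,\cdot):H^{g_1}(s|d)\otimes H^{g_2}(s|d)\to H^{g_1+g_2+1}(s|d)$ of \eqref{eq:31a} is well-defined only in cohomology, so one has to verify that the ambiguities in choosing representatives and in choosing the descent (the descent is unique only up to trivial descents and up to adding $(s,d)$-exact pieces) do not lengthen the depth; this is handled by the standard observation that trivial descents can be absorbed and that adding a coboundary to $\omega_1$ changes $\Psi^{(k)}$ by a quantity that is itself $s$- or $d$-exact at each level. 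Once these points are pinned down, the argument is essentially the formal manipulation sketched above, so I would not expect to have to grind through the explicit component expressions of any particular model.
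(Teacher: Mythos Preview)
Your diagonal/telescoping strategy has a genuine gap: the sum $\Psi^{(k)}=\sum_{i+j=k}\pm(\omega_1^{(i)},\omega_2^{(j)})$ requires forming antibrackets with $i>0$, i.e.\ with a first argument of form degree $<n$. But the antibracket of local functionals, whether in the standard form \eqref{eq:antibracket} or the alternative form \eqref{eq:B2}, is built from the Euler--Lagrange derivatives of its first argument, which are only meaningful for top forms. There is no canonical antibracket between two lower-degree forms, so the objects populating your telescoping sum are undefined. Relatedly, your claimed Leibniz rule ``$(d\alpha,\beta)=d(\alpha,\beta)+\cdots$'' is not what holds: for the alternative bracket one has instead $(d\alpha,\cdot)_{\rm alt}=0$ identically, while $(\omega,d\beta)_{\rm alt}=\pm d(\omega,\beta)_{\rm alt}$, see \eqref{eq:B5}. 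The sign bookkeeping you flag as the main obstacle is not the real issue.

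The paper's proof exploits precisely this \emph{asymmetry} of $(\cdot,\cdot)_{\rm alt}$ and thereby avoids any diagonal sum. With $l_1\ge l_2$ (so $\omega_2$ is the shallow one), keep the top form $\omega_1=\omega_1^{(0)}$ fixed in the first slot and bracket it with each element of the shallow descent. The Jacobi identity \eqref{eq:B3} with the master action gives
\[
s(\omega_1,\omega_2^{(i)})_{\rm alt}=(s\omega_1,\omega_2^{(i)})_{\rm alt}\pm(\omega_1,s\omega_2^{(i)})_{\rm alt}\,.
\]
The first term vanishes because $s\omega_1=-d\omega_1^{(1)}$ and $(d\cdot,\cdot)_{\rm alt}=0$; the second becomes $\mp d(\omega_1,\omega_2^{(i+1)})_{\rm alt}$ by the second-slot Leibniz property. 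Hence $(\omega_1,\omega_2^{(i)})_{\rm alt}$, $i=0,\dots,l_2$, is already a descent of length $l_2$ for the image, with no telescoping needed. The moral is that the first argument never has to descend at all.
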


\begin{proof}
Consider
$[\omega^{g_1,n}_{l_1}],[\omega^{g_2,n}_{l_2}]\in H^{*,n}(s|d)$, where
we can assume without loss of generality that $l_1\geqslant l_2$. For the
antibracket, let us not choose the expression with Euler-Lagrange
derivatives on the left and right that is graded antisymmetric without
boundary terms, but rather the one that satisfies a graded Leibniz
rule on the right
\begin{equation}
  \label{eq:B2}
  (\omega^{g,n},\cdot)_{\rm
    alt}=\d_{(\nu)}\vddr{(-\star\omega^{g,n})}{\phi^A}\ddll{\cdot}{\d_{(\nu)}\phi^*_A}
    -(\phi^A\leftrightarrow \phi^*_A), 
\end{equation}
and the following version of the graded Jacobi identity without
boundary terms,
\begin{align}
  (\omega^{g_1,n},(\omega^{g_2,n},\cdot)_{\rm alt})_{\rm
    alt} &= ((\omega^{g_1,n},\omega^{g_2,n})_{\rm alt},\cdot)_{\rm
    alt} \nonumber\\ 
    &+(-)^{(g_1+1)(g_2+1)}(\omega^{g_2,n},(\omega^{g_1,n},\cdot)_{\rm alt})_{\rm
    alt}\label{eq:B3}
\end{align}
(see appendix B of
\cite{Barnich:1996mr} for details and a proof). 
Furthermore, 
\begin{equation}
  (\omega^{g,n},d (\cdot))_{\rm
    alt}=(-)^{g+1}d((\omega^{g,n}, \cdot)_{\rm
    alt}),\quad (d\omega^{g+1,n-1},\cdot)_{\rm alt}=0. \label{eq:B5}
\end{equation}
Let $S=\int (-\star \cL)$ be the BV master action. We have
$s\cdot=(-\star \cL,\cdot)_{\rm
  alt}$. 
Using these properties, we get 
  \begin{equation}
    \label{eq:B4}
    s(\omega^{g_1,n}_{l_1},\omega^{g_2,n}_{l_2})_{\rm
      alt}+d((\omega^{g_1,n}_{l_1},\omega^{g_2+1,n-1}_{l_2})_{\rm
      alt})=0,
\dots, s (\omega^{g_1,n}_{l_1},\omega^{g_2+l_2,n-l_2}_{l_2})_{\rm
      alt}=0, 
  \end{equation}
which proves the proposition.
\end{proof}

By using $[(\omega^{-1,n},\omega^{g,n})]\in H^{g,n}(s|d)$, it follows
that :

(i) Characteristic cohomology in degree $n-1$ described by
$H^{-1,n}(s|d)$ is a Lie algebra. It is the Lie algebra of non trivial
global symmetries. It also describes the Dirac or Dickey bracket
algebra of non trivial conserved currents (up to constants or more
generally topological classes),

(ii) $H^{-2,n}(s|d)$ is a module thereof (module structure of
flux charges - Gauss or ADM type surface charges - under global symmetries). The proposition
gives rise for instance to the following refinements:

\begin{corollary}
Covariantizable characteristic cohomology in form
degree $n-1$ forms an ideal in the Lie algebra of characteristic
cohomology in form degree $n-1$. The module action of covariantizable
characteristic cohomology of degree $n-1$ on characteristic cohomology
in degree $n-2$ is trivial.
\end{corollary}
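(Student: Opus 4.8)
The plan is to reduce both assertions to the depth bound of the Proposition, using the characterization recalled just above: the \emph{covariantizable} characteristic cohomology in form degree $n-1$ consists precisely of those classes of $H^{-1,n}(s|d)$ whose shortest non-trivial $(s,d)$-descent has length one (depth $\leqslant 1$), while a non-covariantizable current has depth at least two. I will also use, as already established in the excerpt, that the antibracket map restricts to $(\cdot,\cdot):H^{-1,n}(s|d)\otimes H^{-1,n}(s|d)\to H^{-1,n}(s|d)$ (the Lie bracket) and to $(\cdot,\cdot):H^{-1,n}(s|d)\otimes H^{-2,n}(s|d)\to H^{-2,n}(s|d)$ (the module action), together with the isomorphisms $H^{-1,n}(s|d)\simeq H^{n-1}_{\rm char}(d)$ and $H^{-2,n}(s|d)\simeq H^{n-2}_{\rm char}(d)$.

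For the ideal statement, let $[\omega_1]$ be covariantizable, so of depth $\leqslant 1$, and let $[\omega_2]\in H^{-1,n}(s|d)$ be an arbitrary class in form degree $n-1$. The Lie bracket $([\omega_1],[\omega_2])$ again lies in $H^{-1,n}(s|d)$, and by the Proposition its depth is bounded by the depth of the shallower of the two arguments, hence by that of $\omega_1$, i.e.\ by one. Therefore $([\omega_1],[\omega_2])$ is again covariantizable. Taking in particular $[\omega_2]$ covariantizable as well shows that the covariantizable classes are closed under the bracket, i.e.\ form a subalgebra; combined with the previous sentence, this is exactly the assertion that they form an ideal.

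For the triviality of the module action, keep $[\omega_1]$ covariantizable (depth $\leqslant 1$) and take $[\omega_2]\in H^{-2,n}(s|d)$. The image $([\omega_1],[\omega_2])$ lies in $H^{-2,n}(s|d)$, and again by the Proposition it has depth $\leqslant 1$. The remaining — and really the only substantial — step is to show that any class of $H^{-2,n}(s|d)$ of depth $\leqslant 1$ is zero. I would do this by unwinding the descent: a depth-zero representative would be a non-trivial element of $H^{-2,n}(s)$, and a depth-one representative would descend to a non-trivial element of $H^{-1,n-1}(s)$; both of these local cohomology groups vanish for the irreducible gauge systems under consideration (no higher $p$-form gauge invariance, so that the Koszul--Tate/antifield complex is acyclic in the relevant negative ghost numbers), so the shortest non-trivial descent of a non-zero class in $H^{-2,n}(s|d)$ has length at least two — equivalently, such a class is genuinely detected only at form degree $n-2$. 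Hence $([\omega_1],[\omega_2])=0$, which is the stated triviality of the action of the covariantizable ideal on $H^{n-2}_{\rm char}(d)$.

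I expect the main obstacle to be exactly this last lemma, that non-zero classes of $H^{-2,n}(s|d)$ have depth at least two: unlike the rest of the argument it is not formal but a structural property of the local BRST complex, to be imported from the analysis of $H^{*}(s)$ in negative ghost number and of the characteristic cohomology in low form degree (cf.\ the references already cited), with attention to the standing hypothesis that the theory is an irreducible gauge system so that no extra contributions to $H^{-1-p}(s|d)$ arise. Once this is granted, both parts of the corollary follow at once from the depth bound, and along the way one should verify that the alternative (Leibniz-rule) presentation of the antibracket used in the proof of the Proposition is what makes the depth estimate go through without boundary-term ambiguities.
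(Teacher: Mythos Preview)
Your proposal is correct and follows exactly the route the paper intends: the corollary is stated immediately after the Proposition as a direct ``refinement'' and the paper gives no further proof, so both assertions are meant to be read off from the depth bound together with the identification covariantizable $\Leftrightarrow$ depth $\leqslant 1$. You have made explicit the one nontrivial point the paper leaves implicit, namely that a nonzero class in $H^{-2,n}(s|d)$ necessarily has depth $\geqslant 2$ because $H^{g}(s)=0$ for $g<0$ kills the would-be bottoms at depth $0$ and $1$; this is precisely the general input (valid for the irreducible gauge systems under consideration) that makes the second statement go through.
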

Similar results hold for the associated infinitesimal deformations.

\section{Properties of BRST differential}

The BRST differential $s$ is defined as $s=\gamma + \delta + \textrm{``more"}$ which acts on the extended phase space of fields and antifields. The differentials $\gamma$ and $\delta$ are horizontal differential and Koszul-Tate differential respectively. The horizontal differential $\gamma$ has a non-trivial action when acting on fields, while it vanishes on anti-fields. The Koszul-Tate differential $\delta$ has a non-trivial action on anti-fields but vanishes when acting on fields. The ``more" part in the definition of BRST differential is introduced in order to guarantee the nilpotency of $s$, i.e. $s^2=0$. When the gauge transformations form an abelian group or the gauge algebra closes off-shell, one simply has \cite{Henneaux:1989jq}
\be
s=\gamma+\delta,\qquad s^2=0,
\ee
with $\delta$ and $\gamma$ the graded commutating nilpotent differentials satisfying
\be
\gamma^2=0,\quad \delta^2=0,\quad \gamma\delta+\delta\gamma=0.
\ee
Also, $\delta$ and $\gamma$ have the properties 
\begin{align}
    \textrm{puregh}(\delta)&=0,\qquad \textrm{antifld}(\delta)=-1,\\
    \textrm{puregh}(\gamma)&=1,\qquad \textrm{antifld}(\gamma)=0,
\end{align}
such that the BRST differential $s$ defined to increase the ghost number by $1$.

In order to have a covariant formalism, the space of all observable quantities, the so-called phase space, has to be covariant. In the absence of gauge invariances, given the space of all possible smooth fields, a subspace of which is called the covariant phase space and is defined as the space $\Sigma$ of all smooth fields which are the solutions of equations of motion. When there is a gauge symmetry, then the covariant phase space is comprised of all functions in $\Sigma$ which are also gauge-invariant. Since the gauge transformations are integrable on-shell,
therefore they generate a congruence of gauge orbits on $\Sigma$ along which the gauge-invariant quantities are conserved. The Koszul-Tate differential $\delta$ and the horizontal differential $\gamma$ in fact take care of the first and the second steps in defining gauge-invariant observables respectively.

The first step is possible to achieve with the help of the Koszul-Tate resolution \cite{Henneaux:1992ig}, which states as
\begin{align}
H_g(\delta)&=0, \qquad g\neq 0,\nonumber\\
H_0(\delta)&=C^\infty(\Sigma).
\end{align}

Then since the only non-vanishing homology of $\delta$ is $H_0(\delta)$, one can easily implement the second step by defining a differential along the orbits of gauge groups acting on
the space $C^\infty(\Sigma)$ of smooth functions given by $H_0(\delta)$. It is shown that in general one has \cite{Henneaux:1994rb}
\begin{align}
H^g(s) &= H^g(\gamma, H_0(\delta)) \qquad g\geq 0,\nonumber\\
H^g(s) &=0 \qquad\qquad g < 0. 
\end{align}

We should emphasize that the cohomology of $s$, $\gamma$ and $\delta$ in the space of local functional are in general different and for example $H^{g,p}(s\vert d) \neq 0$ for $g<0$. This is an important feature that we will discuss in the next chapter. For more details and complete review of definitions, properties of different differentials and the homological and cohomological arguments regarding each,
see \cite{Henneaux:1992ig,Barnich:2000zw,Barnich:1994db}.

\section{Deformation of the Embedding tensor formalism}\label{sec:BV-def-Embed}

We discussed in Chapter \ref{ch:Isom-Emb-VS-models} that the systems described by the Lagrangian \eqref{lagtwoforms}, of the embedding tensor formalism, 
and the Lagrangian \eqref{eq:lprime} of ungauged scalar-vector model
are governed by the same physics and how one can obtain one Lagrangian from the other one by a series of symplectic transformations. However, the space of deformations of these models can in general be different. As we just discussed through this chapter, the BRST cohomology provides us a strong tool to obtain the consistent deformation of a theory. In this section, we show that \textit{the space of deformations of \eqref{lagtwoforms} and \eqref{eq:lprime} are equivalent.}

We recall that the space of equivalence classes of non-trivial, consistent deformations of a local action is isomorphic to the BRST cohomology $H^0(s \vert d)$ in field-antifield space \cite{Batalin:1981jr} at ghost number zero \cite{Barnich:1993vg}. The question, then, is to determine how the variables appearing in addition to the standard variables of the Lagrangian \eqref{eq:lprime}  could modify $H^0(s \vert d)$.

From the discussion of Section \ref{sec:embedding}, these extra variables are of two types:
\begin{itemize}
\item either they are of pure gauge type or equivalently invariant under arbitrary shifts, and therefore they drop out from the Lagrangian;
\item or they are auxiliary fields appearing quadratically and undifferentiated in the Lagrangian, consequently they can be eliminated algebraically through their own equations of motion.  \end{itemize}
We collectively denote by $W^\alpha$ the fields of pure gauge type.  So the $W^\alpha$'s stand for the $n_v$ vector potentials $\tilde{A}^{'}_{\hat{U}}$, $A^{'\hat{I}}$ and the (dim$\,G-r$) two-forms $B^{'}_m$.  The auxiliary fields are the $r$ two-forms $\Delta_i$.    For convenience, we absorb in (\ref{eq:lbarB}) the linear term in the auxiliary fields by a redefinition $\Delta_i \rightarrow \Delta_i^{'} = \Delta_i + b_i$ where the term $b_i$ is $\Delta_i$-independent.  Once this is done, the dependence of the Lagrangian on the auxiliary fields simply reads $\frac12 \kappa^{ij} \Delta_i^{'} \Delta_j^{'}$ with an invertible quadratic form $\kappa^{ij}$.

The BRST differential acting in the sector of the first type of variables read
\be
s W^\alpha = C^\alpha \, ,\; \; \;   s C^\alpha = 0 \, ,\; \; \; s C^*_\alpha = W^*_\alpha \, ,\; \; \; s W^*_\alpha =0
\ee
where $C^\alpha$ are the ghosts of the shift symmetry and $W^*_\alpha$, $C^*_\alpha$ the corresponding antifields, so that $(C^\alpha,  W^\alpha)$ and $(W^*_\alpha, C^*_\alpha)$  form ``contractible pairs" \cite{Henneaux:1992ig}. 

Similarly, the BRST differential acting in the sector of the second type of variables read
\be
s \Delta_i^{'} = 0, \; \; \; s \Delta^{'*i} = \kappa^{ij} \Delta_j^{'}
\ee
where $\Delta^{'*i}$ are the antifields conjugate to $\Delta_i^{'}$, showing that the auxiliary fields and their antifields form also contractible pairs since $\kappa^{ij}$ is non degenerate.

Now, the above BRST transformations involve no spacetime derivatives so that one can construct a ``contracting homotopy" \cite{Henneaux:1992ig} in the sector of the extra variables $(W^\alpha,  \Delta_i^{'})$, their ghosts and their antifields that commutes with the derivative operator $\partial_\mu$.  The algebraic setting
is in fact the same as for the variables of the ``non-minimal sector" of \cite{Batalin:1981jr}.  This implies that the extra variables neither contribute to $H^g(s)$ nor to $H^g(s \vert d)$  \cite{Barnich:1994db}.

To be more specific, let us focus on the contractible pair $(C^\alpha, W^\alpha)$.  The analysis proceeds in exactly the same way for the other pairs.  One can write the BRST differential  in that sector as
\be
s = \sum_{\{ \mu \}} \partial_{\mu_1 \cdots \mu_s} C^\alpha \frac{\partial}{\partial _{\mu_1 \cdots \mu_s} W^\alpha}
\ee
where the sum is over all derivatives of the fields.  Define the ``contracting homotopy" $\rho$ as
\be
\rho = \sum_{\{ \mu \}} \partial_{\mu_1 \cdots \mu_s} W^\alpha \frac{\partial}{\partial _{\mu_1 \cdots \mu_s} C^\alpha}  .
\ee
One has by construction 
\be
[ \rho, \partial_\mu] = 0,
\ee
just as 
\be
[ s, \partial_\mu] = 0.
\ee
This is equivalent to $\rho d + d \rho = 0$.  Furthermore, the counting operator $N$ defined by
\be
N = s \rho + \rho s
\ee
is explicitly given by
\be
N = \sum_{\{ \mu \}} \partial_{\mu_1 \cdots \mu_s} C^\alpha \frac{\partial}{\partial _{\mu_1 \cdots \mu_s} C^\alpha} + \sum_{\{ \mu \}} \partial_{\mu_1 \cdots \mu_s} W^\alpha \frac{\partial}{\partial _{\mu_1 \cdots \mu_s} W^\alpha}
\ee
and commutes with the BRST differential,
\be
[N, s] = 0.
\ee
The operator $N$ gives the homogeneity degree in $W^\alpha$, $C^\alpha$ and their derivatives.  So, for the polynomial $a$, we have $Na = k a $ with $k$ a non negative integer if and only if $a$ is of degree $k$ in $W^\alpha$, $C^\alpha$ and their derivatives (by Euler theorem for homogeneous functions). 

Because $N$ commutes with $s$, we can analyze the cocycle condition
\be
s a + d b = 0 \label{eq:cocycle}
\ee
at definite polynomial degree, i.e., assume that $N a = k a$, $Nb = kb$ where $k$ is a non-negative integer.  Our goal is to prove that the solutions of (\ref{eq:cocycle}) are trivial when $k \not=0$, i.e., of the form $a = se + df$, so that one can find, in any cohomological class of $H(s \vert d)$, a representative that does not depend on $W^\alpha$ or $C^\alpha$ -- that is, $W^\alpha$ and $C^\alpha$ ``drop from the cohomology".

To that end, we start from $ a = N \left(\frac{a}{k} \right)$ ($k \not=0$), which we rewrite as
$a = s \rho \left(\frac{a}{k} \right)+ \rho s  \left(\frac{a}{k} \right)$ using the definition of $N$.  The first term is equal to $s \left(\rho \left(\frac{a}{k} \right)\right)$ and hence is BRST-exact.   The second term is equal to $\rho \left(-d \left(\frac{b}{k} \right)\right)$ (using (\ref{eq:cocycle})), which is the same as $d \left(\rho \left(\frac{b}{k} \right)\right)$ since $\rho$ and $d$ anticommute.  So it is $d$-exact. We thus have shown that 
\be
a = se + df
\ee
with $e = \rho \left(\frac{a}{k} \right)$ and $f = \rho \left(\frac{b}{k} \right)$.   This is what we wanted to prove.

 We can thus conclude that the local deformations ($H^0(s \vert d)$) -- and in fact also $H^g(s \vert d)$ and in particular the candidate anomalies ($H^1(s \vert d)$) --  are the same whether or not one includes the extra fields $(W^\alpha, \Delta_i^{'})$.  This is what we wanted to prove:
\begin{center}
\textit{BRST cohomologies computed from the Lagrangians \eqref{lagtwoforms} and \eqref{eq:lprime} are isomorphic.}
\end{center}
As a final remark, we point out \cite{Boulanger:2008nd} where a dual formulation of three dimensional non-linear Einstein gravity with similar features have been studied.

\chapter{BRST cohomology of scalar-vector coupled models}\label{ch:BRST-cohom-SV-Lagr}

In Chapter \ref{ch:Isom-Emb-VS-models} we proved that not only the Lagrangian of embedding tensor formalism \eqref{lagtwoforms} and the one of ungauged scalar-vector model of the form \eqref{eq:lprime} (in a symplectic frame determined by the coefficients of embedding tensor) are equivalent\footnote{We recall that these Lagrangians are the same only after the elimination of auxiliary fields in \eqref{eq:lbarB} which leads to a Lagrangian of the form \eqref{eq:lprime} in a definite symplectic frame.} but also the space of deformations of both theories are in fact isomorphic. In Chapter \ref{ch:BV_def}, we discussed that the space of infinitesimal deformations of a theory is entirely determined by the ghost number zero cohomology of BRST differential. We explicitly showed that $H_0(s|d)$ for both theories are isomorphic.

In this chapter, following the discussion of Section \ref{sec:BV-def-formalism}, we seek for a systematic study of all consistent deformations of general scalar-vector models described by 
the gauge invariant actions of the form
\begin{equation}
  \label{eq:1bis}
  S_0[A_\mu^I,\phi^i]=\int \!d^4\!x\, \mathcal{L}_0,
\end{equation}
depending on $n_s$ uncharged scalar fields $\phi^i$ and $n_v$ abelian vector
fields $A_\mu^I$. In what comes next we consider that the Lagrangian has Poincar\'e invariance (which is guaranteed in the presence of gravity) even though it is not much used for the most part of our computations in this chapter. It is in practice possible to work out the result in the case of non-Poincar\'e invariant Lagrangians with a little bit of effort. We assume that the only gauge symmetries of
(\ref{eq:1bis}) are the standard $U(1)$ gauge transformations for each
vector field, so that the gauge algebra is abelian and given by $n_v$
copies of $\mathfrak{u}(1)$.  A generating set of gauge invariances
can be taken to be
\begin{equation}
  \label{eq:2bis}
  \delta A_\mu^I=\d_\mu\epsilon^I,\quad \delta\phi^i =0. 
\end{equation}
The Lagrangian takes the form
\begin{equation}
  \mathcal{L}_0= \mathcal{L}_S[\phi^i]+\mathcal{L}_V[A^I_\mu,\phi^i],  \label{eq:Starting}
\end{equation}
where $ \mathcal{L}_V$ is a function that depends on the vector fields
through the abelian curvatures
$F^I_{\mu\nu}=\d_\mu A_\nu^I-\d_\nu A^I_\mu$ only, and which can also
involve the scalar fields $\phi^i$.  Derivatives of these variables
are in principle allowed in the general analysis carried out below,
but actually do not occur in the explicit Lagrangians discussed in
more detail.  The scalar fields can occur non linearly, e.g. terms of
the form $\mathcal{I}_{IJ}(\phi) F^I_{\mu\nu} F^{J\mu\nu}$ as in the Lagrangian \eqref{eq:lag}. 
In addition, the scalar Lagrangian need not to be quadratic in general case. We will discuss this in more detail below in Section \ref{sec:model}.

The gauge transformations (\ref{eq:2bis}) are sometimes called ``free abelian gauge transformations'' to emphasize that the scalar fields are uncharged and do not transform under them.  This does not mean that the abelian vector fields themselves are free since non linear terms (non minimal couplings) are allowed in (\ref{eq:Starting}).

This class of models contains the vector-scalar sectors of
ungauged extended supergravities, of which ${\mathcal N}=4$
\cite{Das:1977uy,Cremmer:1977tc,Cremmer:1977tt} and ${\mathcal N}=8$
\cite{Cremmer:1978km,Cremmer:1979up} supergravities offer prime
examples. These will be considered in detail in Sections
\ref{sec:second order} and \ref{sec:applications}.  Born-Infeld type
generalizations \cite{Gibbons:1995cv} are also covered together
with first order manifestly duality invariant formulations
\cite{Deser:1976iy,Bunster:2010wv,Bunster:2011aw}, which fall into
this class when reformulated with suitable additional scalar fields
\cite{Barnich:2007uu}.

From previous chapter, we recall that consistent deformations of a gauge invariant action are deformations
that preserve the number (but not necessarily the form or the algebra)
of the gauge symmetries.  In the supergravity context, these are
called ``gaugings'', and the deformed theories are called gauged
supergravities, even though the undeformed theories possess already
a gauge freedom. We shall
consider only local deformations, i.e., deformations of the Lagrangian
by functions of the fields and their derivatives up to some finite
(but unspecified) order.

Gaugings in extended supergravities have a long history that goes back
to \cite{Freedman:1976aw,Freedman:1978ra,Fradkin:1976xz,Zachos:1978iw,Zachos:1979uh}.
For maximal supergravity, the first gauging
has been performed in \cite{deWit:1981sst} in the Lagrangian
formulation of \cite{Cremmer:1979up}, which involves a specific choice
of so-called duality frame. More recent gaugings
involving a change of the duality frame have been constructed in
\cite{DallAgata:2012mfj}.

These works consider from the very beginning deformations in which the
vector fields become Yang-Mills connections for a non-abelian
deformation of the original abelian gauge algebra.  The corresponding
couplings are induced through the replacement of the abelian
curvatures by non-abelian ones and the ordinary derivatives by
covariant ones, plus possible additional couplings necessary for
consistency.  Given the question that whether this
embraces all possible consistent deformations, it
was shown in \cite{Henneaux:2017kbx}, see Sections \ref{sec:embedding} and \ref{sec:BV-def-Embed}, that the space of consistent
deformations in the embedding formalism is isomorphic to the space
of consistent deformations for the action (\ref{eq:1bis}) written in
the duality frame picked by the choice of embedding tensor.  For that
reason, one can investigate the question of gaugings by taking
(\ref{eq:1bis}) as starting point of the deformation procedure,
provided one allows the scalar field dependence in the vector piece of
the Lagrangian to cover all possible choices of duality frame.
By doing so, one does not miss
any of the gaugings available in the embedding tensor formalism. This is what we are going to do in this chapter.

Since the BV-BRST formalism \cite{Barnich:1993vg} is a systematic way to explore deformations of theories with a gauge
freedom, we completely
characterize the BRST cohomology for the theories defined by (\ref{eq:1bis}), i.e., we completely characterize, in four spacetime dimensions, the  
deformations of  abelian vector fields coupled non-minimally to scalar 
chargeless fields with a possibly non polynomial dependence on the (undifferentiated) scalar fields.

In particular, we show that besides the obvious deformations that
consist in adding gauge invariant terms to the Lagrangian without
changing the gauge symmetries, the gaugings can be related to the
global symmetries of the action (\ref{eq:1bis}).  These gaugings
modify the form of the gauge transformations.

The global symmetries can be classified into two different types:
\begin{enumerate}[(i)]
\item
global symmetries with covariantizable Noether currents, where by
``covariantizable'', we mean that one can choose the ambiguities in
the Noether currents so as to take them gauge invariant ($V$-type
  symmetries),
  
\item
global symmetries with non-covariantizable
Noether currents which in turn can be subdivided into
two subtypes:
\begin{enumerate}[(a)]
\item
global symmetries with non-covariantizable Noether
currents that lead to a deformation that does not modify the gauge
algebra ($W$-type symmetries),

\item
global symmetries with
non-covariantizable Noether currents that lead to a deformation that
does modify also the gauge algebra ($U$-type symmetries).
\end{enumerate}

\end{enumerate}

Only the type (i) global symmetries directly gives rise to an
infinitesimal consistent deformation through minimal coupling of the
corresponding current to the vector potentials. The gaugings associated with the other types of global symmetries need to satisfy
additional constraints.
  
The Noether current corresponding to the global symmetries of type (a) contains 
non-gauge invariant Chern-Simons terms that cannot be removed by
suitably adjusting trivial contributions. The global symmetries of
type (b) are associated with
ordinary free abelian gauge symmetries with co-dimension 2
conservation laws (see e.g.~\cite{Julia:1980gn} for an early
discussion).  The divergence of a current of type (a) is itself gauge invariant, while the divergence of a current of type (b) is not. Yang-Mills gaugings are associated with currents of type (b) and are hence of $U$-type. Topological couplings \cite{deWit:1987ph} are associated with non-covariantizable Noether currents of either type (a) or (b). Charging deformations (if available), in which the scalar fields become charged but the gauge transformations of the vector fields  are not modified and remain therefore abelian, are of $V$- or $W$-type.

The BRST deformation procedure applies not only to the consistent
first order deformations, but also to higher orders where one might
encounter obstructions.  That procedure provides a natural
deformation-theoretic interpretation of quadratic constraints and
higher order constraints in terms of what is called the antibracket
map.

After establishing general theorems on the BRST cohomology valid without assuming a specific form of the Lagrangian or the rigid symmetries, including the above classification of the deformations and useful triangular properties of their algebra, we turn to various models that have been considered in the literature, for which we completely compute the deformations of $U$ and $W$-types. 

In Section \ref{sec:gauging}, we 
compute the local BRST cohomology of the
models described by the action (\ref{eq:1bis}).  This is done by
following the method of \cite{Barnich:1994db,Barnich:1994mt} where the
BRST cohomology was computed for arbitrary compact -- in fact
reductive -- gauge group.  The difficulty in the computation comes
from the free abelian factors by which we
mean abelian factors of the gauge algebra such that all matter fields
are uncharged, i.e., invariant under the associated gauge
transformations.  This is precisely the case relevant to the action
(\ref{eq:1bis}), which needs thus special care.  The method of
\cite{Barnich:1994db,Barnich:1994mt} is based on an expansion
according to the antifield number. It makes direct contact with
symmetries and conservation laws through the lowest antifield number
piece of the BRST differential, that is the Koszul-Tate
differential, which involves the equations of motion
\cite{Fisch:1989rp,Henneaux:1990rx}.  The Noether charges appear
through the characteristic cohomology, given by the local
cohomology of the Koszul-Tate differential \cite{Barnich:1994db}.

We then discuss in Section \ref{sec:AntiMap} the structure of the
antibracket map, which is relevant for the consistency of the
deformation at second order and the possible appearance of
obstructions, and provide information on the structure of the global
symmetry algebra.

The method of \cite{Barnich:1994db,Barnich:1994mt} provides the
general structure of the BRST cocycles in terms of conserved
currents. In order to reach more complete results, one must use
additional information specific to each model.  We therefore specify
further the models in Section \ref{sec:second order}, where we
concentrate on scalar-coupled second order Lagrangians that are
quadratic in the vector fields and their derivatives.  These
specialized models still cover the scalar-vector sectors of extended
supergravities.  Explicit examples are treated in detail to illustrate
the method in Section \ref{sec:applications}, where complete results
for the local BRST cohomology, up to the determination of $V$-type
symmetries, are worked out.  In Section \ref{sec:first-order-actions},
we then illustrate our techniques in the case of the manifestly
duality-symmetric first order action \eqref{eq:symlag} of \cite{Bunster:2011aw}, in the
formulation of \cite{Barnich:2007uu}, which is adapted to the direct
use of the methods developed here.

\section{Abelian vector-scalar models in 4 dimensions}\label{sec:gauging}

\subsection{Structure of the models}\label{sec:model}

We now apply the formalism discussed in Section \ref{sec:BV-def-formalism} to the scalar-vector models described by the action (\ref{eq:1bis}).
We write $ \mathcal{L}_0= \mathcal{L}_S[\phi^i]+\mathcal{L}_V[A^I_\mu,\phi^i]$.
  In four spacetime dimensions, there is no Chern-Simons term in the Lagrangian, which can be assumed to be strictly gauge invariant and not just invariant up to a total derivative.  
Gauge invariant functions are functions
that depend on $F^I_{\mu\nu}=\d_\mu A_\nu^I-\d_\nu A^I_\mu$, $\phi^i$
and their derivatives, but not on
$A_\mu^I,\d_{(\nu} A_{\mu)}^I,\d_{(\nu_1}\d_{\nu_2} A_{\mu)}^I$, etc.  Thus $\mathcal{L}_V[A^I_\mu,\phi^i]$ depends on the vector potentials $A^I_\mu$ only through $F^I_{\mu\nu}=\d_\mu A_\nu^I-\d_\nu A^I_\mu$
and their derivatives.

We define 
\begin{equation}
  \label{eq:47}
  \vddl{\mathcal L_V}{F^I_{\mu\nu}}=\half (\star G_I)^{\mu\nu}
\end{equation}
where the $(\star G_I)^{\mu\nu}$ are also manifestly gauge invariant functions.
The equations of motion
for the vector fields can be written as 
\begin{equation}
  \label{eq:4bis}
  \vddl{\mathcal L_0}{A_\mu^I}=\d_\nu(\star G_I)^{\mu\nu}
\end{equation}
and the Lagrangian can
be taken to be 
\begin{equation}
  \mathcal{L}_0= \mathcal{L}_S[\phi^i]+\mathcal{L}_V[A^I_\mu,\phi^i],
  \quad d^4x\, \mathcal{L}_V=\int_0^1\frac{dt}{t} [G_I F^I]
  [t A^I_\mu,\phi^i]. \label{4.4}
\end{equation}

The vector Lagrangian $\mathcal{L}_V$ has been written in the form of a homotopic integral \cite{Anderson92introductionto}. It simply counts the number of vector fields in the integrand in order to produce the correct prefactor in each term coming from $G_I$. This is a compact way to include all possible terms in $G_I$. As an example consider the case where $G_I$ is a series of homogeneous functions of order $n$ in the field strength $F^I$, i.e. $G_{I} = \sum_{n=1} (G_n)_{I}$ with
\begin{align}
(G_n)_{I\mu\nu} &= (\alpha_n)_{II_1...I_n} (F^n)^{I_1...I_n}_{\mu\nu} + (\beta_n)_{II_1...I_n} ((\star F)^n)^{I_1...I_n}_{\mu\nu}\nonumber\\
 &+(\zeta_n)_{II_1...I_n} (F^k \wedge (\star F)^{n-k})^{I_1...I_n}_{\mu\nu},
\end{align}
where $(\alpha_n)_{II_1...I_n}$, $(\beta_n)_{II_1...I_n}$ and $(\zeta_n)_{II_1...I_n}$ are in general functions of scalar fields $\phi^i$.
The Lagrangian is then written as
\begin{equation}
   d^4x\, \mathcal{L}_V=\sum_n \frac{1}{n+1} (G_n)_I(A^J_\mu,\phi^i) F^I(A^J_\mu,\phi^i). 
\end{equation}

The associated solution to the BV master 
equation is given by
\begin{equation}
  \label{eq:3bis}
  S^{(0)}=S_0+\int \!d^4\!x\, A^{*\mu}_I \d_\mu C^I.
\end{equation}
The ghost number, antifield number and pure ghost number of the various fields and antifields are given in the Table \eqref{tab:YM_gh}.
\begin{table}[h!]
\begin{center}
\begin{tabular}{|c|c|c|c|}
\hline 
        ~  &  gh  &  antifld  & puregh  \\
\hline 
        $\phi^i$  &  0  &  0 & 0   \\ 
\hline  
           $A^I_{\mu}$  &  0  &  0 & 0   \\ 
\hline
         $C^I$   & 1 & 0 & 1 \\
\hline
        $\phi^\ast_i$  &  -1  &  1 & 0   \\
\hline  
         $A^\ast_{I\mu}$  &  -1  &  1 & 0   \\ 
\hline
        $C^\ast_I$  &  -2  &  2 & 0  \\ 
\hline
\end{tabular}
\end{center}
\caption{The list of pure ghost number (puregh), anti-field number (antifld) and ghost number (gh) of each of the fields and anti-fields. The ghost number is defined as $\textrm{gh}(A)=\textrm{puregh}(A)-\textrm{antifld}(A)$, and we have in general ${\rm gh}(\Phi^*_A)=-{\rm gh}(\Phi^A)-1$. }\label{tab:YM_gh}
\end{table}

The BRST differential $s$ splits according to antifield number as
\be
s = \delta + \gamma 
\ee 
and the action of each differentials on all field and antifields are summarized in the Table \eqref{tab:YM_diff}.

\begin{table}[h!]
\begin{center}
\begin{tabular}{|c|c|c|c|}
\hline 
        ~  &  $s$  &  $\gamma$  & $\delta$  \\
\hline 
        $\phi^i$  &  0  &  0 & 0   \\ 
\hline  
           $A^I_{\mu}$  &  $\partial_{\mu} C^I$  &  $\d_\mu C^I$ & 0   \\ 
\hline
         $C^I$   & 0 & 0 & 0 \\
\hline
        $\phi^\ast_i$  &  $\frac{\delta \cL_0}{\delta \phi^i}$  &  0 & $\frac{\delta \cL_0}{\delta \phi^i}$   \\
\hline  
         $A^\ast_{I\mu}$  &  $\d_\nu(\star G_I)^{\mu\nu}$  &  0 & $\d_\nu(\star G_I)^{\mu\nu}$   \\ 
\hline
        $C^\ast_I$  &  $- \d_\mu A^{*\mu}$  &  0 & $- \d_\mu A^{*\mu}$  \\ 
\hline
\end{tabular}
\end{center}
\caption{The action of BRST differential $s$, the horizontal differential $\gamma$ and the Koszul-Tate differential $\delta$ on each field and antifield.}\label{tab:YM_diff}
\end{table}

The Koszul-Tate differential $\delta$ and the differential $\gamma$ have antifield number $-1$ and $0$ respectively and we have
\be 
\delta^2 = 0, \; \; \; \delta \gamma + \gamma
\delta = 0, \; \; \; \; \gamma^2 = 0.  
\ee 

In terms of the Koszul-Tate differential, the cocycle condition for
$m$ in
characteristic cohomology takes the form $ d m + \delta n = 0$.
This 
equation is the same as the (co)cycle condition for 
$n$ in the local
(co)homology of $\delta$, which is indeed $\delta n + dm = 0$.  Using
this observation, and vanishing theorems for $H(d)$ and $H(\delta)$ in
relevant degrees, one can establish isomorphisms between the
characteristic cohomology and $H(\delta \vert d)$
\cite{Barnich:1994db}. For example, the characteristic cohomology
  $H^{n-2}_{\rm char}(d)$ is given by the 2-forms $\mu^IG_I$, while
  $H^n_2(\delta \vert d)$  is given by the 4-forms
  $d^4\! x \, \mu^I C^*_I$. The isomorphism is realized through the
  $(\delta,d)$-descent
\begin{equation}
\delta \, d^4x\, C^*_I+d \star\! A^*_I=0,\quad \delta \star \!A^*_I+d
G_I=0,
\end{equation}
where $A^*_I=dx^\mu A^*_{I\mu}$.

\subsection{Consistent deformations}

One can characterize the BRST cohomological classes with non trivial
antifield dependence in terms of conserved currents and rigid
symmetries for all values of the ghost number.  For definiteness, we
illustrate explicitly the procedure for $H^0(s \vert d)$ in maximum
form degree, which defines the consistent local deformations.  We
consider next the case of general ghost number.

The main equation to be solved, see \eqref{eq:cocycle_cond}, for $a$ is 
\begin{equation}
\label{eq:cocycle1}
sa + db = 0,
\end{equation}
where $a$ has form degree
4 and ghost number $0$.  To solve it, we expand the cocycle $a$
according to the antifield number, \be a = a_0 + a_1 + a_2.  \ee 
Because $a$ has total ghost number zero, each term $a_n$ has antifield number $n$ and pure ghost number (degree in the ghosts) $n$ as well.
As shown in \cite{Barnich:1994mt}, the expansion stops at most at
antifield number $2$.  The term $a_0$ is the (first order) deformation
of the Lagrangian.  A non-vanishing $a_1$ corresponds to a deformation
of the gauge variations, while a non-vanishing $a_2$ corresponds to a
deformation of the gauge algebra. All three terms are related by the cocycle condition (\ref{eq:cocycle1}).

\subsubsection{Solutions of $U$-type ($a_2$ non trivial)}

The first case to consider is when $a_2$ is non-trivial.  This defines
``class I'' solutions in the terminology of \cite{Barnich:1994mt},
which we call here ``$U$-type'' solutions to comply with the general
terminology introduced below.  One has from the general theorems of
\cite{Barnich:1994db,Barnich:1994mt} on the invariant characteristic
cohomology that \be a_2 = d^4 x\,C^*_I \Theta^I \ee with \be
\Theta^I=\frac{1}{2!}{f^I}_{J_1J_{2}} C^{J_1} C^{J_{2}} .  \ee 
Here
${f^I}_{J_1J_{2}}$ are some constants, antisymmetric in $J_1$, $J_2$.
The reason why the coefficient $d^4 x\,C^*_I $ of the ghosts in $a_2$
is determined by the characteristic cohomology follows from the
equation $\delta a_2 + \gamma a_1 + db_1 =0$ that $a_2$ must fulfill
in order for $a$ to be a cocycle of $H(s \vert d)$.  Given that $a_2$
has antifield number equal to $2$, and the isomorphisms $H^{g,n}(s\vert d)\simeq H_{-g}^n(\delta\vert d) \simeq H^{n+g}_{\textrm{char}}(d\vert \delta)$ for negative $g$, it is the 
characteristic cohomology
in form degree $n-2=2$ that is relevant\footnote{The precise way to express the relation between the local cohomology of $\delta$
and the highest term of the equation obeyed by $a$ is given in
section 7 of \cite{Barnich:1994mt}: $a_2$ must be a non trivial representative 
of  invariant cohomology $H_{inv}(\delta \vert d)$, more precisely it must come from 
$H_{inv,2}^4(\delta \vert d)$ in ghost number zero. This relates the U-type deformations 
to the free abelian factors of the undeformed gauge group.}. The emergence
of the characteristic cohomology in the computation of $H(s \vert d)$
will be observed again for $a_1$ below, where it will be the conserved
currents that appear.  This central feature follows from the fact that
the Koszul-Tate differential, which encapsulates the equations of
motion, is an essential building block of the BRST differential. We must now find the 
lower terms $a_1+a_0$ and relate them as expected to Noether currents that 
correspond to $H_1^4(\delta \vert d)$.

By the argument of section 8 of \cite{Barnich:1994mt} suitably generalized in section 12, the term $a_1$ is then
found to be \be a_1 = \star A^*_IA^{K}\d_{K}\Theta^I + m_1 \ee
where $\gamma m_1 =0$ and $\d_K=\ddl{}{C^K}$.  
The term $m_1$ (to be determined by the next
equation) is linear in $C^I$ and can be taken to be linear in the
undifferentiated antifields $A^*_I$ and $\phi^*_i$ since derivatives
of these antifields, which can occur only linearly, can be redefined
away through trivial terms.  We thus write \be m_1 = \hat K=\star
A^*_I \hat g^I -\star \phi^*_i \hat \Phi^i \ee with \be \hat
g^I=dx^\mu g^{I}_{\mu K}C^{K}\;, \quad~ \hat \Phi^i=\Phi^{i}_{K}C^{K}.
\ee Here $g^{I}_{\mu K}$ and $\Phi^{i}_{K}$ are gauge invariant
functions which are arbitrary at this stage but will be constrained by
the requirement that $a_0$ exists.

We must now consider the equation $\delta a_1 + \gamma a_0 + d b_0 = 0$
that determines $a_0$ up to a solution of $\gamma a'_0 + db'_0 =0$.
This equation is equivalent to
 \begin{equation}
 \label{eq:00}
  \left(\vddl{\mathcal L_0}{A_\mu^I}\delta_{K} A_\mu^I
  +\vddl{\mathcal
    L_0}{\phi^i}\delta_{K}\phi^i \right) C^K+\gamma \alpha_0 + \d_\mu \beta_0^\mu=0, 
\end{equation}
where we have passed to dual notations ($a_0 = d^4 x \,\alpha_0$,
$db_0 = d^4 x \,\d_\mu \beta_0^\mu$) and where we have set
\begin{equation}
 \delta_{K} A_\mu^I=A^J_\mu {f^I}_{JK}+g_{\mu
    K}^I  ,\quad \delta_{K}
  \phi^i=\Phi^i_{K}.  \label{eq:TypeISymmb}
\end{equation}
Writing $\beta_0^\mu = j^\mu_K C^K + $ ``terms containing derivatives
of the ghosts'', we read from (\ref{eq:00}), by comparing the
coefficients of the undifferentiated ghosts, that \be \vddl{\mathcal
  L_0}{A_\mu^I}\delta_{K} A_\mu^I +\vddl{\mathcal
  L_0}{\phi^i}\delta_{K}\phi^i + \d_\mu j_K^\mu=0 .  \label{eq:3.19} \ee A necessary
condition for $a_0$ (and thus $a$) to exist is therefore that
$ \delta_{K} A_\mu^I$ and $ \delta_K \phi^i$ define symmetries.
 
 To proceed further and determine $a_0$, we observe that the non-gauge
 invariant term $ \vddl{\mathcal L_0}{A_\mu^I} A^J_\mu {f^I}_{JK}$ in
 $\vddl{\mathcal L_0}{A_\mu^I}\delta_{K} A_\mu^I$ can be written as
 $ \partial_\mu \left( \star G^{\nu\mu}_I A_\nu^J {f^I}_{JK}\right) $
 plus a gauge invariant term, so that
 $j^\mu_K - \star G^{\mu\nu}_I A_\nu^J {f^I}_{JK}$ has a
 gauge invariant divergence. Results on the invariant cohomology of
 $d$ \cite{Brandt:1989gy,DuboisViolette:1992ye} imply then that the
 non-gauge invariant part of such an object can only be a Chern-Simons
 form, i.e.
 $j^\mu_K - \star G^{\mu\nu}_I A_\nu^J {f^I}_{JK} = J^\mu_{K} + \half
 \epsilon^{\mu\nu\rho\sigma}A_\nu^I F_{\rho\sigma}^J h_{I|JK}$, or
\begin{equation}
  j^\mu_{K}
  =J^\mu_{K}+\star G^{\mu\nu}_I
  A_\nu^J {f^I}_{JK} + \half
  \epsilon^{\mu\nu\rho\sigma}A_\nu^I F_{\rho\sigma}^J 
  h_{I|JK}
\end{equation}  
where $J^\mu_{K}$ is gauge invariant and where the symmetries of the
constants $h_{I|JK}$ will be discussed in a moment.  It is useful to
point out that one can switch the indices $I$ and $J$ modulo a trivial
term.

The equation (\ref{eq:00}) becomes
$-(\partial_\mu j^\mu_K) \, C^K + \gamma \alpha_0 + \partial_\mu
\beta_0^\mu = 0$, i.e.,
$j^\mu_K \, (\gamma A_\mu^K) + \gamma \alpha_0 + \partial_\mu
{\beta'}_0^\mu = 0$.  The first two terms in the current yield
manifestly $\gamma$-exact terms,
\begin{equation}
J^\mu_K \,  (\gamma A_\mu^K) = \gamma(J^\mu_K \,   A_\mu^K), \; \;
\; \star G^{\mu\nu}_I 
A_\nu^J {f^I}_{JK} \, (\gamma A_\mu^K)= \frac12 \gamma (\star
G^{\mu\nu}_I A_\nu^J {f^I}_{JK} \, A_\mu^K)
\end{equation} 
and so $h_{I|JK}$ must be such that the term $A^I F^J dC^K h_{I|JK}$
is by itself $\gamma$-exact modulo $d$. This is a problem that has
been much studied in the literature through descent equations (see
e.g. \cite{DuboisViolette:1985cj}).  It has been shown that
$h_{I|JK}$ must be antisymmetric in $J$, $K$ and should have vanishing
totally antisymmetric part in order to be ``liftable" to $a_0$ and
non-trivial,
\begin{equation}
  h_{I|JK}=h_{I|[JK]},\quad  h_{[I|JK]}=0. \label{eq:Symmh}
\end{equation}

Putting things together, one finds for $a_0$
\be
a_0 = A^I\d_I\hat J + \half G_I
  A^{K}A^{L}\d_{L}\d_{K}\Theta^I + \half F^I
  A^KA^L\d_L\d_K\Theta'_I
\ee
where
\be
\hat J=\star dx^\mu J_{\mu K}
    C^{K}\;,\quad \Theta'_I=\frac{1}{2}h_{I|J_1 J_2}C^{J_1} C^{ J_{2}} .
\ee

A non-trivial $U$-solution
modifies the gauge
algebra. Deformations of the Yang-Mills type belong to this class. A
$U$-solution is characterized by constants ${f^I}_{J_1J_{2}}$ which
are antisymmetric in $J_1$, $J_2$. These constants must be such that
there exist gauge invariant functions $g^{I}_{\mu K}$ and
$\Phi^{i}_{K}$ such that $ \delta_{K} A_\mu^I$ and $ \delta_K \phi^i$
define symmetries of the undeformed Lagrangian.  Here
$ \delta_{K} A_\mu^I$ and $ \delta_K \phi^i$ are given by
(\ref{eq:TypeISymmb}).  Furthermore, the $h$-term in the corresponding
conserved current (if any) must fulfill (\ref{eq:Symmh}).  The
deformation $a_0$ of the Lagrangian takes the Noether-like form.

Given the ``head'' $a_2$ of a $U$-type solution, characterized by a set of ${f^I}_{J_1J_{2}}$'s,  the lower terms $a_1$ and $a_0$, and in particular the $h$-piece, are not uniquely determined.  One can always add solutions of $W$, $V$ or $I$-types described below, which have the property that they have no $a_2$-piece.  
Hence one may require that the completion of the ``head'' $a_2$ of a $U$-type solution should be chosen to vanish when $a_2$ itself vanishes. But this leaves some freedom in the completion of $a_2$, since for instance any $W$-type solution multiplied by a component of ${f^I}_{J_1J_{2}}$ will vanish when the ${f^I}_{J_1J_{2}}$'s are set to zero. The situation has a triangular nature since two $U$-type solutions with the same $a_2$ differ by solutions of ``lower'' types, for which there might not be a canonical choice.
 
 Note that further constraints on ${f^I}_{J_1J_{2}}$ (notably the
 Jacobi identity) arise at second order in the deformation parameter.

 \subsubsection{Solutions of $W$ and $V$-type (vanishing $a_2$ but
   $a_1$ non trivial)}

These solutions are called ``class II'' solutions in \cite{Barnich:1994mt}. Here we have \be a = a_0 + a_1 \ee and $a_1$ can be taken to be gauge
invariant, i.e., annihilated by $\gamma$ \cite{Barnich:1994mt}.  We
thus find \be a_1 = \hat K=\star A^*_I \hat g^I -\star \phi^*_i \hat
\Phi^i \ee with \be \hat g^I=dx^\mu g^{I}_{\mu K}C^{K}\;, \quad~ \hat
\Phi^i=\Phi^{i}_{K}C^{K} \, . \ee Here $g^{I}_{\mu K}$ and $\Phi^{i}_{K}$
are again gauge invariant functions, which we still denote by the same
letters as above, although they are independent from the similar
functions related to the constants ${f^I}_{J_1J_{2}}$.  We also set
\begin{equation}
 \delta_{K} A_\mu^I=g_{\mu
    K}^I  ,\quad \delta_{K}
  \phi^i=\Phi^i_{K}.  \label{eq:DefKbb} 
\end{equation}

The equation $\delta a_1 + \gamma a_0 + d b_0 = 0$ implies then, as
above,
\begin{equation} \vddl{\mathcal L_0}{A_\mu^I}\delta_{K} A_\mu^I
  +\vddl{\mathcal L_0}{\phi^i}\delta_{K}\phi^i + \d_\mu j_K^\mu=0\,. \label{eq:3.28}
\end{equation}
A necessary condition for $a_0$ (and thus $a$) to exist is
therefore that $ \delta_{K} A_\mu^I$ and $ \delta_K \phi^i$ given by
(\ref{eq:DefKbb}) define symmetries. Equation (\ref{eq:3.28}) take the same form as Eq. (\ref{eq:3.19}), but there is an important difference: the divergence of the current  $j^\mu_{K}$ is now gauge invariant, unlike in (\ref{eq:3.19}) that the divergence of current, due to the contribution coming from $a_2$, is not gauge invariant.

The current takes the form
 \begin{equation}
  j^\mu_{K}
  =J^\mu_{K}+ \half
  \epsilon^{\mu\nu\rho\sigma}A_\nu^I F_{\rho\sigma}^J 
  h_{I|JK},
\end{equation}
(with $h_{I|JK}$ fulfilling the above symmetry properties) yielding
\be a_0 = A^I\d_I\hat J + \half F^I A^KA^L\d_L\d_K\Theta'_I \ee where still
\be \hat J=\star dx^\mu J_{\mu K} C^{K}\;,\quad
\Theta'_I=\frac{1}{2}h_{I|J_1 J_2}C^{J_1} C^{ J_{2}} .  \ee We define
$W$-type solutions to have $h_{I|JK} \neq 0$, while $V$-type solutions have
$h_{I|JK}=0$.  Both these types deform the gauge transformations but
not their algebra (to first order in the deformation).  They are
determined by rigid symmetries of the undeformed Lagrangian 
with gauge invariant variations (\ref{eq:DefKbb}). The $V$-type have gauge invariant currents, while the
currents of the $W$-type contain a non-gauge invariant piece.

Note that again, the solutions of $W$- and $V$-types are determined up to a solution of lower type with no $a_1$-``head'', and that there might not be a canonical choice. In fact one may require similarly that $W$-type transformations become trivial when 
$h_{I|JK}$ tends to zero.

\subsubsection{Solutions of $I$-type (vanishing $a_2$ and $a_1$)}

In that case, 
\be
a = a_0
\ee
with $\gamma a_0 + db_0 =0$.

Since there is no Chern-Simons term in four dimensions, one can assume
that $b_0 =0$.  The deformation $b_0$ is therefore a gauge invariant
function, i.e., a function of the abelian curvatures $F_{\mu \nu}^I$,
the scalar fields, and their derivatives. The $I$-type deformations
neither deform the gauge transformations nor (a fortiori) the gauge
algebra.  Born-Infeld deformations belong to this type. They are
called ``class III'' solutions in \cite{Barnich:1994mt}.

\subsection{Local BRST cohomology at other ghost numbers}
\label{sec:locBRST}

\subsubsection{$h$-terms}

The previous discussion can be repeated straightforwardly at all ghost
numbers. The analysis proceeds as above.  The tools necessary to
handle the ``$h$-term'' in the non gauge invariant ``currents'' have
been generalized to higher ghost numbers through familiar means and
can be found in
\cite{DuboisViolette:1985hc,DuboisViolette:1985jb,DuboisViolette:1985cj}.
 
The $h$-terms belong to the ``small'' or ``universal'' algebra
involving only the $1$-forms $A^I$, the $2$-forms $F^I = dA^I$, the
ghosts $C^I$ and their exterior derivative.  The product is the
exterior product.  One describes the $h$-term through a $(\gamma,d)$-descent
equation and what is called the ``bottom'' of that descent, which is
annihilated by $\gamma$ and has form degree $<4$ in four dimensions.
The only possibilities in the free abelian case are the $2$-forms \be
\frac1m h_{I \vert J_1 \cdots J_m} F^I C^{J_1} \cdots C^{J_m} \ee
where \be h_{I \vert J_1 \cdots J_m} = h_{I \vert [J_1 \cdots J_m]} .
\ee One can assume $h_{[I \vert J_1 \cdots J_m]} = 0$ since the
totally antisymmetric part gives a trivial bottom. The lift of this
bottom goes two steps, up to the $4$-form \be h_{I \vert J_1 J_2
  \cdots J_m} F^I F^{J_1}C^{J_2} \cdots C^{J_m} \ee producing along
the way  a $3$-form \be h_{I \vert J_1 J_2 \cdots J_m} F^I
A^{J_1}C^{J_2} \cdots C^{J_m}   \ee
which has the property of not being gauge (BRST) invariant although its exterior derivative is invariant (modulo trivial terms). 

\subsubsection{Explicit description of cohomology}

By applying the above method, one finds that the local BRST cohomology
of the models of Section \ref{sec:model} can be described along
exactly the same lines as given below.  Note that the cohomology at
negative ghost numbers reflect general properties of the
characteristic cohomology that go beyond the mere models considered
here \cite{Barnich:1994db}.

\begin{enumerate}[(i)]
\item $H^g(s|d)$ is empty for $g\leqslant -3$.  

\item $H^{-2}(s|d)$ is represented by the $4$-forms 
  \begin{equation}
    U^{-2}=\mu^I d^4x\, C^*_I\label{eq:11a}. 
\end{equation}
If $A^*_I=dx^\mu A^*_{I\mu}$, the associated descent equations
are
  \begin{equation}
s\, d^4x\, C^*_I+d \star A^*_I=0,\quad s \star A^*_I+d
G_I=0,\quad sG_I=0.\label{eq:14}
\end{equation}
Characteristic cohomology $H^{n-2}_{\rm char}(d)$ is then
represented by the 2-forms $\mu^IG_I$.

\item Several types of cohomology classes in ghost numbers $g \geqslant
  -1$, which we call $U$, $V$ and $W$-type, can be described by
  constants ${f^I}_{JK_1\dots K_{g+1}}$ which are antisymmetric in the
  last $g+2$ indices,
\begin{equation}
    \label{sk}
   {f^I}_{JK_1\dots K_{g+1}}={f^I}_{[JK_1\dots K_{g+1}]}, 
\end{equation}
and constants $h_{I|JK_1\dots K_{g+1}}$ that are antisymmetric in the last $g+2$
indices but without any totally antisymmetric part\footnote{We write
  $h_{IJ}:= h_{I|J}$ for $g=-1$.}, 
\begin{equation}
    \label{Y1}
  h_{I|JK_1\dots K_{g+1}}=h_{I|[JK_1\dots K_{g+1}]},\quad  h_{[I|JK_1\dots K_{g+1}]}=0, 
\end{equation}
together with gauge invariant
functions $g^I_{\mu K_1\dots K_{g+1}},\Phi^i_{K_1\dots K_{g+1}}$ 
that are antisymmetric in the last $g+1$ indices. They are constrained by
the requirement that the transformations
\begin{equation}
  \label{eq:16}
 \delta_{K_1\dots K_{g+1}} A_\mu^I=A^J_\mu {f^I}_{JK_1\dots K_{g+1}}+g_{\mu
    K_1\dots K_{g+1}}^I  ,\quad \delta_{K_1\dots K_{g+1}}
  \phi^i=\Phi^i_{K_1\dots K_{g+1}}, 
\end{equation}
define symmetries of the action in the sense that 
 \begin{equation}
  \label{eq:47A}
  \vddl{\mathcal L_0}{A_\mu^I}\delta_{K_1\dots K_{g+1}} A_\mu^I
  +\vddl{\mathcal
    L_0}{\phi^i}\delta_{K_1\dots K_{g+1}}\phi^i+\d_\mu j^\mu_{K_1\dots K_{g+1}}=0,
\end{equation}
with currents $j^\mu_{K_1\dots K_{g+1}}$ that are antisymmetric in the
last $g+1$ indices. This can be made more precise by making the gauge
(non-)invariance properties of these currents manifest. One finds
\begin{equation}\label{eq:current}
  j^\mu_{K_1\dots K_{g+1}}
  =J^\mu_{K_1\dots K_{g+1}}+\star G^{\mu\nu}_I
  A_\nu^J {f^I}_{JK_1\dots K_{g+1}} + \half
  \epsilon^{\mu\nu\rho\sigma}A_\nu^I F_{\rho\sigma}^J 
  h_{I|JK_1\dots K_{g+1}},
\end{equation}
where $J^\mu_{K_1\dots K_{g+1}}$ is gauge invariant and
antisymmetric in the lower $g+1$ indices. When taking into account that
\begin{equation}
  \label{eq:15}
  G_I F^J= d(G_I A^J+\star A^*_I C^J)+s(\star A^*_IA^J+d^4x C^*_IC^J) ,
  \quad F^IF^J=d(A^I F^J),  
\end{equation}
and defining 
\begin{equation}
  \begin{split}
     \Theta^I &=\frac{1}{(g+2)!}{f^I}_{J_1\dots J_{g+2}}C^{J_1\dots
      J_{g+2}}\;, \\
       \Theta'_I &=\frac{1}{(g+2)!}h_{I|J_1\dots
        J_{g+2}}C^{J_1\dots J_{g+2}}\;,\\
    \hat J &=\star dx^\mu J_{\mu K_1\dots K_{g+1}}
    \frac{1}{(g+1)!}C^{K_1\dots K_{g+1}}\;, \\
    \hat K &=(\star A^*_I \hat g^I -\star \phi^*_i \hat \Phi^i)\;,\\ 
    \hat g^I &=\frac{1}{(g+1)!}dx^\mu g^{I}_{\mu K_1\dots
      K_{g+1}}C^{K_1\dots K_{g+1}}\;, \\
      \hat\Phi^i &=\frac{1}{(g+1)!}\Phi^{i}_{K_1\dots
      K_{g+1}}C^{K_1\dots K_{g+1}}\;,
\label{eq:17b}
  \end{split}
\end{equation} 
where $C^{{K_1}\dots K_g}=C^{K_1}\dots C^{K_g}$, the ``global symmetry" condition \eqref{eq:47A} is equivalent to a 
$(s,d)$-obstruction equation, 
\begin{equation}
  G_I
  F^J\d_J\Theta^I+F^IF^J\d_{J}\Theta'_{I}+s(\hat K +A^I\d_I \hat J) + d\hat J
 = 0,\label{eq:16b}
\end{equation}
with $\d_I=\ddl{}{C^I}$. Note that the last two terms combine into
\[d[\star dx^\mu J_{\mu K_1\dots K_{g+1}}]
  \frac{1}{(g+1)!}C^{K_1}\dots C^{K_{g+1}},\] so that this equation
involves gauge invariant quantities only. It is this form that arises
in a systematic analysis of the descent equations. One can now
distinguish the three types of solutions.

\begin{enumerate}[a)]
\item $U$-type corresponds to solutions with non vanishing
  ${f^I}_{JK_1\dots K_{g+1}}$ and particular
  ${}^U h_{I|JK_1\dots K_{g+1}}$, ${}^U g^I_{\mu K_1\dots K_{g+1}}$, ${}^U
  \Phi^i_{K_1\dots K_{g+1}}$, ${}^U J_{\mu K_1\dots K_{g+1}}$ that
  vanish when the $f$'s vanish (and that may be vanishing even when
  the $f$'s do not). As we explained above, different choices of the particular completion  ${}^U h_{I|JK_1\dots K_{g+1}}$, ${}^U g^I_{\mu K_1\dots K_{g+1}}$, ${}^U
  \Phi^i_{K_1\dots K_{g+1}}$, ${}^U J_{\mu K_1\dots K_{g+1}}$ of $a_2$ exist and there might not be a canonical one, but a completion exists if the $U$-type solution is indeed a solution. Similar ambiguity holds for the solutions of $W$ and $V$-types described below.  A $U$-type solution is trivial if and only
  if $f$'s vanish. Denoting by $\hat K_U,\hat J_U,(\Theta'_U)_I$, the
  expressions as in \eqref{eq:17b} but involving the particular
  solutions, the associated BRST cohomology classes are represented by
\begin{multline}
  \label{eq:18b}
  U=(d^4x C^*_I +\star A^*_IA^{K}\d_{K}+ \half G_I
  A^{K}A^{L}\d_{L}\d_{K})\Theta^I \\+\hat K_U+\half F^I
  A^KA^L\d_L\d_K(\Theta'_U)_I+A^I\d_I\hat J_U,
\end{multline}
with $sU+d(\star
A^*_I\Theta^I+G_IA^J\d_J\Theta^I+F^IA^J\d_J(\Theta'_U)_I+\hat J_U)=0\,$;

\item $W$-type corresponds to solutions with vanishing $f$'s but non
vanishing $h_{I|J K_1\dots K_{g+1}}$ and particular ${}^W g^I_{\mu
  K_1\dots K_{g+1}},{}^W 
\Phi^i_{K_1\dots K_{g+1}}, {}^W J_{\mu K_1\dots K_{g+1}}$ that may be chosen to vanish
when the $h$'s vanish. Such solutions are trivial when the $h$'s
vanish. With the obvious notation, the associated BRST
cohomology classes are represented by 
\begin{equation}
  \label{eq:19b}
  W=\hat K_W+\half F^IA^KA^L\d_L\d_K\Theta'_I+A^I\d_I\hat J_W,
\end{equation}
with $sW+d(F^IA^J\d_J\Theta'_I+\hat J_W)=0\,$; 

\item $V$-type corresponds to solutions with vanishing $f$'s and
  $h$'s. They are represented by
\begin{equation}
  \label{eq:20}
  V=\hat K_V+A^I\d_I \hat J_V,
\end{equation}
with $s V+d\hat J_V=0\,$ and $s \hat J_V=0\,$.
$V$ and its descent have depth $1$.

\end{enumerate}

\item Lastly, $I$-type cohomology classes exist in ghost numbers
  $g\geqslant 0$ and are described by
\begin{equation}
\hat I=d^4x\, \frac{1}{g!} I_{K_1\dots K_g}C^{K_1}\dots C^{K_g}\label{eq:12}
\end{equation}
with $s\hat I=0$, i.e., gauge invariant $I_{K_1\dots K_g}$ that are
completely antisymmetric in the $K$ indices. Such classes are to be
considered trivial if the $I_{K_1\dots K_s}$ vanish on-shell up to a
total derivative. This can again be made more precise by making the
gauge (non-)invariance properties manifest: an element of $I$-type class is
trivial if and only if
\begin{equation}
  \label{eq:13}
  d^4x\, I_{K_1\dots K_g}\approx dJ_{K_1\dots
    K_g}+{m^I}_{JK_1\dots K_g}G_IF^J+\half F^IF^J m'_{IJK_1\dots K_g},
\end{equation}
where $J_{K_1\dots K_g}$ are gauge invariant $3$ forms that are
completely antisymmetric in the $K$ indices, while
${m^I}_{JK_1\dots K_g},m'_{IJK_1\dots K_g}$ are constants that are
completely antisymmetric in the last $g+1$ indices. Note also that the
on-shell vanishing terms in \eqref{eq:13} need to be gauge invariant.
When there are suitable restrictions on the space of gauge invariant
functions (such as for instance $x^\mu$ independent, Lorentz invariant
polynomials with power counting restrictions) one may sometimes
construct an explicit basis of non-trivial gauge invariant $4$ forms,
in the sense that if $d^4x I\approx \rho^{\cA}I_{\cA}+d\omega^{3}$ and
$\rho^{\cA}I_{\cA}\approx d\omega^{3}$, then $\rho^{\cA}=0$. The
associated BRST cohomology classes are then parametrized by constants
${\rho^{\cA}}_{K_1\dots K_g}$.

\end{enumerate}

At a given ghost number $g\geqslant -1$, the cohomology is the direct sum
of elements of type $U,W,V$ and also $I$ when $g\geqslant 0$.

This completes our general discussion of the local BRST cohomology.
Reference \cite{Barnich:1994mt} also considered simple factors in
addition to the abelian factors, as well as any spacetime dimension
$\geq 3$. One can extend the above results to cover these cases. The computation of the local
BRST cohomology $H^{*,*}(s \vert d)$ for gauge models involving general 
reductive gauge algebras has been done in \cite{Barnich:2018nqa} by following the different route adopted in
\cite{Barnich:2000zw}, which did not consider free abelian factors in
full generality. As requested by the analysis of the deformations of
the action (\ref{eq:1bis}), reference \cite{Barnich:2018nqa} generalizes
Theorem 11.1 of \cite{Barnich:2000zw} to arbitrary reductive Lie
algebras that include also (free) abelian factors (and in any
spacetime dimension $\geq 3$).

\subsubsection{Depth of solutions}

The depth of the various BRST cocycles plays a key role in the analysis
of the higher-order consistency condition. The $U$-type and $W$-type solutions have depth $2$ because they
involve $A_\mu j^\mu$ with a non-gauge invariant current.  The
$V$-type solutions have depth $1$ because the Noether term
$A_\mu j^\mu$ of them involves a gauge invariant current.  Finally,
$I$-type solutions clearly have depth $0$.

\section{Antibracket map and structure of symmetries}
\label{sec:AntiMap}

\subsection{Antibracket map in cohomology}
We now investigate the antibracket map
$H^g \otimes H^{g'} \to H^{g+g'+1}$ for the different types of
cohomology classes described above. It follows from the detailed
discussion of the cohomology in Section \ref{sec:locBRST} that the
shortest non trivial length of descents, the ``depth'', of elements of
type $U,W,V,I$ is $2$, $2$, $1$, $0$. In particular, the antibracket map is
sensitive to the depth of its arguments: as we have seen in Section \ref{sec:antibr-maps-desc} the depth of the map is
less than or equal to the depth of its most shallow element.

The antibracket map involving $U^{-2}=\mu^I d^4x\, C^*_I$ in $H^{-2}$ is given by
\begin{equation}
(\cdot,U^{-2}) : H^g \to H^{g-1}, \quad \omega^{g,n}\mapsto
\vddl{{}^R\omega^{g,n}}{C^I}\mu^I.
\label{eq:53A}
\end{equation}
More explicitly, it is trivial for $g=-2$. It is also trivial for
$g=-1$ except for $U$-type where it is described by
$f\indices{^I_J}\mapsto f\indices{^I_J}\mu^J$. For $g> 0$, it is
described by
$\rho^{\cA}_{K_1\dots K_g}\mapsto \rho^{\cA}_{K_1\dots K_g}\mu^{K_g}$
for $I$-type,
$k\indices{^{v_1}_{K_1\dots K_{g+1}}}\mapsto
k\indices{^{v_1}_{K_1\dots K_{g+1}}}\mu^{K_{g+1}}$ for $V$-type, 
$h_{IJK_1\dots K_{g+1}}\mapsto h_{IJK_1\dots K_{g+1}}\mu^{K_{g+1}}$
and
$f\indices{^I_{JK_1\dots K_{g+1}}} \mapsto f\indices{^I_{JK_1\dots
    K_{g+1}}}\mu^{K_{g+1}}$ for $U$- and $W$-type.

The antibracket map for $g, g' \geqslant -1$ has the following
triangular structure:
\begin{equation} \label{table}
\begin{array}{c|c|c|c|c}
  (\cdot,\cdot)  & U & W & V & I \\
  \hline
  U & U\oplus W\oplus V\oplus I & W\oplus V\oplus I  & V\oplus I & I
  \\
  \hline
  W & W\oplus V \oplus I & W\oplus V\oplus I & V\oplus I & I 
  \\
  \hline
  V & V\oplus I & V\oplus I & V\oplus I & I
\\
  \hline
  I & I & I & I & 0
\end{array}
\end{equation}
Indeed, $(\hat I,\hat I')=0$ because $I$-type cocycles can be chosen to be
antifield independent. For all other brackets involving $I$-type
cocycles, it follows from discussion of Section \ref{sec:antibr-maps-desc} that the
result must have depth $0$ and the only such
classes are of $I$ type. Alternatively, since all cocycles can be
chosen to be at most linear in antifields, the result will be a
cocycle that is antifield independent and only classes of $I$-type
have trivial antifield dependence. It thus follows that $I$-type
cohomology forms an abelian ideal.

Again given the discussion of Section \ref{sec:antibr-maps-desc}, the depth of the
antibracket map of $V$-type cohomology with $V,W,U$-type is
less or equal to $1$, so it must be of $V$- or $I$-type. 

Finally, the remaining structure follows from the fact that only
brackets of $U$-type cocycles with themselves may give rise to terms
that involve $C^*_I$'s.  

\subsection{Structure of the global symmetry
  algebra} \label{sec:globalsymmetries}

Let us now concentrate on brackets 
between two elements that
have both ghost number $-1$, i.e., on the detailed structure of the
Lie algebra of inequivalent global symmetries when taking into account
their different types.

In this case, one may use the table above supplemented by the fact
that $I^{-1}=0$.  Let then
\begin{equation}
U_{u}, \quad W_{w},\quad V_{v},
\end{equation} 
be bases of symmetries of $U,W,V$-type\footnote{These are bases in the
  cohomological sense, i.e., $\sum_u \lambda^u [U_u] = [0]$
  $\Rightarrow$ $\lambda^u = 0$ (and similarly for $W_w$ and
  $V_v$). In terms of the representatives, this becomes
  $\sum_{u} \lambda^u U_u = s a + db$ $\Rightarrow$ $\lambda^u =
  0$.}. At ghost number $g=-1$, equations \eqref{eq:18b},
\eqref{eq:19b}, \eqref{eq:20} give
\begin{equation}
  \label{eq:5bis}
  V_{v}=K_{v},\quad W_{w}=K_{w},\quad
  U_{u}= (f_{u})\indices{^I_J}[d^4x\, C^*_I C^J+\star
  A^*_I A^J]+K_{u}. 
\end{equation}

It follows from \eqref{table} that $V$-type symmetries and the direct
sum of $V$ and $W$-type symmetries form ideals in the Lie algebra of
inequivalent global symmetries.

The symmetry algebra $\mathfrak{g}_U$ is defined as the quotient of
all inequivalent global symmetries by the ideal of
$V\oplus W$-type symmetries. In particular, if $U$-type symmetries
form a sub-algebra, it is isomorphic to $\mathfrak g_U$.

First, $V$-type symmetries are parametrized by constants $k^v$,
$V^{-1}=k^v V_{v}$. The gauge invariant symmetry transformation on the
original fields then are
\begin{equation}
  \delta_{v}
  A^I_\mu=-(V_{v},A^I_\mu)=g\indices{_{{v}\mu}^I},\quad 
  \delta_{v} \phi^i=
  -(V_{v},\phi^i)=\Phi^i_{v}\label{eq:50}. 
\end{equation}
Furthermore, there
exist constants ${C^{v_3}}_{{v_1}{v_2}}$ such that
\begin{equation}
  \label{eq:25}
  ([V_{v_1}],[V_{v_2}]) = -{C^{v_3}}_{{v_1}{v_2}} [V_{v_3}]
\end{equation}
holds for the cohomology classes.
We choose the minus sign because 
\begin{equation}
  \label{eq:28a}
  (V_{v_1},V_{v_2}) = - d^4x (A^{*\mu}_I
  [\delta_{v_1},\delta_{v_2}] A_\mu^I+\phi^*_i
  [\delta_{v_1},\delta_{v_2}]\phi^i),
\end{equation}
so that the ${C^{v_3}}_{{v_1}{v_2}}$ are the structure constants of
the commutator algebra of the $V$-type symmetries,
$[\delta_{v_1},\delta_{v_2}] = {C^{v_3}}_{{v_1}{v_2}}
\delta_{v_3}$. For the functions $g\indices{_{v\mu}^I}$ and
$\Phi^i_v$, this gives
\begin{equation} \label{eq:commsymV}
  \begin{split}
    \delta_{v_1} g\indices{_{v_2\mu}^I} - \delta_{v_2}
    g\indices{_{v_1\mu}^I} &
    = {C^{v_3}}_{{v_1}{v_2}} g\indices{_{v_3\mu}^I} + (\text{trivial}) \\
    \delta_{v_1} \Phi^i_{v_2} - \delta_{v_2} \Phi^i_{v_1} &=
    {C^{v_3}}_{{v_1}{v_2}} \Phi^i_{v_3} + (\text{trivial}).
  \end{split}
\end{equation}
The ``trivial" terms on the right hand side take the form ``(gauge
transformation) $+$ (antisymmetric combination of the equations of
motion)" which is the usual ambiguity in the form of global
symmetries, see e.g. section 6 of \cite{Barnich:2000zw}. They come
from the fact that equation \eqref{eq:25} holds for classes: for the
representatives $V_v$ themselves, \eqref{eq:25} is
$(V_{v_1},V_{v_2}) = -{C^{v_3}}_{{v_1}{v_2}} V_{v_3} + sa + db$. The
trivial terms in \eqref{eq:commsymV} are then the symmetries generated
by the extra term $sa + db$, which is zero in cohomology.  The graded
Jacobi identity for the antibracket map implies the ordinary Jacobi
identity for these structure constants,
\begin{equation}
  \label{eq:26}
  {C^{v_1}}_{v_2[v_3}{C^{v_2}}_{v_4v_5]}=0. 
\end{equation}

Next, $W$-type symmetries are parametrized by constants $k^w$,
$W^{-1}=k^w W_{w}$ and encode the gauge invariant symmetry
transformations
\begin{equation}
  \delta_{w}
  A^I_\mu=-(W_{w},A^I_\mu)=g\indices{_{{w}\mu}^I},\quad 
  \delta_{w} \phi^i=
  -(W_{w},\phi^i)=\Phi^i_{w}\label{eq:49a} 
\end{equation}
with associated Noether 3 forms
$j_W=k^w (h_w)_{IJ}F^{(I}A^{J)} + k^w J_{Ww}$. There then exist
${C^{v_2}}_{w v_1}$, $C\indices{^{w_3}_{w_1 w_2}}$, $C\indices{^{v}_{w_1 w_2}}$
such that
\begin{equation}
  \label{eq:27a}
  \begin{split}
([W_w],[V_{v}]) &= -{C^{v_2}}_{w v} [V_{v_2}],\\
([W_{w_1}], [W_{w_2}]) &= - C\indices{^{w_3}_{w_1 w_2}} [W_{w_3}] -
C\indices{^{v}_{w_1 w_2}} [V_v],
\end{split}
\end{equation}
with associated Jacobi identities that we do not spell out. For the functions $g\indices{_{w\mu}^I}$ and $\Phi^i_w$, this implies
\begin{align}
    \delta_{w} g\indices{_{v\mu}^I} - \delta_{v} g\indices{_{w\mu}^I} &= {C^{v_2}}_{{w}{v}} g\indices{_{v_2\mu}^I}, \quad \delta_{w} \Phi^i_{v} - \delta_{v} \Phi^i_{w} = {C^{v_2}}_{{w}{v}} \Phi^i_{v_2}, \\
    \delta_{w_1} g\indices{_{w_2\mu}^I} - \delta_{w_2}
    g\indices{_{w_1\mu}^I} &= {C^{w_3}}_{{w_1}{w_2}}
    g\indices{_{w_3\mu}^I} + {C^{v}}_{{w_1}{w_2}}
    g\indices{_{v\mu}^I}, \label{eq:435}
    \\
    \delta_{w_1} \Phi^i_{w_2} - \delta_{w_2} \Phi^i_{w_1} &=
    {C^{w_3}}_{{w_1}{w_2}} \Phi^i_{w_3} + {C^{v}}_{{w_1}{w_2}}
    \Phi^i_{v}\,,
    \label{eq:436}
  \end{align}
up to trivial terms, see the discussion below \eqref{eq:commsymV}.

Finally, $U$-type symmetries are parametrized by $k^{u}$,
$U^{-1}=k^{u} U_{u}$ and encode the symmetry
transformations
\begin{equation}
  \begin{split}
  \label{eq:51a}
  \delta_{u} A^I_\mu &= -(U_{u},A^I_\mu)=(f_{u})\indices{^I_J}A^J_\mu
  + g\indices{_{{u}\mu}^I},\quad \delta_{u} \phi^i=
  -(U_{u},\phi^i)=\Phi^i_{u}, \\
  \delta_{u} A^{*\mu}_I &= -(f_{u})\indices{^K_I} A^{*\mu}_K
  - \frac{\delta}{\delta A^I_\mu} ( A^{*\nu}_K g\indices{_{u\nu}^K} +
  \phi^*_i \Phi^i_u ), \\
  \delta_{u} \phi^*_i &= - \frac{\delta}{\delta \phi^i}
  ( A^{*\nu}_K g\indices{_{u\nu}^K} + \phi^*_j \Phi^j_u ),  \\
  \delta_{u} C^I &= (f_{u})\indices{^I_J} C^J,\quad \delta_{u} C^*_I =
  - (f_{u})\indices{^K_I} C^*_K.
\end{split}
\end{equation}
Again, there exist constants $C$ with various types of indices such that
\begin{align}
    ([U_u],[V_{v}]) &= -C\indices{^{v_2}_{u v}} [V_{v_2}],
    \label{eq:UVbracket}
    \\
([U_u],[W_{w}]) &= -C\indices{^{w_2}_{u w}} [W_{w_2}] -C\indices{^{v}_{u w}} [V_{v}],
\label{eq:UWbracket}
\\
([U_{u_1}], [U_{u_2}]) &= -C\indices{^{v}_{u_1 u_2}} [V_{v}]
-C\indices{^{w}_{u_1 u_2}} [W_{w}] -C\indices{^{u_3}_{u_1 u_2}} [U_{u_3}],
                     \label{eq:29b}
\end{align}
with associated Jacobi identities. Working
out the term proportional to $C^*_I$ in
$(U_{u_1},U_{u_2})$ gives the commutation relations for
the $(f_u)\indices{^I_J}$ matrices,
\begin{equation}\label{eq:commU}
[f_{u_1},f_{u_2}] = -C\indices{^{u_3}_{u_1u_2}}
f_{u_3} .
\end{equation}
In turn, this implies Jacobi identities for this type of structure
constants alone:
\begin{equation}
  \label{eq:26a}
  {C^{u_1}}_{u_2[u_3}{C^{u_2}}_{u_4u_5]}=0. 
\end{equation}
The $C\indices{^{u_3}_{u_1u_2}}$ are the structure constants of
  $\mathfrak{g}_U$. 

From equation \eqref{eq:UVbracket}, we get the identities
\begin{align}\label{eq:422}
  \delta_{u} g\indices{_{v\mu}^I} - \delta_{v} g\indices{_{u\mu}^I}
  - (f_u)\indices{^I_J} g\indices{_{v\mu}^J} &= {C^{v_2}}_{{u}{v}}
  g\indices{_{v_2\mu}^I}, \quad \delta_{u} \Phi^i_{v}
 - \delta_{v} \Phi^i_{u} = {C^{v_2}}_{{u}{v}} \Phi^i_{v_2} .
\end{align}
Equation \eqref{eq:UWbracket} gives the same identities with the
right-hand side replaced by the appropriate sum, i.e.
\begin{align}
  \delta_{u} g\indices{_{w\mu}^I} - \delta_{w} g\indices{_{u\mu}^I}
  - (f_u)\indices{^I_J} g\indices{_{w\mu}^J} &= {C^{w_2}}_{{u}{w}}
  g\indices{_{w_2\mu}^I} + {C^{v}}_{{u}{w}}
  g\indices{_{v\mu}^I},\nonumber\\
  \delta_{u} \Phi^i_{w}
 - \delta_{w} \Phi^i_{u} &= {C^{w_2}}_{{u}{w}} \Phi^i_{w_2} + {C^{v}}_{{u}{w}} \Phi^i_{v}.
\end{align}

Considering the terms proportional to antifields $A^{*\mu}_I$ and $\phi^*_i$, the equation \eqref{eq:29b} gives
\begin{equation}\label{eq:423}
  \begin{split}
  \delta_{u_1} g\indices{_{u_2\mu}^I} - (f_{u_1})\indices{^I_J}
  g\indices{_{u_2\mu}^J}
  - (u_1 \leftrightarrow u_2) &= {C^{u_3}}_{{u_1}{u_2}}
  g\indices{_{u_3\mu}^I} \\
  &+ {C^{w}}_{{u_1}{u_2}} g\indices{_{w\mu}^I} + {C^{v}}_{{u_1}{u_2}}
  g\indices{_{v\mu}^I}, \\
  \delta_{u_1} \Phi^i_{u_2} - \delta_{u_2} \Phi^i_{u_1} &=
  {C^{u_3}}_{{u_1}{u_2}} \Phi^i_{u_3}
  + {C^{w}}_{{u_1}{u_2}} \Phi^i_{w} + {C^{v}}_{{u_1}{u_2}} \Phi^i_{v} .
\end{split}
\end{equation}
Equations \eqref{eq:422}-\eqref{eq:423} are again valid only
  up to trivial symmetries.

Let us now concentrate on identities containing the $h_{IJ}$, which
appear in the currents of $U$ and $W$-type. We first consider
$(U_u,W_w)$ projected to $W$-type. Here, we use the alternative form of anti-bracket defined in Section 
\ref{sec:antibr-maps-desc} since it makes the computation more easier.
Then, we have
\begin{align}
    s(U_u,W_w)_{{\rm alt}}&=-d(U_u,(h_{w})_{IJ}F^{I}A^{J}+J_{w})_{{\rm
    alt}}\nonumber\\
    &=d\{(h_{w})_{IJ} [(f_u)\indices{^I_K} F^K A^J +F^I
(f_u)\indices{^J_K} A^K ]+{\rm invariant}\}.
\end{align}

When comparing this to $s$ applied to the right hand side of \eqref{eq:UWbracket}, using
$s W_{w} = -d j_{Ww}$,  $s V_{v} = -d J_{Vv}$ and 
the fact that $W$-type cohomology is characterized by the Chern-Simons
term in its Noether current, we get
\begin{equation} \label{eq:hwfu}
(h_{w})_{IN}(f_u)\indices{^I_M}+(h_{w})_{MI}(f_u)\indices{^I_N}=
C\indices{^{w_2}_{u w}}(h_{w_2})_{MN}. 
\end{equation}
This computation amounts to identifying the Chern-Simons term in
  the $U$-variation $\delta_u j_{Ww}$ of a current of $W$-type.  The same
  computation applied to $(W_{w_1}, W_{w_2})$ shows that
  $C\indices{^{w_3}_{w_1 w_2}}(h_{w_3})_{MN} = 0$, which implies
\begin{equation} \label{eq:hwfw}
  C\indices{^{w_3}_{w_1 w_2}} = 0
\end{equation}
since the matrices $h_w$ are linearly independent (otherwise, the
$W_w$ would not form a basis). In other words, the $W$-variation
$\delta_{w_1} j_{W w_2}$ of a current of $W$-type is gauge invariant up
to trivial terms, i.e., is of $V$-type. 

In order to work out
$(U_{u_1},U_{u_2})$ projected to $W$-type, a slightly involved
reasoning gives 
\begin{equation}
\label{eq:Utrasf}
\delta_u G_I + (f_u)\indices{^J_I} G_J \approx  - 2 (h_u)_{IJ} F^J + \lambda^w_u (h_w)_{IJ} F^J + d(\text{invariant})
\end{equation}
for some constants $\lambda^w_u$. This is proved in Appendix \ref{app:derivation}  in the case where $G_I$ does not depend on derivatives of $F^I$ (but can have otherwise arbitrary dependence of $F^I$). We were not able to find the analog of \eqref{eq:Utrasf} in the higher derivative case.

Applying then $(U_{u_1},\cdot)_{\rm alt}$
to the chain of descent equations for $U_{u_2}$ and adding the chain
of descent equations for $C^{u_3}_{u_1u_2} U_{u_3}$ yields
\begin{align}
  (h_{u_2})_{IN}(f_{u_1})\indices{^I_M} &+(h_{u_2})_{MI}(f_{u_1})\indices{^I_N}
  -
    (h_{u_1})_{IN}(f_{u_2})\indices{^I_M}-(h_{u_1})_{MI}(f_{u_2})\indices{^I_N}
    \nonumber \\ &+ \frac{1}{2} \left[ (h_w)_{IN} (f_{u_2})\indices{^I_M} +  (h_w)_{IM} (f_{u_2})\indices{^I_N}  \right]\lambda^w_{u_1} \nonumber \\
&= C\indices{^{u_3}_{u_1 u_2}}(h_{u_3})_{MN} + C\indices{^{w}_{u_1
                   u_2}}(h_{w})_{MN}.
\end{align}
Again, this amounts to identifying the Chern-Simons terms in the
$U$-variation $\delta_{u_1} j_{u_2}$ of a $U$-type current. Equation
\eqref{eq:Utrasf} is crucial for this computation since $U$-type
currents contain $G_I$.
Using \eqref{eq:hwfu}, this becomes
\begin{align}\label{eq:hufu}
  (h_{u_2})_{IN}(f_{u_1})\indices{^I_M} &+(h_{u_2})_{MI}(f_{u_1})\indices{^I_N}
  -
    (h_{u_1})_{IN}(f_{u_2})\indices{^I_M}-(h_{u_1})_{MI}(f_{u_2})\indices{^I_N}
    \nonumber \\ &= C\indices{^{u_3}_{u_1 u_2}}(h_{u_3})_{MN} + \left[ C\indices{^{w}_{u_1 u_2}} - \frac{1}{2} C\indices{^w_{u_2 w_2}} \lambda^{w_2}_{u_1} \right] (h_{w})_{MN}.
\end{align}
We see that the effect of the $\lambda^w_u$ is to shift the structure constants of type $C\indices{^{w}_{u_1 u_2}}$. The constants $\lambda^w_u$ vanish for the explicit models considered below; it would be interesting to find an explicit example where this is not the case. As a last comment, we note that antisymmetry of equation \eqref{eq:hufu} in $u_1$ and $u_2$ imposes the constraint
\begin{equation}
C\indices{^w_{u_2 w_2}} \lambda^{w_2}_{u_1} + C\indices{^w_{u_1 w_2}} \lambda^{w_2}_{u_2} = 0
\end{equation}
on the constants $\lambda^w_u$.

\subsection{Parametrization through symmetries} \label{sec:parametrization}

It follows from the discussion of the antibracket map involving
$H^{-2}$ after \eqref{eq:53A} that cohomologies of $U,W,V$-type in
ghost numbers $g\geqslant 0$ can be parametrized by symmetries of the
corresponding type with suitably constrained coefficients
\begin{equation}
  \label{gd}k\indices{^{u}_{K_1\dots K_{g+1}}},
  \quad k\indices{^{v}_{K_1\dots K_{g+1}}},\quad k\indices{^{w}_{K_1\dots K_{g+1}}}.
\end{equation}
In this way, for $g=0$, the problem of finding all infinitesimal
gaugings can be reformulated as the question of which of these
symmetries can be gauged.

In order to do this, it is useful to first rewrite the
$h_{I|JK_1\dots K_{g+1}}$ appearing in the cohomology classes of $U$
and $W$-types in the equivalent symmetric convention
\begin{equation}
    X_{IJ,K_1\dots K_{g+1}}:=h_{(I|J)K_1\dots K_{g+1}}\iff 
    h_{I|JK_1\dots K_{g+1}}=\frac{2(g+2)}{g+3}X_{I[J,K_1\dots K_{g+1}]}
\label{Xintermsofh}
\end{equation}
where \eqref{Y1} is now replaced by
\begin{equation}\label{YX}
  X_{IJ,K_1 \dots K_{g+1}} = X_{(IJ),[K_1 \dots K_{g+1}]}, \quad
  X_{(IJ,K_1) K_2 \dots K_{g+1}} = 0\;. \end{equation}
Note that for $g=-1$, $h_{IJ} = X_{IJ}$.

For cohomology classes of $U,W$-type, we can write
\begin{align}
  \label{eq:352}
  {f^I}_{JK_1\dots K_{g+1}} &= (f_{u})\indices{^I_{J}}
                              \, k\indices{^{u}_{K_1\dots K_{g+1}}}, \\
  {}^U X_{IJ,K_1\dots K_{g+1}} &=(h_{u})_{IJ} \, k\indices{^{u}_{K_1\dots K_{g+1}}}, \\
  X_{IJ,K_1\dots K_{g+1}} &= (h_{w})_{IJ} \, k\indices{^{w}_{K_1\dots K_{g+1}}},
\end{align}
where $(f_{u})\indices{^I_{J}}$, $(h_{u})_{IJ}$ and $(h_{w})_{IJ}$
appear in the basis elements $U_{u}$ and $W_{w}$. (One has similar
parametrizations for the quantities $g^I_{\mu K_1\dots K_{g+1}}$,
$\Phi^i_{K_1\dots K_{g+1}}$, $J_{\mu K_1\dots K_{g+1}}$ in the
cohomology classes of the various types.)  This guarantees that
condition \eqref{eq:47A} (or \eqref{eq:16b}) is automatically
satisfied.

However, the symmetry properties \eqref{sk} and \eqref{YX} imply the
following linear constraints on the parameters:
\begin{align}
(f_{u})\indices{^I_{(J}} \, k\indices{^{u}_{K_1)K_2\dots K_{g+1}}} &= 0, \label{linf}\\
(h_{u})_{(IJ} \, k\indices{^{u}_{K_1)K_2 \dots K_{g+1}}} &= 0, \label{358}\\
(h_{w})_{(IJ} \, k\indices{^{w}_{K_1) K_2 \dots K_{g+1}}} &= 0. \label{465}
\end{align}
From the discussion of the cohomology, it also follows that $V$-type
cohomology classes are entirely determined by $V$-type symmetries in
terms of $k\indices{^{v}_{K_1\dots K_{g+1}}}$ without any additional
constraints.

\subsection{2nd order constraints on deformations and gauge algebra}
\label{sec:parametrization-2nd-order}

The most general infinitesimal gauging is given by
$S^{(1)}= \int ( U^0+W^0+V^0+I^0 )$. We have
\begin{equation}
\half (S^{(1)}, S^{(1)}) = \int \left( U^1+W^1+V^1+I^1 \right) \label{eq:22a}. 
\end{equation}
The infinitesimal deformation $S^{(1)}$ can be extended to second
order whenever the right hand side vanishes in cohomology, resulting in quadratic
constraints on the constants $k\indices{^{u_1}_K}$,
$k\indices{^{w_1}_K}$, $k\indices{^{v_1}_K}$ and $\rho^{\cA}$.
Working all of them out explicitly requires computing all brackets
between $U^0$, $W^0$, $V^0$ and $I^0$.

However, it follows from the previous section that the only
contribution to $U^1$ comes from $\half (U^0,U^0)$. The vanishing of
the terms containing the antighosts $C^*_I$ requires
\begin{equation}
  \label{eq:23}
  f\indices{^I_{J[K_1}}f\indices{^J_{K_2K_3]}}=0,
\end{equation}
i.e., the Jacobi identity for the $f\indices{^I_{JK}}$. The
  associated $n_v$-dimensional Lie algebra is the gauge
  algebra and is denoted by $\mathfrak g_g$.

Using
$f\indices{^I_{JK}} = (f_{u_1})\indices{^I_J} k\indices{^{u_1}_K}$ and
equation \eqref{eq:commU}, the Jacobi identity reduces to the
following quadratic constraint on $k\indices{^{u_1}_K}$:
\begin{equation} \label{eq:quadu}
 k\indices{^{u_1}_I} k\indices{^{u_2}_J} C\indices{^{u_3}_{u_1u_2}} -
  (f_{u_4})\indices{^K_I} k\indices{^{u_4}_J} k\indices{^{u_3}_K} = 0 .
\end{equation}
Note that the antisymmetry in $IJ$ of the second term is guaranteed by
the linear constraint \eqref{linf}. 
The terms at antifield number $1$ give the constraints 
\begin{align}
 \delta_I g^K_J + f\indices{^K_{MJ}} g^M_I - (I\leftrightarrow J) &= f\indices{^L_{IJ}} g^K_L \\
 \delta_I \Phi^i_J  - (I\leftrightarrow J) &= f\indices{^L_{IJ}} \Phi^i_L.
\end{align}
Expressed with $k$'s, this gives
\begin{equation}
    k\indices{^{\Gamma}_I} k\indices{^{\Delta}_J} C\indices{^{\Sigma}_{\Gamma\Delta}} -
  (f_{u})\indices{^K_I} k\indices{^{u}_J} k\indices{^{\Sigma}_K} = 0,
\end{equation}
where the capital Greek indices take all values $u,w,v$.  This gives
three constraints, according to the type of the free index
$\Sigma$. When $\Sigma = u$, we get the constraint $\eqref{eq:quadu}$,
because the only non-vanishing structure constants with an upper $u$
index are the $C\indices{^{u_3}_{u_1u_2}}$. When $\Sigma = w$, the
possible structure constants are $C\indices{^{w_3}_{w_1w_2}}$,
$C\indices{^{w_3}_{u_1w_2}} = -C\indices{^{w_3}_{w_2u_1}}$ and
$C\indices{^{w_3}_{u_1u_2}}$, giving the constraint
\begin{align}
  k\indices{^{w_1}_I} k\indices{^{w_2}_J} C\indices{^{w_3}_{w_1w_2}}
  + 2 k\indices{^{u_1}_{[I}} k\indices{^{w_2}_{J]}}
  C\indices{^{w_3}_{u_1w_2}}
  &+ k\indices{^{u_1}_I} k\indices{^{u_2}_J} C\indices{^{w_3}_{u_1u_2}}\nonumber
  \\
  &- (f_{u_4})\indices{^K_I} k\indices{^{u_4}_J} k\indices{^{w_3}_K} = 0 . 
\end{align}
When the free index $\Sigma$ is of type $v$, one gets a similar
identity with all possible types of values in the lower indices of the
structure constants,
\begin{multline}
  k\indices{^{v_1}_I} k\indices{^{v_2}_J} C\indices{^{v_3}_{v_1v_2}} +
  2 k\indices{^{w_1}_{[I}} k\indices{^{v_2}_{J]}}
  C\indices{^{v_3}_{w_1v_2}} + k\indices{^{w_1}_I} k\indices{^{w_2}_J}
  C\indices{^{v_3}_{w_1w_2}} + 2 k\indices{^{u_1}_{[I}}
  k\indices{^{v_2}_{J]}} C\indices{^{v_3}_{u_1v_2}} \\ + 2
  k\indices{^{u_1}_{[I}} k\indices{^{w_2}_{J]}}
  C\indices{^{v_3}_{u_1w_2}} + k\indices{^{u_1}_I} k\indices{^{u_2}_J}
  C\indices{^{v_3}_{u_1u_2}} - (f_{u_4})\indices{^K_I}
  k\indices{^{u_4}_J} k\indices{^{v_3}_K} = 0 . 
\end{multline}

\section{Quadratic vector models}
\label{sec:second order}

\subsection{Description of the model}
\label{sec:lagrangian}

To go further, one needs to specialize the form of the Lagrangian,
which has been assumed to be quite general so far.  In this section,
we focus on second order Lagrangians \eqref{eq:lag-V+S}-\eqref{eq:lag} arising in the context of
supergravities that contain $n_s$ scalar fields and depend
quadratically on $n_v$ abelian vector fields, non-minimally coupled to
each other, in four space-time dimensions.

To recall, we consider explicitly $\cL = \cL_S + \cL_V$, where
\begin{equation} 
  \cL_V = - \frac{1}{4}\, \mathcal{I}_{IJ}(\phi) F^I_{\mu\nu}
          F^{J\mu\nu}
          + \frac{1}{8} \,\mathcal{R}_{IJ}(\phi)\,
          \varepsilon^{\mu\nu\rho\sigma}
          F^I_{\mu\nu} F^J_{\rho\sigma} \label{eq:lagnew}
\end{equation}
and the scalar Lagrangian is of the sigma model form
\begin{equation}
  \label{eq:22}
  \mathcal L_S=-\half g_{ij}(\phi)\d_\mu\phi^ i\d^\mu\phi^j-V(\phi)
\end{equation}
where $g_{ij}$ is symmetric and invertible. Both $g_{ij}$ and $V$ depend only on undifferentiated scalar fields. As we pointed out in Section \ref{sec:noncompact_duality} neglecting gravity, this is the generic bosonic sector of ungauged supergravity. The
symmetric matrices $\cI$ and $\cR$, with $\cI$ invertible, depend only on
undifferentiated scalar fields and encode the non-minimal
couplings between the scalars and the abelian vectors.
The Bianchi identities and equations of motion for the vector fields
are given by
\begin{equation} \label{eq:eom}
  \partial_\mu (\star F^I)^{\mu\nu} = 0, \qquad \partial_\nu (\star G_I)^{\mu\nu} \approx 0.
\end{equation}
The Lagrangian
\eqref{eq:lagnew} falls into the general class of models described
previously, with the gauge invariant two-form
$G_I=\mathcal I_{IJ}\star F^J+\mathcal R_{IJ} F^J$ and
$d^4x\, \mathcal{L}_V=\frac{1}{2} G_I F^I$.

We also assume \cite{Barnich:2017nty}
\begin{equation}
  \label{eq:33}
  \mathcal R_{IJ}(0)=0. 
\end{equation}
Note that a constant part in $\mathcal R_{IJ}$ can be put to zero
without loss of generality since the associated term in the Lagrangian
is a total derivative. In most cases, we also take 
$V=0$ or assume (writing $\d_i=\ddl{}{\phi^i}$) that
\begin{equation}
  \label{eq:34}
  (\d_iV)(0)=0. 
\end{equation}

\subsection{Constraints on $U$, $W$-type symmetries}
\label{sec:u-w-type}

We assume here and in the examples below that there is no explicit
$x^\mu$-dependence in the space of local functions in order to
constrain $U$ and $W$-type symmetries. For simplicity, we also assume that the potential vanishes, $V=0$. 
In Section \ref{sec:applications}, these constraints will allow us to determine all symmetries of $U$ and $W$-type for specific models.

We need the scalar field equations, which are encoded in
\begin{equation}
s\star \phi^*_i+d(g_{ij} \star d\phi^j)=-\star
  \d_i(\mathcal{L}_S+\mathcal{L}_V)\label{eq:24},
\end{equation}
where $\d_i=\ddl{}{\phi^i}$. For $g=-1$, equation \eqref{eq:16b}
becomes
\begin{equation}
  \label{eq:17a}
  G_I F^J{f^I}_J+F^IF^Jh_{IJ}+dI^{n-1}-dG_I\, g^{I}-[d(g_{ij} \star
  d\phi^j)+\star
  \d_i(\mathcal{L}_S+\mathcal{L}_V)]\Phi^{i}
=0.
\end{equation}
When putting all derivatives of $F^I_{\mu\nu},\phi^i$ to zero, one
remains with
\begin{equation}
  \label{eq:18a}
  G_I F^J{f^I}_J+F^IF^Jh_{IJ}
-\star \d_i \mathcal{L}_V\, 
\Phi^{i}|_{{\rm der}=0}=0.
\end{equation}
It is here that the assumption that there is no explicit $x^\mu$
dependence in the gauge invariant functions
$g^{I\alpha},\Phi^{i\alpha}$ is used.  Using
$-\d_i\star \mathcal{L}_V=\half\d_i G_I \, F^I$, and the decomposition
$\Phi^{i}|_{{\rm
    der}=0}=\Phi^{i}_0+\Phi^{i}_1+\dots$,
where the $\Phi^{i}_{n}$ depend on undifferentiated scalar
fields and are homogeneous of degree $n$ in $F^I_{\mu\nu}$,
the equation implies that 
\begin{equation}
  \label{eq:28}
  \half M_{IJ}(\phi)\star F^I F^J+\half N_{IJ}(\phi)F^IF^J=0, 
\end{equation}
where 
\begin{equation}
  \label{eq:29}
  M_{IJ}=2\mathcal I_{K(I}{f^K}_{J)}
+\d_i\mathcal
I_{IJ}\Phi^{i}_0, 
\end{equation}
\begin{equation}
  \label{eq:27}
  \quad  N_{IJ}=2\mathcal R_{K(I}{f^K}_{J)}+2h_{IJ}+\d_i\mathcal
  R_{IJ}\Phi^{i}_0,
\end{equation}
by using that $h_{IJ}=h_{JI}$ on account of \eqref{Y1}.
When taking an Euler-Lagrange derivative of \eqref{eq:28} with respect
to $A_\mu^I$, one concludes that both terms have to vanish separately, 
\begin{equation}
M_{IJ}=0,\quad N_{IJ}=0.\label{eq:19}
\end{equation}
Setting $\phi^i=0$ and using \eqref{eq:33} then gives
\begin{equation}
  \label{eq:67}
  f^{(\mathcal I(0))}_{IJ}+f^{(\mathcal I(0))}_{JI}=-(\d_i\mathcal
  I_{IJ})(0)\Phi^i_0(0),\quad 
 2h_{IJ}=-(\d_i\mathcal R_{IJ})(0)\Phi^i_0(0),
\end{equation}
where the abelian index is lowered and raised with $\mathcal I_{IJ}(0)$ and
its inverse. Note that completely skew-symmetric $f^{(\mathcal I(0))}_{IJ}$
solve the equations with $\Phi^i_0(0)=0$, $h_{IJ}=0$. More conditions
are obtained by expanding equations \eqref{eq:19} in terms of power
series in $\phi^i$.

In all examples considered below, the algebra $\mathfrak g_U$ and the
$W$-type symmetries can be entirely determined from the analysis of
this subsection.

\subsection{Electric symmetry algebra}

An important result of our general analysis is that the symmetries of
the action that can lead to consistent gaugings may have a term that
is not gauge invariant. This term is present only in the variation
of the vector potential and is restricted to be linear in the
undifferentiated vector potential, i.e.,
$\delta A^I_{\mu} = f\indices{^I_J} A^J_{\mu} + g^I_\mu$,
$\delta \phi^i = \Phi^i$.  Here $f\indices{^I_J}$ are constants, and
$g^I_\mu$ and $\Phi^i$ are gauge invariant functions.  The symbol
$\delta$ represents the variation of the fields and is of course not
the Koszul-Tate differential.  No confusion should arise as the
context is clear.

It is of interest to investigate a subalgebra of the gaugeable
symmetries, obtained by restricting oneself from the outset to
transformations of the gauge potentials that are linear and
homogeneous in the undifferentiated potentials and to transformations
of the scalars that depend on undifferentiated scalars alone,
\begin{equation}
  \label{eq:64}
\delta A^I_{\mu} = f\indices{^I_J} A^J_{\mu}, \quad
\delta\phi^i = \Phi^i(\phi). 
\end{equation}
This means that one takes $g_\mu^I = 0$ and that the functions
$\Phi^i$ only depend on the undifferentiated scalar fields.  These
symmetries form a sub-algebra $\mathfrak g_e$ that includes the
symmetries usually considered in the supergravity which
is called the ``electric symmetry algebra'' (in the
given duality frame), see Section \ref{sec:elec-group}. As we discussed there, the electric group is a subgroup of the duality group $G \subset Sp(2 n_v, \mathbb{R})$ \cite{Gaillard:1981rj} which is comprised of lower triangular symplectic transformations. Although our Lagrangians are not necessarily
connected with supergravity, we shall nevertheless call the
symmetries of the form (\ref{eq:64}) ``electric symmetries'' and the
subalgebra $\mathfrak g_e$ the ``electric algebra''.  It need
not to be a
subalgebra of $Sp(2 n_v, \mathbb{R})$.  It generically does not
exhaust all symmetries and does not contain for example the conformal
symmetries of free electromagnetism.

The transformations of the form (\ref{eq:64}) are symmetries of the
action \eqref{eq:lagnew} + \eqref{eq:22} if and only if the scalar
variations leave the scalar action invariant separately, and
$f\indices{^I_J},\Phi^i(\phi)$ satisfy
\begin{align}
  \frac{\d \cI}{\d \phi^i} \Phi^i &
= - f^T \mathcal{I} - \mathcal{I} f,\label{var-I}\\
  \frac{\d \cR}{\d \phi^i} \Phi^i &
= - f^T \mathcal{R} - \mathcal{R} f - 2 h, \label{var-R-theta}
\end{align}
where the $h$ are constant symmetric matrices.  In
particular, when the scalar Lagrangian is given by
$\cL_S = \frac{1}{2} g_{ij}(\phi) \d_\mu \phi^i \d^\mu \phi^j$, the first condition means that $\Phi^i$ must be a Killing vector of the
metric $g_{ij}$.    If $U$
and $W$-type symmetries are of electric type, the electric symmetry
algebra contains in addition only $V$-type symmetries of electric
type, i.e., transformations among the undifferentiated scalars alone
that leave invariant both the scalar action and the matrices
$\mathcal I,\mathcal R$ (i.e., that satisfy $\delta S_S=0$ and
\eqref{var-I}, \eqref{var-R-theta} with $0$'s on the right hand
sides). This will be the case in all examples below. In particular,
the $f$'s, and thus also the gauge algebra, will be the same for
$\mathfrak g_U$ and $\mathfrak g_e$. The $h$ matrix is determined by the transformation parameters $\Phi^i$ and the parity-odd term $\mathcal{R}$ of the action via (\ref{var-R-theta}).

We then suppose that we have a basis of
symmetries of the action of this form,
\begin{align}
\delta_\Gamma A^I_{\mu} &= (f_{\Gamma})\indices{^I_J}
                          A^J_{\mu}, \label{global-sym-A}\\ 
\delta_\Gamma \phi^i &= \Phi_{\Gamma}^i(\phi). \label{global-sym-phi}
\end{align}
When compared to the previous sections, the index $\Gamma$ can take
$u$, $v$ or $w$ values. Only the $f_u$ matrices are non-zero. The
$h_\Gamma$ matrices are non-vanishing only for $\Gamma = u$ or
$w$. When $h_\Gamma \neq 0$, the Lagrangian is only invariant up to a
total derivative.

Closure of symmetries of this form then implies
\begin{align}
[f_\Delta, f_\Gamma] &= -C\indices{^\Sigma_{\Delta\Gamma}} f_\Sigma , \label{X-alg}\\
f_\Gamma^T h_\Delta - f_\Delta^T h_\Gamma + h_\Delta f_\Gamma -
  h_\Gamma f_\Delta &
= - C\indices{^\Sigma_{\Delta\Gamma}} \,h_\Sigma , \label{5idGlobal} \\
  \frac{\partial \Phi^i_\Delta}{\partial \phi^j} \, \Phi^j_\Gamma
  - \frac{\partial \Phi^i_\Gamma}{\partial \phi^j} \, \Phi^j_\Delta &
= -C\indices{^\Sigma_{\Delta\Gamma}} \Phi^i_\Sigma , \label{F-alg}
\end{align}
where one can obtain the equation (\ref{5idGlobal}) by the action of commutator $\left[\delta_{\Gamma},\delta_{\Delta}\right]$ on $\mathcal{R}$.
Decomposing the indices into $U$, $W$ and $V$-type, this is consistent
with the relations of Section \ref{sec:globalsymmetries} with $\lambda_u^w=0$.
Let us note that (\ref{var-R-theta}) expresses the surface term in the variation of the action.

\subsection{Restricted first order deformations}

We now limit ourselves to first order deformations of the master
action with the condition that all infinitesimal gaugings come from
symmetries that belong to the electric symmetry algebra above. In
order to simplify formulas, we will no longer make the distinction
between $U$-, $W$- and $V$-type which can easily be recovered.

According to Section \ref{sec:parametrization}, the deformations are
parametrized through electric symmetries by a matrix $k^\Gamma_I$,
with
\begin{align}
f\indices{^I_{JK}} &= (f_\Gamma)\indices{^I_J} k^\Gamma_K , \label{eq:f-k} \\
\Phi^i_I(\phi) &= \Phi^i_\Gamma(\phi) k^\Gamma_I , \label{eq:phi-k} \\
X_{IJ,K} &= (h_\Gamma)_{IJ} k^\Gamma_K .
\end{align}
The linear constraints \eqref{linf} -- \eqref{465} on the matrix
$k^\Gamma_K$ become
\begin{align}
(f_\Gamma)\indices{^I_J} k^\Gamma_K + (f_\Gamma)\indices{^I_K}
  k^\Gamma_J &= 0 ,
\label{eq:constr-antisymmetry}\\
h_{\Gamma \, (IJ} k^\Gamma_{K)} &= 0. \label{eq:constr-chernsimons}
\end{align}
They guarantee that the first order deformation of the master action
is given by
\begin{equation}
S^{(1)} = \int \!d^4\!x\,\left( a_2 + a_1 + a_0 \right),
\end{equation}
where 
\begin{equation}
  a_2 = \frac{1}{2} C^*_I f\indices{^I_{JK}} C^J C^K
\end{equation}
encodes the first order deformation of the gauge algebra and
\begin{equation}
  a_1 = A^{*\mu}_I f\indices{^I_{JK}} A^J_\mu C^K + \phi^*_i \Phi^i_K C^K
\end{equation}
encodes the first order deformation of the gauge symmetries.  When
taking \eqref{eq:f-k} and \eqref{eq:phi-k} into account, this
deformation of the gauge symmetries corresponds to gauging the
underlying global symmetries by using local parameters
$\eta^\Gamma (x) = k^\Gamma_I \epsilon^I (x)$. The deformation $a_0$
of the Lagrangian is given by the sum of three terms:
\begin{align}
  a^\text{(YM)}_0 &= \frac{1}{2} (\star G_I)^{ \mu\nu}
 f\indices{^I_{JK}} A^J_\mu A^K_\nu , \\
a^\text{(CD)}_0 &= J^\mu_K A^K_\mu ,  \\
  a^\text{(CS)}_0 &= \frac{1}{3} X_{IJ,K} \epsilon^{\mu\nu\rho\sigma}
F^I_{\mu\nu} A^J_\rho A^K_\sigma.
\end{align}
The terms $a^\text{(YM)}$ and $a^\text{(CD)}$ are exactly those
necessary to complete the abelian field strengths and ordinary
derivatives of the scalars into covariant quantities. The term $a^\text{(CD)}$ is responsible for charging the matter fields. The Chern-Simons
term $a_0^\text{(CS)}$ appears when $h_\Gamma \neq 0$: its role is to
cancel the variation
\begin{equation} \label{eq:CSobstr}
  \delta \mathcal{L} = - \frac{1}{4} \eta^\Gamma h_{\Gamma \,IJ}\,
  \varepsilon^{\mu\nu\rho\sigma} F^I_{\mu\nu} F^J_{\rho\sigma}
\end{equation}
that is no longer a total derivative when $\eta^\Gamma = k^\Gamma_I
\epsilon^I (x)$ \cite{deWit:1984rvr,deWit:1987ph}.

\subsection{Complete restricted deformations}

The second order deformation $S^{(2)}$ to the master action is then
determined by the first order deformation through equation
\eqref{eq:43}. As discussed in Section
\ref{sec:parametrization-2nd-order}, the existence of $S^{(2)}$
imposes additional quadratic constraints on the matrix $k^\Gamma_I$,
\begin{equation} \label{eq:quadraticconstraint}
  k\indices{^{\Gamma}_I} k\indices{^{\Delta}_J}
  C\indices{^{\Sigma}_{\Gamma\Delta}} -
  (f_{\Gamma})\indices{^K_I} k\indices{^\Gamma_J}
  k\indices{^{\Sigma}_K} = 0.
\end{equation}

Explicit computation shows that $S^{(2)}$ can be chosen such that
there is no further deformation of the gauge symmetries or of their
algebra. The second order terms in the Lagrangian are exactly those
necessary to complete abelian field strengths
$F^I_{\mu\nu} = \partial_\mu A^I_\nu - \partial_\nu A^I_\mu$ and
ordinary derivatives of the scalars to non-abelian field strengths and
covariant derivatives,
\begin{align}
  \mathcal{F}^I_{\mu\nu} &= \partial_\mu A^I_\nu - \partial_\nu A^I_\mu
 + f\indices{^I_{JK}} A^J_\mu A^K_\nu  , \\
D_\mu \phi^i &= \partial_\mu \phi^i - \Phi^i_I(\phi) A^I_\mu .
\end{align}
One also finds a non-abelian completion of the Chern-Simons term $a_0^\text{(CS)}$. 
Putting everything together, the Lagrangian after adding the second order deformation is \cite{Barnich:2017nty}
\begin{align}\label{Full-Lagrangian}
  \mathcal{L} = \mathcal{L}_S(\phi^i, D_\mu \phi^i) &
  - \tfrac{1}{4}\,\mathcal{I}_{IJ}(\phi)
   \mathcal{F}^I_{\mu\nu} \mathcal{F}^{J\mu\nu} + \tfrac{1}{8}\,
   \mathcal{R}_{IJ}(\phi)\,
    \varepsilon^{\mu\nu\rho\sigma}
 \mathcal{F}^I_{\mu\nu} \mathcal{F}^J_{\rho\sigma} \nonumber \\
    &+ \tfrac{2}{3}\, X_{IJ,K}\, \varepsilon^{\mu\nu\rho\sigma} A^J_\mu
      A^K_\nu \left( \partial_\rho A^I_\sigma + \tfrac{3}{8}\,
      f\indices{^I_{LM}} A^L_\rho A^M_\sigma \right).
\end{align}
The associated action can be checked to be invariant under the gauge
transformations
\begin{align}
  \delta A^I_{\mu} &= \partial_\mu \epsilon^I + f\indices{^I_{JK}}
 A^J_{\mu} \epsilon^K , \label{gauge-sym-A}\\
  \delta \phi^i &= \epsilon^I \Phi_I^i(\phi). \label{gauge-sym-phi}
\end{align}
This is equivalent to the fact that the deformation stops at second
order, i.e, that $S=S^{(0)} + S^{(1)} + S^{(2)}$ gives a solution to
the master equation $(S,S)=0$.

Checking directly the invariance of this action under
\eqref{gauge-sym-A} -- \eqref{gauge-sym-phi} without first
parametrizing $f\indices{^I_{JK}}$, $\Phi_I^i(\phi)$ and $X_{IJK}$
through symmetries requires the use of the linear identities
\begin{equation}
f\indices{^I_{JK}} = f\indices{^I_{[JK]}}, \quad X_{(IJ,K)} = 0
\end{equation}
and of the quadratic ones
\begin{align}
f\indices{^I_{J[K_1}}f\indices{^J_{K_2K_3]}} &= 0, \\
  f\indices{^K_{I[L}} X_{M]J,K} + f\indices{^K_{J[L}} X_{M]I,K}
  - \tfrac{1}{2}\, X_{IJ,K} f\indices{^K_{LM}} &= 0, \\
  \frac{\d \Phi^i_I}{\d \phi^j} \Phi^j_J -
  \frac{\d \Phi^i_J}{\d \phi^j} \Phi^j_I + f\indices{^K_{IJ}} \Phi^i_K &= 0.
\end{align}
In terms of $k^\Gamma_I$, these three quadratic identities all come
from the single quadratic constraint \eqref{eq:quadraticconstraint}
once the algebra of global symmetries \eqref{X-alg} -- \eqref{F-alg}
is taken into account.

\subsection{Remarks on $GL(n_v)$ transformations}
\label{sec:frame-transf}

Consider a linear field redefinition of the
abelian vector potentials, $A^I_\mu= {M^I}_J{A'}^J_\mu$ with
$M\in GL(n_v)$. Such a transformation gives rise to a trivial
infinitesimal gauging which corresponds to the antifield
independent part of the trivial ghost number $0$ cocycle
\begin{equation}
S^{(1)}_{triv.} = (S^{(0)},\Xi_s),\quad \Xi_s={f_s^I}_J[d^4x\, C^*_I C^J+\star A^*_I
A^J],\label{eq:46}
\end{equation}
with $f_s\in \mathfrak{gl}(n_v,\mathbb R)$.

Two remarks are in order.

The first concerns the relation to the algebra $\mathfrak{g}_U$
defined in Section \ref{sec:globalsymmetries}. It can also be defined
as the largest sub-algebra of $\mathfrak{gl}(n_v,\mathbb R)$ that can
be turned into symmetries of the theory by adding suitable gauge
invariant transformations of the vector and scalar fields, or in other
words, for which there exists a gauge invariant $K_u$ of ghost number
$-1$ such that $(S^{(0)},\Xi_u+K_u)=0$.

In particular, \eqref{eq:352} and \eqref{linf} for $g=0$, as well as
\eqref{eq:quadu}, can be summarized as follows: non-trivial $U$-type
gaugings require the existence of a map (described by $k^u_K$) from
the defining representation of the symmetry algebra
$\mathfrak g_U\subset \mathfrak{gl}(n_v,\mathbb R)$ into the adjoint
representation of the $n_v$-dimensional gauge algebra $\mathfrak g_g$.
  
The second remark is about families of Lagrangians related by linear
transformations of the vector potentials among themselves. It is
sometimes useful not to work with fixed (canonical) values for various
$GL(n_v)$ tensors that appear in the action. Instead, one considers
the deformation problem for sets of Lagrangians parametrized by
arbitrary $GL(n_v)$ tensors, for instance generic non-degenerate
symmetric $\mathcal I_{MN}$ and symmetric $\mathcal R_{MN}$ that
vanish at the origin of the scalar field space.

If the tensors of two such Lagrangians are related by a $GL(n_v)$
transformation, they should be considered as equivalent. Indeed, the
local BRST cohomology for all members of such an equivalence class are
isomorphic and related by the above anti-canonical field
redefinition. In particular, all members of the same equivalence class
have isomorphic gaugings.

All general considerations and results on local BRST cohomology above
apply in a unified way to all equivalence classes. When one explicitly
solves the obstruction equation \eqref{eq:16b} (for instance at $g=-1$
in order to determine the symmetries), the results on local BRST
cohomologies do depend on the various equivalence classes.

\subsection{Comparison with the embedding tensor constraints}

In the embedding tensor formalism
\cite{deWit:2002vt,deWit:2005ub,deWit:2007kvg,Samtleben:2008pe,%
  Trigiante:2016mnt},
  the possible gaugings are described by the embedding tensor
$\Theta\indices{_M^\alpha}=(\Th\indices{_I^\a},\Th^{I\a})$ with
electric and magnetic components\footnote{We refer the reader to \cite{Coomans:2010xd}, where a relation between the embedding tensor formalism and the BV-BRST antifield formalism has been considered for a different reason.}, which satisfies a number of linear
and quadratic constraints \eqref{linear2}--\eqref{quadratic2}. Recalling from
Section \ref{sec:Emb_Tensor}, the index $I$ runs from
$1$ to $n_v$, while $\a$ runs from $1$ to the dimension of the group
$G$ of invariances of the equations of motion of the initial
Lagrangian \eqref{eq:lagnew}. More precisely, $G$ is defined only as
the group of transformations that act linearly on the field strengths
$F^I$ and their ``magnetic duals'' $G_I$, and whose action on the
scalars contains no derivatives. This coincides with the group of
symmetries of the first order Lagrangian discussed in
Section \ref{sec:symplecticchoice} and \cite{Henneaux:2017kbx}, which are of the restricted form
\eqref{eq:64}.

As explained in Section \ref{sec:Emb_Tensor}, one can always
go to a duality frame in which the magnetic components of the
embedding tensor vanishes, $\Th^{I\a} = 0$, see Section \ref{sec:embedding} for an explicit construction of such a symplectic matrix.
Moreover, only the
components $\Th\indices{_{\check{I}}^\Gamma}$ survive, where $\Gamma$ runs over
the generators of the electric subgroup $G_e \subset G$ that act as
local symmetries of the Lagrangian in that frame and as before $\check{I}=1,...,n_v$ is referring to the $n_v$ physical vector fields.
Then, the gauged Lagrangian \eqref{boslag2}-\eqref{GCS} in the electric frame is exactly the Lagrangian
\eqref{Full-Lagrangian}, where the matrix $k$ is identified with the
remaining electric components of the embedding tensor,
$k^\Gamma_I = \Th\indices{_{\check{I}}^\Gamma}$. The linear constraint \eqref{linear2} in the electric frame corresponds to \eqref{eq:constr-antisymmetry} and
\eqref{eq:constr-chernsimons}. The quadratic constraint \eqref{quadratic2} on the
embedding tensor in the electric frame corresponds to the constraint \eqref{eq:quadraticconstraint} on $k$ \cite{Barnich:2017nty}. As explained in Sections
\ref{sec:parametrization} and \ref{sec:parametrization-2nd-order}, the
constraints can be refined using the split corresponding to the
various ($U$, $W$, $V$) types of symmetries.

We showed in Chapter \ref{ch:Isom-Emb-VS-models} that the embedding tensor
formalism does not allow for more general deformations \cite{Henneaux:2017kbx} than those of
the Lagrangian \eqref{eq:lagnew} studied in this chapter. Indeed, we have seen in Section \ref{sec:BV-def-Embed} that their
BRST cohomologies are isomorphic even though the field content and
gauge transformations are different.  Conversely, as long as one
restricts the attention to the symmetries of \eqref{eq:lagnew} that are
of the electric type \eqref{eq:64}, we showed that the embedding
tensor formalism captures all consistent deformations that deform the
gauge transformations of the fields.

\section{Applications}\label{sec:applications}

\subsection{Abelian gauge fields: $U$-type gauging}\label{sec:481}

As a first example, let us consider the case where we have no scalars,
$\mathcal{I}_{IJ} = \delta_{IJ}$, $\mathcal{R}_{IJ} = 0$. The
Lagrangian is then simply
\begin{equation}
  \label{eq:YM} \mathcal{L} = -\frac{1}{4}
  \delta_{IJ} F^I_{\mu\nu} F^{J\mu\nu}.
\end{equation}

From \eqref{eq:67} and comparing with \eqref{var-I} and \eqref{var-R-theta}, it can be shown that $U$-type symmetries are of
electric form and moreover there are no $W$-type
symmetries. We have in this case
$\mathfrak{g}_U = \mathfrak g_e=\mathfrak{so}(n_v)$. 

The vector fields transform in the fundamental representation of
$\mathfrak{so}(n_v)$. A basis of the Lie algebra $\mathfrak{so}(n_v)$ may be labeled by an
antisymmetric pair of indices $[LM]$ that now plays the role of the
index $u$,
\begin{equation}
\delta_{[LM]} A^I_\mu = (f_{[LM]})\indices{^I_J} A^J_\mu,\quad 
\qquad (f_{[LM]})\indices{^I_J} = \half(\delta^I_L \delta_{JM} - \delta^I_M
                                \delta_{JL})\label{eq:so(n)}. 
\end{equation}

Concerning the associated gaugings, the matrices
${f_{[LM]}}_{IJ}=\delta_{II'}(f_{[LM]})\indices{^{I'}_J}$ are
antisymmetric in $I,J$; therefore, the structure constants of the
gauge group
\begin{equation}
f^{(\delta)}_{IJK} = (f_{[LM]})\indices{_{IJ}}  k^{LM}_K
\end{equation}
are automatically antisymmetric in their first two indices. Here the superscript $(\delta)$ means that $f$ is contracted by the kronecker $\delta$, more precisely $f^{(\delta)}_{IJK}=\delta_{IL}f\indices{^{(\delta)L}_{JK}}$. The
constraint \eqref{eq:constr-antisymmetry} on $k^{LM}_K$ ensures antisymmetry in the last two indices
which in turn implies total antisymmetry and the Jacobi
identity is obtained from \eqref{eq:quadraticconstraint}. Moreover, any set of totally antisymmetric structure
constants can be obtained in this way by taking
$k^{LM}_K =f\indices{^{LM}_K}$, as can be easily seen
using the expression for $f_{[LM]}$ given above.

We thereby recover the result of \cite{Barnich:1993pa} stating
that the most general deformation of the free Lagrangian \eqref{eq:YM}
that is not of $V$ or $I$-type is given by the Yang-Mills Lagrangian
with a compact gauge group of dimension equal to the number of vector fields.

{\bf Remark:} Note that Poincar\'e (conformal) symmetries (for $n=4$)
are of $V$-type if one allows for $x^\mu$-dependent local
functions. If such a dependence is allowed for $U,W$-type symmetries
and gaugings as well, results can be very different. For instance, as
shown by equation (13.21) of \cite{Barnich:2000zw}, if $n\neq 4$, there
are additional $U$-type symmetries described by the cohomology class
\begin{equation}
  \label{eq:36}
U^{-1}=  d^nx f_{(IJ)}\big[C^{*I}C^J+A^{*\mu I} A_\mu^J+\frac{2}{n-4}F^I_{\mu\nu}x^\mu
  A^{*\nu J}\big],
\end{equation}
where indices $I,J,\dots$ are raised and lowered with the Kronecker
delta. The associated Noether current can be obtained by working out
the descent equation following \eqref{eq:18b},
$sU^{-1}+d (f_{(IJ)}[\star A^{*I}C^J+\star F^IA^J+{J_U}^{IJ}] ) = 0$, where 
\begin{equation}
  \label{eq:58}
  {J_U}^{IJ}=\frac{2}{n-4}(T_{\mu\nu})^{IJ} x^\nu\star dx^\mu,\quad
  ({T^\mu}_\nu)^{IJ}=F^{(I\vert \mu\rho}F^{\vert J)}_{\rho\nu}+\frac{1}{4}
  F^{I\alpha\beta}F^J_{\alpha\beta}\delta^\mu_\nu.
\end{equation}

In other words, $\mathfrak g_U=\mathfrak{gl}(n_v)$. Note also
that these $U$-type symmetries involve a non-vanishing ${}^U
g^I_\mu$. It has furthermore been shown in section 13.2.2
of \cite{Barnich:2000zw} that there are associated $U$-type gaugings
and cohomology classes in higher ghost numbers. In the present
context, they are obtained as follows: the role of $u$ for the
additional symmetries is played by a symmetric pair of indices $(LM)$,
\begin{equation}
  \label{eq:37}
  \delta_{(LM)} A^I_\mu= (f_{(LM)})\indices{^I_J} A^J_\mu,\quad
  (f_{(LM)})\indices{^I_J}
  =\half(\delta^I_L \delta_{JM} + \delta^I_M
                                \delta_{JL}). 
\end{equation}
Once the linear constraints \eqref{linf} on
$k^{(LM)}_{K_1\dots K_{g+1}}$ are fulfilled, the associated $U$-type
gaugings and higher cohomology classes can be read off from equation
\eqref{eq:18b} when taking \eqref{eq:352}.

After multiplying \eqref{eq:36} by $n-4$, it represents for $n=4$ the
$V$-type symmetry associated with the dilatation of the conformal
group. The associated cubic and higher order vertices for the full
conformal group have been studied in detail in \cite{Brandt:2001hs}.

\subsection{Abelian gauge fields with uncoupled scalars:
  $U,V$-type gaugings}

We now take the case
\begin{equation}
  \mathcal{L} = \mathcal{L}_S(\phi^i, \partial_\mu \phi^i)
  - \frac{1}{4} \delta_{IJ} F^I_{\mu\nu} F^{J\mu\nu},
\end{equation}
where there is no interaction between the scalars and the vector
fields. The $\mathfrak{g}_U$ algebra is again $\mathfrak{so}(n_v)$
and there are no $W$-type symmetries.

The electric symmetry algebra is the direct sum of
$\mathfrak{so}(n_v)$ with the electric $V$-type symmetry algebra
$\mathfrak g_s$ of the scalar Lagrangian. The matrices $f_\Gamma$
split into two groups and are given by
\begin{equation}
  (f_\alpha)\indices{^I_J} = 0, \qquad (f_{[LM]})\indices{^I_J}
  = \half(\delta^I_L \delta_{JM} - \delta^I_M \delta_{JL})
\end{equation}
where $\alpha = 1, \dotsc, \dim \mathfrak g_s$ labels the $G_s$
generators and the antisymmetric pair $[LM]$ labels the $SO(n_v)$
generators as before. The matrix $k^\Gamma_I$ accordingly splits in
two components $k^{LM}_I$ and $k^\alpha_I$. The constraints on
$k^{LM}_I$ again amount to the fact that the quantities
$f\indices{^I_{JK}} = (f_{[LM]})\indices{^I_J} k^{LM}_K$ are the
structure constants of a compact Lie group. The constraint on
$k^\alpha_I$ tells us that the gauge variations
\begin{equation}
\delta \phi^i = \epsilon^I k^\alpha_I \Phi^i_\alpha(\phi)
\end{equation}
close according to the structure constants $f\indices{^I_{JK}}$. In
the case where these variations are linear,
$\Phi^i_\alpha(\phi) = (t_\alpha)\indices{^i_j} \phi^j$, the constraint
is that the matrices $T_I = - k^\alpha_I t_\alpha$ form a representation
of the gauge group, $[T_I, T_J] = f\indices{^K_{IJ}} T_K$.

\subsection{Bosonic sector of $\mathcal{N}=4$ supergravity}\label{sec:N=4}

Neglecting gravity, the bosonic sector of $\mathcal{N}=4$ supergravity
is given by two scalar fields parametrizing the coset $SL(2,\R)/SO(2)$
along with $n_v=6$ vector fields
{\cite{Cremmer:1977tc,Das:1977uy,Cremmer:1977tt,Cremmer:1979up}.  We study three formulations of this model, where we
determine the symmetry algebras $\mathfrak{g}_U$ and $\mathfrak{g}_e$
and the allowed gaugings. In all formulations, the scalar Lagrangian
is determined by
\begin{equation}
\phi^i=(\phi,\chi),\quad g_{ij}={\rm diag}(1,e^{2\phi}),\quad
V=0\label{eq:14a}. 
\end{equation}
They differ by the form of the matrices $\mathcal{I}$ and $\mathcal{R}$.

\subsubsection*{$SO(6)$ formulation}

The vector Lagrangian
  \eqref{eq:lagnew} is determined by  
\begin{equation}
  \mathcal{I}_{IJ} = e^{-\phi}\delta_{IJ},\quad \mathcal{R}_{IJ}=
  \chi\delta_{IJ}.\label{eq:10a} 
\end{equation}
When $f^{(\delta)}_{IJ}$ is antisymmetric, the transformations
$\delta A_\mu^I={f^I}_J A^J_\mu$ define an $\mathfrak{so}(6)$
sub-algebra of $U$-type symmetries on their own. Note also that we can
assume $h_{IJ}$ to be symmetric. Equations \eqref{eq:67} then imply
that the traceless parts of $f^{(\delta)}_{IJ},h_{IJ}$ have to vanish.

If $f^{(\delta)}_{IJ}=\delta_{IJ}\eta^0$, equations \eqref{eq:67} are
solved with $h_{IJ}=0$, $\Phi^\phi_0(0)=2\eta^0$, and $\Phi^\chi_0(0)=0$. It
then follows that \eqref{eq:19} are solved with
\begin{equation}
\Phi_0^\phi=2\eta^0,\quad
\Phi_0^\chi=-2\eta^0\chi\label{eq:41}.
\end{equation}
Equation \eqref{eq:17a} is
then also solved with $g^I=0$ and
$I^3=2\eta^0(e^{2\phi}\star d\chi\chi-\star d\phi)$. According to
equation \eqref{eq:18b}, the associated cohomology class is given by
\begin{equation}
    \label{eq:15a} \omega^{-1,4}= \eta^0[d^4x C^{*}_IC^I+\star A^{*}_IA^I
+2(\star\phi^*-\star\chi^*\chi)],
\end{equation}
with $s\omega^{-1,4}+d[\eta^0(\star A^{*}_IC^I+G_IA^I)+I^3]=0$. This
cohomology class encodes the symmetry
\begin{equation}
  \delta A^I_\mu = \eta^0 A^I_\mu, \quad \delta \phi=2 \eta^0, \quad \delta\chi=-2\eta^0 \chi,
\end{equation}
with
$\delta \mathcal L_0=0$. The associated Noether current is given by 
\begin{equation}
  j^\mu=[-(e^{-\phi}F^{\mu\lambda}_I-\half\chi
\epsilon^{\mu\lambda\rho\sigma}F_{I\rho\sigma})A_\lambda^I-2\d^\mu\phi+2\chi
e^{2\phi}\d^\mu\chi].
\end{equation}
It cannot be made gauge invariant through allowed redefinitions. 

For $f_{IJ}=0$, $h_{IJ}=\eta^+\delta_{IJ}$,
$\Phi^\chi_0=-2\eta^+,\Phi^\phi_0=0$ is a solution to the full problem
\eqref{eq:17a} since
$\half F^I F_I=s\star \chi^*+d(e^{2\phi}\star d\chi)$. This gives then
the only class of $W$-type, which is also of restricted type. More
explicitly, $W^{-1} = \star\chi^*$ with
$s\star \chi^*+d(-\half A^IF_I+ e^{2\phi}\star d\chi)=0$. The symmetry
it describes is $\delta \chi= \eta^+$ with the associated Noether
current given above that can not be made gauge invariant.

The algebra $\mathfrak{g}_U$ is therefore isomorphic to
$\mathfrak{so}(6)\oplus \mathfrak h$, where $\mathfrak h$ is the
sub-algebra of $\mathfrak{sl}(2,\mathbb R)$ generated by diagonal
traceless matrices. It is a sub-algebra of the electric symmetry
algebra $\mathfrak{g}_e = \mathfrak{so}(6)\oplus \mathfrak b^+$, where
$\mathfrak b^+$ corresponds to the sub-algebra of
$\mathfrak{sl}(2,\mathbb R)$ of upper triangular matrices, see Section \ref{sec:EM-dual-N4_sug}. 
The electric algebra acts as
\begin{equation}
\delta \phi = 2 \h^0, \quad
\delta \chi = - 2 \h^0 \chi + \h^+,\quad
  \delta A^I_\mu = \h^0 A^I_\mu + \h^{LM} (f_{[LM]})
            \indices{^I_J} A^J_\mu \label{eq:N=4deltaA}.
\end{equation}
Accordingly $f_\Gamma=(f_0,f_+,f_{[LM]})$ where
$f_{[LM]}$ are given in \eqref{eq:so(n)}, while  
\begin{equation}
(f_0)\indices{^I_J} = \delta^I_J, \qquad (f_+)\indices{^I_J} = 0.
\end{equation}
The matrix $\mathcal{R}_{IJ} = \chi \delta_{IJ}$ transforms as
\begin{equation}
\delta \mathcal{R}_{IJ} = - 2 \h^0 \mathcal{R}_{IJ} + \h^+ \delta_{IJ}.
\end{equation}
Therefore, contrary to the previous examples, the tensor
$h_{\Gamma IJ}$ has a non-vanishing component $h_{+IJ} = - \frac{1}{2}
\delta_{IJ}$.

The generalized structure constants
$f\indices{^I_{JK_1\dots K_{g+1}}} = (f_\Gamma)\indices{^I_J}
k^\Gamma_{K_1\dots K_{g+1}}$ are then
\begin{equation}
f\indices{^I_{JK_1\dots K_{g+1}}} = \delta^I_J k^0_{K_1\dots K_{g+1}} + \half
(f_{[LM]})\indices{^I_J}  k^{LM}_{K_1\dots K_{g+1}} .
\end{equation}
The linear constraint \eqref{linf} now implies that
$k^0_{K_1\dots K_{g+1}}=0$ as can be seen by taking the first three indices
equal and using the antisymmetry of the matrices $f_{[LM]}$, while
$k^{(\delta)}_{LM K_1\dots K_{g+1}}$ is restricted to be completely
skew-symmetric in all indices. In the same way, the linear constraint
\eqref{465} implies that $k^+_{K_1\dots K_{g+1}}=0$. Indeed, it reduces to
\begin{equation}
\delta_{(IJ} k^+_{K_1)\dots K_{g+1}} = 0,
\end{equation}
from which we deduce $k^+_{K_1\dots K_{g+1}} = 0$ by taking the first three indices
equal.

It follows that
there are no cohomology classes of $W$-type when $g\geqslant 0$ and that
the only cohomology classes of $U$-type when $g\geqslant 0$ are given by
\begin{equation}
    \label{eq:11} [d^4x C^{*I}\d_I+\star A^{*I}A^J\d_J\d_I+\half
    G^I A^J A^K\d_K\d_J\d_I]\Theta,
  \end{equation}
with $\Theta$ a polynomial in $C^I$ of ghost number $\geqslant 1$.

In particular, the symmetries of $\mathfrak b^+$ cannot be gauged and
the gauge algebra is given by a compact sub-algebra of
$\mathfrak{so}(6)$. The gauged Lagrangian is the original one,
except that the abelian field strengths and ordinary derivatives are replaced by non-abelian
field strength and covariant derivatives.

\subsubsection*{Dual $SO(6)$ formulation}
We now have
\begin{equation}
    \mathcal{I}_{IJ} = \frac{1}{e^{-\phi}+\chi^2e^\phi}\delta_{IJ},\quad 
    \mathcal{R}_{IJ}= -\frac{\chi
      e^\phi}{e^{-\phi}+\chi^2e^\phi}\delta_{IJ},\label{eq:18}
  \end{equation}
and the same analysis gives similar conclusions:
\begin{enumerate}

\item The cohomology classes \eqref{eq:11} are again present since $\mathcal I_{IJ}$ and $\mathcal R_{IJ}$ are still proportional to $\delta_{IJ}$.
  
\item There are no additional gaugings or cohomology classes in ghost
  number higher than $0$ of $U$ or $W$-type.

\item The only additional
  non-covariantizable characteristic cohomology comes from two
  additional solutions to \eqref{eq:67}.
\end{enumerate}
The first of these additional solutions is of electric
  $U$-type and comes from
  $f^{(\delta)}_{IJ}=\tilde \eta^0\delta_{IJ}$, $h_{IJ}=0$, with
  $\Phi^\phi_0(0)=-2\tilde \eta^0$, $\Phi^\chi_0(0)=0$. Equation \eqref{eq:19}
  reduces to
  \begin{equation}
    \label{eq:39}
    2 \tilde \eta^0 \mathcal I_{IJ}+\d_i \mathcal I_{IJ}\Phi^i_0=0,\quad
    2 \tilde \eta^0
    \mathcal R_{IJ}+2h_{IJ}\delta_{IJ}+\d_i  \mathcal
    R_{IJ}\Phi^i_0=0, 
  \end{equation}
  and is solved by
  \begin{equation}
    \label{eq:9}
    h_{IJ}=0\quad \Phi^\phi_0=-2\tilde \eta^0,\quad  
    \Phi^\chi_0=2\tilde \eta^0\chi.
  \end{equation}
This gives also a solution to the full problem since this
transformation leaves the scalar field Lagrangian invariant.
According to equation \eqref{eq:18b}, the associated cohomology class is given by
\begin{equation}
    \label{eq:15b} \omega^{-1,4}= \tilde \eta^0[d^4x C^{*}_IC^I+\star A^{*}_IA^I-2(\star\phi^*-
\star\chi^*\chi)].
\end{equation} 
The second solution is of restricted $W$-type and comes from
$f^{(\delta)}_{IJ}=0$ while $h_{IJ}=\tilde \eta^+\delta_{IJ}$ with
$\Phi^\phi_0(0)=0$, $\Phi^\chi_0(0)=2\tilde \eta^+$. Equation
\eqref{eq:19} reduces to
\begin{equation}
  \label{eq:40}
  \d_i \mathcal I_{IJ}\Phi^i_0=0,\quad 2\tilde \eta^+\delta_{IJ}+\d_i  \mathcal
    R_{IJ}\Phi^i_0=0,
\end{equation}
and is solved by
\begin{equation}
\Phi^\chi_0=2\tilde \eta^+(e^{-2\phi}-\chi^2),\quad 
\Phi^\phi_0=4\tilde \eta^+\chi\label{eq:38},
\end{equation}
This is also a solution to the full problem since these
transformations leave the scalar field Lagrangian invariant. 
The associated cohomology class is given by
\begin{equation}
  \label{eq:12a}
  2\tilde \eta^+ [\star \phi^*2\chi+\star\chi^*(e^{-2\phi}-\chi^2)]. 
\end{equation}
In this case, we therefore have
$\mathfrak{g}_U = \mathfrak{so}(6)\oplus \mathfrak h \subset
\mathfrak{g}_e = \mathfrak{so}(6)\oplus \mathfrak{b}^-$, where
$\mathfrak b^-$ is the sub-algebra of $\mathfrak{sl}(2,\mathbb R)$ of
lower triangular matrices.

Again, the symmetries of $\mathfrak b^+$ cannot be gauged and the gauge algebra is given by a compact sub-algebra of $\mathfrak{so}(6)$.

\subsubsection*{$SO(3)\times SO(3)$ formulation}
The indices split as $I=(A,A')$, where $A,A'=1,2,3$, and we have 
  \begin{equation}
\begin{split}
    \mathcal
    I_{IJ}={\rm diag}(\mathcal I_{AB},\mathcal I_{A'B'}),\quad
    \mathcal R_{IJ}={\rm diag}( \mathcal R_{AB},\mathcal R_{A'B'})\label{eq:17},\\
\mathcal I_{AB}=e^{-\phi}\delta_{AB},\quad \mathcal
I_{A'B'}=\frac{1}{e^{-\phi}+\chi^2e^\phi}\delta_{A'B'},\\
\mathcal R_{AB}=\chi\delta_{AB},\quad \mathcal R_{A'B'}=-\frac{\chi
      e^\phi}{e^{-\phi}+\chi^2e^\phi}\delta_{A'B'}.
  \end{split}
\end{equation}
Spelling out equation \eqref{eq:19} gives 
\begin{equation}
  \label{eq:21}
  \begin{split}
 & \mathcal I_{IL} {f^L}_{J}+\mathcal I_{LJ}{f^L}_{I}+\d_i\mathcal I_{IJ}{\Phi^i_0}=0,\\
& \mathcal R_{IL} {f^L}_{J}+\mathcal R_{JL}{f^L}_{I}+h_{IJ}+h_{JI}+\d_i\mathcal
    R_{IJ}{\Phi^i_0}=0. 
  \end{split}
\end{equation}
Choosing $I=A$, $J=A'$, the first equation reduces to 
\begin{equation}
  \label{eq:30}
  e^{-\phi}f^{(\delta)}_{AA'}+\frac{1}{e^{-\phi}+\chi^2e^\phi}f^{(\delta)}_{A'A}=0.
\end{equation}
Putting $\phi=0=\chi$ gives $f^{(\delta)}_{AA'}$ is symmetric under exchange of $A$, $A'$ while taking the derivative with respect to $\phi$
and then putting $\phi=0=\chi$ gives $f^{(\delta)}_{AA'}$ is antisymmetric under exchange of $A$, $A'$, thus 
$f^{(\delta)}_{AA'}=0=f^{(\delta)}_{A'A}$. When combined with the
linear constraint \eqref{linf}, this implies that the
${f^{I}}_{JK_1\dots K_{g+1}}$'s have to vanish
unless all indices are of $A$, or of $A'$, type respectively, which is
thus a necessary condition to have non trivial $U$-type solutions.

When the $f$'s vanish, the second part of \eqref{eq:21} for $I=A$, $J=A'$ gives
$h_{AA'}+h_{A'A}=0$. When combined with the linear constraint
\eqref{465}, this implies that for non-trivial solutions of $W$-type
associated to $X_{IJ,K_1\dots K_{g+1}}$ one again needs all indices to
be either of $A$ or of $A'$ type.

The discussion then reduces to the one we had before in each of the sectors. For $U$-type solutions, this gives in a first stage the
symmetries, gaugings and higher ghost cohomology classes associated with
each of the $SO(3)$ rotations separately. There are again no
additional solutions of $U$ or $W$ type when $g\geqslant 0$.

Only the remaining non-covariantizable symmetries, i.e., solutions of
type $U$ and $W$ at $g=-1$ that correspond to $\mathfrak{b}^\pm$, remain to be discussed. For the $U$ type
solutions, one finds in the first sector that
$f_{AB}=\eta^0\delta_{AB}$ with \eqref{eq:41} holding, while for the
second sector $f_{A'B'}=\tilde \eta^0\delta_{A'B'}$ with (\ref{eq:9})
holding. This gives a solution to the full problem if and only if
$\tilde \eta^0=-\eta^0$. Hence $\mathfrak
g_U=\mathfrak{so}(3)\oplus\mathfrak{so}(3)\oplus \mathfrak h$.
On the other hand the solutions of $W$ type for both sectors are
solutions to the full problem if and only if $\eta^+=\tilde \eta^+=0$ so
that there is no surviving $W$-type symmetry. In particular $\mathfrak
g_e=\mathfrak g_U$.
The symmetry of $\mathfrak{h}$ cannot be gauged, and the gauge algebra is a compact sub-algebra of $\mathfrak{so}(3)\oplus \mathfrak{so}(3)$.

This concludes the discussion with the expected results (see \cite{Das:1977pu,Freedman:1978ra,Gates:1982ct}).

\section{First order manifestly duality-invariant actions}\label{sec:first-order-actions}

\subsection{Non-minimal version with covariant gauge structure}
\label{sec:non-minimal}

We now investigate the first order formulation \cite{Bunster:2011aw,Henneaux:2017kbx}
of the models discussed previously. Those models are interesting
because they contain more symmetries and therefore potentially more
gaugings. In the original, minimal version, they are given by the
action\footnote{Note that in this section the convention for Levi-Civita is $\varepsilon^{01234...}=+1$.}
\begin{equation} \label{eq:symlag2} S = \int \!d^4x \left( \frac{1}{2}
    \Omega_{MN} {B}^{Mi} \dot{{A}}^N_i - \frac{1}{2}
    \mathcal{M}_{MN}(\phi) {B}^M_i {B}^{Ni} \right),
\end{equation}
where the potentials are packed into a vector
\begin{equation}
({A}^M) = (A^I, Z_I), \quad M= 1, \dots, 2n_v,
\end{equation}
and the magnetic fields are
\begin{equation}
{B}^{Mi} = \epsilon^{ijk} \partial_j {A}^M_k .
\end{equation}
The matrices $\Omega$ and $\mathcal{M}(\phi)$ are the
$2n_v \times 2n_v$ matrices
\begin{equation} \label{eq:OMdefa}
\Omega = \begin{pmatrix}
0 & I \\ -I & 0
\end{pmatrix}, \qquad
\mathcal{M} = \begin{pmatrix}
  \mathcal{I} + \mathcal{R}\mathcal{I}^{-1}\mathcal{R} &
  - \mathcal{R} \mathcal{I}^{-1} \\
- \mathcal{I}^{-1} \mathcal{R} & \mathcal{I}^{-1}
\end{pmatrix},
\end{equation}
each block being $n_v\times n_v$. The matrix
$\mathcal{N} = \Omega^{-1} \mathcal{M}$ is symplectic,
$\mathcal{N}^T \Omega \mathcal{N} = \Omega$.

Local BRST cohomology and gaugings for this class of models with
non-covariant gauge symmetries $\delta A^M_i=\partial_i\epsilon^M$
could then be discussed by generalizing the results of
\cite{Bekaert:2001wa} in the presence of coupled scalars.

However, in order to be able to directly use the discussion of local
BRST cohomology developed for the second order covariant Lagrangian in
the case of the first order manifestly duality invariant formulation,
we consider a modification of the non-minimal variant
\cite{Barnich:2007uu} with additional scalar potentials for the
longitudinal parts of electric and magnetic fields. More precisely, we
now take instead of \eqref{eq:symlag2} the action
\begin{equation}
  \label{eq:48}
  S[A_\mu^M,D^M,\pi_M,\phi^i]=S_S[\phi]+S_{DP},
\end{equation}
with
\begin{multline}
  \label{eq:49}
  \mathcal L_{DP}=\half [ \Omega_{MN}(\mathcal B^{Mi}+\d^iD^M)(\d_0
  A_i^N-\d_i A_0^N)-\mathcal B^{Mi} \mathcal M_{MN}(\phi)\mathcal
  B^{N}_i ]\\+\pi_M\d_0 D^M -\half \pi_M (\mathcal
  M^{-1})^{MN}\pi_N-\mathcal V(\phi,D).
\end{multline}
Here 
\begin{equation}
\mathcal B^{Mi}=\epsilon^{ijk}\d_j
  A_k^M+\d^i D^M,\label{eq:68}
\end{equation}
spatial indices $i,j,k,\dots$ are raised and lowered with
$\delta_{ij}$ and its inverse, with $\Omega_{MN}$ the symplectic
matrix, 
$\mathcal M_{MN}$ symmetric and invertible and
\begin{equation}
(\d_i\mathcal
V)(0,0)=0=(\d_M\mathcal V)(0,0), \quad \cV (\phi, 0) = 0 . \label{eq:51} 
\end{equation}
The modification with respect to \cite{Barnich:2007uu} consists in the
addition of the kinetic and potential terms for the longitudinal
electric and magnetic potentials in the last line of \eqref{eq:49}.
Defining
\begin{equation} \label{eq:53}
\begin{split}
  \mathcal F^M_{\mu\nu} &=\d_\mu A_\nu^M-\d_\nu A_\mu^M+\star
  S^M_{\mu\nu},\\
   \star  S^M_{0i} &=\Delta^{-1}(\Omega^{-1})^{MN}\d_0\d_i\pi_N,\\
   \star  S^M_{ij} &=\epsilon_{ijk}\d^k D^M,
\end{split}
\end{equation}
we have $\mathcal B^{Mi}=\half\epsilon^{ijk}\mathcal F_{jk}$ and can write
\begin{align}
  S_{DP}=\frac{1}{4} \int d^4x  [&\Omega_{MN}\epsilon^{ijk} (\mathcal
  F^M_{jk}+\star S^M_{jk})\mathcal F^N_{0i}-\mathcal F^M_{ij} \mathcal
  M_{MN}\mathcal F^{Nij}\nonumber\\
  &-2  \pi_M (\mathcal M^{-1})^{MN}\pi_N-4\mathcal V],    \label{eq:54}
\end{align}
where a total derivative has been dropped.

The gauge invariances are then doubled but still of the same covariant
form as in the second order Lagrangian case,
\begin{equation}
  \label{eq:52}
  \delta A^M_\mu=\d_\mu\epsilon^M,\quad \delta D^M=0,\quad
  \delta\pi_M=0,\quad \delta \phi^i=0.
\end{equation}
The equations of motion for the gauge and scalar potentials are
determined by the vanishing of
\begin{equation}
  \label{eq:50a}\begin{split}
 &   \vddl{\mathcal L_{DP}}{\pi_M}=-(\mathcal
 M^{-1})^{MN}(\pi_N-\mathcal M_{NL}\d_0 D^L),\\
 &   \vddl{\mathcal L_{DP}}{D^M}=\Omega_{MN}(\Delta A_0^N-\d_0 \d^i A_i^N)
    +\d^i(\mathcal M_{MN} \mathcal B^N_i)-\d_0\pi_M-\frac{\d \mathcal
      V}{\d D^M},\\
 &   \vddl{\mathcal L_{DP}}{A^M_0}=-\Omega_{MN}\Delta
    D^N=-\frac{1}{2}\Omega_{MN}\epsilon^{ijk} \d_i\mathcal
    F_{jk}^N,\\
 &   \vddl{\mathcal L_{DP}}{A^M_i}=\Omega_{MN}\d_0
    \mathcal B^{Ni}-\epsilon^{ijk}\d_j(\mathcal M_{MN} \mathcal B^N_k)
    =\half\Omega_{MN}\epsilon^{ijk} \d_0\mathcal F_{jk}^N-\d_j
    (\mathcal M_{MN}\mathcal F^{Nij}).
\end{split}
\end{equation}
The first set of equations then allows one to eliminate the momenta
$\pi_M$ by their own equations of motion. When $\Delta$ is invertible,
the second and third set of equations allow one to solve $D^M$ and
$A_0^M$ by their own equations of motion in the action, which yields
\eqref{eq:symlag2}. It is in this sense that these variants of the
double potential formalism are equivalent, but of course not locally
so. The third and fourth set of equations can be written as
\begin{equation}
  \label{eq:55}
 \vddl{\mathcal L_{DP}}{A^M_\mu}=\d_\nu\star G^{\mu\nu}_M,
\end{equation}
when defining
\begin{equation}
\star G^{i0}_{M}=\half\Omega_{MN}\epsilon^{ijk} \mathcal F_{jk}^N,
  \quad \star G_{M}^{ij}=-\mathcal M_{MN}\mathcal F^{Nij}.
\end{equation}

This definition implies that the components of
\be
G_M=\half G_{Mjk} dx^jdx^k+G_{Mi0}dx^idx^0
\ee
are explicitly given by
\begin{equation}
  \label{eq:57}
  G_{Mjk}=-\Omega_{MN}\mathcal F_{jk}^N,\quad G_{Mi0}=\half
  \epsilon_{ijk}\mathcal M_{MN}\mathcal F^{Njk}.
\end{equation}
After elimination of the $\pi_M$, the action of the theory can
then also be written as the integral of 
$\mathcal{L}_0=\mathcal{L}_{ES}+ \mathcal{L}_V$ with
\begin{equation}
  \mathcal{L}_{ES}=\mathcal L_S-\frac{1}{2}\partial_\mu
  D^M\mathcal{M}_{MN}\partial^\mu D^N-\mathcal V,
  \quad d^4x \mathcal{L}_V=\int_0^1\frac{dt}{t} [G_MF^M][tA^M,D^M,\phi^i].
\end{equation}
so that the scalar sector has been enlarged to $\phi^m=(\phi^i,D^M)$
and the scalar metric and potential are now
$(g_{ij},\mathcal M_{MN})$, respectively $(V,\mathcal V)$. It is thus
a particular case of the actions of the form
\eqref{4.4} studied in Section \ref{sec:gauging}.

\subsection{Local BRST cohomology}
\label{sec:local1st}

The master action is given by 
\begin{equation}
  S=\int d^4x\,[ \mathcal L_0+A^{*\mu}_M\partial_\mu C^M]
\end{equation}
with an antifield and ghost sector that is doubled as compared to the
second order covariant formulation.

We then can copy previous results:

(i) $H^g(s)=0$ for $g\leq -3$.

(ii) $H^{-2}(s)$ is doubled: $U^{-2}=\mu^M d^4x C^*_M$ with descent equation

\begin{equation}
s\, d^4x\, C^*_M+d \star A^*_M=0,\quad s \star A^*_M+d
G_M=0,\quad sG_M=0.\label{eq:14bis}
\end{equation}
Characteristic cohomology $H^{n-2}_W(d)$ is then represented by the
2-forms $\mu^MG_M$.

For $g\geq -1$, the discussion in terms of $U$, $W$, $V$ types is the
same as before with indices $I,J,K,\dots \to M,N,O,\dots$ on vector
potentials, ghosts and their antifields, and
$i,j,k\dots\to m,n,o\dots$ on scalar fields.

The obstruction equation for symmetries, equation \eqref{eq:17a},
becomes
\begin{multline}
  \label{eq:17c}
  G_M F^N {f^M}_N+F^MF^Nh_{MN}+dI^{n-1}\\
-(dG_Mg^{M}+[d(g_{ij} \star
  d\phi^j)+\star
  \d_i(\mathcal{L}_{ES}+\mathcal{L}_V)]\Phi^{i}
  +\vddl{\mathcal L_0}{D^M} \Phi^{M})
=0.
\end{multline}

\subsection{Constraints on $W,U$-type cohomology}
\label{sec:absence-w-type}

When there is no explicit $x^\mu$ dependence and $V=0=\mathcal V$,
putting all derivatives of $F^M_{\mu\nu},\phi^i,D^M$ to zero, one
remains with
\begin{equation}
  \label{eq:18c}
  {G_M}|_{{\rm der}=0} F^N{f^M}_N+F^MF^Nh_{MN}
-\star \d_m \mathcal{L}_V
\Phi^{m}|_{{\rm der}=0}=0,
\end{equation}
where ${G_M}|_{{\rm der}=0}$ amounts to replacing $\mathcal
F^M_{\mu\nu}$ by $F^M_{\mu\nu}$ in \eqref{eq:57}. 

Using
$-\d_m\star \mathcal{L}_V=\delta^i_m\half\d_i G_M|_{{\rm der}=0} F^M$,
and the decomposition 
$\Phi^{m}|_{{\rm der}=0}=\Phi^{m}_0+\Phi^{m}_1+\dots$, where the
$\Phi^{m}_{n}$ depend on undifferentiated scalar fields and are
homogeneous of degree $n$ in $F^M_{\mu\nu}$, the
equation implies
\begin{equation}
  \label{eq:26c}
G_M|_{{\rm der}=0}F^N {f^M}_N
+F^MF^Nh_{MN}+\half
  \d_iG_M|_{{\rm der}=0}F^M\Phi^{i}_0=0.
\end{equation}
When taking account that
\begin{equation}
  G_M|_{{\rm der}=0} F^N=d^4x\frac{1}{2} [\Omega_{OM}\epsilon^{ijk}
  F^O_{jk} F^N_{0i}-  \mathcal
  M_{MO} F^O_{jk} F^{Njk}],\label{eq:56}
\end{equation}
this gives an equation of the type
\begin{equation}
  \label{eq:28b}
  \frac{1}{4} d^4x[  \mathcal O_{MN}(\phi)\epsilon^{ijk}
  F^M_{jk} F^N_{0i}-\mathcal P_{MN}(\phi)F^M_{jk}F^{Njk}]=0, 
\end{equation}
where 
\begin{equation}
  \label{eq:29a}
  \mathcal P_{MN}=2\mathcal M_{O(M}{f^O}_{N)}
+ \d_i\mathcal
M_{MN}\Phi^{i}_0, 
\end{equation}
\begin{equation}
  \label{eq:27b}
  \quad  \mathcal
  O_{MN}=2\Omega_{MO}{f^O}_N+2h_{MN},
\end{equation}
and $h_{MN}=h_{NM}$ on account of \eqref{Y1}.  Note that there is one
less term as compared to \eqref{eq:27} since the kinetic term does not
depend on the scalars and also that $\mathcal O_{MN}$ is not
symmetric.

Now both terms have to vanish separately because they involve
different field strengths, 
\begin{equation}
\mathcal P_{MN}=0,\quad \mathcal O_{MN}=0.\label{eq:19a}
\end{equation}
Setting $\phi^i=0=D^M$ then gives
\begin{equation}
\begin{split}
  & f^{(\mathcal M(0))}_{MN}+f^{(\mathcal M(0))}_{NM}
  +(\d_i\mathcal M_{MN})(0)\Phi^{i}_0(0)=0,\\ 
& f^{(\Omega)}_{MN}=-h_{MN}, \label{eq:35a}
\end{split}
\end{equation}
with $f^{(\Omega)}_{MN}=\Omega_{MO}{f^O}_N$. 
Consider first symmetries of $W$-type, i.e., take the case when
the $f$'s vanish. The first equation is then satisfied with
$\Phi^i_0(0)=0$, while the second equation then requires
$h_{MN}$ to vanish. This implies:

{\it There are neither $W$-type symmetries nor $W$-type cohomology in ghost
  numbers $g\geq 0$ for the first order model.}

As a consequence, $h_{MN}={h_u}_{MN}$, and the second of equation \eqref{eq:35a} is
equivalent to 
\begin{equation}
{f_u}^{(\Omega)}_{[MN]}=0,\quad {f_u}^{(\Omega)}_{(MN)}=-{h_u}_{MN}\label{eq:35}.
\end{equation}
It follows that:

{\it The algebra $\mathfrak{g}_U$ is the largest sub-algebra of
  $\mathfrak{sp}(2n_v,\mathbb R)$ that can be turned into symmetries of
  the full theory. All non-trivial $U$-type symmetries require a non-vanishing
  ${h_u}_{MN}$ and thus involve a Chern-Simons term in their Noether
  currents.}

On its own, the first equation of \eqref{eq:35a} is solved for
skew-symmetric $f^{(\mathcal M(0))}_{MN}$ with vanishing
$\Phi^i(0)$. Symmetric $f^{(\mathcal M(0))}_{MN}$ needs a non trivial scalar
symmetry.

For $U$-type cohomologies in higher ghost number $g\geq 0$, the
$k^u_{O_1\dots O_{g+1}}$ tensor has to satisfy \eqref{linf},
which becomes
\begin{equation}
  \label{eq:31}
  {f_u}^{(\Omega)}_{M(N}k^u_{O_1)\dots O_{g+1}}=0.
\end{equation}
The object
$D_{MO_1NO_2\dots O_{g+1}}={f_u}^{(\Omega)}_{MN}k^u_{O_1\dots
  O_{g+1}}$ is then symmetric in the first and third indices because
${f_u}^{(\Omega)}_{MN}$ is symmetric, and antisymmetric in the second
and third indices on account of \eqref{eq:31}. It thus has to vanish,
\begin{multline}
  D_{MO_1NO_2\dots O_{g+1}} =D_{NO_1MO_2\dots O_{g+1}} =-D_{NMO_1O_2\dots
    O_{g+1}}   =-D_{O_1MNO_2\dots O_{g+1}}\\
  =D_{O_1NMO_2\dots
  O_{g+1}} =D_{MNO_1O_2\dots O_{g+1}} =-D_{MO_1NO_2\dots O_{g+1}}.
\end{multline}
It follows that $k^u_{O_1\dots O_{g+1}}=0$:

{\it There are no $U$-type cohomology classes in ghost number $g \geq 0$.}

In particular, there are no $U$-type gaugings even though there are
$U$-type symmetries. We thus recover the results on gaugings of
\cite{Bunster:2010wv} from the current perspective.

\subsection{Remarks on $GL(2n_v)$ transformations}

The two remarks on linear changes of variables from Section
\ref{sec:frame-transf} also apply in the first order case. More
precisely, the second remark can be rephrased as follows. 

The general discussion of the structure of the BRST cohomology of the
first order model in Sections \ref{sec:local1st} and
\ref{sec:absence-w-type} goes through unchanged for arbitrary
skew-symmetric non-degenerate $\Omega_{MN}$ and symmetric
non-degenerate $\mathcal M_{MN}$. The local BRST cohomology for sets
of $\Omega_{MN}$, $\mathcal M_{MN}$ related by $GL(2n_v, \mathbb R)$
transformations will be isomorphic, whereas explicit results for the
local BRST cohomology do depend on the equivalence classes. For
instance for the symmetries, this is the case when explicitly solving
the obstruction equation \eqref{eq:17c}.
As concerns $\Omega_{MN}$, there is just one equivalence class since
all such matrices are related to a canonical $\Omega_{MN}$, say
$\Omega_{MN}=\delta_{IJ}\epsilon_{ab}$, by a $GL(2n_v, \mathbb{R})$
transformation. Hence, one can restrict oneself to equivalence classes
of $\Omega_{MN}$, $\mathcal M_{MN}$ with canonical $\Omega_{MN}$, and
$\mathcal M_{MN}$'s related by $Sp(2n_v,\mathbb R)$ changes of variables.

The first remark of Section \ref{sec:frame-transf} then boils down to
the statement that the algebra $\mathfrak{g}_U$ is the largest
sub-algebra of $\mathfrak{sp}(2n_v,\mathbb R)$ that can be turned into
symmetries of the full theory, in agreement with the discussion of the
previous section. In addition we have recovered there the result that
the gauge algebra remains abelian.

\subsection{Application to the bosonic sector of $\mathcal N=4$
  supergravity}
\label{sec:application1}

For definiteness, let us again concentrate on the bosonic sector of
four dimensional supergravity, without gravity. As in Section
\ref{sec:N=4}, we use the standard second order formulation for the
$SL(2,\mathbb R)/SO(2)$ sigma model. Alternatively, one could use a
first order formulation in terms of fields parametrizing
$SL(2,\mathbb R)$, with a first class constraint eliminating the field
for the $SO(2)$ subgroup. It would provide a first order formulation
for all fields and make all global symmetries manifest.

To this scalar action, we first couple one vector field, i.e. add the
action associated to \eqref{eq:49} where $\mathcal V=0$, the indices
$M,N$ take two values $a,b$, $\Omega_{ab}=\epsilon_{ab}$, and
$\mathcal M_{ab}=M^{-1}_{ab}$. The matrix $M$ and its inverse are
given by
\begin{equation}
  \label{eq:2a}
  {M}=\begin{pmatrix} e^\phi & \chi e^\phi \\
\chi e^\phi & \chi^2 e^\phi +e^{-\phi}
\end{pmatrix},\quad 
{{M^{-1}}}=\begin{pmatrix} \chi^2 e^\phi +e^{-\phi} & -\chi e^\phi \\
-\chi e^\phi & e^{\phi}
\end{pmatrix}
\end{equation}
and are such that $M$ transforms as $M\to g^TMg $ under an
$SL(2,\mathbb R)$ transformation. The model is invariant under
$SL(2,\mathbb R)$ if the other fields transform as $A^a\to (g^T A)^a$,
$D^a\to (g^T D)^a$, $\pi_a\to (g^{-1}\pi)_a$ because $SL(2,\mathbb R)$
transformations are symplectic, $g\epsilon g^T=\epsilon$.

For the $U$-type symmetries, equation \eqref{eq:35} requires
$f^{(\epsilon)}_{ab}=\epsilon_{ac} {f^c}_b$ to be symmetric, so there
are at most 3 linearly independent solutions. According to the above
discussion, all of these give rise to symmetries, which need
${h_u}_{ab}$ and also $\Phi^i_u$. The $U$-type symmetries constitute
the $\mathfrak{sl}(2,\mathbb R)$
electric symmetry algebra. 

We now consider the coupling to six vector fields in the different
formulations of Section \ref{sec:N=4}.  For the $SO(6)$ invariant
model, $M=(I,a)$, $\Omega_{MN}=\delta_{IJ}\epsilon_{ab}$ and
$\mathcal M_{MN}= \delta_{IJ}{M^{-1}}_{ab}$, while the dual
formulation corresponds to $\mathcal M_{MN}=
\delta_{IJ}M_{ab}$. Finally, in the $SO(3)\times SO(3)$ formulation
$\mathcal M_{MN}=(M^{-1}_{ab}\delta_{AB},M_{ab}\delta_{A'B'})$.

It then follows from \eqref{eq:35a} that both in the $SO(6)$ invariant
formulation and in the dual formulation, the electric symmetry algebra
is $\mathfrak{sl}(2,\mathbb R)\oplus \mathfrak{so}(6)$, where the
$\mathfrak{sl}(2,\mathbb R)$ transformations on the vectors $A^{(I,a)}$
and on $D^{(I,a)},\pi_{(J,b)}$ in the dual formulation corresponds to
the infinitesimal version of the above transformations where
$g^T\to g^{-1}$.

Finally, in the $SO(3)\times SO(3)$ formulation, the electric symmetry
algebra is also $\mathfrak{sl}(2,\mathbb
R)\oplus\mathfrak{so}(6)$. This is so because the $SL(2,\mathbb R)$
element $\epsilon$ is such that $M=\epsilon^TM^{-1} \epsilon$.

\chapter{Conclusions and Prospects}\label{ch:conclusion}

In \textbf{Part} \ref{part1} of this thesis, we discussed some aspects of timelike duality in the context of exotic versions of maximal supergravity. We first described the extension of Adams' theorem, which states that the only spheres which admit global parallelism are
$S^0$, $S^1$, $S^3$ and $S^7$, to the pseudo-Riemannian geometry. There we argued that in three-dimensions, 
the pseudo-heperbolic manifold $H^{2,1}=SO(2,2)/SO(2,1)$, commonly known as AdS$_3$, and
the pseudo-sphere $S^{1,2}=SO(2,2)/SO(1,2)$ are parallelizable. These two manifolds are interchangeable under the 
metric parity ($g_{\mu\nu}\rightarrow -g_{\mu\nu}$) which changes the sign of the curvature. For that reason, sometimes in literature $S^{1,2}$ is called $-\textrm{AdS}_3$ or AAdS$_3$.
In seven-dimensions, the manifolds $S^{3,4}=SO(4,4)/SO(3,4)$ and $H^{4,3}=SO(4,4)/SO(4,3)$ are parallelizable. Once again, we can find one from one another by metric reversal. We discussed as well the construction of parallelizable seven-(pseudo-)spheres in the presence of torsion (or similarly with special holonomy) which are the non-symmetric coset spaces $Spin^+(3,4)/G^*_{2,2}$ and $Spin^+(4,3)/G^*_{2,2}$.

We discussed the $S^7$ manifold as it plays an important role in the compactification of eleven-dimensional supergravity, where one can find an $SO(8)$ gauged maximal supergravity in AdS$_4$ background via Freund-Rubin compactification on (round) $S^7$ or a $Spin(7)$ gauged non-supersymmetric theory in AdS$_4$ background via Englert compactification on a parallellized squashed $S^7=Spin(7)/G_2$. We discussed the effect of timelike T-duality which propounds, beside the standard $M$-theory, the presence of $M'$ and $M^*$-theories which are dual to the standard theory. We showed that the compactification on parallelizable $S^{3,4}$ is possible if one works in the picture described by $M'$ dual theory \cite{Henneaux:2017afd}. \textit{This solution similarly to the Englert one will result in a non-supersymmetric theory in AdS$_4$ background.} Besides the known Englert solution, the compactification on parallelizable 7-manifolds for the other supergravity theories in eleven-dimensions (the low energy limit of standard $M$- and $M^*$-theories) is not possible, however one can see that there are plenty of solutions as a result of Freund-Rubin compactification of all dual $M$ theories which results in four dimensional theories in non-AdS backgrounds.

In \textbf{Part} \ref{part2}, we investigated gaugings of models inspired by the vector sector of extended supergravities. We first explained the changes that were brought upon by considering scalar-vector couplings of particular non-minimal type. It enhances the electric-magnetic duality group to a subgroup of symplectic group $Sp(2n_v,\mathbb{R})$ for a theory with $n_v$ vectors coupled to scalars. Then we discussed the interplay between the first-order and second-order Lagrangians where the electric-magnetic duality is manifest in the former while only part of electric-magnetic duality acts locally in the latter and is manifest. However, the first-order formulation is more rigid; we presented a proof that there is no Yang-Mills deformations of first-order Lagrangians in the presence of scalars \cite{Henneaux:2017kbx}. A result that in fact is true regardless of the presence of scalars. 

Given the existence of the embedding tensor formalism to construct a gauged supergravity by minimally coupling scalars to vectors and promoting the vector fields to non-abelian connections on the fibre bundle (and adding the appropriate interactions to render the Lagrangian invariant), we showed that not only the standard second-order action with the appropriate choice of the corresponding duality frame is equivalent with the undeformed limit of the embedding tensor Lagrangian but also that the space of local deformations of both theories are isomorphic. In fact, all local cohomology groups $H^g(s\vert d)$ are isomorphic in both formulations. In particular the potential anomalies (cohomology at ghost number one) are also the same.

We then extended our analysis to examine the possibility of those deformations that change the gauge symmetry while keeping
the gauged algebra untouched. We have systematically analyzed gaugings of vector-scalar
models through a standard deformation theoretic approach. In the case
of gauge systems, this is most naturally done in the BV-BRST antifield
formalism. 

\textit{We have shown that different types of symmetries behave
differently when one tries to gauge them. The method allows one to
find all the infinitesimal gaugings and higher order cohomology
classes once all symmetries are known.}

We classified the symmetries into $U$, $W$ and $V$-types, where the
\textbf{$U$-type} corresponds to those deformations that deform the abelian
gauge algebra, {\textbf{$W$-type} symmetries contain the topological gaugings
of \cite{deWit:1987ph} and \textbf{$V$-type} symmetries contain gauge invariant couplings and admit gauge-invariant Noether currents. The Noether currents of $U$ and $W$-type symmetries cannot be made gauge invariant. 

For the models explicitly considered, we have found that the only possible gaugings
of $U$ and $W$-types are the ones previously considered in the literature, namely
Yang-Mills and topological couplings among the gauge fields, with minimal couplings of the scalars.

We have shown in Section \ref{sec:AntiMap} that given the graded structure of the antibracket
map, the leading obstruction to extend first order deformations of
$U$-type to second order, leading to the Jacobi identity for the
structure constants, cannot be eliminated by adding $V$-terms.
Furthermore, in some cases, for instance when one imposes Poincar\'e
invariance as relevant to relativistic theories, the $V$-type
symmetries can be shown to be absent \cite{Torre:1994kb}. It turns out
that the effect of coupling the models to Einstein gravity justifies
this assumption \cite{Barnich:1995ap} and simplifies the problem. 

We have analyzed the problem in the second order Lagrangian and in the
first order manifestly duality invariant formulation, both of which
are non-locally related in space (but not in time). The results are
very different: whereas the former formulation allows for standard
gaugings, the latter formulation allows for more (generalized)
symmetries of $U$-type, but none of those can be gauged. 

\subsection*{Prospects}

As the final remark, we list the future research directions inspired by results of the thesis:

1) The no-go theorem which states that there is no local deformation available in the first-order formulation is highly based on the locality present in the formulation. The inevitable result that in all examples presented in the thesis, one could not gauge any part of global symmetry but the U-type one corresponding to the Yang-Mills deformations, can also be related to the local framework of our BV-BRST analysis. In order to go past such no-go results, one should presumably try to work in a controlled way with deformations that are spatially non-local. This as well may open a window to discuss the gaugings in the presence of sources, most importantly in the presence of Dirac magnetic monopoles, since it requires going beyond locality, see the comment under the equation \eqref{eq:EM-dual_trans_current} in Chapter \ref{ch:EM_duality}.

2) It is interesting to extend the analysis of gauging of the bosonic sector of supergravity to include the fermion sector as well. Even though, one may in general expect no new constraints but it is worth to see explicitly that the supersymmetric extension does not require any extra constraints. There has been works along this way \cite{Boulanger:2018fei}, where the deformation of a free theory in the presence of both bosons and fermions were considered. It is then intriguing to consider scalar coupling from the beginning as it enhances the electric-magnetic duality symmetry group and to analyze the deformations of such a theory.  

3) In any dimensions, the positive level generators of the very extended Kac–Moody algebra $E_{11}$ with completely antisymmetric spacetime indices are associated with the form fields of the corresponding maximal supergravity. In \cite{Bergshoeff:2007vb}, a correspondence between the gaugings of maximal and half-maximal supergravity and certain generators of $E_{11}$ Kac-Moody algebra has been developed. It has been generalized in \cite{Riccioni:2010xx,Riccioni:2010fc} to include the gauging of local scaling symmetry (Trombone symmetry) \cite{Cremmer:1997xj} of maximal supergravity. Later in \cite{LeDiffon:2011wt}, it was shown how it can be realized in the context of embedding tensor formalism. Since we have established a relation between our result and the one of embedding tensor formalism, it would be interesting to consider how the above correspondence is realized in our framework.

4) In Chapter \ref{ch:exotic}, we considered the solutions to supergravity theories in $6+5$ dimensions with a non-degenerate metric. It may be of interest to consider a case where the eleven-dimensional metric is not of maximal rank. See for example \cite{Dragon:1991fn} where similar question has been addressed in the context of standard supergravity.

5) We discussed in Chapter \ref{ch:exotic} that there are Freund-Rubin solutions for each timelike T-dual $M$-theories. Given the standard $M$-theory in $10+1$ dimensions, the corresponding Freund-Rubin solution is a supergravity theory in $AdS_4$ background spacetime. There is a superconformal field theory in three-dimensions known as ABJM theory; an $\mathcal{N}=6$ supersymmetric Chern-Simons-matter theory \cite{Aharony:2008ug}. It was shown that the ABJM theory can be seen as the large N limit of a stack of N $M_2$-branes sitting at the orbifold singularity $\mathbb{Z}_k$ such that it can be realized as the CFT dual to maximal supergravity in the near horizon geometry $AdS_4 \times S^7/{\mathbb{Z}_k}$. Therefore one finds an example of AdS/CFT correspondence among the very few existing examples.

Given the existence of Freund-Rubin solution of $M'$- and $M^*$-theories on $AdS_4 \times S^{3,4}$ and $-AdS_4 \times -dS_7$ respectively, it would be interesting to build the corresponding three-dimensional superconformal field theories. This would then provide us with new examples of AdS/CFT correspondence.


\appendix
\chapter{Notation}\label{app-notation}

The following notation, beside the definition of Levi-Civita tensor, is specifically used in the Part \ref{part1}. We generally follow the notation of \cite{freedman2012supergravity}, when we discuss supergravity.

\begin{itemize}

\item
$(s,t)$: Signature of a spacetime with
\begin{itemize}
    \item 
    $s$: Number of space directions of the ambient space manifold
    \item
    $t$: Number of time directions of the ambient space manifold with $t=T+T'$
\end{itemize}

\item
$T$: Number of time directions of the internal manifold with 

\item
$T'$: Number of time directions of spacetime ``background" manifold

\item
$\eta_{ab}$: the metric on the tangent space $T\mathcal{M}$ of a manifold $\mathcal M$,  where $a,b=1,2,..., dim(T\mathcal{M})$.

\item
$g_{MN} =\left( \begin{smallmatrix} g_{\mu\nu} & 0 \\  0 & g_{mn} \end{smallmatrix} \right)$: the metric on $11$ dimensional ambient space, $M,N=0,...,10$.

\item
$g_{mn}$: the metric on seven dimensional internal space, $m,n=4,...,10$.

\item
$g_{\mu\nu}$: the metric on four dimensional spacetime, $\mu,\nu=0,...,3$.

\item
$F_{MNPQ}=4 \partial_{[M}A_{NPQ]}$.

\item
$\varepsilon_{M_1...M_{11}}$: the eleven dimensional Levi-Civita antisymmetric covariant tensor density of rank eleven and weight $-1$,  with $\varepsilon_{01234...}=+1$.

\item
$\varepsilon^{M_1...M_{11}}$: the eleven dimensional Levi-Civita antisymmetric contravariant tensor density of rank eleven and weight $+1$,  with $\varepsilon^{01234...}=(-1)^t$.  This choice is such that
$$\f{1}{\sqrt{\vert g^{11}\vert}}\varepsilon^{M_1 ... M_p}= \sqrt{\vert g^{11}\vert}  g^{M_1 N_1} g^{M_2 N_2}\cdots g^{M_{11} N_{11}} \varepsilon_{N_1 ... N_p},$$  where $g^{11}$ is the determinant of $g_{MN}$.  

\item
\begin{align}
\varepsilon_{M_1...M_k M_{k+1}...M_{k+p}}\varepsilon^{M_1...M_k N_{k+1}...N_{k+p}}&=(-1)^t\, k!\,\delta^{N_{k+1}...N_{k+p}}_{M_{k+1}...M_{k+p}}\nonumber\\
&=(-1)^t\, k!\, p!\, \delta^{N_{k+1}}_{[M_{k+1}}\delta^{N_{k+2}}_{M_{k+2}}...\delta^{N_{k+p}}_{M_{k+p}]}.\nonumber
\end{align}

\item
$SO(p,q)$: the group of all transformations which leaves invariant the bilinear form $\eta_{p,q}=\sum_{i=1}^{p} dx_i^2 - \sum_{j=1}^{q} dx_j^2$.

\item
The pseudo-sphere $S^{p-1,q}=SO(p,q)/SO(p-1,q)$ ($p \geq 1, q \geq 0$) is a manifold with induced metric of  signature $(p-1,q)$ and  positive curvature.  In particular, $S^{p-1,0}$ is the standard $(p-1)$-sphere $S^{p-1}$.

\item
The  pseudo-hyperbolic space $H^{p,q-1}=SO(p,q)/SO(p,q-1)$ ($p \geq 0, q \geq 1$) is a manifold with induced metric of  signature $(p,q-1)$ and negative curvature.  While one removes a spacelike direction from the ambient space to get $S^{p-1,q}$, one removes a timelike direction for $H^{p, q-1}$ (see Appendix \ref{App:pseudo} for an explicit example).  In particular, $H^{p,0}$ is the negative curvature standard hyperbolic space $H^p$ of dimension $p$. 

\end{itemize}

The following notations are specifically used in Part \ref{part2}:

\begin{itemize}
    \item 
    $H^{g,n}(s\vert d)$: BRST cohomology at ghost number $g$ in the space of local functional $n$-forms.
    
    \item
    $H^{g}(s\vert d)$: BRST cohomology at ghost number $g$ in the space of local functional of a top form-degree.
    
    \item
    $H^{n+g}_{\textrm{char}}(d)$: Characteristic cohomology
    
    \item
    $H^{g}(s|d)\simeq H^{n+g}_{\rm char}(d)$ for $g\leq -1\,$.
    
    \item
    The Section \ref{sec:first-order-actions}, has the opposite notation for Levi-Civita symbol compared to the rest of the thesis, i.e. $\varepsilon_{0123}=-1$. 
    
\end{itemize}

\chapter{Octonion and Split Octonion}\label{app:quaternion_octonion}

In this appendix, we give a quick review of Cayley-Dickson construction of division algebras. We discuss the algebra of octonions and split octonions and
we present the corresponding Fano planes.

\section{The Cayley-Dickson Construction of Divison Algebras}

The Cayley-Dickson construction of the normed division algebras is a powerful tool as it can nicely explain
the reason that why each of the normed division algebras can be obtained as a subalgebra of a larger normed division algebra. It also gives a manifestation of why as we go up in the dimension of division algebras, the algebras lose realness condition, commutativity and associativity gradually.

Let's consider $a,b\in \mathbb{R}$, then the complex number $a+bi$ can be thought of as a pair $(a, b)$. The addition and multiplication for any two complex numbers $(a,b)$ and $(c,d)$ is defined as
\begin{align}
(a,b) + (c,d) &= (a+c,b+d),\label{complex-add}\\
(a, b)  (c, d) &= (ac-db, ad+cb).\label{complex-mult}
\end{align}
The complex conjugate of a complex number thus is given by
\be
\Lbar{(a,b)} = (a,-b).
\ee
Having defined the complex numbers, we can define the quaternions in a similar way. A
quaternion can be thought of as a pair of complex numbers. Addition is again like \eqref{complex-add}, and
the multiplication is defined for $a,b \in \mathbb{C}$ as
\be\label{quater-mult}
(a, b)(c, d) = (ac-\bar{d}b , ad +b\bar{c}).
\ee
The quaternion conjugate is given by
\be\label{quater-conj}
\Lbar{(a, b)} = (\bar{a} , -b).
\ee
If one repeats the same procedure once again, the algebra of octonions is obtained.
This iterative method of building up normed division algebras is called the Cayley-Dickson construction. If one continues to build up algebras by this construction,
all algebras larger than octonions will be normed but not division algebras \cite{Baez:2002}.

\section{Octonions}

Let $a$ and $b$ be two quaternions, $a = a_0 + a_1 i + a_2 j + a_3 k$, $b = b_0 + b_1 i + b_2 j + b_3 k$ ($a_i, b_i \in \mathbb{R}$, $i^2 = j^2 = k^2 = -1$, $ij = k = - ji$ etc). As we explained in the previous section, the octonions are pairs of quaternions $a + b \ell $, where $\ell $ is a new element, for which one defines the multiplication as \eqref{quater-mult} or explicitly as
\be 
(a +  b \ell) (c +  d \ell ) = ac + (\ell)^2 \bar{d} b +  (d a + b \bar{c}) \ell, \label{oct-mult}
\ee
where
\be \ell^2 = -1.
\ee
In the equation (\ref{oct-mult}), the overbar denotes quaternionic conjugation defined in \eqref{quater-conj}, $\bar{d} = d_0 - d_1 i - d_2 j - d_3 k$.

Considering the basis $\{e_0, e_1, ... , e_7\}$ the algebra of unit octonion is defined as
\begin{align}
e_0 e_i &= e_i e_0 = e_i,\\
e_i e_j &= -\delta_{ij} e_0 + f_{ijk} e_k,
\end{align}
where $f_{ijk}$ is a totally antisymmetric tensor and
\be
f_{ijk}=+1 \quad \textrm{for} \quad (ijk) = \{(123), (145), (176), (246), (257), (347),(365)\}.
\ee
One should notice that the above definition of $f_{ijk}$ is not unique but fixing the unit element $e_0=1$, then all different definitions are isomorphic. 

There is a mnemonic representation of the multiplication of octonions which is called Fano plane and is the simplest example of the projective plane. The Fano plane for the octonions is depicted in Figure \ref{fig:Fano_plane_oct} and one can easily find the other isomorphic tables using the symmetries of Fano plane.
\begin{figure}[htbp]
\centering
\def\svgscale{0.6}
\input{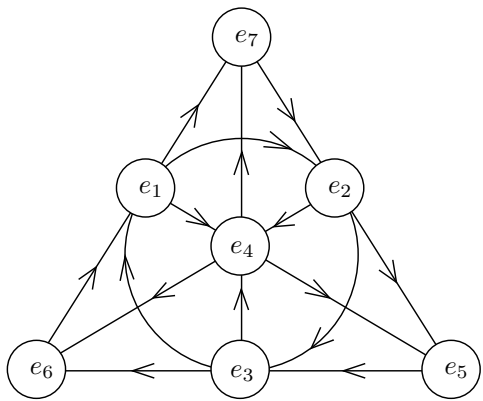}
    \caption{The Fano plane for the octonions multiplication table.}\label{fig:Fano_plane_oct}
\end{figure}

\section{Split Octonions}\label{app:split_oct}

Let $a$ and $b$ be two quaternions, $a = a_0 + a_1 i + a_2 j + a_3 k$, $b = b_0 + b_1 i + b_2 j + b_3 k$ ($a_i, b_i \in \mathbb{R}$, $i^2 = j^2 = k^2 = -1$, $ij = k = - ji$ etc). The split octonions are pairs of quaternions $a +   b \ell $, where $\ell $ is a new element, for which one defines the product as
\be 
(a +  b \ell) (c +  d \ell ) = ac + (\ell)^2 \bar{d} b +  (d a + b \bar{c}) \ell, \label{Product}
\ee
where
\be \ell^2 = +1,
\ee
instead of $-1$ as for the standard octonions.  In (\ref{Product}), the overbar denotes quaternionic conjugation, $\bar{d} = d_0 - d_1 i - d_2 j - d_3 k$.

Setting $e_0 = 1$, $e_1 = i$, $e_2 = j$, $e_3 = k$, $e_4 = \ell$, $e_5 = \ell i$, $e_6 = \ell j$ and $e_7 = \ell k$, one gets the same products $e_i e_j$ as for the standard octonions, except when two $\ell$'s are involved, in which case one gets an additional minus sign. For instance $e_5 e_4 = - e_1$ as for octonions, but $e_5 e_1 = e_4$ while it is $-e_4$ for standard octonions.  Similarly, $e_4^2 = e_5^2 = e_6^2 = e_7^2 = +1$ instead of $-1$.

One defines octonionic conjugation for a general split octonion
\be
x =  x^0 e_0 + x^1 e_1 + x^2 e_2 + x^3 e_3 + x^4 e_4 + x^5 e_5 + x^6 e_6 + x^7 e_7 , \; \; \; x^i \in \mathbb{R}
\ee
as
\be
\bar{x}  =  x^0 e_0 - x^1 e_1 - x^2 e_2 - x^3 e_3 - x^4 e_4 - x^5 e_5 - x^6 e_6 - x^7 e_7.
\ee
One has
\be
\overline{xy} = \bar{y} \bar{x},
\ee
and of course $\overline{\bar{x}} = x$.
A scalar product can then be introduced as
\be
(x,y) = \frac12 \left( x \bar{y}+ y \bar{x} \right) = \frac12 \left( \bar{x} y+ \bar{y} x \right).
\ee
Explicitly, one finds
\begin{eqnarray}
(x,y) &=& x^0 y^0 + x^1 y^1 + x^2 y^2 + x^3 y^3 - x^4 y^4 - x^5 y^5 - x^6 y^6 - x^7 y^7 \\
&=& x^0 y^0 + \sum_{i=1}^7 \eta_{ij} x^i y^j,
\end{eqnarray}
where $\eta_{ij}$ is the flat metric with signature $(3,4)$.

An octonion is pure imaginary if $\bar{x} = - x$.  The unit octonions $e_i$ are pure imaginary.  The pure imaginary condition is equivalent to  $(e_0, x) = 0$. 

One can rewrite the product of the unit split octonions $e_i$ as
\be
e_i e_j = - \eta_{ij} e_0+ f\indices{^k_{ij}} e_k,
\ee
where the structure constants $f\indices{^k_{ij}}$ are such that the $f_{kij} \equiv \eta_{km} f\indices{^m_{ij}}$ are completely antisymmetric in $(i,j,k)$. The tensor $f_{ijk}$ has value $+1$ for $(ijk) = (123)$, $(145)$, $(167)$, $(246)$, $(275)$, $(347)$ and $(356)$ (and cyclic permutations) and it vanishes otherwise. Similarly to octonions, there is a mnemonic picture of the multiplication table of split octonions. The corresponding Fano plane has been given in Figure \ref{fig:Fano_plane_sp_oct}, where once again one can read off all possible multiplication tables using the symmetries of the Fano plane. The ordering of $abc$ for the non-zero components of $f_{abc}$ for the split octonions can be obtained from its octonionic counterpart and reversing the diagonal arrows in the Fano plane.

\begin{figure}[htbp]
\centering
\def\svgscale{0.6}
\input{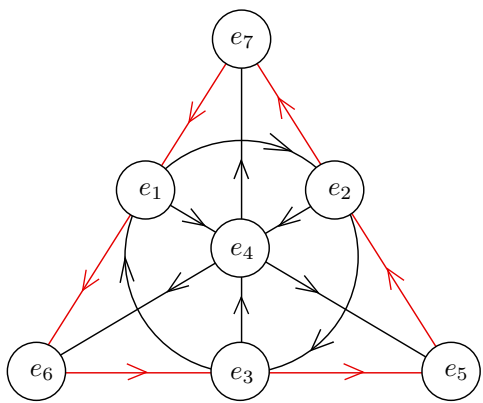}
    \caption{The Fano plane for split octonions multiplication table. This Fano plane can be obtained from its octonionic counterpart by reversing the surrounding arrows, shown in red, in the Fano plane.}\label{fig:Fano_plane_sp_oct}
\end{figure}

The squared norm $N(x) \equiv (x,x)$ of $x$ reads
\begin{eqnarray}
N(x)  &=& \left( x^0 \right)^2 +  \left( x^1 \right)^2 +  \left( x^2 \right)^2 +  \left( x^3 \right)^2 -  \left( x^4 \right)^2 - \left( x^5 \right)^2 -  \left( x^6 \right)^2 -  \left( x^7 \right)^2 \nonumber\\
&=& \eta_{\lambda \mu} x^\lambda x^\mu, \; \; \; \; (\lambda, \mu = 0, 1, \cdots, 7)
\end{eqnarray}
where $\eta_{\lambda \mu}$ is the flat metric with signature $(4,4)$.  The split octonions form a composition algebra, i.e., 
\be
N(xy) = N(x) N(y).
\ee

Just as the standard octonions, the split octonions do not form an associative algebra.  However, in the same way as for the standard octonions, the associator $[x,y,z]$ of three split octonions $x$, $y$, $z$, defined through
\be
(x,y,z) = (xy)z - x(yz),
\ee 
is an alternating function of $x$, $y$, $z$, i.e. 
\be
(x,y,z) = (y,z,x) = (z,x,y) = - (y,x,z) = -(x,z,y) = -(z,y,x).
\ee

More information on the split octonions can be found in \cite{Gunaydin:1973rs,Baez:2014}.

\section{Proof of Theorem \ref{Theorem:Horwitz} }

Using Lemma \ref{lemma-iden-left-mult} and writing $y=y+\lambda x$ for any $\lambda \in F, x\in A$, we get
\be
L_x L_{\bar{y}}+L_y L_{\bar{x}} = 2 (x|y) I.
\ee
Now let $N = \textrm{Dim}\, A$ be the dimension of the normed division algebra $A$. Let $A_0$ be a sub-vector space of $A$ defined by
\be
A_0 = \{x| (x,e) = 0,  x\in A\}.
\ee
Any $x \in A$ can be uniquely rewritten as $x = \lambda e+ y$ where $\lambda \in F,\, y \in A_0$, with $\lambda=(e,x)$ and $y=x-(e,x)e$. Then,
\be
A= Fe\oplus A_0, \qquad \textrm{Dim}\, A_0 = N-1.
\ee
For $x\in A_0$, we have $\bar{x}=-x$ and therefore
\be\label{Eq-pre-Clifford}
L_x L_{y}+L_y L_{x} = -2 (x,y) I,
\ee
for any $x,y \in A_0$. For the moment, let's assume the algebra $A$ is defined over the field of complex numbers. One can find an $N$ dimensional orthonormal basis $e_0(=
e),e_1,e_2,...,e_{N-1}$ of $A$ satisfying the orthonormal condition \cite{okubo1995}
\be
(e_\mu,e_\nu)= \delta_{\mu\nu},\qquad \mu,\nu=0,1,...,N-1.
\ee
In particular, $e_1,...,e_{N-1}$ realizes an orthonormal basis of $A_0$. Choosing $x = e_j$ and $y = e_k$ for $j,k \neq 0$ in \eqref{Eq-pre-Clifford}, we obtain
\be
L_j L_k + L_k L_j = -2 \delta_{jk},\qquad j,k=1,...,N-1.
\ee
which is a Clifford algebra in $N-1$ dimensional vector space $A_0$. Any matrix representation of a Clifford algebra is reducible and the dimension of irreducible representation
in $N-1$ dimensional space is\cite{dixon2013}
\be
d=2^n,\qquad \bigg\{ \begin{array}{l} n={\frac{N-1}{2}}\qquad\quad (N-1)\, \textrm{even}\\ ~  \\ n={\frac{N}{2}-1}  \qquad\, (N-1)\, \textrm{odd} \end{array}.
\ee
Suppose that the $N \times N$ matrix representation of $L_j$s contains $m$ irreducible components
with its total matrix dimension of $md$. This must be equal to $N$, so that we have
\be
N=\bigg\{ \begin{array}{l} 2^{\frac{N-1}{2}}m \qquad (N-1)\, \textrm{even}  \\ 2^{\frac{N}{2}-1}m  \qquad (N-1)\, \textrm{odd} \end{array}.
\ee
There are four solutions to this equation:
\begin{itemize}
    \item $N=1$, $m=1$, 
    \item $N=2$, $m=2$,
    \item $N=4$, $m=2$,
    \item $N=8$, $m=1$.
\end{itemize}
This proves the theorem. Even though we assumed that the algebra was defined on a complex field, the proof still holds for the real field \cite{okubo1995}.

\chapter{The Pseudo-Sphere $S^{3,4}$ and the Split Octonions}\label{App:pseudo}

\section{A Survey of $S^{3,4}$ as a Homogeneous Space}\label{app:homogeneous}

Let $\mathbb{R}^{4,4}$ be the $8$-dimensional real vector space endowed with the flat metric of mixed signature $(4,4)$,
\be
ds^{2}_{4,4} = \left(dx^1\right)^2 + \left(dx^2\right)^2 + \left(dx^3\right)^2 + \left(dx^4\right)^2 - \left(dy^1\right)^2 -\left(dy^2\right)^2 - \left(dy^3\right)^2 - \left(dy^4\right)^2. 
\ee
We consider the  hypersurfaces:
\begin{align}
S^{3,4}&: \left(x^1\right)^2 + \left(x^2\right)^2 + \left(x^3\right)^2 + \left(x^4\right)^2 - \left(y^1\right)^2 -\left(y^2\right)^2 - \left(y^3\right)^2 - \left(y^4\right)^2 = 1,\\
H^{4,3}&: \left(x^1\right)^2 + \left(x^2\right)^2 + \left(x^3\right)^2 + \left(x^4\right)^2 - \left(y^1\right)^2 -\left(y^2\right)^2 - \left(y^3\right)^2 - \left(y^4\right)^2 = -1.
\end{align}
These are connected. Because the signature of the embedding space is symmetric for the exchange of the space and time directions, these hypersurfaces are clearly isomorphic.  The ``pseudo-sphere" $S^{3,4}$ has an induced metric with signature $(3,4)$ (three $+$ signs and four $-$ signs), while the ``pseudo-hyperbolic space" $H^{4,3}$ has an induced metric with signature $(4,3)$.   One goes from one to the other by an overall sign change of the metric.  

The split group $O(4,4)$ and its subgroups $SO(4,4)$ and $SO^+(4,4)$, the connected component of $SO(4,4)$, act transitively on $S^{3,4}$ and $H^{4,3}$. Given a point on $S^{3,4}$, for example the north pole given by $(1,0,0,...,0)$, the subgroups of $O(4,4)$, $SO(4,4)$ and $SO^+(4,4)$ which leave the pointed fixed are $O(3,4)$, $SO(3,4)$ and $SO^+(3,4)$ respectively. This subgroups are called the stability subgroups. Similarly, the corresponding stability subgroups for a fixed point on $H^{4,3}$ are $O(4,3)$, $SO(4,3)$ and $SO^+(4,3)$. The groups $O(3,4)$ and $O(4,3)$ (as well as $SO(3,4)$ and $SO(4,3)$, or  $SO^+(3,4)$ and $SO^+(4,3)$) are isomorphic, but we prefer to adopt different notations in each case to keep track of the signature of the metric.  Thus, the pseudosphere $S^{3,4}$ and pseudo-hyperbolic space $H^{4,3}$ are the homogeneous spaces:
\be 
S^{3,4} = \frac{O(4,4)}{O(3,4)} = \frac{SO(4,4)}{SO(3,4)} =  \frac{SO^+(4,4)}{SO^+(3,4)},
\ee
and
\be
H^{4,3} = \frac{O(4,4)}{O(4,3)} = \frac{SO(4,4)}{SO(4,3)} =  \frac{SO^+(4,4)}{SO^+(4,3)}.
\ee

The pseudo-sphere $S^{3,4}$ and pseudo-hyperbolic space $H^{4,3}$ are not only homogeneous spaces, they are in fact maximally symmetric and hence spaces of constant curvature,
\be R_{mnpq} = K \left(g_{mp}g_{nq} - g_{mq}g_{np} \right),
\ee
with $K=1 >0$ for the pseudo-sphere $S^{3,4}$ and $K = -1 <0$ for the pseudo-hyperbolic space $H^{4,3}$,  in agreement with the observation that under an overall change of sign of the metric, the Riemann curvature changes sign but the product $\left(g_{mp}g_{nq} - g_{mq}g_{np} \right)$ does not, so that $K$ changes sign.

\section{Two Infinite Family of Parallelizations of the Pseudo-Sphere $S^{3,4}$}\label{app:parallel_pseudo_sphere}

There exist two infinite families of parallelizations of the pseudo-sphere $S^{3,4}$.  We first start by describing one of these parallelizations.

Consider the pseudo-sphere $S^{3,4}$ of unit split octonions, $x \in S^{3,4} \Leftrightarrow N(x) =1$.  The real number $1$ is the ``North pole".  The tangent space at the North pole can be identified with the seven-dimensional vector space of imaginary split octonions.  A Lorentz basis of this tangent space is given by the $e_i$'s.  Let  $x \in S^{3,4}$.   Right multiplication by $x$ maps  the North pole to $x$, and the tangent space at the North pole to the tangent space at $x$.  Indeed, the vectors $e_i$ at the North pole are mapped on $t^+_i  \equiv e_i x$.  One has:
\be
(t^+_i, x) = 0, \; \; \; (t^+_i, t^+_j) = \eta_{ij}.
\ee
The first equality expresses that the seven imaginary split octonions $t^+_i$, viewed as vector in $\mathbb{R}^{4,4}$ are orthogonal to $x$ and hence tangent to the pseudo-sphere $S^{3,4}$ at $x$.  The second equality expresses that the $t^+_i$'s form a Lorentz basis of that tangent space.  

We have thus defined at each point of the pseudo-sphere $S^{3,4}$ a Lorentz basis of the tangent space.  This provides an absolute parallelism for $S^{3,4}$. The tangent vector at $x$ parallel to the tangent vector $e_i$ at $1$ is the vector $e_i x$ obtained by right multiplication with $x$.  The corresponding parallel transport preserves the metric of $S^{3,4}$ and its geodesics can be verified to coincide with those defined by the metric. The parallelization defined by $\{t^+_i\}$ is thus consistent with the metric. 

Similarly, left multiplication also defines a parallelization of $S^{3,4}$ that maps the tangent basis $\{e_i\}$ at $1$ on the tangent basis $\{t_i^- \equiv x e_i\}$ at $x$.  The two parallelisms are inequivalent since the tangent vectors $t^+_i$ and $t_i^-$   coincide only at the North pole and at the ``South pole" $-1$. 

Yet other parallelizations can be defined by using a reference point on  $S^{3,4}$ different from unity.  More precisely, let $\alpha \in S^{3,4}$.  One goes from $\alpha$ to $x$ by multiplying $\alpha$ by $\bar{\alpha} x$, $\alpha (\bar{\alpha } x) = x$.  One defines the tangent vector $^{(\alpha)} t_i^+$ at $x$ parallel to the tangent vector $e_i \alpha$ at $\alpha$ through right multiplication by the octonion $\bar{\alpha} x$ that connects $\alpha$ to $x$, $^{(\alpha)} t_i^+ = (e_i \alpha) (\bar{\alpha} x)$.  The tangent vectors $^{(\alpha)} t_i^+$ and $t^+_i$ at $x$ do not coincide because octonionic multiplication is not associative.  A similiar construction yields the parallelism $^{(\alpha)} t_i^- = (x \bar{\alpha})(\alpha e_i)$.

Because the families of parallelisms given by the above construction are consistent with the metric, the corresponding torsion tensors obey the equations (\ref{Torsion101})-(\ref{Torsion103}) given above \cite{Cartan1,Cartan2,wolf1972I,wolf1972II}.  

The parallelisms of $S^{3,4}$ are related to the split octonions in the same way as the parallelisms of the seven-sphere are related to the standard octonions.  For that reason, the reader can find more information on the parallelisms of $S^{3,4}$ in the literature on the parallelisms of the seven-sphere.  A reference that we have found useful is \cite{Rooman:1984}.

\section{$Spin^+(3,4)$ and $G_{2,2}^*$}\label{app:sp-and-g2}

The complex Lie algebra $\mathfrak{g}_2$ possesses two real forms, the compact one and the split one.  To the compact real form corresponds the unique compact group $G_2$.  To the split real form correspond the simply connected non compact group $G_{2,2}$ with center $\mathbb{Z}_2$ and the quotient $G_{2,2}^* \equiv \frac{G_{2,2}}{\mathbb{Z}_2} $ (``adjoint real form") which has trivial center (and is not simply connected). The group  $G_{2,2}^*$ is the automorphism group of the split octonions.

The group $Spin(7)$ is well-known to have a transitive action on the seven-sphere $S^7$, with isotropy group $G_2$.  Similarly, the group $Spin^+(3,4)$ (connected component of $Spin(3,4)$) has a transitive action on the pseudo-sphere $S^{3,4}$ with isotropy group $G_{2,2}^*$ \cite{Kath:1996gm}.  We can thus also identify $S^{3,4}$ with the homogeneous space $Spin^+(3,4) /G_{2,2}^*$,
\be
S^{3,4} \simeq \frac{Spin^+(3,4)}{G_{2,2}^*}
\ee
\chapter{Derivation of Equation (\ref{eq:Utrasf})}
\label{app:derivation}

In this appendix, we derive formula \eqref{eq:Utrasf} for the variation $\delta_u G_I = - (U_u, G_I)$ of the two-form $G_I$ under a $U$-type symmetry. This is done in two steps:
\begin{enumerate}
\item First, we show that
\begin{equation}\label{eq:step1}
\delta_u G_I + (f_u)\indices{^J_I} G_J \approx c_{IJ} F^J + d(\text{invariant})
\end{equation}
for some constants $c_{IJ}$.
\item Then, we prove that the $c_{IJ}$ take the form
\begin{equation}\label{eq:step2}
c_{IJ} = - 2 (h_u)_{IJ} + \lambda^w_u (h_w)_{IJ},
\end{equation}
where the constants $(h_u)_{IJ}$ and $(h_w)_{IJ}$ are those appearing in the currents associated with $U_u$ and $W_w$ respectively.
\end{enumerate}
The proof is given in the case where the Lagrangian (or, equivalently, $G_I$) does not depend on the derivatives of $F^I_{\mu\nu}$.

\subsubsection*{A lemma}

The proof of the above
 steps uses the following result on the $W$-type cohomology classes (with $g=-1$):
\begin{equation}\label{eq:Wlemma}
t_{IJ} F^I F^J \approx d(\text{invariant}) \;\Rightarrow\; t_{IJ} = \sum_{w} \lambda^w (h_w)_{IJ} \,\text{ for some } \lambda^w .
\end{equation}
This is proven as follows: $t_{IJ} F^I F^J \approx d(\text{invariant})$ implies that
\begin{equation}\label{eq:appt}
t_{IJ} F^I F^J + dI + \delta k = 0
\end{equation}
for some gauge invariant $I$ and some $k$ of antifield number $1$, where $\delta$ is here the Koszul-Tate differential. Now, it is proven in \cite{Barnich:2000zw} that $k$ must be gauge invariant; hence, it can be written as
\begin{equation}\label{eq:appk}
k = \hat{K} + d R, \quad \hat{K} = d^4x [ A^{*\mu}_I g^I_\mu + \phi^*_i \Phi^i ]
\end{equation}
for some gauge invariant $R$, $g^I_\mu$ and $\Phi^i$. Indeed, derivatives acting on the antifields contained in $k$ are pushed to the term $dR$ by integration by parts, leaving the form \eqref{eq:appk} where $\hat{K}$ contains only the undifferentiated antifields. Putting this back in \eqref{eq:appt} and using the fact that $\delta \hat{K} = s \hat{K}$ because $\hat{K}$ is gauge invariant, we get
\begin{equation}
s \hat{K} + d\left( t_{IJ} A^I F^J + J \right) = 0
\end{equation}
for some gauge invariant $J = I - \delta R$. This shows that $\hat{K}$ is a $W$-type cohomology class: we can therefore expand $\hat{K}$ in the $W_w$ basis as $\hat{K} = \sum \lambda^w W_w$. In particular, this implies that $t_{IJ} = \sum \lambda^w (h_w)_{IJ}$, which proves the lemma.

\subsubsection*{First step}

We start from the chain of descent equations involving $G_I$,
\begin{equation}\label{eq:descCstar}
s\, d^4x\, C^*_I+d \star A^*_I=0,\quad s \star A^*_I+d
G_I=0,\quad sG_I=0.
\end{equation}
Applying $(U_u, \cdot)_\text{alt}$ to this chain, we get
\begin{align}
s \left[ \,d^4x\, (f_{u})\indices{^J_I} C^*_J \right] + d \left[ (f_{u})\indices{^J_I} \star A^*_J +
\frac{\delta K_u}{\delta A^I} \right] &= 0, \\
s \left[ (f_{u})\indices{^J_I} \star A^*_J +
\frac{\delta K_u}{\delta A^I} \right] + d \left[ - \delta_u G_I \right] &= 0, \\
s \left[ - \delta_u G_I \right] &= 0,
\end{align}
which can be simplified to
\begin{align}
d \left( \frac{\delta K_u}{\delta A^I} \right) &= 0, \label{eq:dK}\\
s \left( \frac{\delta K_u}{\delta A^I} \right) + d \left( - \delta_u G_I - (f_{u})\indices{^J_I} G_J \right) &= 0, \label{eq:sK}\\
s \left( - \delta_u G_I \right) &= 0,
\end{align}
using equations \eqref{eq:descCstar} again.
Equation \eqref{eq:dK} implies that
\begin{equation}
\frac{\delta K_u}{\delta A^I} = d \eta^{-1,2}
\end{equation}
for some $\eta^{-1,2}$ of ghost number $-1$ and form degree $2$. Because the left-hand side is gauge invariant and $\eta^{-1,2}$ is of form degree two, $\eta^{-1,2}$ must also be gauge invariant. This follows from theorems on the invariant cohomology of $d$ in form degree $2$ \cite{Brandt:1989gy,DuboisViolette:1992ye}. Equation \eqref{eq:sK} implies then
\begin{equation}
d\left( \delta_u G_I + (f_{u})\indices{^J_I} G_J + s\eta^{-1,2} \right) = 0 ,
\end{equation}
i.e.
\begin{equation}\label{eq:dGdeta}
\delta_u G_I + (f_{u})\indices{^J_I} G_J + s\eta^{-1,2} = d \eta^{0,1}
\end{equation}
for some $\eta^{0,1}$ of ghost number $0$ and form degree $1$. Again, the left-hand side of this equation is gauge invariant: results on the invariant cohomology of $d$ in form degree $1$ \cite{Brandt:1989gy,DuboisViolette:1992ye} now imply that the non-gauge invariant part of $\eta^{0,1}$ can only be a linear combination of the one-forms $A^I$,
\begin{equation}
\eta^{0,1} = c_{IJ} A^J + \text{(gauge invariant)} .
\end{equation}
Plugging this back in equation \eqref{eq:dGdeta} and using the fact that $s\eta^{-1,2} \approx 0$ (since $\eta^{-1,2}$ is gauge invariant), we recover equation \eqref{eq:step1}. This concludes the first step of the proof.

\subsubsection*{Second step}

For the second step, we introduce
\begin{equation}
N = - \int \!d^4x\,( C^*_I C^I + A^{*\mu}_I A^I_\mu), \quad \hat{N} = (N, \cdot)_\text{alt} .
\end{equation}
The operator $\hat{N}$ counts the number of $A^I$'s and $C^I$'s minus the number of $A^*_I$'s and $C^*_I$'s. Because it carries ghost number $-1$, it commutes with the exterior derivative, $\hat{N} d = d \hat{N}$.
Applying this operator to the equation
\begin{equation}
s U_u + d \left[ (f_u)\indices{^I_J} (\star A^*_I C^J + G_I A^J) + (h_u)_{IJ} F^I A^J  + J_u \right] = 0
\end{equation}
gives
\begin{equation}\label{eq:NsU}
(\int\! G_I F^I, U_u)_\text{alt} + d\left[ (f_u)\indices{^I_J} (\hat{N} + 1)(G_I) A^J + 2 (h_u)_{IJ} F^I A^J  + \hat{N}(J_u) \right] \approx 0 .
\end{equation}
The second term is evident. The first term is
\begin{align}
\hat{N}(sU_u) = (N, (S,U_u)_\text{alt})_\text{alt} &= ( (N, S) , U_u)_\text{alt} + (S, (N,U_u)_\text{alt})_\text{alt}
\end{align}
according to the graded Jacobi identity.
The counting operator $\hat{N}$ kills the $A^{*\mu}_I \partial_\mu C^I$ term in the master action $S$, which implies
\begin{equation}
(N, S) = \int\!d^4x\, A^I_\mu \frac{\delta \mathcal{L}_V}{\delta A^I_\mu} = \int\!d^4x\, A^I_\mu \partial_\nu(\star G_I)^{\mu\nu} = \int\! G_I F^I .
\end{equation}
Similarly, $\hat{N}$ kills the first two terms of $U_u$ given in \eqref{eq:5bis}, leaving $\hat{N} U_u = \hat{N} K_u$ which is gauge invariant. This implies $(S, (N,U_u)_\text{alt})_\text{alt} = s(\hat{N} U_u) \approx 0$.
Therefore, we have indeed
\begin{equation}
\hat{N}(sU_u) \approx (\int G_I F^I, U_u)_\text{alt}
\end{equation}
which proves equation \eqref{eq:NsU}.

We now compute $(\int G_I F^I, U_u)_\text{alt}$ using the result of the first step. We have
\begin{equation}
(\int G_I F^I, U_u)_\text{alt} = \frac{\delta (G_K F^K)}{\delta A^I_\mu} \, \delta_u A^I_\mu + \frac{\delta (G_K F^K)}{\delta \phi^i} \, \delta_u \phi^i .
\end{equation}
This looks like the $U$-variation $\delta_u (G_I F^I)$, but it is not because there are Euler-Lagrange derivatives. For a top form $\omega$, the general rule is \cite{Anderson92introductionto}
\begin{equation}
\delta_Q \omega = Q^a \frac{\delta \omega}{\delta z^a} + d \rho, \quad \rho = \partial_{(\nu)} \left[ Q^a \frac{\delta}{\delta z^a_{(\nu)\rho}} \frac{\partial \omega}{\partial dx^\rho} \right] .
\end{equation}
In our case, this becomes
\begin{align}
\delta_u (G_I F^I) &= \frac{\delta (G_K F^K)}{\delta A^I_\mu} \, \delta_u A^I_\mu + \frac{\delta (G_K F^K)}{\delta \phi^i} \, \delta_u \phi^i + d\rho_A + d\text{(inv)} ,\\
\rho_A &= \partial_{(\nu)} \left( (f_u)\indices{^I_J} A^J_\mu \frac{\delta}{\delta A^I_{\mu, (\nu)\rho}} \frac{\partial (G_K F^K)}{\partial dx^\rho} \right) .
\end{align}
Using property \eqref{eq:step1} and putting together the terms of the form $d\text{(invariant)}$, we get then from \eqref{eq:NsU}
\begin{equation}
(c_{IJ} + 2 (h_u)_{IJ} ) F^I F^J + d \left[ (f_u)\indices{^I_J} A^J (\hat{N} + 1)(G_I) - \rho_A \right] + d \text{(inv)} \approx 0 .
\end{equation}
Now, it is sufficient to prove that
\begin{equation}\label{eq:dxy}
 d \left[ (f_u)\indices{^I_J} A^J (\hat{N} + 1)(G_I) - \rho_A \right] \approx d \text{(inv)}.
\end{equation}
Indeed, this implies $(c_{IJ} + 2 (h_u)_{IJ} ) F^I F^J \approx d \text{(inv)}$, which in turn gives
\begin{equation}
c_{IJ} = - 2 (h_u)_{IJ} + \lambda^w_u (h_w)_{IJ}
\end{equation}
for some constants $\lambda^w_u$ using property \eqref{eq:Wlemma} of the $W$-type cohomology classes.

\subsubsection*{Proof of \eqref{eq:dxy}}

We will actually prove the stronger equation
\begin{equation}\label{eq:xy}
\rho_A = (f_u)\indices{^I_J} A^J (\hat{N} + 1)(G_I)
\end{equation}
in the case where $G_I$ depends on $F$ but not on its derivatives.

To do this, we can assume that $G_I$ a homogeneous function of degree $n$ in $A^I$, i.e. $\hat{N}(G_I) = n G_I$. If it is not, we can separate it into a sum of homogenous parts; the result then still holds because equation \eqref{eq:xy} is linear in $G_I$.

In components, equation \eqref{eq:xy} is
\begin{equation}
\frac{1}{2}\partial_{(\nu)} \left( (f_u)\indices{^I_J} A^J_\mu \frac{\delta}{\delta A^I_{\mu, (\nu)\rho}} G_{K\sigma\tau} F^K_{\lambda\gamma} \varepsilon^{\sigma\tau\lambda\gamma} \right) = (n+1) (f_u)\indices{^I_J} A^J_\lambda G_{I\sigma\tau} \varepsilon^{\rho\lambda\sigma\tau} .
\end{equation}
Under the homogeneity assumption $\hat{N}(G_I) = n G_I$, we have
\begin{equation}
G_{K\sigma\tau} F^K_{\lambda\gamma} \varepsilon^{\sigma\tau\lambda\gamma} = 4 (n+1) \mathcal{L}_V .
\end{equation}
Equation \eqref{eq:xy} now becomes
\begin{equation}\label{eq:xyL}
\frac{1}{2}\partial_{(\nu)} \left( (f_u)\indices{^I_J} A^J_\mu \frac{\delta \mathcal{L}_V}{\delta A^I_{\mu, (\nu)\rho}} \right) = \frac{1}{4} (f_u)\indices{^I_J} A^J_\lambda G_{I\sigma\tau} \varepsilon^{\rho\lambda\sigma\tau} .
\end{equation}
We now use the fact that $G_I$ does not depend on derivatives of $F$, which implies that the higher order derivatives $\partial_{(\nu)}$ are not present and that the Euler-Lagrange derivatives are only partial derivatives. We then have
\begin{equation}
\frac{1}{2} \frac{\delta \mathcal{L}_V}{\delta A^I_{\mu,\rho}} = \frac{\delta \mathcal{L}_V}{\delta F^I_{\rho\mu}} = \frac{1}{4} \varepsilon^{\rho\mu\sigma\tau} G_{I\sigma\tau}
\end{equation}
(see \eqref{eq:47}), which proves \eqref{eq:xyL} in this case.

\addcontentsline{toc}{chapter}{Bibliography}


\end{document}